\documentclass[12pt,letterpaper]{article}

\usepackage[T1]{fontenc}
\usepackage[utf8]{inputenc}
\usepackage{geometry}
\usepackage{setspace}
\usepackage{amsmath,amssymb,amsthm,mathtools}
\usepackage{bm}
\usepackage{booktabs}
\usepackage{graphicx}
\usepackage{array}
\usepackage{enumitem}
\usepackage{caption}
\usepackage{fancyhdr}
\usepackage{titlesec}
\usepackage[hidelinks]{hyperref}
\usepackage[nameinlink,noabbrev]{cleveref}

\hypersetup{
  pdftitle={JFR-rg Part II: Dynamic Extensions, Time Constraints, and Investment Design in High-Debt, Low-Growth Economies},
  pdfauthor={Hirofumi Wakimoto}
}

\titleformat{\section}{\large\bfseries}{\thesection.}{0.5em}{}[\vspace{-0.3em}\rule{\linewidth}{0.8pt}]
\titleformat{\subsection}{\normalsize\bfseries}{\thesubsection}{0.5em}{}
\titleformat{\subsubsection}{\normalsize\bfseries}{\thesubsubsection}{0.5em}{}

\theoremstyle{plain}
\newtheorem{theorem}{Theorem}
\newtheorem{proposition}[theorem]{Proposition}
\newtheorem{lemma}[theorem]{Lemma}
\newtheorem{corollary}[theorem]{Corollary}

\theoremstyle{definition}
\newtheorem{definition}[theorem]{Definition}
\newtheorem{assumption}[theorem]{Assumption}

\theoremstyle{remark}
\newtheorem{remark}[theorem]{Remark}

\newcommand{\eps}{\varepsilon}
\newcommand{\phic}{\phi_t}
\newcommand{\phibar}{\bar{\phi}}
\newcommand{\ebar}{\bar e}
\newcommand{\gns}{g_t^{n*}}
\newcommand{\gn}{g_t^n}
\newcommand{\rn}{r_t^n}
\newcommand{\dt}{d_t}
\newcommand{\stt}{s_t}
\newcommand{\btm}{b_{t-1}}
\newcommand{\De}{\Delta e_t}
\newcommand{\Db}{\Delta b_t}
\newcommand{\RD}{\mathrm{RD}_t}
\newcommand{\Prob}{\mathbb{P}}
\newcommand{\E}{\mathbb{E}}

\title{%
  \textbf{JFR-rg Part II: Dynamic Extensions, Time Constraints, and\\
  Investment Design in High-Debt, Low-Growth Economies}\\[0.5em]
  \large A Logical-Extension Sequel to the JFR-rg Framework
}
\author{Hirofumi Wakimoto}
\date{April 19, 2026}

\begin{document}
\maketitle
\thispagestyle{empty}

\begin{abstract}
This paper develops the logical extension of the JFR-rg framework introduced in Part~I within the same observables-centered and regime-conditional architecture. Six extensions are formalized: the Virtuous Ratchet~(E1), the corrected Repression Dividend Multiplier~(E2), the Debt Reduction Paradox~(E3), the Multi-Country Repression Equilibrium~(E4), the Demographic-$\phi$ Clock~(E5), and the Institutional Control Rights Index~(E6). Together, these extensions clarify the dynamic implications of a JFR-rg regime for path dependence, institutional erosion, growth-enhancing investment, and regime transition in high-debt, low-growth economies.

The paper's claim of logical completion is architectural rather than universal. It does not assert that all calibration problems are solved, that all empirical parameters are pinned down with final precision, or that a full welfare-theoretic or political-economy microfoundation has been provided. Rather, it shows that the principal dynamic implications internal to the Part~I framework can now be stated in closed form, and that the two most natural excluded generalizations---bounded stochastic perturbations and endogenous fiscal responses---preserve the regime logic of the model.

A Minimal Equilibrium Closure is then introduced to endogenize the sovereign risk premium through a two-layer domestic demand structure and a complementarity condition. This closure is not intended as a full general-equilibrium account of sovereign bond pricing; it is a minimal sufficient device for closing the transition problem within the block-recursive structure of JFR-rg. Because the framework is observables-centered, the paper also formulates the corresponding statistical problem of inferring a latent regime boundary under one-sided regime dominance. The resulting inferential contribution is conservative by design: it constructs outer statistical summaries of the relevant boundary objects rather than forcing point classification when the available observables remain compatible with multiple nearby regime readings.

A systematic comparison with Blanchard~(2019), Hoshi--Ito~(2014), and Mehrotra--Sergeyev~(2021) further shows where JFR-rg adds explanatory value in the Japanese case: not by replacing standard debt-sustainability analysis, but by endogenizing the institutional conditions under which low sovereign rates are sustained, weakened, or lost. Empirical calibration is treated throughout as a distinct implementation layer, and each extension is paired with observable implications, failure modes, and a structured empirical roadmap.
\end{abstract}

\noindent\textbf{Keywords:} Financial repression; debt dynamics; path dependence; policy design; growth investment; Japan; captive financial system

\noindent\textbf{JEL:} E52, E62, F31, H63

\newpage
\tableofcontents
\newpage

\section{Introduction}
\label{sec:intro}

Part~II builds directly on Part~I~\cite{Wakimoto2026PartI}, which established JFR-rg as an observables-centered, regime-conditional framework for organizing debt dynamics under financial repression~\cite{ReinhartSbrancia2015}. Its central contribution was to clarify how a high-debt, low-growth economy could remain temporarily stable under a specific combination of a negative effective interest--growth differential, a captive domestic absorption structure, and a bounded exchange-rate channel. Just as importantly, Part~I was careful not to claim more than this architecture could support: its propositions were explicitly conditional, its failure conditions were publicly observable, and the framework was presented as a complementary analytical lens rather than as a universal replacement for mainstream debt-sustainability analysis.

The natural next question is what follows once such a regime has been identified. A regime-diagnostic framework is incomplete if it can describe contemporaneous stability but cannot state the main dynamic implications that follow from that stability condition. Once the JFR-rg regime is admitted as a useful description of a particular macro-fiscal configuration, questions immediately arise about path dependence, timing asymmetry, institutional erosion, investment design, and the conditions under which the regime may weaken or terminate. Part~II is written to address those questions.

The claim of this paper, however, should be read with precision. Part~II does not claim to complete debt theory in general, nor does it claim to provide a full welfare-theoretic, political-economy, or household-portfolio general-equilibrium foundation for sovereign bond pricing. Its claim is narrower and more disciplined: within the observables-centered, block-recursive architecture established in Part~I~\cite{Wakimoto2026PartI}, the principal dynamic implications of the framework can now be stated in closed form. In that sense, Part~I identified the regime, and Part~II develops its logical extension.

The contribution of Part~II can therefore be read at two distinct but connected levels. At the first level, the paper completes the principal logical extension of the JFR-rg architecture by deriving the dynamic implications that follow from Part~I and by closing the transition problem through a minimal equilibrium closure. At the second level, precisely because the framework is observables-centered, the paper also identifies the corresponding statistical problem implied by that closure: the relevant regime boundary is economically well defined, but only partially observed in the data, and in the Japanese reference case the premium-emergence regime may be only sparsely sampled. The inferential layer developed below is designed to address that problem conservatively. Its purpose is not to replace the economic logic of the framework, but to provide a disciplined outer statistical summary of the latent regime boundary and transition margin that the completed architecture implies.

\begin{remark}[On the scope of logical completion]
\label{rem:completion-scope}
The claim of logical completion in Part~II is architectural rather than universal. It is a claim that the principal dynamic implications that follow directly from the Part~I architecture---path-dependent ratchets, finite institutional horizons, investment-design bounds, multi-country threshold generalization, institutional control rights as a determinant of regime applicability, and the conditions for regime transition---can now be stated in closed form. Other extensions remain possible, including stochastic generalizations, political-economy endogenization of $\dt$, welfare-theoretic analysis, endogenous expectations, and richer heterogeneity. They are not developed here because the present paper is concerned with the core observables-centered debt-recursion logic of JFR-rg. \Cref{sec:robustness} formally demonstrates that the sign structure on which E1--E6 depend is invariant to bounded stochastic perturbations and Lipschitz-continuous fiscal-response endogenization. The six extensions developed below should therefore be read as the principal direct dynamic implications that the present paper identifies as following from the Part~I architecture. In that sense, omitting them would leave the framework materially unfinished at the level of its core debt-recursion logic, even though broader extensions may still be useful for other questions.
\end{remark}

This narrower claim is formally grounded rather than merely asserted. The paper shows that the two most natural excluded extensions---bounded stochastic perturbations and endogenous fiscal responses---do not overturn the sign structure on which the regime logic depends. The role of these results is not to deny that broader extensions may be useful, but to establish that their omission does not leave the JFR-rg framework dynamically unfinished at the level of its core debt-recursion logic.

On that basis, the paper develops six extensions. E1 establishes the Virtuous Ratchet, the positive counterpart to the Normalization Ratchet. E2 reformulates the Repression Dividend Multiplier in corrected bounded form. E3 states the Debt Reduction Paradox as a conditional proposition under weak fiscal offset. E4 generalizes the captive-threshold logic into a multi-country setting. E5 converts the captive-system condition from a static scope condition into a finite institutional horizon. E6 formalizes the role of institutional control rights in determining corridor width and in delimiting the conditions under which the JFR-rg regime applies beyond the Japanese reference case. A central integration result---the Transition Feasibility Proposition---then links these extensions to the question of whether a finite institutional window can be converted into a self-sustaining trajectory.

The paper also incorporates an investment-design block. Once stronger potential nominal growth is recognized as a one-for-one substitute for stronger repression in the stability condition, investment policy enters the framework endogenously at the level of implications even if the empirical estimation of investment efficiency remains external. The contribution here is therefore not a claim of precise multiplier estimation, but a disciplined statement of upper bounds, lower bounds, timing conditions, and allocation logic for growth-enhancing investment within the JFR-rg regime.

A further contribution of Part~II is to sharpen the framework's relation to the mainstream literature. The purpose of comparison is not to dismiss Blanchard~(2019), Hoshi--Ito~(2014), or Mehrotra--Sergeyev~(2021), all of which capture important dimensions of debt sustainability. Rather, the comparison asks a narrower question: which elements of Japan's 2012--2024 experience are left exogenous, only partially organized, or predicted differently in standard frameworks, and what additional explanatory value is gained when institutional repression, captive absorption, and control rights are brought inside the model? JFR-rg is offered as a complementary explanatory layer for that question, not as a wholesale substitute for mainstream macroeconomics.

The paper proceeds as follows. \Cref{sec:core} recapitulates, for completeness, the core structure inherited from Part~I~\cite{Wakimoto2026PartI}. \Cref{sec:classification} classifies the extensions by logical status, links them to falsifiable implications and failure modes, and formally establishes the robustness of the regime logic to stochastic and fiscal-response generalizations. \Cref{sec:e1} through \Cref{sec:e6} present the six extensions. \Cref{sec:timing} establishes the Timing Constraint that links E1 and E5. \Cref{sec:investment} formalizes upper and lower bounds on stabilizing growth investment. \Cref{sec:mu} provides the external estimation architecture for $\mu$. \Cref{sec:calibration} provides an illustrative calibration table. \Cref{sec:empirics} outlines the empirical program. \Cref{sec:transition} establishes the Transition Feasibility Proposition under an exogenous premium bound. \Cref{sec:closure} introduces a Minimal Equilibrium Closure that endogenizes the sovereign premium through a two-layer domestic demand structure and a complementarity condition. This closure should be read as a minimal sufficient pricing relation for the regime boundary and transition problem, not as a full sovereign-bond general-equilibrium model. \Cref{sec:regime-inference} then formulates the corresponding inferential layer for the premium-emergence boundary and the transition-feasibility margin under observables-centered constraints. \Cref{sec:mainstream-comparison} compares the explanatory reach of JFR-rg with that of mainstream frameworks in the Japanese case. \Cref{sec:conclusion} concludes.

\begin{remark}[The architecture of Part~II in one paragraph]
\label{rem:architecture-preview}
Readers who wish to orient themselves before the formal sections may find the following
preview useful.
The six extensions E1--E6 are not independent results: they are the principal dynamic
implications that follow from the Part~I regime description.
They culminate in two integration results.
First, the Transition Feasibility Proposition (\cref{sec:transition}) provides a
growth-improvement threshold for a safe exit from repression-dependent stability,
given a bound~$\bar\rho$ on the post-transition sovereign premium.
Second, the Minimal Equilibrium Closure (\cref{sec:closure}) endogenizes $\rho_t$
through a two-layer domestic demand structure---a hard captive core $\theta_t$ and a
contestable margin---combined with a complementarity condition that prices the regime
boundary.
The key objects of \cref{sec:closure} are:
(a)~the aggregate domestic demand curve $\varphi_t^d(\rho_t)$ and its monotone
properties;
(b)~the complementarity condition $0 \leq \rho_t \perp [\varphi_t^d(\rho_t) -
\varphi_t^{\mathrm{req}}] \geq 0$, which determines whether the economy is in the
JFR-rg interior regime ($\rho_t = 0$) or the transition/stress regime ($\rho_t > 0$);
and (c)~the feedback gain $\eta_t < 1$, whose satisfaction is the stability condition
for the self-reinforcing loop between premium emergence and institutional erosion.
The rest of Part~II---including E1--E6, the investment block, and the mainstream
comparison---provides the analytical inputs and institutional context for understanding
when, and at what speed, the system may approach the boundary where $\rho_t > 0$
first emerges.
\end{remark}

\section{The Core JFR-rg Structure Revisited}
\label{sec:core}

This section restates, for completeness, the core debt recursion, stability condition, and scope conditions introduced in Part~I~\cite{Wakimoto2026PartI}. They are repeated here not as new claims, but as the primitive accounting and regime-diagnostic structure on which the present extensions are built.

\subsection{Debt recursion}

The point of departure remains the core debt recursion of Part~I~\cite{Wakimoto2026PartI}:
\begin{equation}
b_t = b_{t-1}(1+\rn-\gn)+\dt.
\label{eq:core-recursion}
\end{equation}
In difference form,
\begin{equation}
\Db = (\rn-\gn)\btm+\dt.
\label{eq:core-difference}
\end{equation}
This accounting recursion is the backbone of the original model and remains the criterion for admissible extension in Part~II. Any dynamic proposition that cannot be mapped back to this structure should not be treated as part of JFR-rg proper.

\begin{remark}[Debt concept: baseline versus monitoring layers]
\label{rem:debt-concept}
The symbol $b_t$ denotes the debt-to-GDP ratio, but two distinct measurement concepts
appear across the layers of Part~II and must not be conflated.
\begin{enumerate}[label=(\roman*)]
  \item \textbf{Baseline layer} ($b_0 = 240\%$, inherited from Part~I): follows the
  IMF general-government gross-debt concept, which includes central-government bonds,
  FILP bonds, and related government-guaranteed liabilities.
  This broader concept is the appropriate accounting unit for the debt-sustainability
  corridor and all main-text propositions.

  \item \textbf{Observed monitoring layer} ($b_{t-1} \approx 157.4\%$ in \cref{app:window-sensitivity}):
  computed as outstanding JGB plus FILP bonds divided by annualized nominal GDP
  using BoJ Flow-of-Funds series.  This narrower concept serves as a cross-check against
  publicly available instrument-level data; it is \emph{not} a replacement for the
  baseline debt concept.
\end{enumerate}
All main-text propositions and calibration tables use the baseline concept.
Where the monitoring-layer ratio is inserted for observed-value cross-checks, the
debt-concept difference is explicitly noted in the relevant table or remark.
Quantitative comparisons across layers must account for this distinction:
the lower observed ratio mechanically reduces fiscal-burden terms such as $\dt/\btm$
relative to the baseline and should not be read as an independent update of the
baseline operating point.

Operationally, the two layers play different roles. The baseline layer is the
appropriate operating point for conservative policy design, corridor stress-testing,
and all main-text propositions. The monitoring layer is intended for observed-value
surveillance and empirical cross-checks using publicly available instrument-level
data. When the two differ materially, the baseline layer should govern prudential
design, while the monitoring layer should govern real-time updating of urgency and
direction.
\end{remark}

\subsection{Stability condition}

The corresponding stability condition can be written as
\begin{equation}
\eps + \gns + \alpha \De - \beta \max(0,\De-\ebar)^2
\ge
\pi_t + \frac{\dt-\stt}{\btm}.
\label{eq:stability}
\end{equation}

\begin{remark}[One-for-one substitutability]
A crucial implication of \cref{eq:stability} is that stronger potential nominal growth and stronger repression are policy substitutes. An increase in $\gns$ relaxes the required level of $\eps$ one-for-one. This property is the bridge that makes an investment block mathematically meaningful within JFR-rg, and it is the foundation of the Transition Feasibility Proposition developed in \cref{sec:transition}.
\end{remark}

\subsection{Scope conditions}

\begin{assumption}[SC1: Captive financial system]
\label{ass:sc1}
\[
\phic \ge \phibar.
\]
The domestic sovereign-debt absorption structure remains sufficiently captive that a destabilizing risk-premium regime does not yet emerge.
\end{assumption}

\begin{assumption}[SC2: Exchange-rate regime]
\label{ass:sc2}
\[
\De \le \ebar.
\]
Exchange-rate depreciation remains within the stability window of the nonlinear pass-through block.
\end{assumption}

All direct implications of the extended framework are conditional on the joint validity of SC1 and SC2.

\section{Logical Classification, Falsifiability, and Failure-Mode Integration}
\label{sec:classification}

\begin{definition}[Logical completion]
An extension is said to be logically complete within JFR-rg if its formal structure can be stated as a closed implication of the original framework together with explicitly stated auxiliary conditions where needed. Logical completion is distinct from empirical calibration.
\end{definition}

\begin{table}[htbp]
\centering
\caption{Logical Status, Falsifiable Predictions, and Failure-Mode Links of Dynamic Extensions}
\label{tab:tiers}
\small
\begin{tabular}{>{\raggedright\arraybackslash}p{1.6cm} >{\raggedright\arraybackslash}p{2.0cm} >{\raggedright\arraybackslash}p{4.2cm} >{\raggedright\arraybackslash}p{3.5cm} >{\raggedright\arraybackslash}p{2.4cm}}
\toprule
Extension & Logical Status & Interpretation & Falsifiable Prediction & Failure Mode \\
\midrule
E1 & Direct completion & Symmetric positive counterpart to the Normalization Ratchet & Sprint-period debt improvement persists after reversion; deviation: improvement reverses within 5~years & Inflation undershoot; SC2 violation \\[6pt]
E2 & Completed corrected form & Self-decelerating cumulative gain; empirical $\mu$ calibration external & Reinvested dividend generates diminishing (not explosive) returns; deviation: cumulative gain accelerates & Inflation undershoot (RD vanishes) \\[6pt]
E3 & Conditional completion & Valid under generalized fiscal-response condition & Under JFR-rg, debt reduction worsens $\Db$ when fiscal response is weak; deviation: debt reduction always improves $\Db$ & Political-fiscal reaction ($\dt$ endogeneity collapses) \\[6pt]
E4 & Completed generalized form & Cross-country calibration external & Coordinated foreign repression lowers $\phibar$; deviation: $\phibar$ independent of foreign yields & De-captivation (threshold rises) \\[6pt]
E5 & Direct completion with illustrative calibration & SC1 converted into finite-horizon constraint & $\phic$ declines at rate $\kappa>0$; deviation: $\kappa\le 0$ observed persistently & De-captivation ($T^*$ shortens) \\[6pt]
E6 & Completed generalized form with illustrative international calibration & Institutional control rights determine corridor width; JFR-rg nests mainstream limit at $\psi_t \to 0$ & Corridor width varies systematically with $\psi_t$ across countries; deviation: corridor width independent of control rights & Loss of monetary or FX autonomy (corridor collapses to mainstream case) \\
\bottomrule
\end{tabular}
\end{table}

\begin{table}[htbp]
\centering
\caption{Failure-Mode Integration: Adapted from Part~I and Extended for Part~II}
\label{tab:failure-modes}
\begin{tabular}{>{\raggedright\arraybackslash}p{4.0cm} >{\raggedright\arraybackslash}p{9.0cm}}
\toprule
Part~I Failure Mode & Relevant Part~II Extensions \\
\midrule
(i) Inflation undershoot & E1 (sprint infeasible if $\eps_t\to 0$); E2 (repression dividend vanishes) \\
(ii) De-captivation & E5 ($T^*$ shortens); E4 (threshold rises if foreign alternatives improve); E6 (corridor narrows as $\psi_t$ declines) \\
(iii) Import-cost overshoot & E1 (SC2 violated during sprint, forcing termination) \\
(iv) Political-fiscal reaction & E3 (exogeneity of $\dt$ collapses, weakening paradox conditions) \\
(v) Loss of institutional autonomy & E6 (corridor collapses toward mainstream limit when $\psi_t \to 0$; relevant for currency-union members) \\
\bottomrule
\end{tabular}
\end{table}

\subsection{Pre-Specified Falsification Strategy and Empirical Roadmap}
\label{sec:roadmap}

Part~II is intended to be analytically reproducible even without code distribution. The paper therefore emphasizes \emph{computational transparency without software distribution}: each extension is paired with a stated observable implication, a minimal calculation logic, and a candidate empirical design. The aim is not to claim that all extensions are already identified with final precision, but to make clear how each claim could fail, what evidence would be most relevant, and what such a failure would imply for the scope of the framework.

\begin{table}[htbp]
\centering
\caption{Pre-Specified Falsification Strategy and Empirical Roadmap}
\label{tab:roadmap}
{\footnotesize
\setlength{\tabcolsep}{2.7pt}
\renewcommand{\arraystretch}{1.02}
\begin{tabular}{>{\raggedright\arraybackslash}p{1.65cm} >{\raggedright\arraybackslash}p{2.85cm} >{\raggedright\arraybackslash}p{3.20cm} >{\raggedright\arraybackslash}p{3.55cm} >{\raggedright\arraybackslash}p{3.95cm}}
\toprule
Ext. & Core observable(s) & Candidate design & Main identification challenge & What rejection would mean \\
\midrule
E1 & $\Db$, $\rn-\gn$, policy-sprint episodes & Event-style descriptive analysis or local projections around identified sprint episodes & Separating true sprint effects from ordinary mean reversion & Virtuous Ratchet becomes a weaker comparative-static insight rather than a persistent path-dependence result \\
\midrule
E2 & $\RD$, public investment, productivity growth & Range calibration for $\mu$, followed by macro-average growth-accounting checks & Translating public investment into $\Delta \gns$ without overstating multiplier precision & E2 remains a bounded-accounting mechanism with no reliable quantitative deployment rule \\
\midrule
E3 & $\Db$, $\btm$, debt-service relief, effective deficit response & Threshold-based calibration and sign-consistency checks around $\gamma<|\rn-\gn|$ & Endogeneity of fiscal response and sparse clean debt-reduction episodes & The paradox is empirically limited even if logically admissible under narrow conditions \\
\midrule
E4 & foreign repression measures, alternative-asset returns, domestic SC1 margin & Comparative statics using foreign-yield and alternative-return shifts & Measuring the relevant cross-asset exit option set & E4 remains a theoretical generalization pending externally disciplined calibration \\
\midrule
E5 & $\phic$, $\kappa$, threshold assumptions & Trend, sensitivity, and horizon-indicator analysis using Flow-of-Funds data & Short samples and the decomposition of structural vs policy-driven erosion & The clock remains qualitative and cautionary rather than quantitatively binding \\
\midrule
E6 & $\psi_t$ sub-indices, cross-country $\phic$, regime outcomes & Cross-country comparison of $\psi_t$ ordering versus corridor outcomes (Japan/Italy/Greece) & Calibration of $\phic$ for non-Japanese economies; equal-weighting assumption in $\psi_t$ composite & E6 remains a qualitative regime taxonomy rather than a quantitatively predictive corridor model \\
\midrule
Transition & $\Delta \gns$, post-transition premium $\rho_t$ & Scenario-based threshold evaluation using bounded-premium cases & Bounding $\rho_t$ after SC1 weakens & Safe exit becomes a conditional possibility rather than an operational path \\
\bottomrule
\end{tabular}
}
\end{table}

\paragraph{Failure reporting rule.}
If the most relevant observable implication for a given extension is not supported, the implication of rejection is not that Part~II collapses wholesale, but that the corresponding extension should be read more narrowly. In this sense, the empirical roadmap is designed to discipline the scope of the framework rather than to immunize it against falsification.

\subsection{Robustness of Regime Logic to Excluded Extensions}
\label{sec:robustness}

\Cref{rem:completion-scope} asserted that stochastic generalizations,
political-economy endogenization of $\dt$, and welfare-theoretic analysis
need not be included in the dynamic extension because they do not alter the
regime logic.  This subsection converts that assertion into formal
propositions.

The strategy is to show that the two most natural classes of generalization---bounded
stochastic perturbations and endogenous fiscal responses---preserve the
\emph{sign structure} of the core inequalities on which the six extensions
depend.  The sign structure comprises three elements:
\begin{enumerate}[label=(S\arabic*)]
    \item the sign of the effective spread $\rn - \gn$ in the debt
    recursion~\eqref{eq:core-recursion};
    \item the sign of the stability-condition surplus in~\eqref{eq:stability};
    \item the direction of the captive-share inequality in SC1
    (\cref{ass:sc1}).
\end{enumerate}
An extension of the framework is said to \emph{preserve regime logic} if,
under the extension, every proposition in E1--E6 retains its stated conclusion
either exactly, in expectation, or almost surely, with changes confined to the
\emph{location} of thresholds rather than their \emph{existence} or
\emph{sign}.


\begin{proposition}[Robustness to bounded stochastic perturbations]
\label{prop:stochastic-robust}
Consider the stochastically augmented debt recursion
\begin{equation}
\label{eq:stochastic-recursion}
    b_t = \btm\bigl(1 + \rn - \gn + \sigma_t \eta_t\bigr) + \dt,
\end{equation}
where $\{\eta_t\}$ is a sequence of i.i.d.\ mean-zero random variables with
$|\eta_t| \leq 1$ a.s., and $\sigma_t \geq 0$ is a bounded volatility
parameter.  Then:
\begin{enumerate}[label=(\roman*)]
    \item \textbf{E1 (Virtuous Ratchet):} The expected debt improvement from a
    repression sprint of length $T$ satisfies
    \[
        \E[\Delta b^{\mathrm{cumul}}_{\mathrm{sprint}}]
        \;=\;
        T \cdot \bigl|(\rn-\gn)_{\mathrm{sprint}} - (\rn-\gn)_0\bigr| \cdot b_0,
    \]
    identical to the deterministic case.  The improvement persists in
    expectation after policy reversion, with the expected gap decaying at the
    baseline propagation factor.

    \item \textbf{E2 (Bounded gains):} If $\{b_t\}$ is bounded in
    expectation, then $\E[C_t] \leq \mu\lambda\eps\,\E[b_{t-1}^2]
    \leq \mu\lambda\eps\,\bar{b}^2$, so the marginal-gain sequence remains
    uniformly bounded in expectation.

    \item \textbf{E3 (Debt Reduction Paradox):} The derivative
    $\partial \E[\Db] / \partial \btm = (\rn - \gn) + \gamma$ is
    independent of $\sigma_t$, so the paradox condition
    $\gamma < |\rn - \gn|$ is invariant to the stochastic augmentation.

    \item \textbf{E5 (Demographic-$\phi$ Clock):} If the captive-share
    dynamics are augmented as $\phi_{t+1} = \phic - \kappa + \sigma^{\phi}_t
    \nu_t$ with $\E[\nu_t] = 0$, then $\E[\phi_{t+s}] = \phic - \kappa s$
    and the expected residual horizon is
    $T^* = (\phic - \phibar)/\kappa$, unchanged.

    \item \textbf{E6 (Control Rights):} The corridor width
    $|W_t|$ depends on $\psi_t$ through the feasible range of
    $(\eps, \gns, \De)$.  Since $\psi_t$ determines the \emph{set} of
    available policy levers, not the realization of shocks, the monotonicity
    $\partial |W_t| / \partial \psi_t \geq 0$ is preserved under any
    shock distribution.

    \item \textbf{\Cref{sec:closure} (Complementarity):} The complementarity
    condition $0 \leq \rho_t \perp [\varphi_t^d(\rho_t) -
    \varphi_t^{\mathrm{req}}] \geq 0$ is a static within-period market-clearing
    condition.  Stochastic perturbations to the debt recursion affect the
    \emph{state} $b_{t-1}$ entering the next period but do not alter the
    structure of the complementarity problem conditional on the state.
\end{enumerate}
\end{proposition}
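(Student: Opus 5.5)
The plan is to exploit one structural feature. The perturbation $\sigma_t\eta_t$ enters \eqref{eq:stochastic-recursion} additively inside the propagation factor and is multiplied only by $\btm$. So conditional on $\mathcal F_{t-1}$ (the information set generated by $b_0,\dots,b_{t-1}$), linearity of conditional expectation together with $\E[\eta_t]=0$ and independence of $\eta_t$ from $\mathcal F_{t-1}$ gives
\[
\E[b_t\mid\mathcal F_{t-1}] = \btm(1+\rn-\gn)+\dt .
\]
Every item then reduces to showing that the quantity in question is either an affine functional of the path (so expectations pass through and the deterministic formula reappears), or is bounded pathwise, or does not depend on the shock at all. I would first record this conditional-mean identity as a lemma. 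Iterating it by the tower property gives $\E[b_t]=(1+\rn-\gn)\E[b_{t-1}]+\dt$. In other words, the expected path obeys exactly the deterministic recursion \eqref{eq:core-recursion}, with $\rn,\gn,\dt$ treated as deterministic policy inputs. The boundedness $|\eta_t|\le 1$ and $\sigma_t\le\bar\sigma$ guarantee that all moments exist, so no integrability issues arise.

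For (i), I would apply the lemma to two expected paths: a sprint path that uses $(\rn-\gn)_{\mathrm{sprint}}$ for $T$ periods, and the baseline path. Differencing the two linear recursions and evaluating at the operating point $b_0$, as in the deterministic E1 derivation, yields the stated expression $T\,|\Delta(\rn-\gn)|\,b_0$. After reversion, the difference of expected paths satisfies the homogeneous recursion $x_{t}=(1+(\rn-\gn)_0)x_{t-1}$, which is the claimed decay at the baseline factor.

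For (iii), I would take conditional expectations of $\Db=(\rn-\gn+\sigma_t\eta_t)\btm+\dt$ with the fiscal response $\dt=\gamma\btm+\dots$ from E3. Then $\partial\E[\Db\mid\btm]/\partial\btm=(\rn-\gn)+\gamma+\sigma_t\E[\eta_t]=(\rn-\gn)+\gamma$, so the threshold is independent of $\sigma_t$.

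For (iv), the augmented $\phi$ recursion is again affine with mean-zero noise. Summing gives $\E[\phi_{t+s}]=\phic-\kappa s$, and solving $\E[\phi_{t+s}]=\phibar$ returns $T^*$. I would remark that this is a horizon for the expected path, not an expected hitting time.

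For (vi), I would simply note that the complementarity problem is defined conditional on the state $\btm$. The shock only changes which state is realized next period, so the structure of the problem is untouched. For (v), I would argue set inclusion: $\psi_t$ indexes the admissible lever set, and a larger lever set weakly enlarges the feasible corridor for every shock realization. The monotonicity therefore holds pointwise in the shock and hence under any distribution.

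I expect the main obstacle to be (ii). Here $C_t$ is quadratic in $b_{t-1}$, so expectation does not commute with it. I would use the deterministic E2 bound $C_t\le\mu\lambda\eps\,b_{t-1}^2$, which holds pathwise, and then take expectations to get $\E[C_t]\le\mu\lambda\eps\,\E[b_{t-1}^2]$. The second inequality needs $\E[b_{t-1}^2]\le\bar b^2$, which bounded first moments alone do not give. My first attempt would be to strengthen the hypothesis to almost-sure boundedness $b_t\le\bar b$. This is available because the shocks are bounded, and a finite horizon gives $b_t\le b_0\prod(1+|\rn-\gn|+\bar\sigma)+\sum|d|$. Failing that, I would bound the second moment directly via $\E[b_t^2]=\E[b_t]^2+\mathrm{Var}(b_t)$, with the variance controlled by $\bar\sigma^2\E[b_{t-1}^2]$ recursively.
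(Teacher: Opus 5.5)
Your proposal is correct and follows essentially the same route as the paper. You take expectations, conditioning on $\mathcal F_{t-1}$ so that the independence of $\eta_t$ from $\btm$, which the paper uses only implicitly, is explicit; you recover the deterministic recursion for the expected path; and you dispatch (v)--(vi) by noting that the lever set and the complementarity problem are defined state by state and do not involve the shock.

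Your treatment of (ii) is more careful than the paper's. The paper's proof passes from ``boundedness of $\{b_t\}$'' to $\E[b_{t-1}^2]\le\bar b^2$, even though the stated hypothesis is only boundedness in expectation, and Jensen's inequality runs the other way. Your almost-sure bound from $|\eta_t|\le 1$ on a finite horizon is the right repair, provided $\bar b$ is read as that almost-sure bound. Also, in your product bound the growth factors $(1+|\rn-\gn|+\bar\sigma)$ must multiply the $d$-terms as well, not just $b_0$.
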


\begin{proof}
\textit{(i).}
Taking expectations of \cref{eq:stochastic-recursion} and using $\E[\eta_t]=0$:
\[
    \E[b_t] = \E[\btm]\,(1 + \rn - \gn) + \dt.
\]
The expected recursion is identical to the deterministic
recursion~\eqref{eq:core-recursion}.  Hence all propositions whose proofs
depend only on the \emph{first-moment} properties of the recursion---in
particular, the sign of the propagation factor and the persistence of the
gap between the sprint path and the baseline path---carry over to the
expected path without modification.

\textit{(ii).}
$\E[C_t] = \mu\lambda\eps\,\E[b_{t-1}^2]$.  By Jensen's inequality,
$\E[b_{t-1}^2] \geq (\E[b_{t-1}])^2$, so the stochastic case generates
\emph{weakly larger} expected marginal gains than the deterministic case.
However, boundedness of $\{b_t\}$ implies $\E[b_{t-1}^2] \leq \bar{b}^2$,
so the uniform bound is preserved.

\textit{(iii).}
\[
    \E[\Db] = (\rn - \gn)\,\E[\btm] + \E[\dt]
        + \underbrace{\sigma_t \E[\eta_t]}_{=\,0}\,\E[\btm].
\]
Since the shock term vanishes in expectation, the derivative with respect to
$\E[\btm]$ is $(\rn-\gn)+\gamma$, identical to the deterministic case.

\textit{(iv).}
Direct from $\E[\nu_t]=0$.

\textit{(v).}
The corridor is defined by the set of $(\eps,\gns,\De)$ satisfying
\cref{eq:stability}.  This set depends on $\psi_t$ through the feasible
ranges of each variable, which are institutional constraints independent of
the realization of $\eta_t$.

\textit{(vi).}
The complementarity condition is a \emph{static} equilibrium condition
evaluated at a given state $X_t$.  Stochastic perturbations affect the
transition of $X_t$ across periods but not the within-period pricing
mechanism.
\end{proof}

\begin{remark}[What stochastic generalization does change]
\label{rem:stochastic-changes}
The stochastic augmentation does not leave the framework entirely unchanged.
It introduces three effects that are absent in the deterministic case:
\begin{enumerate}[label=(\alph*)]
    \item \textbf{Variance of the debt path:} the variance of $b_t$ grows
    with $\sigma_t$, increasing the probability that the system crosses the
    stability boundary even when the expected path remains safe.  This is a
    \emph{risk} effect, not a regime-logic effect.

    \item \textbf{Precautionary motive for the safety margin:} the margin $m$
    in the Transition Feasibility Proposition should be scaled to the
    volatility of the debt path.  This changes the \emph{calibration} of $m$,
    not the \emph{existence} of the transition condition.

    \item \textbf{Option value of delay:} under uncertainty, delaying
    irreversible policy choices (such as repression withdrawal) may have
    positive option value.  This is a welfare-theoretic consideration that
    lies outside the observables-centered architecture of JFR-rg.
\end{enumerate}
None of these effects alters the sign structure (S1)--(S3) or the functional
form of any proposition in E1--E6.  They affect threshold \emph{levels} and
\emph{risk management}, not the regime logic itself.
\end{remark}


\begin{proposition}[Robustness to fiscal-response endogenization]
\label{prop:fiscal-robust}
Let the effective deficit be determined by a general fiscal-response function
\begin{equation}
\label{eq:fiscal-general}
    \dt = D(\btm,\, s_t^{\mathrm{pol}}),
\end{equation}
where $s_t^{\mathrm{pol}}$ is a (possibly stochastic) political-state
variable, and $D$ is Lipschitz-continuous in $\btm$ with
$\partial D / \partial \btm =: \gamma(b)$ in a neighborhood of the relevant
debt level.  Then:
\begin{enumerate}[label=(\roman*)]
    \item \textbf{E3 (Debt Reduction Paradox):}
    The paradox condition generalizes to
    \begin{equation}
    \label{eq:e3-general}
        \gamma(\btm) \;<\; |\rn - \gn|.
    \end{equation}
    The functional form of the condition is unchanged; only the deficit-relief
    coefficient $\gamma$ becomes state-dependent.

    \item \textbf{E1 (Virtuous Ratchet):} The sprint improvement persists
    after policy reversion provided the fiscal stance satisfies
    $D(b_{\mathrm{sprint}}, s_t^{\mathrm{pol}}) \leq D(b_{\mathrm{baseline}},
    s_t^{\mathrm{pol}})$, i.e., the fiscal response does not offset the
    debt improvement by increasing spending.  This is a \emph{behavioral}
    condition on the fiscal authority, not a structural modification of the
    ratchet mechanism.

    \item \textbf{Stability condition:} Under \cref{eq:fiscal-general}, the
    stability condition~\eqref{eq:stability} becomes
    \[
        \eps + \gns + \alpha\De - \beta\max(0,\De-\ebar)^2
        \;\geq\;
        \pi_t + \frac{D(\btm, s_t^{\mathrm{pol}}) - \stt}{\btm}.
    \]
    The left-hand side (the regime-specific policy levers) is unchanged.
    The right-hand side (the fiscal burden) acquires a state-dependent
    deficit, but the inequality structure and the one-for-one substitutability
    between $\eps$ and $\gns$ (\cref{eq:stability}) are preserved.

    \item \textbf{All other extensions (E2, E4, E5, E6, \Cref{sec:closure}):}
    These extensions depend on $\dt$ only through its \emph{value} at a given
    state, not through its generating process.  Replacing $\dt$ with
    $D(\btm, s_t^{\mathrm{pol}})$ changes the numerical value of the
    stability-condition surplus but does not alter any sign condition,
    comparative static, or equilibrium structure.
\end{enumerate}
\end{proposition}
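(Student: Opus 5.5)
The plan is to treat each item as a direct substitution of the fiscal-response function \cref{eq:fiscal-general} into the core recursion \eqref{eq:core-recursion}–\eqref{eq:core-difference}, and then check which quantities change. For (i), I would write $\Db = (\rn-\gn)\btm + D(\btm, s_t^{\mathrm{pol}})$ and differentiate with respect to $\btm$, holding $s_t^{\mathrm{pol}}$ fixed. Lipschitz continuity guarantees that $D$ is differentiable almost everywhere. On the stated neighborhood where $\partial D/\partial\btm=\gamma(b)$ exists, this gives $\partial \Db/\partial \btm = (\rn-\gn)+\gamma(\btm)$. In the JFR-rg regime $\rn-\gn<0$, so the derivative is negative exactly when $\gamma(\btm)<|\rn-\gn|$. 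This reproduces the E3 condition with $\gamma$ now state-dependent. Where $D$ is not differentiable, I would restate the condition with the upper Dini derivative, or with the Lipschitz constant as a sufficient bound, so that the sign conclusion still holds.

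For (ii), I would compare two paths, the sprint path and the baseline path, which share the same post-reversion spread $(\rn-\gn)_0$ and the same political state. Subtracting the recursions gives
\[
b_t^{\mathrm{sprint}}-b_t^{\mathrm{base}} = (1+(\rn-\gn)_0)\bigl(b_{t-1}^{\mathrm{sprint}}-b_{t-1}^{\mathrm{base}}\bigr) + \bigl[D(b^{\mathrm{sprint}}_{t-1},s_t^{\mathrm{pol}})-D(b^{\mathrm{base}}_{t-1},s_t^{\mathrm{pol}})\bigr].
\]
The bracketed deficit difference is nonpositive by the stated behavioral condition. Induction on $t$ then shows that the gap stays nonpositive, so the improvement persists. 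A sharper bound follows because the gap is at least as favorable as the deterministic E1 gap propagated by the baseline factor $1+(\rn-\gn)_0$, which is positive given that the spread is small in magnitude.

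For (iii), I would substitute $\dt=D(\btm,s_t^{\mathrm{pol}})$ into the right-hand side of \eqref{eq:stability}. The left-hand side does not contain $\dt$, so it is unchanged. Since $\gns$ and $\eps$ still enter additively with unit coefficients, $\partial(\text{LHS})/\partial\eps=\partial(\text{LHS})/\partial\gns=1$, which preserves the one-for-one substitutability. For (iv), I would go through E2, E4, E5, E6 and the complementarity block of \cref{sec:closure} and record that each uses $\dt$ only as a state-evaluated level entering the stability surplus or $\varphi_t^{\mathrm{req}}$. Their sign conditions and monotonicity statements are taken pointwise in the state, so replacing a number by a function value evaluated at that same state changes threshold locations but not signs.

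The main obstacle is (iv). It is a claim about results not yet formally stated at this point in the paper, so the argument has to rest on a structural audit: for each extension, verify that no proof differentiates through $\dt$ with respect to the variable of interest. The one exception is E3, which is handled in (i). A secondary subtlety is in (ii): with a stochastic $s_t^{\mathrm{pol}}$, the comparison must be made pathwise, using the same realization on both paths. I would impose this coupling explicitly.
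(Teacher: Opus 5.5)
Your proposal is correct and follows the paper's route item by item. You substitute $D(\btm,s_t^{\mathrm{pol}})$ into the debt-flow equation and differentiate for (i), compare the sprint and baseline paths under a common post-reversion recursion for (ii), and treat $\dt$ as a state-evaluated input for (iii)--(iv). Your version is slightly tighter in two respects: in (ii) the paper assumes the two paths face the same deficit (so the gap decays exactly at $1+r_0^n-g_0^n$), whereas your induction on the gap uses the stated inequality and gives a one-sided bound, and your Lipschitz/Dini remark and pathwise coupling of $s_t^{\mathrm{pol}}$ make explicit what the paper leaves implicit.
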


\begin{proof}
\textit{(i).}
The augmented debt-flow equation is
\[
    \Db = (\rn - \gn)\btm + D(\btm, s_t^{\mathrm{pol}}).
\]
Differentiating with respect to $\btm$:
\[
    \frac{\partial \Db}{\partial \btm}
    = (\rn - \gn) + \frac{\partial D}{\partial \btm}
    = (\rn - \gn) + \gamma(\btm).
\]
This is negative (the paradox holds) if and only if
$\gamma(\btm) < |\rn - \gn|$, which is \cref{eq:e3-general}.  The derivation
is identical to the proof of \cref{prop:debt-reduction-paradox} with $\gamma$
replaced by $\gamma(\btm)$.

\textit{(ii).}
The ratchet proof (\cref{prop:virtuous-ratchet}) compares two paths evolving
under the \emph{same} recursion after policy reversion.  If the fiscal
response does not systematically differ between the two paths (i.e., the
fiscal authority does not react differently to the lower debt level), both
paths evolve under the same $(r^n, g^n, d)$ triple, and the gap decays at
the baseline propagation factor.

\textit{(iii)--(iv).}
The stability condition, E2, E4, E5, E6, and the complementarity condition
of \Cref{sec:closure} all treat $\dt$ as a given input at each $t$.  Replacing
this input with $D(\btm, s_t^{\mathrm{pol}})$ changes the input value but
not the functional relationship between the remaining variables.
\end{proof}

\begin{remark}[What fiscal endogenization does change]
\label{rem:fiscal-changes}
Endogenizing $\dt$ introduces one substantive complication: the fiscal
authority may react to the JFR-rg regime itself.  If the government
interprets captive-system stability as permission to run larger deficits
(moral hazard), then $\partial D / \partial \phibar < 0$: a more relaxed
scope condition leads to a larger deficit, partially offsetting the
stability benefits of repression.  This is a political-economy feedback
that changes the \emph{calibration} of the stability condition (by raising
the right-hand side of \cref{eq:stability}) but does not alter the
\emph{functional form} of any proposition.  The relevant failure mode is
(iv) in \cref{tab:failure-modes} (``Political-fiscal reaction''), which is
already accounted for in the failure-mode integration of Part~II.
\end{remark}


\begin{corollary}[Scope of logical completion]
\label{cor:completion-scope}
The regime logic of JFR-rg---defined as the sign structure (S1)--(S3) and
the functional form of the propositions in E1--E6 and \Cref{sec:closure}---is
invariant to:
\begin{enumerate}[label=(\alph*)]
    \item bounded stochastic perturbations of the debt recursion
    (\cref{prop:stochastic-robust});
    \item Lipschitz-continuous endogenization of the fiscal response
    (\cref{prop:fiscal-robust}).
\end{enumerate}
The remaining class of excluded extensions---welfare-theoretic analysis,
full political-economy microfoundations with strategic fiscal agents, and
endogenous institutional-quality dynamics---answer questions that lie
\emph{outside} the observables-centered block-recursive architecture of
JFR-rg.  They may enrich the framework but cannot contradict its regime
logic, because they operate on different dependent variables (welfare,
political equilibria, institutional quality) rather than on the
debt-recursion sign structure that defines the regime.
\end{corollary}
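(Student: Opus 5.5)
The corollary is essentially a packaging of \cref{prop:stochastic-robust} and \cref{prop:fiscal-robust}, so the plan is to prove parts (a) and (b) by checking, element by element, that the sign structure (S1)--(S3) and the functional form of each proposition in E1--E6 and \Cref{sec:closure} is reproduced under each augmentation. Then I will handle the final sentence about the remaining excluded extensions as a separate structural claim about which variables enter the debt recursion.

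For part (a), I will invoke \cref{prop:stochastic-robust}. Item (i) shows that the expected recursion coincides with \eqref{eq:core-recursion}. Hence the sign of $\rn-\gn$ governing the expected path is exactly the deterministic one, which gives (S1) in expectation. Taking expectations in the stability condition \eqref{eq:stability}, whose left-hand side contains no shock term, gives (S2). The E5 item gives $\E[\phi_{t+s}]=\phic-\kappa s$, so the direction of $\phic\ge\phibar$ and the horizon $T^*$ are unchanged, which gives (S3). The remaining items (ii), (iii), (v), (vi) supply the functional-form statements for E2, E3, E6 and the closure. I will note that E4 is not separately listed there. Since E4 concerns how foreign yields shift $\phibar$, and that is a threshold location rather than a recursion, I will argue it inherits invariance through (S3). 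I will also invoke \cref{rem:stochastic-changes} to record that only levels and variance change, which matches the definition of ``preserve regime logic''.

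For part (b), I will invoke \cref{prop:fiscal-robust}. Item (i) gives E3 with $\gamma$ replaced by $\gamma(\btm)$. Item (iii) preserves the inequality form of \eqref{eq:stability}, and with it the one-for-one substitutability that is (S2). Item (iv) covers E2, E4, E5, E6 and the closure. Item (ii) gives E1 under the stated behavioral condition. (S1) is untouched because $\rn-\gn$ does not involve $D$. Lipschitz continuity of $D$ is used only to guarantee that $\gamma(b)$ is well defined locally, so thresholds move continuously without appearing or vanishing.

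The final claim concerns welfare-theoretic analysis, strategic fiscal agents, and institutional-quality dynamics. I will argue that any such extension influences the recursion only through the inputs $(\rn,\gn,\dt,\phic)$. The input $\dt$ is covered by (b) once any strategic rule is written as some $D(\btm,s_t^{\mathrm{pol}})$. Institutional quality enters only through $\phibar$ or $\psi_t$, which are threshold locations.

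I expect the main obstacle to be this final step. ``Cannot contradict'' is only defensible as a statement about functional form, not about threshold levels. A strategic fiscal rule also need not be Lipschitz, and the E1 behavioral condition in \cref{prop:fiscal-robust}(ii) may fail. I would therefore state this part conditionally: the regime logic is preserved provided the enriched model can be reduced to inputs of the block-recursive form. I would flag non-Lipschitz or regime-dependent fiscal rules as the case that falls outside the corollary.
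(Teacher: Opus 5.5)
Your proposal follows the paper's proof, which is simply ``direct from'' \cref{prop:stochastic-robust,prop:fiscal-robust} plus the observation that welfare and political-economy extensions do not modify \eqref{eq:core-recursion} or \eqref{eq:stability}. Your conditional proviso for the final sentence is that same observation made explicit, and your extra item-by-item checks (including E4, which \cref{prop:stochastic-robust} omits) only add detail.

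Two small fixes are needed. First, argue (S2) conditional on the state $b_{t-1}$ rather than by taking unconditional expectations, since the burden term $(d_t-s_t)/b_{t-1}$ is nonlinear in $b_{t-1}$. Second, do not credit Lipschitz continuity with making $\gamma(b)$ exist or vary continuously: \cref{prop:fiscal-robust} assumes the derivative separately, and Lipschitz continuity only bounds $|\gamma|$.
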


\begin{proof}
Direct from \cref{prop:stochastic-robust,prop:fiscal-robust} and the
observation that welfare and political-economy extensions do not modify
\cref{eq:core-recursion,eq:stability}.
\end{proof}

\begin{remark}[Relation to \cref{rem:completion-scope}]
\label{rem:completion-upgrade}
\Cref{cor:completion-scope} upgrades the scope claim of
\cref{rem:completion-scope} from a stated assertion to a formally grounded
result.  The claim of logical completion now rests on three pillars:
\begin{enumerate}[label=(\arabic*)]
    \item the six extensions E1--E6 organize the principal dynamic implications
    that this paper identifies as internal to the Part~I architecture
    (established by construction in \cref{sec:e1}--\cref{sec:e6});
    \item the most natural generalizations (stochastic, fiscal) preserve the
    regime logic (\cref{prop:stochastic-robust,prop:fiscal-robust}); and
    \item the minimal equilibrium closure of \Cref{sec:closure} endogenizes the
    last remaining free variable ($\rho_t$) within the block-recursive
    structure.
\end{enumerate}
Extensions that fall outside these three categories either preserve the
same regime logic or answer different questions that lie outside the
present observables-centered scope.
\end{remark}

\section{Extension E1: The Virtuous Ratchet}
\label{sec:e1}

Part~I established the Normalization Ratchet: a temporary tightening shock can generate a debt legacy whose decay is governed only by the baseline propagation factor, making the damage quasi-permanent on policy horizons. The same linear structure also implies a mathematically symmetric positive result.

\begin{proposition}[Virtuous Ratchet]
\label{prop:virtuous-ratchet}
Consider a temporary policy episode of length $T$ in which
\[
(\rn-\gn)_{\text{sprint}} < (\rn-\gn)_0
\]
for each period of the episode, and in which the fiscal stance does not deteriorate relative to baseline. Then the debt improvement generated during the episode persists after reversion and decays only at the baseline propagation factor $(1+r_0^n-g_0^n)$.
\end{proposition}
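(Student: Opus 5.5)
The argument compares two paths generated by the core recursion \eqref{eq:core-recursion}. Both start from the same initial debt ratio $b_0$. The \emph{baseline} path $\{b_t^0\}$ evolves under $(\rn-\gn)_0 =: x_0$ throughout. The \emph{sprint} path $\{b_t^s\}$ evolves under $(\rn-\gn)_{\text{sprint}} =: x_t^s < x_0$ for $t=1,\dots,T$ and reverts to $x_0$ for $t>T$. The hypothesis that the fiscal stance does not deteriorate is formalized as $d_t^s \le d_t^0$ for all $t$. In the benchmark case of equal deficits, $d_t^s=d_t^0$, as in the reading of \cref{prop:fiscal-robust}(ii). I then study the gap $G_t := b_t^0 - b_t^s$, so that the claim becomes a statement about the sign and dynamics of $G_t$.

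First, I track the gap during the sprint. Subtracting the two recursions gives
\[
G_t = (1+x_0)\,G_{t-1} + (x_0 - x_t^s)\,b_{t-1}^s + (d_t^0 - d_t^s),
\qquad G_0 = 0 .
\]
Since $x_0 - x_t^s > 0$ and $d_t^0 - d_t^s \ge 0$, induction shows $G_t > 0$ for $1\le t\le T$, provided $b_{t-1}^s>0$ and $1+x_0>0$. Both are standing positivity conditions on the debt ratio and on the propagation factor, and I will state them explicitly. This gives $G_T>0$, which is the improvement accumulated during the episode. To first order, summing the forcing terms yields the expression $T\,|x^s-x_0|\,b_0$ used in \cref{prop:stochastic-robust}(i), which is a useful consistency check.

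Second, I track the gap after reversion. For $t>T$ both paths are driven by the same coefficient $x_0$, and under equal deficits the same $d_t$, so the deficit terms cancel. The gap then obeys the homogeneous recursion $G_t = (1+x_0)G_{t-1}$, hence
\[
G_{T+k} = (1+r_0^n-g_0^n)^k\, G_T .
\]
This is exactly the claim that the improvement decays, or grows, only at the baseline propagation factor. If instead $d_t^s<d_t^0$ after reversion, an additional nonnegative forcing term appears, and the gap is weakly larger: $G_{T+k} \ge (1+x_0)^k G_T$. Persistence therefore holds a fortiori.

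The main obstacle is interpretive rather than computational. I must pin down "fiscal stance does not deteriorate" so that the deficit terms either cancel or have the right sign, and I must state the positivity of $1+x_0$ and of the debt ratio. In the JFR-rg regime $x_0<0$, so the propagation factor lies in $(0,1)$ and the gap decays geometrically but slowly. This is the mirror image of the Part~I Normalization Ratchet, and I would note the symmetry by observing that flipping the sign of $x^s - x_0$ reproduces that result.
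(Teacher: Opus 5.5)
Your proof is correct and follows the same route as the paper's. Both track the gap $b_t^{\mathrm{baseline}}-b_t^{\mathrm{sprint}}$, show it is positive at the end of the sprint, and then note that after reversion it obeys the homogeneous recursion with factor $(1+r_0^n-g_0^n)$. Your version is more explicit on three points the paper leaves implicit: you write out the gap recursion during the sprint, you state the positivity conditions $b_{t-1}^s>0$ and $1+r_0^n-g_0^n>0$, and you observe that the exact geometric law needs identical post-reversion deficits, whereas weakly lower sprint-path deficits give only a lower bound.
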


\begin{proof}
Let $\Delta_t := b_t^{\mathrm{baseline}} - b_t^{\mathrm{sprint}}$ denote the cumulative
debt improvement at period~$t$ relative to the no-sprint baseline path.
During the sprint, $\Delta_t$ grows because the spread term is more negative and the
deficit term is weakly no worse.
After policy reversion at period~$T$, both paths evolve under the same linear recursion
\cref{eq:core-recursion}, so
\[
  \Delta_{T+s} = (1 + r_0^n - g_0^n)^s \, \Delta_T, \quad s \ge 0.
\]
Hence the improvement gap decays only at the common baseline propagation factor
$(1 + r_0^n - g_0^n)$ and is never restored to zero on a finite policy horizon as long as
the baseline spread satisfies $|r_0^n - g_0^n| < 1$.
\end{proof}

\begin{remark}[Institutional asymmetry]
The asymmetry is not mathematical but institutional. A normalization shock can occur even when the captive system is already weakening. A repression sprint, by contrast, requires that SC1 still hold. If $\phic$ is declining, the state's ability to execute a strategically deeper repression phase is finite even if the debt recursion itself would allow the resulting improvement to persist. This institutional constraint is formalized in the Timing Constraint of \cref{sec:timing}.
\end{remark}

The value of the Virtuous Ratchet lies not in repression as an end in itself, but in the possibility of creating a debt-compression window within which growth-enhancing investment can be financed and institutionalized.

\section{Extension E2: The Corrected Repression Dividend Multiplier}
\label{sec:e2}

Earlier formulations of the Repression Dividend Multiplier suggested that if the repression dividend were reinvested, it might recursively ease exit conditions in an accelerating fashion. The corrected formulation rejects that overstatement while retaining the deeper insight that the repression dividend can generate finite cumulative gains.

\begin{definition}[Repression Dividend]
The annual repression dividend is defined as
\begin{equation}
\RD = \eps \btm.
\label{eq:rd}
\end{equation}
\end{definition}

\begin{proposition}[Bounded marginal improvement from repression-dividend reinvestment]
\label{prop:corrected-e2}
Suppose that a share $\lambda \in (0,1]$ of the repression dividend is converted into growth-enhancing investment and that this investment raises potential nominal growth through an efficiency parameter $\mu>0$. Let the incremental debt-compression gain in period $t$ be
\begin{equation}
C_t := \mu \lambda \eps \, b_{t-1}^2.
\label{eq:Ct}
\end{equation}
If the debt path $\{b_t\}$ remains bounded on the horizon of interest, then $\{C_t\}$ is uniformly bounded. If, in addition, $\{b_t\}$ is weakly decreasing, then $\{C_t\}$ is also weakly decreasing. Consequently, reinvestment of the repression dividend generates bounded and self-decelerating gains rather than explosive cumulative improvement.
\end{proposition}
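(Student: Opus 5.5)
The argument is elementary once the definition \eqref{eq:Ct} is read as a fixed nonnegative coefficient $K := \mu\lambda\eps$ times the squared lagged debt ratio. I will first record that $K$ is finite and nonnegative. By hypothesis $\mu>0$ and $\lambda\in(0,1]$. The repression intensity $\eps$ is treated, as in the stability condition \eqref{eq:stability}, as a bounded regime parameter, which I take to be nonnegative under SC1. On a horizon where $\eps$ varies I would replace it by $\bar\eps := \sup_t \eps_t$, which is finite by the same boundedness of the policy lever. Everything then reduces to properties of the map $x \mapsto Kx^2$ on the relevant range of $b$.

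\textbf{Uniform bound.} Boundedness of the debt path means there is $\bar b$ with $0 \le b_t \le \bar b$ on the horizon of interest. Debt ratios are nonnegative in the baseline concept of \cref{rem:debt-concept}, so the lower bound is free. It follows that $0 \le C_t = K b_{t-1}^2 \le K\bar b^2$ for every $t$, a bound independent of $t$. This is the same bound already invoked in expectation in part (ii) of \cref{prop:stochastic-robust}.

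\textbf{Monotonicity.} If $b_t \le b_{t-1}$ for all $t$, I use that $x\mapsto x^2$ is increasing on $[0,\infty)$. Then $b_{t}^2 \le b_{t-1}^2$, and multiplying by $K\ge 0$ gives $C_{t+1}\le C_t$. If $\eps$ is time-varying, this step additionally needs $\eps_t$ to be weakly decreasing, or constant over the horizon. I would state that as the reading of ``$\eps$'' in \eqref{eq:Ct} rather than hide it.

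\textbf{Conclusion.} A uniformly bounded, weakly decreasing, nonnegative sequence has no accelerating (convex-increasing) segment. Any cumulative sum over a finite horizon $H$ is therefore at most $H\cdot C_1$, which rules out explosive improvement and gives the ``self-decelerating'' qualifier. The main obstacle is interpretive rather than technical. I must make explicit the nonnegativity of $b_t$ and $\eps$, and the treatment of $\eps$ as fixed or nonincreasing, because without nonnegativity of $b$ the squaring step fails to preserve the ordering. I would state these as standing conventions inherited from Part~I before giving the two-line inequalities.
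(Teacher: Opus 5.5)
Your proof is correct and follows the paper's own argument essentially line for line. You bound $C_t$ by $\mu\lambda\eps\,\bar b^2$, use monotonicity of squaring on nonnegative debt ratios to get $C_{t+1}\le C_t$, and conclude that finite-horizon cumulative gains are finite with non-increasing increments. One small correction: the restriction $\eps\ge 0$ does not follow from SC1, since the 2025 monitoring layer has SC1 satisfied with $\eps_t\approx-0.81\%$; it comes instead from the repression-active scope that the paper imposes separately in \cref{rem:e2-inactive}.
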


\begin{proof}
Since $\{b_t\}$ is bounded on the horizon of interest, there exists $\bar b<\infty$ such that $b_{t-1}\le \bar b$ for all $t$. Hence, by \cref{eq:Ct},
\[
0 \le C_t \le \mu \lambda \eps\, \bar b^2.
\]
Thus $\{C_t\}$ is uniformly bounded. If $\{b_t\}$ is weakly decreasing, then $b_t^2 \le b_{t-1}^2$, so again by \cref{eq:Ct},
\[
C_{t+1} \le C_t.
\]
Therefore the marginal gain sequence is weakly decreasing. The cumulative gain over any finite horizon $T$ is
\[
\sum_{t=1}^T C_t,
\]
which is finite for every finite $T$, while its marginal increments are bounded and non-increasing. Hence the process is self-damping rather than explosive.
\end{proof}

\begin{remark}
The stronger claim that the infinite-horizon sum $\sum_{t=1}^{\infty} C_t$ is finite requires additional assumptions, such as summability of $b_t^2$ or a vanishing-debt path, and is not imposed here. The completed claim of E2 is therefore one of bounded, diminishing marginal gains rather than guaranteed finite total cumulative gain.
\end{remark}

\begin{remark}[Scope: repression-active states only]
\label{rem:e2-inactive}
\Cref{prop:corrected-e2} applies only when $\eps_t > 0$, i.e., when the repression
channel is active and the repression dividend $\RD = \eps_t \btm$ is positive.
When $\eps_t \leq 0$---as observed in Japan from mid-2024 onward, where the
10-year JGB yield began to exceed core CPI---the repression dividend disappears and
E2 becomes \emph{inactive}: there is no dividend to reinvest, and the upper bound
\cref{eq:rd-upper} is non-positive.
This is not a theoretical failure of E2 but a regime transition consistent with
Table~\ref{tab:failure-modes}: failure mode~(i) (inflation undershoot, RD vanishes)
is precisely the condition that deactivates E2.
The observed $\eps_t \approx -0.81\%$ in the mid-2025 monitoring layer
(\cref{app:window-sensitivity}) reflects this inactive state.
Investment during such a phase must rely on external financing rather than the
repression dividend.
\end{remark}

\section{Extension E3: The Debt Reduction Paradox}
\label{sec:e3}

The Debt Reduction Paradox is the most counterintuitive result of Part~II and must therefore be stated with full conditioning assumptions. The version below generalizes the original statement to accommodate endogenous fiscal responses of the Bohn~(1998) type.

\begin{proposition}[Debt Reduction Paradox with deficit-relief response]
\label{prop:debt-reduction-paradox}
Suppose that:
\begin{enumerate}[label=(\alph*)]
    \item $\rn-\gn<0$,
    \item the effective deficit satisfies
    \begin{equation}
    \dt = d_0 + \gamma(\btm - b^{\mathrm{ref}}), \qquad \gamma \ge 0,
    \label{eq:deficit-relief}
    \end{equation}
    where $\gamma$ measures the extent to which a lower debt stock mechanically reduces the effective deficit through interest-service relief or related fiscal easing, and
    \item debt reduction does not itself impair the captive system.
\end{enumerate}
Then lowering $\btm$ worsens the annual debt flow $\Db$ whenever the deficit-relief coefficient satisfies
\begin{equation}
\gamma < |\rn - \gn|.
\label{eq:e3-condition}
\end{equation}
When $\gamma \ge |\rn - \gn|$, the deficit-relief response is strong enough to offset the loss of compression, and the paradox does not arise.
\end{proposition}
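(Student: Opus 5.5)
The plan is to substitute the deficit-relief rule \cref{eq:deficit-relief} into the difference form \cref{eq:core-difference} and treat $\Db$ as a function of the inherited stock $\btm$, holding $(\rn,\gn,d_0,b^{\mathrm{ref}})$ fixed. Assumption (c) is what justifies holding $\rn$ fixed. By hypothesis debt reduction does not impair the captive system, so SC1 (\cref{ass:sc1}) continues to hold. The effective spread $\rn-\gn$ is therefore not itself a function of $\btm$ over the relevant range. Under that reading the flow equation becomes affine:
\[
\Db = (\rn-\gn)\btm + d_0 + \gamma(\btm - b^{\mathrm{ref}}) = \bigl[(\rn-\gn)+\gamma\bigr]\btm + d_0 - \gamma b^{\mathrm{ref}},
\]
with constant slope $\partial \Db/\partial \btm = (\rn-\gn)+\gamma$.

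Next, use assumption (a) to write $\rn-\gn = -|\rn-\gn|$. The slope is then $\gamma - |\rn-\gn|$. This is strictly negative exactly when $\gamma < |\rn-\gn|$, which is condition \cref{eq:e3-condition}. A negative slope means that any reduction $\btm \mapsto \btm - \delta$ with $\delta>0$ raises $\Db$ by $\delta\bigl(|\rn-\gn|-\gamma\bigr)>0$. In words, lowering the inherited debt stock worsens (increases) the annual debt flow.

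Because the relation is affine, this holds for finite reductions, not merely at the margin. For the converse clause, the same expression gives a slope of $\gamma-|\rn-\gn|\ge 0$ when $\gamma\ge|\rn-\gn|$. In that case a lower $\btm$ weakly reduces $\Db$: the deficit relief offsets the lost negative-spread compression, and the paradox does not arise.

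The only step needing care is the interpretive one, not the algebra. I need to make explicit that ``worsens $\Db$'' means a larger value of $\Db$ under a ceteris-paribus comparison of $\btm$ values. I also need to spell out how (c) fixes $\rn-\gn$ and SC1 along the comparison. I would state this as the comparative-static convention at the start of the proof. That convention is also what allows \cref{prop:fiscal-robust}(i) to reuse the derivation with $\gamma$ replaced by the state-dependent $\gamma(\btm)$.
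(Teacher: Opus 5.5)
Your proposal is correct and follows the same route as the paper. You substitute the deficit-relief rule into the difference form, differentiate with respect to $\btm$ to get the slope $(\rn-\gn)+\gamma$, and use $\rn-\gn<0$ to read off the sign condition $\gamma<|\rn-\gn|$; your extra remarks (the affine form giving the result for finite reductions, the explicit converse, and assumption (c) holding the spread fixed) are valid refinements that the paper leaves implicit.
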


\begin{proof}
Substituting \cref{eq:deficit-relief} into \cref{eq:core-difference} gives
\[
\Db = (\rn-\gn)\btm + d_0 + \gamma(\btm - b^{\mathrm{ref}}).
\]
Differentiating with respect to $\btm$ yields
\[
\frac{\partial \Db}{\partial \btm} = (\rn-\gn) + \gamma.
\]
Because $\rn-\gn<0$, this derivative is negative if and only if
\[
\gamma < |\rn-\gn|.
\]
A negative derivative means that a higher debt stock reduces $\Db$, equivalently that lowering the debt stock worsens the annual debt flow. This is precisely the paradox.
\end{proof}

\begin{remark}
A canonical Bohn-type fiscal response, in which the primary surplus rises with debt, is a different mechanism. It should not be conflated with \cref{eq:deficit-relief}. In the present proposition, $\gamma$ captures a debt-relief channel associated with lower debt service or related fiscal easing, not a fiscal-discipline response of the standard Bohn type.
\end{remark}

\begin{remark}[Empirical anchor for $\gamma$]
\label{rem:gamma-empirical}
The paradox is empirically operative only when $\gamma < |\rn - \gn|$ \emph{under JFR-rg conditions}. Japan's 2005--2007 primary-balance improvement episode remains informative, but in the current data implementation it is better interpreted as a \emph{contrast case}: the episode coincides with QE exit and policy-rate normalization, and the realized spread remains positive throughout the three-year window ($|r-g| \approx 1.27\%$, $1.64\%$, and $1.72\%$ in 2005--2007, respectively). After correcting the unit convention in the paradox test, the observed implementation flags 2017 and 2021---not 2005--2007---as the years in which the paradox condition is satisfied. The empirical layer should therefore identify candidate episodes from the realized sign pattern on $\rn-\gn$, debt-ratio compression, and the fiscal response, and then test whether the estimated $\gamma$ falls below the contemporaneous threshold. On this reading, 2005--2007 helps distinguish non-JFR fiscal consolidation from genuinely repression-supported debt compression rather than serving as an automatic positive anchor for the paradox itself.
\end{remark}

\section{Extension E4: Multi-Country Repression Equilibrium}
\label{sec:e4}

Extension E4 generalizes the captive-threshold logic into a multi-country environment.

\begin{definition}[Generalized captive threshold]
Let the effective threshold be
\begin{equation}
\phibar = f(\eps_{1,t},\eps_{2,t},\dots,\eps_{N,t},r^{\mathrm{alt}}_t),
\label{eq:phibar-generalized}
\end{equation}
where the foreign repression states and the return on alternative assets jointly determine the effective threshold relevant for sovereign outflow pressure.
\end{definition}

\begin{proposition}[Multi-country repression equilibrium]
\label{prop:e4}
If other major economies also maintain negative real sovereign returns, then the sovereign-bond channel of outflow pressure weakens, lowering the effective captive threshold relative to an otherwise identical world with positive real foreign sovereign returns. However, the threshold has a positive lower bound determined by the availability of alternative assets.
\end{proposition}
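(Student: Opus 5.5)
The argument rests on monotonicity and boundedness properties of the threshold map $f$ in \cref{eq:phibar-generalized}. Since the definition leaves $f$ unspecified, I would first state the auxiliary conditions as explicit structural assumptions, consistent with the paper's notion of logical completion under stated conditions. There are two. First, $f$ is weakly increasing in the real return offered by each foreign sovereign, $-\eps_{j,t}$. Equivalently, $f$ is weakly decreasing in each $\eps_{j,t}$, because a more attractive foreign exit option raises the domestic captive share needed to forestall a premium regime. Second, $f$ is weakly increasing in $r^{\mathrm{alt}}_t$. To get a clean lower bound, I would write $f$ in a decomposed form,
\[
f(\eps_{1,t},\dots,\eps_{N,t},r^{\mathrm{alt}}_t)=\max\bigl\{h(\max_j(-\eps_{j,t})),\;\underline{\phi}(r^{\mathrm{alt}}_t)\bigr\},
\]
or more generally impose the inequality $f\ge \underline{\phi}(r^{\mathrm{alt}}_t)$. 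Here $h$ is nondecreasing, and $\underline{\phi}(r^{\mathrm{alt}}_t)>0$ whenever some non-sovereign alternative asset offers a positive real return.

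The comparative-static claim comes next. Fix $r^{\mathrm{alt}}_t$ and compare two worlds. In world A, every major foreign economy has $\eps_{j,t}>0$, so real sovereign returns are negative. World B is identical except that $\eps_{j,t}\le 0$, so those returns are positive. Since the real return on foreign sovereigns is lower in A than in B, coordinate-wise monotonicity of $f$ gives $\phibar^{A}\le\phibar^{B}$. The inequality is strict if $h$ is strictly increasing on the relevant range, because the sovereign-bond channel of outflow pressure enters through $h$. Read through SC1 (\cref{ass:sc1}), a lower $\phibar$ widens the admissible region $\phic\ge\phibar$. This is the stated weakening of the sovereign-bond outflow channel.

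For the lower bound, I would let the foreign repression intensities grow without limit, $\eps_{j,t}\to\infty$ for all $j$. The first argument of the max then tends to $h(-\infty)$, so $\phibar$ converges to $\max\{h(-\infty),\underline{\phi}(r^{\mathrm{alt}}_t)\}$, which is at least $\underline{\phi}(r^{\mathrm{alt}}_t)>0$. Foreign repression can therefore shut down only the sovereign-bond exit. The alternative-asset exit persists, and it keeps the threshold bounded away from zero. It follows immediately that the floor rises with $r^{\mathrm{alt}}_t$, which matches the ``threshold rises'' failure mode listed for E4 in \cref{tab:failure-modes}.

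The main obstacle is justifying the structural form of $f$, in particular positivity of the floor $\underline{\phi}$. I would first try to derive it from a minimal portfolio condition: domestic holders leave JGBs whenever the best available real return, whether on foreign sovereigns or on alternative assets, exceeds the domestic real sovereign return by more than an institutional friction. The captive share needed to prevent a premium is then an increasing function of that best outside return. It is strictly positive whenever $r^{\mathrm{alt}}_t$ exceeds the domestic real yield, which holds because $-\eps_t<0$ under repression. If this derivation is not tight enough, I would fall back on stating monotonicity and the floor as assumptions and defer their calibration to the external empirical layer (\cref{tab:roadmap}).
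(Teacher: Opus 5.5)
Your argument is essentially the paper's. The paper supports E4 through \cref{lem:two-country}, which likewise posits sign conditions on the threshold map ($\partial h/\partial\eps_F<0$, $\partial h/\partial r^{\mathrm{alt}}>0$), reads the comparative statics directly off those signs, and justifies the positive floor by the always-available exit into non-sovereign alternatives. Your explicit floor $f\ge\underline{\phi}(r^{\mathrm{alt}}_t)>0$ turns that informal last step into a stated assumption and gives a bound uniform in foreign repression, which is slightly more than the paper's pointwise $\bar\phi_H>0$. One correction: under your $\max$ form, strict lowering also requires the $h$-term, not the floor, to be the binding term in the comparison world. Strict monotonicity of $h$ alone is not enough.
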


The following lemma provides the comparative-statics foundation for \cref{prop:e4} in a minimal two-country setting.

\begin{lemma}[Two-country comparative statics]
\label{lem:two-country}
Consider two countries, Home ($H$) and Foreign ($F$), each with captive shares $\phi_H$ and $\phi_F$. Suppose Home's effective captive threshold is
\[
\bar\phi_H = h\bigl(\eps_F, r^{\mathrm{alt}}\bigr),
\qquad
\frac{\partial h}{\partial \eps_F} < 0,
\quad
\frac{\partial h}{\partial r^{\mathrm{alt}}} > 0.
\]
Then:
\begin{enumerate}[label=(\roman*)]
\item An increase in Foreign repression ($\eps_F$ rises) lowers $\bar\phi_H$, expanding Home's SC1 margin.
\item An increase in the return on non-sovereign alternatives ($r^{\mathrm{alt}}$ rises) raises $\bar\phi_H$, tightening Home's SC1 margin.
\item Even under generalized sovereign repression ($\eps_F > 0$ for all $F$), $\bar\phi_H > 0$ as long as $r^{\mathrm{alt}} > 0$.
\end{enumerate}
\end{lemma}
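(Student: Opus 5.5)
The plan is to read all three claims off the sign restrictions on $h$, using the SC1 margin $m_H := \phi_H - \bar\phi_H$ from \cref{ass:sc1} as the object whose comparative statics we track. Since $\phi_H$ is a state variable that does not depend on $(\eps_F, r^{\mathrm{alt}})$ in this minimal setting, we have $\partial m_H/\partial \eps_F = -\partial h/\partial \eps_F$ and $\partial m_H/\partial r^{\mathrm{alt}} = -\partial h/\partial r^{\mathrm{alt}}$. Parts (i) and (ii) then follow from these two identities.

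For (i), I will fix $r^{\mathrm{alt}}$ and apply the mean value theorem along the $\eps_F$ direction. For $\eps_F' > \eps_F$,
\[
h(\eps_F', r^{\mathrm{alt}}) - h(\eps_F, r^{\mathrm{alt}}) = \partial_{\eps_F} h(\xi, r^{\mathrm{alt}})\,(\eps_F'-\eps_F) < 0
\]
for some intermediate $\xi$. So $\bar\phi_H$ falls and $m_H$ rises, which means Home's SC1 margin expands. Part (ii) is the same argument in the $r^{\mathrm{alt}}$ direction with the opposite sign, so $\bar\phi_H$ rises and the margin tightens. The only thing to state explicitly is that $h$ is differentiable along the relevant segments, which the lemma implicitly assumes by writing the partials.

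Part (iii) is the main obstacle, because the two derivative signs alone do not bound the level of $h$. A function that is decreasing in $\eps_F$ could in principle be driven to zero or below under strong enough foreign repression. I would therefore make explicit the normalization implicit in $\bar\phi_H$ being a captive-share threshold, namely
\[
h(\eps_F, 0) \ge 0 \quad \text{for all } \eps_F > 0 .
\]
The economic reading is that when alternatives yield nothing, the threshold cannot be negative. Given this, for $r^{\mathrm{alt}} > 0$ the strict monotonicity in $r^{\mathrm{alt}}$ yields
\[
h(\eps_F, r^{\mathrm{alt}}) = h(\eps_F, 0) + \int_0^{r^{\mathrm{alt}}} \partial_r h(\eps_F, u)\,du > h(\eps_F, 0) \ge 0 ,
\]
so $\bar\phi_H > 0$ for every $\eps_F > 0$. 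This holds in particular under generalized repression, where all foreign $\eps_F$ are positive.

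If a bound uniform in $\eps_F$ is wanted, which is the reading that \cref{prop:e4} requires for ``a positive lower bound,'' I would strengthen the normalization in one of two ways. One option is to require that $\inf_{\eps_F} \partial_r h(\eps_F,\cdot)$ be bounded away from zero on $[0, r^{\mathrm{alt}}]$. The other is to argue directly that $h(\eps_F, r^{\mathrm{alt}})$ is bounded below by the limit as $\eps_F \to \infty$, which the alternative-asset channel keeps positive. Either condition gives a lower bound $\underline{\phi}(r^{\mathrm{alt}}) > 0$ that is independent of foreign repression. This delivers the lemma and, by aggregating over $F$, the qualitative content of \cref{prop:e4}.
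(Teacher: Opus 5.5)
Your treatment of (i) and (ii) matches the paper, which says only that they follow from the sign restrictions on $h$; your mean-value-theorem step spells that out. The two routes differ on (iii).

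The paper argues economically. Domestic holders can always exit into non-sovereign alternatives, so when $r^{\mathrm{alt}}>0$ the exit option has positive value and the threshold stays strictly positive. In effect this is an unstated modeling assumption linking $r^{\mathrm{alt}}>0$ to $h>0$.

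You observe, correctly, that the two sign restrictions cannot by themselves give a statement about the level of $h$. For instance, $h(\eps_F,r^{\mathrm{alt}})=r^{\mathrm{alt}}-\eps_F$ satisfies both restrictions but is negative whenever $\eps_F>r^{\mathrm{alt}}$. You therefore replace the economic appeal with an explicit normalization $h(\eps_F,0)\ge 0$ together with strict monotonicity in $r^{\mathrm{alt}}$.

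\emph{What your route buys.} It makes exactly what is being assumed transparent, at essentially no cost. Because $\bar\phi_H$ is a threshold for a share, $h$ naturally takes values in $[0,1]$, so nonnegativity is automatic. Strict monotonicity then gives $h(\eps_F,r^{\mathrm{alt}})>h(\eps_F,r^{\mathrm{alt}}/2)\ge 0$, without even requiring $h$ to be defined at $r^{\mathrm{alt}}=0$. The mean value theorem, as you used in (i), suffices here too. Your integral representation is therefore unnecessary, since it would also require justifying integrability of $\partial_r h$.

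\emph{What the paper's route buys.} It gives a direct economic reading of why the threshold is strictly positive.

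Your last paragraph, on a lower bound uniform in $\eps_F$, goes beyond the lemma, which asserts only pointwise positivity for each $\eps_F$. It does correctly identify the extra condition that the ``positive lower bound'' claimed in Proposition~\ref{prop:e4} would actually need, and the paper does not supply that condition.
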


\begin{proof}
Claims (i) and (ii) follow directly from the sign assumptions on $h$. Claim~(iii) follows from the observation that domestic holders always have the option of exiting sovereign bonds into non-sovereign alternatives; as long as $r^{\mathrm{alt}}>0$, the exit option has positive value, and the threshold below which exit pressure dominates remains strictly positive.
\end{proof}

\begin{remark}
\cref{lem:two-country} provides the minimal theoretical foundation for the 2022--2025 monitoring episode, in which the Federal Reserve's rate-hiking cycle raised $r^{\mathrm{alt}}$ for Japanese institutional investors and exerted downward pressure on the domestic-holder share $\phic$ even though SC1 remained satisfied. The empirical calibration of $h(\cdot)$---including the relevant cross-asset allocation parameters---remains external to the framework but is now anchored in a well-defined comparative-statics structure.
\end{remark}

\section{Extension E5: The Demographic-\texorpdfstring{$\phi$}{phi} Clock}
\label{sec:e5}

Among the dynamic extensions, the most practically consequential is E5. The original model treated SC1 as a static scope condition. The present extension adds a time dimension.

\begin{definition}[Residual time before SC1 failure --- linear case]
\label{def:clock}
Let $\kappa>0$ denote the structural annual decline rate in the captive share. Then the residual time before SC1 failure is defined by
\begin{equation}
T^*=\frac{\phic-\phibar}{\kappa}.
\label{eq:clock}
\end{equation}
\end{definition}

\begin{proposition}[Finite-horizon interpretation of SC1]
\label{prop:finite-window}
If $\kappa>0$, then SC1 is not merely a static regime condition but a finite-horizon constraint. The policy relevance of the JFR-rg regime depends not only on whether $\phic \ge \phibar$ holds today, but also on the time remaining before the inequality is expected to fail.
\end{proposition}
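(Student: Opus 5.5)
The plan is to make the statement precise and then prove it directly from the linear law of motion underlying \cref{def:clock}. I would first formalize the clock by positing the captive-share dynamics $\phi_{t+s} = \phic - \kappa s$ for $s \ge 0$, with $\phibar$ held fixed over the horizon. This is the deterministic counterpart of the augmented recursion in part (iv) of \cref{prop:stochastic-robust}. Under these dynamics, SC1 at date $t+s$ is the inequality $\phic - \kappa s \ge \phibar$. Whenever $\kappa>0$, this inequality is equivalent to $s \le (\phic-\phibar)/\kappa = T^*$. The first claim then follows at once. If SC1 holds today ($\phic \ge \phibar$), then $T^* \in [0,\infty)$ is finite, and SC1 holds exactly on $\{s : 0 \le s \le T^*\}$ and fails for every $s > T^*$. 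Hence SC1 defines a bounded admissible window rather than an open-ended regime property. By contrast, if $\kappa \le 0$, the inequality holds for all $s \ge 0$, which recovers the static reading of Part~I.

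The second step turns this window into the claimed dependence of policy relevance on residual time. Every direct implication of the framework is conditional on SC1 (see the sentence following \cref{ass:sc2}). So any proposition whose hypotheses must hold over $k$ consecutive periods is applicable at $t$ only if $k \le T^*$. The Virtuous Ratchet of \cref{prop:virtuous-ratchet} with sprint length $T$ is one example, and the reinvestment horizon of \cref{prop:corrected-e2} is another. Two economies with the same current margin test, $\phic \ge \phibar$, but different values of $\kappa$ therefore admit different sets of feasible multi-period policies. This shows that the current-period test alone is not a sufficient statistic for policy relevance. I would also record the comparative statics $\partial T^*/\partial \kappa = -(\phic-\phibar)/\kappa^2 < 0$ and $\partial T^*/\partial \phibar = -1/\kappa < 0$. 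The second links the clock to E4: by \cref{lem:two-country}, a rise in $r^{\mathrm{alt}}$ raises $\phibar$ and so shortens the horizon.

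The main obstacle is that the proposition is partly interpretive, since ``policy relevance'' is not defined, and it relies on the linear-decline assumption. I would handle the first issue by fixing the operational meaning above: applicability of SC1-conditioned propositions over a stated horizon. I would handle the second by noting that the argument needs only monotone decline. If $\phi_{t+s}$ is weakly decreasing with cumulative decline of at least $\kappa s$, the crossing time is bounded above by $T^*$, so finiteness survives. In the stochastic case, part (iv) of \cref{prop:stochastic-robust} gives the same conclusion for the expected path.
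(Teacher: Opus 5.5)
Your proposal is correct and follows the same route as the paper. The paper's proof simply notes that under \cref{eq:clock} a strictly positive $\kappa$ turns the current margin $\phic-\phibar$ into a finite horizon $T^*$, so SC1 can hold only for a finite duration unless $\kappa=0$. Your additional steps go beyond that proof but mirror results the paper proves separately in \cref{prop:timing} and \cref{prop:linear-upper-bound}: you read ``policy relevance'' as whether SC1-conditioned multi-period propositions apply, and you extend finiteness to any path whose cumulative decline is at least $\kappa s$.
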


\begin{proof}
Under \cref{eq:clock}, any strictly positive decline rate in $\phic$ maps a positive current distance $(\phic-\phibar)$ into a finite positive horizon. Hence, even if the system satisfies SC1 today, the duration of that satisfaction is finite unless $\kappa=0$.
\end{proof}

\begin{proposition}[Linear clock as an upper bound under uniformly faster-than-linear erosion]
\label{prop:linear-upper-bound}
Suppose that, until the threshold $\phibar$ is reached, the captive share satisfies the differential inequality
\[
\dot\phi_\tau \le -\kappa
\]
for some constant $\kappa>0$. Then the linear clock
\begin{equation}
T^*_{\mathrm{lin}}=\frac{\phic-\phibar}{\kappa}
\label{eq:linear-clock}
\end{equation}
is an upper bound on the residual horizon.
\end{proposition}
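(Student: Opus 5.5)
The plan is a comparison argument for the differential inequality. Fix the current date as $\tau=0$ with $\phi_0=\phic>\phibar$; the case $\phic=\phibar$ is trivial, since both horizons are then zero. Define the actual residual horizon as the first hitting time
\[
\tau^* := \inf\{\tau\ge 0:\ \phi_\tau\le\phibar\},
\]
with the convention $\inf\emptyset=+\infty$. The goal is to show $\tau^*\le T^*_{\mathrm{lin}}$. I will assume that $\phi_\tau$ is absolutely continuous, or at least differentiable, on $[0,\tau^*)$, so that $\dot\phi_\tau$ is meaningful. This is the standing interpretation of the hypothesis, which only constrains the path "until the threshold is reached".

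The key step is to integrate the inequality. For any $\tau<\tau^*$, the fundamental theorem of calculus together with $\dot\phi_s\le-\kappa$ on $[0,\tau]$ gives
\[
\phi_\tau=\phic+\int_0^\tau\dot\phi_s\,ds\le\phic-\kappa\tau .
\]
The right-hand side is exactly the linear-clock path, which reaches $\phibar$ at $\tau=T^*_{\mathrm{lin}}=(\phic-\phibar)/\kappa$.

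Now argue by contradiction. Suppose $\tau^*>T^*_{\mathrm{lin}}$. Then the inequality applies at $\tau=T^*_{\mathrm{lin}}$ and yields $\phi_{T^*_{\mathrm{lin}}}\le\phibar$. That contradicts the definition of $\tau^*$ as the first time $\phi$ falls to $\phibar$. Hence $\tau^*\le T^*_{\mathrm{lin}}$, so the linear clock \cref{eq:linear-clock} is an upper bound on the residual horizon. A corollary of the same argument is that $\tau^*$ is finite, which recovers \cref{prop:finite-window} as a special case. Equality holds exactly when $\dot\phi_\tau=-\kappa$ throughout, which is the constant-rate case of \cref{def:clock}.

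The main obstacle is not computational but a matter of regularity and the domain of the hypothesis. The inequality is only assumed up to the hitting time, so the integration must be confined to $[0,\tau]$ with $\tau<\tau^*$, and the contradiction must be set up so that it never evaluates the hypothesis beyond $\tau^*$. If $\phi$ is only piecewise differentiable, as with annual Flow-of-Funds data interpolated across regime breaks, I would either assume absolute continuity or restate the hypothesis as the integral inequality $\phi_{\tau_2}-\phi_{\tau_1}\le-\kappa(\tau_2-\tau_1)$ and run the same argument directly. I would add a remark that, in discrete time, the analogous statement is $\phi_{t+s}\le\phic-\kappa s$ by induction, giving a horizon of at most $\lceil T^*_{\mathrm{lin}}\rceil$.
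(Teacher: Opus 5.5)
Your proposal is correct and follows the paper's proof: both integrate $\dot\phi_\tau\le-\kappa$ to obtain $\phi_{t+s}\le\phi_t-\kappa s$, and then conclude that the threshold $\phibar$ is reached no later than $(\phic-\phibar)/\kappa$. Your first-hitting-time and contradiction framing is a useful tightening, because it makes explicit that the hypothesis is only invoked before the threshold is reached, which the paper's one-line integration leaves implicit.
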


\begin{proof}
Integrating $\dot\phi_\tau \le -\kappa$ from $t$ to $t+s$ yields
\[
\phi_{t+s} \le \phi_t - \kappa s.
\]
The threshold is reached no later than the smallest $s$ satisfying
\[
\phi_t - \kappa s = \phibar,
\]
which is exactly \cref{eq:linear-clock}. Hence the linear clock is an upper bound.
\end{proof}

\begin{remark}[Exponential proportional-decay alternative]
\label{rem:exp-alternative}
Under proportional decay,
\[
\dot\phi_\tau = -\kappa_{\exp}\phi_\tau,
\]
the residual horizon is
\begin{equation}
T^*_{\exp} = \frac{1}{\kappa_{\exp}}\ln\!\left(\frac{\phic}{\phibar}\right).
\label{eq:exp-clock}
\end{equation}
This should be treated as an alternative benchmark, not as a universally tighter upper bound. If one normalizes the exponential specification to match the initial absolute decline of the linear specification, i.e.
\[
\kappa_{\exp} = \frac{\kappa_{\mathrm{lin}}}{\phic},
\]
then
\[
T^*_{\exp} = \frac{\phic}{\kappa_{\mathrm{lin}}}\ln\!\left(\frac{\phic}{\phibar}\right)
>
\frac{\phic-\phibar}{\kappa_{\mathrm{lin}}}
=
T^*_{\mathrm{lin}},
\]
because $\ln(1/x)>1-x$ for $x\in(0,1)$. Thus proportional exponential decay does not, in general, generate a tighter upper bound than the linear clock. A genuine upper-bound refinement requires front-loaded or accelerating absolute erosion.
\end{remark}

\begin{remark}[Calibration status]
The logic of the Demographic-$\phi$ Clock is complete. Its concrete numerical calibration remains illustrative. The observed decline in $\phic$ may mix policy-driven and structural demographic factors, $\phibar$ remains only illustratively anchored, and $\kappa$ may itself be time-varying. Accordingly, $T^*$ should be interpreted as a conservative horizon indicator rather than a deterministic countdown.
\end{remark}

\begin{remark}[Baseline $\kappa = 1\%$/yr as a stress anchor, not a central forecast]
\label{rem:kappa-stress}
The main-text baseline uses $\kappa = 0.01$ (1\%/yr), inherited from Part~I as a
conservative stress-style operating point rather than an empirical central estimate.
This value is \emph{not} derived from the full-sample or post-QQE structural estimates
reported in \cref{rem:kappa-empirical} ($\kappa_{\mathrm{full}} = 0.055\%$/yr,
$\kappa_{\mathrm{post\mbox{-}QQE}} = 0.188\%$/yr): it is chosen to stress-test the
timing constraint under a scenario of accelerated erosion, analogous to a regulatory
stress scenario rather than a base-case projection.
Three distinct readings of $\kappa$ should therefore be kept separate:
\begin{enumerate}[label=(\roman*)]
  \item \textbf{Stress baseline} ($\kappa = 1\%$/yr): conservative anchor for
  policy-design bounds, preserving continuity with Part~I.
  \item \textbf{Observed monitoring estimate} ($\kappa_{\mathrm{post\mbox{-}QQE}}
  \approx 0.19\%$/yr): structural trend estimated from post-QQE Flow-of-Funds data;
  used for monitoring updates.
  \item \textbf{Policy-maintenance scenario} ($\kappa \approx 0$, as observed
  2022--2025): reflects near-term slowing of erosion under active BoJ balance-sheet
  operations; not treated as a permanent structural rate.
\end{enumerate}
None of these readings invalidates the others: they answer different questions about
the regime's time horizon under different assumptions about future policy and demographics.
\end{remark}

\begin{remark}[Observed monitoring implementation for $\kappa$]
\label{rem:kappa-empirical}
The full-sample observed implementation using BoJ Flow-of-Funds data for 1997Q4--2025Q4 now turns this empirical program into a concrete monitoring result. The estimated annualized slope is $\kappa_{\mathrm{full}} = 0.055\%$ per year ($p=0.0004$), with a statistically significant structural break around the launch of QQE (Chow $F=24.37$, $p<10^{-8}$). Splitting the sample yields $\kappa_{\mathrm{pre\mbox{-}QQE}} = -0.132\%$ per year ($p=0.0007$) and $\kappa_{\mathrm{post\mbox{-}QQE}} = 0.188\%$ per year ($p<10^{-6}$). The short-window result $\kappa \approx 0$ over 2022--2025 in \cref{app:window-sensitivity} should therefore be read as a local monitoring snapshot rather than as the structural trend itself. The combined evidence is more consistent with materially slowed post-QQE erosion than with a literally paused clock throughout the whole QQE/YCC period. On the current observed monitoring calibration $(\phic,\phibar,\kappa_{\mathrm{post\mbox{-}QQE}})=(0.932,0.85,0.001876)$, the linear clock implies $T^*_{\mathrm{monitor}} \approx 43.6$ years. This monitoring horizon does not replace the inherited Part~I baseline used for the main-text illustrations; it refines it by showing that the post-QQE erosion rate is positive but materially slower than the conservative baseline clock.
\end{remark}

\begin{figure}[t]
\centering
\includegraphics[width=0.92\textwidth]{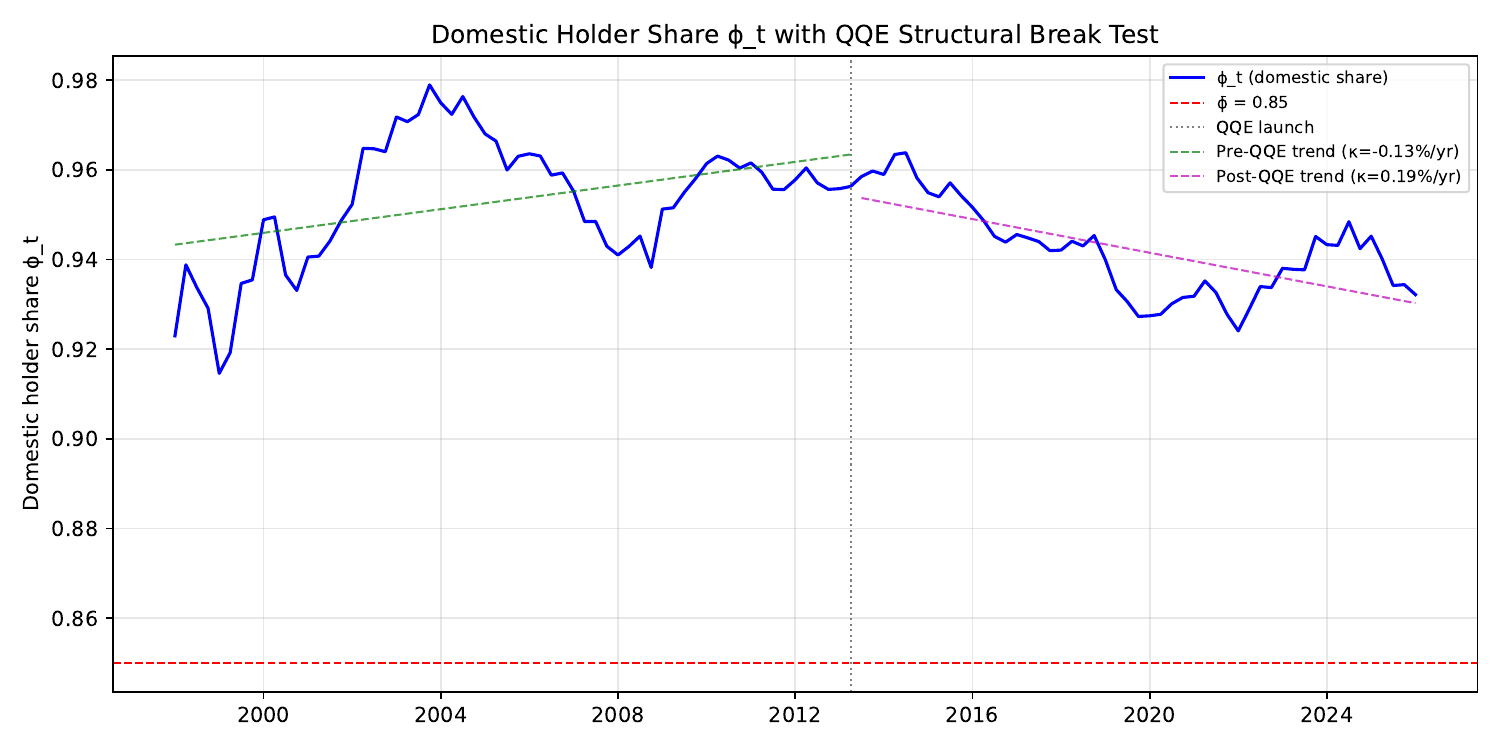}
\caption{Domestic holder share $\varphi_t$ (BoJ Flow-of-Funds, instrument~311, 1997Q4--2025Q4) with estimated pre-QQE and post-QQE linear trends. The Chow structural-break test rejects trend homogeneity at $p < 10^{-8}$. Pre-QQE, the holder share rose at approximately $0.13\%$/year; post-QQE, it declined at approximately $0.19\%$/year. The illustrative threshold $\bar{\varphi} = 0.85$ remains well below the observed series throughout. See \Cref{rem:kappa-empirical} for the full estimation results.}
\label{fig:phi-timeseries}
\end{figure}

\section{Extension E6: Institutional Control Rights and Regime Autonomy}
\label{sec:e6}

The preceding extensions have clarified the domestic mechanics of the JFR-rg regime: how debt compression works (E1--E2), when it becomes paradoxical (E3), how cross-border repression shifts the captive threshold (E4), and how long the institutional window lasts (E5). What remains unformalized is the question of \emph{why} the regime applies to some high-debt economies but not others. The answer, this extension argues, lies in the degree of domestic institutional control over the key state variables $\phic$, $\eps$, and $\De$.

\begin{definition}[Institutional control-rights index]
\label{def:psi}
Let $\psi_t \in [0,1]$ denote the degree of domestic institutional control over the JFR-rg primitives at time~$t$. The index is constructed as a composite of three observable sub-indices:
\[
\psi_t = \frac{1}{3}\bigl(\psi_t^{\mathrm{mon}} + \widehat{\psi}_t^{\mathrm{abs}} + \psi_t^{\mathrm{fx}}\bigr),
\]
where:
\begin{itemize}[leftmargin=2em]
\item $\psi_t^{\mathrm{mon}} \in [0,1]$: monetary-policy autonomy ($1$ = independent central bank with full rate-setting and balance-sheet discretion; $0.5$ = shared supranational authority such as the ECB; $0$ = currency board or external peg);
\item $\widehat{\psi}_t^{\mathrm{abs}} \in [0,1]$: an empirical \emph{proxy} for debt-absorption autonomy; and
\item $\psi_t^{\mathrm{fx}} \in \{0,1\}$: exchange-rate autonomy ($1$ = free float; $0$ = currency union or hard peg).
\end{itemize}
In the baseline empirical implementation, debt-absorption autonomy is proxied by the observed domestic holder share,
\[
\widehat{\psi}_t^{\mathrm{abs}} = \phic.
\]
More generally, one may allow a hybrid proxy
\[
\widehat{\psi}_t^{\mathrm{abs}} = \omega_{\phi}\,\phic + \omega_{\theta}\,\theta_t,
\qquad \omega_{\phi},\omega_{\theta} \ge 0,
\qquad \omega_{\phi}+\omega_{\theta}=1,
\]
where $\theta_t$ is the hard captive core share defined in \Cref{sec:closure}. The baseline uses $(\omega_{\phi},\omega_{\theta})=(1,0)$ in order to preserve the observables-centered implementation layer and to avoid embedding additional parametric structure into the core definition. When $\psi_t = 1$, the sovereign possesses full autonomy over all three JFR-rg levers (the Japan baseline). When $\psi_t \to 0$, control is predominantly external, and the framework collapses toward the mainstream limiting case of \cref{app:limits}.
\end{definition}

\begin{remark}[Stability-condition-derived weights]
\label{rem:psi-weights}
The equal-weight composite $\psi_t = (\psi_t^{\mathrm{mon}} + \widehat{\psi}_t^{\mathrm{abs}} + \psi_t^{\mathrm{fx}})/3$ is an illustrative baseline specification rather than a structural claim about exact cardinal contributions. An alternative, theory-grounded weighting derives from the marginal contribution of each channel to the stability-condition surplus in~\cref{eq:stability}. Specifically, the contribution of each channel to the corridor-supporting institutional space at the March 2026 baseline is: monetary autonomy ($\psi_t^{\mathrm{mon}}$) governs $\eps$, contributing approximately $\eps \cdot \btm$ to the surplus; absorption autonomy ($\widehat{\psi}_t^{\mathrm{abs}}$) governs SC1 viability, which is a prerequisite for the corridor; and exchange-rate autonomy ($\psi_t^{\mathrm{fx}}$) governs $\alpha \De$, contributing $\alpha \cdot \De \cdot \btm$ near the baseline. Since the absorption channel operates partly as an enabling condition rather than purely as a continuous margin, exact percentage-point comparability across the three channels is limited. The equal-weight specification is therefore retained as the baseline taxonomy, with the understanding that under alternative weightings (for example, $w_{\mathrm{mon}} = 0.5$, $w_{\mathrm{abs}} = 0.3$, $w_{\mathrm{fx}} = 0.2$), the ordering Japan $>$ Italy $>$ Greece is preserved (\cref{app:international}). In the present paper, $\psi_t$ should primarily be read as a comparative institutional-ordering device unless otherwise stated. Its cardinal use is illustrative and baseline-dependent, whereas its ordinal implications are the more robust object for cross-country interpretation.
\end{remark}

\begin{remark}[Observables-centered construction]
The sub-indices are deliberately constructed from observable institutional facts rather than from latent structural parameters. $\psi_t^{\mathrm{mon}}$ reflects the legal and operational independence of the central bank; $\widehat{\psi}_t^{\mathrm{abs}}$ is proxied from sovereign-debt holder data; and $\psi_t^{\mathrm{fx}}$ reflects the exchange-rate arrangement. This preserves continuity with the observables-centered methodology of Part~I while avoiding a literal identification between institutional absorption capacity and the observed holder share itself. For Japan, the baseline implementation uses BoJ Flow-of-Funds holder data for instrument 311 (central government securities and FILP bonds), with $\phic$ treated as the baseline proxy and $\theta_t$ retained separately as the hard captive core. An official-style aggregation that combines instruments 310 and 311 is reported, where needed, only as a sensitivity layer because the residual depends on the statistical treatment of short-term government bills. Empirical proxy construction---including the use of BoJ Flow-of-Funds data for Japan, Banca d'Italia/ECB Securities Holdings Statistics for Italy, and Hellenic PDMA reports for Greece---is detailed in \cref{app:international}.
\end{remark}

\begin{proposition}[Corridor width depends on institutional control rights]
\label{prop:corridor-psi}
The width of the Debt Sustainability Corridor $|W_t|$ is weakly increasing in the control-rights index~$\psi_t$. In the limiting case $\psi_t \to 0$, the corridor collapses and the framework reduces to the mainstream case in which debt dynamics are governed solely by the effective interest--growth differential and the fiscal stance.
\end{proposition}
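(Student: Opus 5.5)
The plan is to make the corridor an explicit set and reduce the claim to set inclusion. I take the Debt Sustainability Corridor at time $t$ to be the set $W_t(\psi_t)$ of admissible lever configurations $(\eps,\gns,\De)$ that satisfy \cref{eq:stability} together with SC1 and SC2. Its width $|W_t|$ is any measure that is monotone under inclusion, for example Lebesgue measure of the admissible set or the maximal attainable stability-condition surplus. The control-rights index enters only through the \emph{feasible ranges} of the levers, as in the proof of \cref{prop:stochastic-robust}(v). Concretely, I will write $\eps \in [0,\psi_t^{\mathrm{mon}}\bar\eps]$, $\De \in [0,\psi_t^{\mathrm{fx}}\ebar]$, and impose SC1 via $\widehat{\psi}_t^{\mathrm{abs}}=\phic\ge\phibar$. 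The right-hand side of \cref{eq:stability}, $\pi_t+(\dt-\stt)/\btm$, does not depend on $\psi_t$. By \cref{prop:fiscal-robust}(iii), this remains true when $\dt$ is endogenized, because $D$ depends on $\btm$ and $s_t^{\mathrm{pol}}$, not on $\psi_t$.

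\textbf{Step 1 (monotonicity).} Suppose each sub-index weakly increases, which moves $\psi_t$ up along the composite. Then each lever's feasible interval weakly enlarges. The absorption constraint also weakly relaxes, since a higher proxy $\phic$ enlarges the SC1 margin, exactly as in \cref{lem:two-country}. The inequality \cref{eq:stability} is pointwise unchanged. So the feasible set for the lower index is contained in the feasible set for the higher index, and monotonicity of $|\cdot|$ under inclusion gives $\partial|W_t|/\partial\psi_t\ge 0$ in the weak sense.

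Because $\psi_t$ is an equal-weight composite, an increase in $\psi_t$ does not by itself force every component to rise. I will handle this as the paper's own ordinal reading in \cref{rem:psi-weights} suggests. The monotonicity statement will be made along componentwise-ordered institutional configurations, and the scalar $\psi_t$ will be treated as an ordering device on those configurations. I will note this explicitly as an auxiliary condition, in keeping with the definition of logical completion.

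\textbf{Step 2 (the limit $\psi_t\to 0$).} Since each sub-index lies in $[0,1]$, $\psi_t\to 0$ forces all three sub-indices to $0$. Then $\eps$ and $\De$ are pinned at zero, so the repression term and the pass-through term $\alpha\De-\beta\max(0,\De-\ebar)^2$ both vanish. The captive-absorption channel is also absent, so that $\rho_t$ is not suppressed. The stability condition reduces to $\gns\ge\pi_t+(\dt-\stt)/\btm$, a condition involving only growth and the fiscal stance. Correspondingly, \cref{eq:core-recursion} is driven only by $\rn-\gn$ and $\dt$ with no regime-specific levers. The admissible lever set degenerates to a single point (or is empty), so $|W_t|\to 0$. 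This is the mainstream limit referenced in \cref{app:limits}.

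\textbf{Main obstacle.} The hard part is not the inclusion argument but pinning down a definition of $|W_t|$ under which the proposition is a theorem rather than an assertion. The absorption channel is an enabling condition, not a continuous margin, so $W_t$ may jump to empty when $\phic$ crosses $\phibar$. This is compatible with \emph{weak} monotonicity but rules out continuity. I would therefore define width via the maximal surplus over the feasible set, with the convention that the width is zero when the set is empty, and check that Steps 1 and 2 go through under this definition.
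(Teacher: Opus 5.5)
Your proposal is correct and follows the same route as the paper's proof: the corridor is read as the feasible set of $(\eps,\gns,\De)$ in \cref{eq:stability}, each sub-index of $\psi_t$ governs one lever's feasible range (with $\widehat{\psi}_t^{\mathrm{abs}}$ gating SC1), and that set collapses as $\psi_t\to 0$; you make this explicit through set inclusion and a concrete width functional, and your componentwise-ordering caveat handles a point the paper's informal proof passes over. One small fix in Step~2: once SC1 fails, the effective rate is $\rn+\rho_t$ (\cref{cor:sc1-corner}, \cref{prop:transition}), so the reduced condition should carry $+\rho_t$ on the right-hand side, i.e. $\gns\ge\pi_t+(\dt-\stt)/\btm+\rho_t$ rather than $\gns\ge\pi_t+(\dt-\stt)/\btm$, which is what ties it to the \emph{effective} interest--growth differential named in the statement.
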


\begin{proof}
From the stability condition~\eqref{eq:stability}, the corridor boundaries are defined by the feasible range of $(\eps, \gns, \De)$ that satisfies the inequality. Each term on the left-hand side is a function of domestic policy levers: $\eps$ depends on $\psi_t^{\mathrm{mon}}$ through monetary operations and yield-curve control; the debt-absorption channel is captured empirically by $\widehat{\psi}_t^{\mathrm{abs}}$ (proxied in the baseline by $\phic$ and, in sensitivity analysis, by combinations of $\phic$ and $\theta_t$); and $\De$ depends on $\psi_t^{\mathrm{fx}}$ through exchange-rate policy autonomy. When $\psi_t \downarrow 0$, the domestic authority loses independent levers over $\eps$ and $\De$, and the captive structure that sustains SC1 erodes. The admissible policy space narrows to zero, and the corridor collapses to the mainstream case where only the exogenous $(r_t^n - g_t^n)$ differential and the fiscal stance $d_t$ matter.
\end{proof}

\begin{remark}[Japan versus Eurozone regime types]
\label{rem:jp-vs-euro}
The $\psi_t$ framework provides a parsimonious explanation for why Japan and Eurozone high-debt economies exhibit qualitatively different debt dynamics despite similar debt-to-GDP ratios:
\begin{itemize}[leftmargin=2em]
\item \textbf{Japan} ($\psi_t \approx 0.97$ in 2025): Full domestic control over $\phic$ (captive system), $\eps$ (BoJ yield-curve control until 2024; ongoing balance-sheet operations), and $\De$ (free-floating yen). The corridor is positive and, until the recent $\eps$-reversal, wide enough to support growth-enhancing investment within the repression dividend.
\item \textbf{Italy} ($\psi_t \approx 0.39$): $\phic$ is partially supported by ECB purchases (PSPP/APP) but remains externally constrained; $\De = 0$ by construction (euro); $\eps$ is largely dictated by ECB policy rates. The effective corridor is narrower and more fragile, consistent with observed sovereign-spread episodes (2011--2012, 2018, 2022).
\item \textbf{Greece} ($\psi_t \approx 0.28$): $\phic$ is low because most debt is held by official external creditors (ESM/EFSF); $\De = 0$; $\eps$ is externally determined. SC1 is not satisfied ($\phic < \phibar$), and the regime falls into the mainstream limiting case. This is consistent with the observed necessity of external programme support rather than self-sustaining JFR-rg stability.
\end{itemize}
See \cref{app:international} for the full cross-country comparison, including time-series evidence and sensitivity analysis confirming the robustness of this classification within $\pm 5$ percentage-point variation in $\phic$ calibration.
\end{remark}

\begin{proposition}[Control rights and transition feasibility]
\label{prop:psi-transition}
A regime transition from repression-dependent stability to mainstream sustainability is feasible only if $\psi_t$ remains sufficiently high during the transition to prevent an immediate corridor collapse. Formally, there exists a threshold $\psi^*$ such that the joint condition
\[
\psi_t \ge \psi^* \quad \text{and} \quad \Delta g_{\min}^{n*} \le \mu \, x_t^{\max,\mathrm{operational}}
\]
is required for a non-disruptive exit.
\end{proposition}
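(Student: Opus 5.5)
The argument has two parts, one for each leg of the joint condition. Both are necessity claims, so for each I will assume a non-disruptive exit exists and show the corresponding inequality must hold.

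\emph{Control-rights leg.} I will invoke \cref{prop:corridor-psi}. The corridor width $|W_t|$ is weakly increasing in $\psi_t$ and collapses as $\psi_t \to 0$. I will then define
\[
\psi^* := \inf\{\psi \in [0,1] : |W_t(\psi)| > 0 \text{ on the transition horizon}\},
\]
that is, the smallest control-rights level at which the admissible set of $(\eps, \gns, \De)$ satisfying \cref{eq:stability} stays nonempty. Because $|W_t|$ is weakly increasing, the set of admissible $\psi$ is an upper interval $[\psi^*,1]$, so $\psi^*$ is well defined. It is also strictly positive, since the corridor is empty in the limit $\psi_t \to 0$.

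Non-disruptiveness means the stability surplus stays nonnegative at every date of the transition. That requires the corridor to be nonempty throughout, so $\psi_t \ge \psi^*$ at every transition date. The one step I must supply is a positive lower bound on $\psi^*$ away from the limit. For this I will use the SC1 component: since $\widehat{\psi}^{\mathrm{abs}}_t = \phic \ge \phibar$ is needed while repression is still active, it forces $\psi_t \ge \phibar/3$ in the equal-weight composite. This gives an explicit positive candidate, and I will take $\psi^* \ge \phibar/3$.

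\emph{Growth leg.} Here I will use the one-for-one substitutability of $\eps$ and $\gns$ in \cref{eq:stability}. Withdrawing repression removes $\eps$ from the left-hand side and, via \cref{ass:sc1} weakening, may add a premium $\rho_t \le \bar\rho$ to the right-hand side. The stability surplus is therefore preserved only if
\[
\Delta\gns \ge \Delta g^{n*}_{\min} := \eps_0 + \bar\rho + m,
\]
where $m$ is the safety margin. The required growth improvement must come from investment through the efficiency parameter $\mu$ of \cref{prop:corrected-e2}, and the investment flow $x_t$ is capped by what is operationally financeable, $x_t^{\max,\mathrm{operational}}$. This yields $\Delta g^{n*}_{\min} \le \Delta\gns \le \mu x_t^{\max,\mathrm{operational}}$. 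Combining the two legs gives the joint condition as a necessary condition.

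\emph{Main obstacle.} The hard part is making $\psi^*$ a genuine threshold rather than a tautology. \Cref{prop:corridor-psi} gives only weak monotonicity, not continuity. I will handle this in two ways. First, defining $\psi^*$ as an infimum means I need existence but not continuity. Second, because SC1 is an enabling condition, the corridor is empty whenever $\phic < \phibar$, which is what supplies the explicit lower bound on $\psi^*$. A secondary point is to state clearly that the growth leg takes $\bar\rho$ as exogenous, as in \cref{sec:transition}, pending the closure of \cref{sec:closure}.
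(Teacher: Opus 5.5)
\paragraph{Main gap: the positivity of $\psi^*$.} The step that fails is the one you flag as the main obstacle. With $\psi^*$ defined by mere nonemptiness, the collapse of $|W_t|$ as $\psi_t\to0$ does not make $\inf\{\psi:|W_t(\psi)|>0\}$ positive. For example, $|W_t(\psi)|=\psi$ gives an infimum of zero. Everything therefore rests on your SC1 patch, and that patch conflicts with what a transition is in this paper.

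By \cref{cor:sc1-corner}, SC1 holds if and only if $\rho_t=0$. \Cref{prop:transition-v2} describes non-disruptive exits of the following kind: after $\eps_t\to0$ the hard core erodes, the complementarity condition moves into case~(c), and $\rho_t>0$. The path still stays out of the danger region because $g_t^{n*,\mathrm{new}}$ absorbs the premium. This leaves you with two cases:
\begin{enumerate}[label=(\alph*)]
\item If the corridor presupposes SC1, then ``nonempty at every transition date'' is not necessary for a non-disruptive exit. Your argument would classify those paths as disruptive.
\item If the corridor does not presuppose SC1, your threshold may be $0$, and the control-rights leg is vacuous.
\end{enumerate}
Confining $\psi_t\ge\phibar/3$ to the phase ``while repression is still active'' does not help. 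The bound then fails to hold during the transition as a whole, which is what the statement requires.

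\paragraph{The missing idea.} In the paper's argument the corridor is not a stability condition that must hold at every date. It is the financing envelope for the growth investment until the substitution of $\gns$ for $\eps$ has taken hold. $\psi^*$ is the control-rights level below which the corridor width is too small to finance $x_t^{\min,\mathrm{operational}}$ while keeping the margin $m$; that width is weakly increasing in $\psi_t$ by \cref{prop:corridor-psi}. Falling below $\psi^*$ \emph{prematurely} forces the economy into the mainstream regime before the growth path is self-sustaining.

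Defining $\psi^*$ this way also removes your obstacle. The required width is strictly positive, so weak monotonicity plus collapse as $\psi_t\to0$ gives $\psi^*>0$. You need neither continuity nor an appeal to SC1. It also links the two legs of the joint condition instead of proving them separately.

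\paragraph{Growth leg.} Your growth leg proves a different inequality. The $\Delta g^{n*}_{\min}$ in the statement is the one implied by \cref{eq:transition-threshold} (see \cref{cor:joint-feasibility}), namely $\pi_t+(\dt-\stt)/\btm+\bar\rho+m-g_0^{n*}$. Your ``preserve the surplus'' quantity $\eps_0+\bar\rho+m$ agrees with it only when the pre-transition surplus is exactly zero, including the term $\alpha\De-\beta\max(0,\De-\ebar)^2$. At the March 2026 baseline with $\bar\rho=m=0$ you get $0.50\%$ instead of $0.53\%$. Finally, the link from investment to growth is the bridge \cref{eq:mu-bridge}, not \cref{prop:corrected-e2}.
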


\begin{proof}
If $\psi_t < \psi^*$ during the transition, the corridor width shrinks below the minimum needed to finance the required growth investment $x_t^{\min,\mathrm{operational}}$ while maintaining the safety margin~$m$. Since the transition requires simultaneously raising $g_t^{n*}$ and withdrawing $\eps$, the corridor must remain open long enough for the growth substitution to take hold. If $\psi_t$ falls below $\psi^*$ prematurely---for instance, through loss of monetary autonomy or a sudden rise in foreign holding---the corridor collapses and the economy is forced into the mainstream regime before the growth path is self-sustaining.
\end{proof}

\begin{corollary}
\label{cor:euro-mu}
In Eurozone economies where $\psi_t^{\mathrm{mon}} \le 0.5$ and $\psi_t^{\mathrm{fx}} = 0$, the external control constraint raises the required investment efficiency $\mu$ for a successful transition, because the corridor available for growth investment is narrower at every $\phic$ level. This provides a structural explanation for why fiscal consolidation alone has historically been insufficient for high-debt Eurozone members.
\end{corollary}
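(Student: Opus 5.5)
The plan is to derive \cref{cor:euro-mu} from \cref{prop:corridor-psi} and \cref{prop:psi-transition}, using the transition condition $\Delta g_{\min}^{n*} \le \mu\, x_t^{\max,\mathrm{operational}}$ as the link between corridor width and required investment efficiency.

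\emph{Step 1: bound the control-rights index.} Under the hypotheses $\psi_t^{\mathrm{mon}} \le 0.5$ and $\psi_t^{\mathrm{fx}} = 0$, \cref{def:psi} gives
\[
\psi_t^{\mathrm{EZ}} \le \tfrac{1}{3}\bigl(0.5 + \widehat{\psi}_t^{\mathrm{abs}}\bigr),
\]
which is strictly below the index of a fully autonomous economy with the same $\phic$, namely $\tfrac13(2+\phic)$. The comparison is made \emph{at a fixed} $\phic$, which is what the phrase ``at every $\phic$ level'' requires.

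\emph{Step 2: narrower corridor.} By \cref{prop:corridor-psi}, $|W_t|$ is weakly increasing in $\psi_t$. Step 1 therefore gives $|W_t^{\mathrm{EZ}}| \le |W_t^{\mathrm{auto}}|$ at each $\phic$. It is more informative to make this concrete through the stability condition \eqref{eq:stability}. With $\psi_t^{\mathrm{fx}} = 0$ we have $\De = 0$, so the $\alpha\De$ term drops out. With $\eps$ set externally, the $\eps$ term is no longer a domestic lever. The feasible lever set thus shrinks to a subset of the autonomous one, and the surplus available for financing investment, $x_t^{\max,\mathrm{operational}}$, which is bounded by corridor width net of the margin $m$, is weakly smaller.

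\emph{Step 3: required efficiency.} The growth-improvement threshold $\Delta g_{\min}^{n*}$ is determined by the repression that must be withdrawn and by $\bar\rho$; it does not depend on $\psi_t$. The transition condition in \cref{prop:psi-transition} is
\[
\mu \ge \Delta g_{\min}^{n*} / x_t^{\max,\mathrm{operational}},
\]
and its right-hand side is decreasing in $x_t^{\max,\mathrm{operational}}$. Combined with Step 2, the minimal $\mu$ required for transition is therefore weakly higher in the Eurozone case.

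\emph{Step 4: strictness and consolidation.} Strictness, ``raises'' rather than merely ``weakly raises'', needs the corridor to lose a channel that actually contributes, for instance $\alpha > 0$ with a nondegenerate $\De$ range in the autonomous case. The final sentence about fiscal consolidation follows from the same condition. Consolidation alters only $\dt$ on the right-hand side of \eqref{eq:stability}, and by \cref{prop:fiscal-robust}(iii) this does not restore the lost levers, so the higher required $\mu$ persists.

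I expect the main obstacle to be Step 2. \Cref{prop:corridor-psi} gives only weak monotonicity and never quantifies $x_t^{\max,\mathrm{operational}}$ as a function of $|W_t|$. To bridge this, I would first explicitly posit that $x^{\max}$ is nondecreasing in the corridor surplus, and then argue via lever-set inclusion rather than via the $\psi_t$ scalar alone.
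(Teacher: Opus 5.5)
Your proposal is correct and matches the paper's reasoning. The paper gives \cref{cor:euro-mu} no separate proof and justifies it by exactly your chain: the Eurozone restrictions lower $\psi_t$ at fixed $\phic$, \cref{prop:corridor-psi} then narrows the corridor and with it $x_t^{\max,\mathrm{operational}}$, and the condition $\Delta g_{\min}^{n*}\le \mu\,x_t^{\max,\mathrm{operational}}$ of \cref{prop:psi-transition} forces a larger $\mu$. The bridge you flag in Step~2 needs no extra posit, because \eqref{eq:safe-upper} already writes $x_t^{\max,\mathrm{safe}}$ as the stability-condition surplus scaled by $\btm$, net of $(\dt-\stt)+m$, so dropping the $\alpha\De$ term lowers it directly; your Step~3 division also assumes $\Delta g_{\min}^{n*}>0$ and $x_t^{\max,\mathrm{operational}}>0$, which the paper likewise leaves implicit.
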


\begin{remark}[Link to prior extensions]
E6 generalizes E4 (multi-country repression) by endogenizing the cross-border flow of control rights rather than treating the foreign repression state as purely exogenous. It also tightens the Timing Constraint of E5: the Demographic-$\phi$ Clock now depends not only on demographic erosion of $\phic$ but also on potential erosion of $\psi_t^{\mathrm{mon}}$ or $\psi_t^{\mathrm{fx}}$ through institutional changes (e.g., currency-union entry or supranational fiscal constraints). Finally, E6 sharpens the investment-design block of \cref{sec:investment}: the operational lower bound $x_t^{\min,\mathrm{operational}}$ is $\psi_t$-dependent, since a narrower corridor demands more efficient investment to achieve the same growth target.
\end{remark}

\section{The Timing Constraint: Linking E1 and E5}
\label{sec:timing}

The Virtuous Ratchet (E1) and the Demographic-$\phi$ Clock (E5) are logically independent propositions, but their joint implication is a binding constraint on policy design. The following proposition formalizes this link.

\begin{proposition}[Timing Constraint]
\label{prop:timing}
The Virtuous Ratchet is implementable only within the residual SC1 window. That is, a repression sprint of length $T_{\mathrm{sprint}}$ is feasible only if
\begin{equation}
T_{\mathrm{sprint}} \le T^*.
\label{eq:timing-constraint}
\end{equation}
Moreover, the cumulative debt improvement from the sprint is bounded by
\begin{equation}
\Delta b_{\mathrm{sprint}}^{\mathrm{cumul}}
\le
T^* \cdot \bigl|(\rn-\gn)_{\mathrm{sprint}} - (\rn-\gn)_0\bigr| \cdot b_0,
\label{eq:sprint-bound}
\end{equation}
where $b_0$ is the debt stock at the start of the sprint.
\end{proposition}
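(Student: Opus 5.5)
The plan has two parts, matching the two claims of the Timing Constraint. The feasibility claim \cref{eq:timing-constraint} is obtained by combining the institutional remark following \cref{prop:virtuous-ratchet} with the clock of E5. By construction, a repression sprint is an episode in which the authority deepens $\eps$ below its baseline value. Such a deepening is admissible only while SC1 (\cref{ass:sc1}) holds, since outside SC1 the captive-absorption structure no longer prevents a destabilizing premium. By \cref{def:clock} and \cref{prop:finite-window}, SC1 holds only on the interval $[t, t+T^*]$. Under the erosion specification of \cref{prop:linear-upper-bound}, $T^*$ is moreover an upper bound on the true residual horizon, so SC1 can fail no later than $t+T^*$. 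Hence any sprint that is feasible in every one of its periods must satisfy $T_{\mathrm{sprint}}\le T^*$.

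For the bound \cref{eq:sprint-bound}, I would work with the gap $\Delta_s := b_s^{\mathrm{baseline}}-b_s^{\mathrm{sprint}}$ from the proof of \cref{prop:virtuous-ratchet}. Subtracting the two copies of \cref{eq:core-recursion} and assuming the fiscal stance is weakly no worse gives
\[
\Delta_{s}\;\le\;(1+r_0^n-g_0^n)\Delta_{s-1} + \bigl|(\rn-\gn)_{\mathrm{sprint}}-(\rn-\gn)_0\bigr|\, b^{\mathrm{sprint}}_{s-1}.
\]
The first-order reading adopted in the paper treats the gap as accumulating additively: each period contributes the spread improvement times the debt stock, with the stock evaluated at $b_0$. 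Summing over the $T_{\mathrm{sprint}}$ periods then gives $\Delta^{\mathrm{cumul}}_{\mathrm{sprint}}\le T_{\mathrm{sprint}}\,|\cdot|\,b_0$, and applying the first part yields the stated bound with $T^*$. This is consistent with the expected-value formula in \cref{prop:stochastic-robust}(i), which records the cumulative improvement as exactly $T\cdot|\cdot|\cdot b_0$. Two facts would justify replacing $b^{\mathrm{sprint}}_{s-1}$ by $b_0$:
\begin{enumerate}[label=(\alph*)]
  \item the sprint path is weakly decreasing because the spread is negative, so $b^{\mathrm{sprint}}_{s-1}\le b_0$;
  \item in the JFR-rg regime $r_0^n-g_0^n\le 0$, so the propagation factor is at most one and compounding of the existing gap cannot inflate it.
\end{enumerate}

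The main obstacle is the second step. An exact recursion produces cross terms $(1+r_0^n-g_0^n)^{k}$, and these are harmless only if the baseline spread is nonpositive; with a positive spread the inequality could fail by a second-order term. I would first try to prove the bound by induction on $s$ under the explicit auxiliary conditions $r_0^n-g_0^n\le 0$ (which holds within the regime by SC1 and \cref{eq:stability}) and weak monotonicity of the sprint path. If that is too restrictive, I would state the bound as the first-order, linearized approximation that the paper uses elsewhere, with the neglected remainder of order $|r_0^n-g_0^n|\,T^{*2}$.
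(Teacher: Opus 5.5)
Your two-part structure matches the paper's proof. Feasibility comes from E5 placing SC1 failure no later than $T^*$, combined with the requirement that a deeper-repression sprint needs SC1 in every period. The bound comes from summing per-period contributions of at most $\delta\,b_0$ over at most $T^*$ periods, where $\delta:=\bigl|(\rn-\gn)_{\mathrm{sprint}}-(\rn-\gn)_0\bigr|$. Writing out the exact gap recursion improves on the paper's one-line version, because it exposes the compounding term $(r_0^n-g_0^n)\Delta_{s-1}$ that the paper's ``per-period improvement is at most'' silently drops. Once $r_0^n-g_0^n\le 0$, $\Delta_{s-1}\ge 0$ and $b^{\mathrm{sprint}}_{s-1}\le b_0$ are in hand, the induction $\Delta_s\le\Delta_{s-1}+\delta\,b_0$ closes directly, and the linearized fallback is unnecessary.

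Three of your justifications are wrong as written, however.

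First, the deficit step runs in the wrong direction. The exact recursion is
\[
\Delta_s=(1+r_0^n-g_0^n)\,\Delta_{s-1}+\delta\,b^{\mathrm{sprint}}_{s-1}+\bigl(d^{\mathrm{baseline}}_s-d^{\mathrm{sprint}}_s\bigr),
\]
and a sprint stance that is ``weakly no worse'' makes the last term nonnegative, which yields $\ge$, not $\le$. An upper bound requires identical deficits on the two paths. The paper tacitly assumes this by attributing the improvement to the spread term alone. Under E1's weaker hypothesis, a concurrent consolidation can push the gap above $T^*\delta\,b_0$.

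Second, item (a) does not follow from the sign of the spread. The increment $b^{\mathrm{sprint}}_s-b^{\mathrm{sprint}}_{s-1}=(\rn-\gn)_{\mathrm{sprint}}\,b^{\mathrm{sprint}}_{s-1}+d_s$ is nonpositive only when compression outweighs the primary deficit. At the March 2026 calibration the sprint path falls by about $1.12$ pp of GDP per year, while the baseline path rises by about $0.08$ pp. So this is a substantive restriction that must be imposed, as your final paragraph does, rather than derived. The paper's ``since the sprint improves the debt path, so $b_{t-1}\le b_0$'' makes the same slip: it conflates improvement relative to the baseline with decline relative to $b_0$.

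Third, $r_0^n-g_0^n\le 0$ does not follow from SC1 and the stability condition. SC1 restricts only $\phic$. With $\eps=\pi_t-\rn$, the stability inequality reads
\[
\gns-\rn+\alpha\De-\beta\max(0,\De-\ebar)^2\ge(\dt-\stt)/\btm .
\]
This involves $\gns$ rather than realized $\gn$, and it can hold with $\rn>\gns$ when $\alpha\De$ is large or $\stt>\dt$. Treat $r_0^n-g_0^n\le 0$ as a maintained regime hypothesis, as E3 does in its assumption (a).
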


\begin{proof}
By E5, SC1 fails at horizon $T^*$. By E1, a repression sprint requires SC1 for execution. If $T_{\mathrm{sprint}} > T^*$, the sprint must be terminated before completion because the captive system can no longer support the deeper repression. The cumulative bound follows from the observation that the per-period improvement is at most $|(\rn-\gn)_{\mathrm{sprint}} - (\rn-\gn)_0| \cdot b_{t-1} \le |(\rn-\gn)_{\mathrm{sprint}} - (\rn-\gn)_0| \cdot b_0$ (since the sprint improves the debt path, so $b_{t-1}\le b_0$ during the sprint), summed over at most $T^*$ periods.
\end{proof}

\begin{remark}
The Timing Constraint transforms the abstract possibility of a Virtuous Ratchet into a concrete policy-design problem: the sprint must be deep enough to generate meaningful improvement, yet short enough to complete before the institutional window closes. This is the precise sense in which JFR-rg stability is best understood as a time-bounded opportunity.
\end{remark}

\begin{remark}[Structural horizon versus market-compression horizon]
\label{rem:two-horizons}
The Timing Constraint as stated in \cref{prop:timing} links the sprint window to the
\emph{structural erosion horizon} $T^*$ defined by the $\phi$-clock of E5.
However, policy urgency is also governed by a second, potentially faster horizon:
the \emph{market-compression horizon} at which an adverse shift in the outside-option
spread $z_t$ or in institutional control rights $\psi_t$ pushes the system from
case~(a) to case~(c) of the complementarity condition in \cref{sec:closure}.

From the counterfactual scenarios of \cref{tab:stress-v2}, the external-stress
threshold (premium emerging when $z_t$ rises above $\approx 2.5\%$ at the current
$\theta_t$) is notably closer to the current position than the core-erosion threshold
(premium emerging when $\theta_t$ falls below $\approx 0.56$).
Since $z_t$ can move suddenly in response to global monetary conditions---as in the
2022--2024 Federal Reserve tightening cycle---while demographic erosion of $\theta_t$
is slow-moving, the effective policy urgency should be read as the
\emph{minimum} of the structural horizon and the market-compression horizon:
\[
  T^{\mathrm{effective}} = \min\bigl(T^*_{\phi},\; T^*_{z}\bigr),
\]
where $T^*_{z}$ is the horizon at which a plausible $z_t$ shock crosses the
external-stress threshold.
This observation does not modify the formal statement of \cref{prop:timing} but
is essential for interpreting the baseline clock in a policy context.
\end{remark}

\section{Investment Design Under JFR-rg}
\label{sec:investment}

Once stronger $\gns$ is recognized as a substitute for stronger repression, the policy problem becomes one of investment design.

\subsection{Investment-to-growth bridge}

Let policy investment raise potential nominal growth according to
\begin{equation}
\Delta \gns = \mu x_t
\label{eq:mu-bridge}
\end{equation}
in the one-parameter aggregate case, or more generally,
\begin{equation}
\Delta \gns
=
\sum_j \mu_j x_{j,t}
+
\sum_{j<k}\gamma_{jk}x_{j,t}x_{k,t}.
\label{eq:mu-bridge-disagg}
\end{equation}

\subsection{Upper bounds on growth investment}

\begin{theorem}[Upper-bound structure]
\label{thm:upper}
The maximum feasible growth investment in any period is constrained by the most conservative of four bounds: the arithmetic bound, the repression-dividend bound, the safe-corridor bound, and the time-constrained cumulative bound.
\end{theorem}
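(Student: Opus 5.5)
The plan is to make each of the four bounds explicit as a separate feasibility constraint on the period-$t$ investment $x_t$. The theorem then follows from the elementary fact that a quantity satisfying several upper-bound constraints at once is bounded by their minimum. Concretely, I will write
\[
x_t^{\max} = \min\bigl(x_t^{\mathrm{arith}},\, x_t^{\mathrm{RD}},\, x_t^{\mathrm{corr}},\, x_t^{\mathrm{time}}\bigr)
\]
and show that each component is a necessary condition for feasibility.

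The \emph{arithmetic bound} comes from the stability condition \cref{eq:stability}. Investment financed by a deficit raises $\dt$ one-for-one, so the right-hand side increases by $x_t/\btm$. Feasibility then requires that the surplus of \cref{eq:stability} remain nonnegative after this increase. This gives
\[
x_t \le \btm\Bigl[\eps+\gns+\alpha\De-\beta\max(0,\De-\ebar)^2-\pi_t\Bigr]-(\dt-\stt).
\]

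The \emph{repression-dividend bound} uses \cref{eq:rd}. If investment is to be funded without new borrowing, it cannot exceed the share $\lambda$ of the dividend that is reinvested, so $x_t\le \lambda\eps\btm$. This bound is nonpositive when $\eps\le 0$, consistent with \cref{rem:e2-inactive}.

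The \emph{safe-corridor bound} strengthens the arithmetic bound by requiring that a safety margin $m$ of surplus remain after the investment. It equals the arithmetic bound minus $m\btm$. By \cref{prop:corridor-psi}, this margin-adjusted corridor is itself weakly increasing in $\psi_t$.

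The \emph{time-constrained cumulative bound} uses \cref{prop:timing}. Cumulative investment can only be financed inside the residual SC1 window. Summing the per-period dividend over at most $T^*$ periods, and using $b_{t-1}\le b_0$ during a debt-improving phase, gives $\sum_{t\le T^*} x_t\le T^*\lambda\eps b_0$. The cumulative bound on $x_t$ is whatever remains of this total after earlier spending.

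With all four in place, the proof is short: any feasible $x_t$ satisfies each constraint, so it lies below their minimum, which is by definition the most conservative bound. I will also note which bound binds in which regime. When $\eps\le0$, the dividend bound binds. When $m$ is large, the corridor bound dominates the arithmetic bound. When $T^*$ is short, the cumulative bound binds.

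The main obstacle is definitional rather than technical. The statement names the four bounds without defining them, so the real work is choosing definitions that genuinely follow from earlier results, namely \cref{eq:stability}, \cref{eq:rd}, \cref{prop:timing} and the margin $m$, rather than being posited. In particular, the cumulative bound has to be translated carefully from a horizon-level constraint into a per-period constraint, for example by spreading the remaining cumulative capacity over the remaining window. It is also worth checking that the safe-corridor bound is not trivially implied by the arithmetic bound, which holds whenever $m>0$.
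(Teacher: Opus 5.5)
Your reduction (each bound is a necessary condition, so any feasible $x_t$ lies below their minimum) is the right skeleton, and it is how the paper proceeds. The theorem has no separate proof; it rests on \cref{cor:arith-upper,cor:rd-upper,cor:safe-upper,cor:time-upper} together with the operational-minimum remark. Your dividend bound also matches \cref{eq:rd-upper}.

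The gap is in the other bounds, above all the arithmetic one. In the paper it comes from the accounting recursion \cref{eq:core-difference}, not from the stability condition. Treating $x_t$ as an added fiscal burden and letting the policymaker tolerate debt deterioration up to $\bar\delta_t$ requires $(\rn-\gn)\btm+\dt+x_t\le\bar\delta_t$, i.e.\ $x_t^{\max,\mathrm{arith}}(\bar\delta_t)=\bar\delta_t-(\rn-\gn)\btm-\dt$. Your version, obtained by adding $x_t/\btm$ to the right-hand side of \cref{eq:stability}, is exactly the safe-corridor bound \cref{eq:safe-upper} with $m=0$. Under your own definitions it is therefore implied by your corridor bound for every $m>0$. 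So the ``most conservative of four'' silently collapses to three. The constraint the paper intends never enters: a cap on permitted debt accumulation, stated in the realized spread $\rn-\gn$ and a policy tolerance $\bar\delta_t$, with no potential-growth, inflation, or exchange-rate terms. Your closing redundancy check runs in the wrong direction: the issue is not whether the corridor bound follows from the arithmetic bound, but that your arithmetic bound follows from the corridor bound.

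The time-constrained bound is also not the paper's. \Cref{cor:time-upper} is a horizon-level cap,
$\mathcal{X}^{\max}(T^*)\le\sum_{\tau=t}^{t+\lfloor T^*\rfloor-1}\min\{x_\tau^{\max,\mathrm{arith}},x_\tau^{\max,\mathrm{RD}},x_\tau^{\max,\mathrm{safe}}\}$.
That is, it sums the per-period operational minimum over the residual SC1 window, with each term evaluated at the current $b_{\tau-1}$. You instead sum only the dividend bound and replace $b_{\tau-1}$ by $b_0$, importing $b_{\tau-1}\le b_0$ from the proof of \cref{prop:timing}. There it was justified because a repression sprint improves the debt path. Here nothing guarantees it, particularly since your other bounds allow deficit-financed investment. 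At the paper's own baseline, $(\rn-\gn)b_0+\dt=-0.8\%\times2.40+2.0\%=+0.08$ pp, so the debt ratio rises even before any investment. Summing the per-period minima at the actual $b_{\tau-1}$ avoids that assumption and gives a tighter bound.

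No per-period conversion of this bound is needed either. Its bite is the truncation to $\lfloor T^*\rfloor$ periods. Inside the window, given your own assumption $b_{\tau-1}\le b_0$, your ``remaining capacity'' cap adds nothing beyond the per-period dividend bound except that truncation. So your claim that it binds when $T^*$ is short holds only in that sense.

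Two smaller points. \Cref{eq:safe-upper} subtracts $m$ itself (in GDP units), not $m\btm$. And \cref{prop:corridor-psi} plays no role here, since the bound contains no $\psi_t$.
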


\begin{corollary}[Arithmetic upper bound]
\label{cor:arith-upper}
If additional growth investment $x_t$ is treated as an added fiscal burden and the policymaker permits debt deterioration up to $\bar\delta_t$, then
\begin{equation}
x_t^{\max,\mathrm{arith}}(\bar\delta_t)
=
\bar\delta_t-(\rn-\gn)\btm-\dt.
\label{eq:arith-upper}
\end{equation}
\end{corollary}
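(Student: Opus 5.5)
The plan is to derive \cref{eq:arith-upper} directly from the core difference equation \cref{eq:core-difference}, treating growth investment as an additive fiscal burden. First, I will formalize the phrase ``treated as an added fiscal burden'': with investment $x_t$, the effective deficit becomes $\dt + x_t$. Substituting this into \cref{eq:core-difference} gives the augmented debt flow
\[
\Db(x_t) = (\rn-\gn)\btm + \dt + x_t.
\]
Here $\btm$, $\rn$ and $\gn$ are taken as predetermined within period $t$. This is the same within-period convention used in the proofs of \cref{prop:debt-reduction-paradox} and \cref{prop:fiscal-robust}. Any growth effect of $x_t$ through \cref{eq:mu-bridge} operates on $\gns$ in later periods, not on the current-period recursion.

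Second, I will impose the policymaker's tolerance constraint $\Db(x_t) \le \bar\delta_t$. Because $\Db(x_t)$ is affine and strictly increasing in $x_t$ with unit slope, the feasible set is the half-line
\[
x_t \le \bar\delta_t - (\rn-\gn)\btm - \dt,
\]
and its supremum is attained at equality. That supremum is exactly $x_t^{\max,\mathrm{arith}}(\bar\delta_t)$ in \cref{eq:arith-upper}. I will also note two properties of the bound. It is increasing one-for-one in $\bar\delta_t$. Under the JFR-rg regime, where $\rn-\gn<0$, the compression term $-(\rn-\gn)\btm$ is positive and enlarges the bound. This connects the result to the repression-dividend bound that \cref{thm:upper} lists next.

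The main obstacle is interpretive rather than computational. The statement does not say explicitly whether $x_t$ enters the recursion contemporaneously and additively, or whether it partly offsets itself through a same-period rise in $\gn$. I will commit to the conservative reading: there is no contemporaneous growth feedback and no crowding of $\rn$. Under that reading, the bound is exact, and any favorable within-period feedback would only relax it, so \cref{eq:arith-upper} remains a valid upper bound for the no-feedback case. Finally, I will remark that the bound carries over to the endogenous-deficit setting of \cref{prop:fiscal-robust} by replacing $\dt$ with $D(\btm,s_t^{\mathrm{pol}})$.
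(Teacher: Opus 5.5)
Your proposal is correct and is essentially the derivation the paper intends; the paper states the corollary without a separate proof. The derivation is to treat $x_t$ as an addition to the effective deficit in \cref{eq:core-difference}, impose $\Db \le \bar\delta_t$, and solve the resulting affine inequality for $x_t$. Your side remarks on contemporaneous growth feedback are not needed, because the hypothesis ``treated as an added fiscal burden'' already fixes the no-feedback reading under which the bound is exact.
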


\begin{corollary}[Repression-dividend upper bound]
\label{cor:rd-upper}
An internal-funding upper bound is given by
\begin{equation}
x_t^{\max,\mathrm{RD}}
\le
\lambda \eps \btm,
\qquad 0\le \lambda \le 1.
\label{eq:rd-upper}
\end{equation}
\end{corollary}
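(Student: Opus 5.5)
The bound in \cref{cor:rd-upper} is an accounting statement about funding sources, so I will derive it directly from the definition of the repression dividend in \cref{eq:rd}. I will not pass through the stability condition. The first step is to make precise what ``internal funding'' means. Growth investment financed internally in period $t$ is drawn only from the flow of fiscal resources that repression frees up. By \cref{eq:rd}, this flow is $\RD = \eps\btm$: the gap between the real return holders would require absent repression and the return they actually receive, multiplied by the inherited debt stock. A policymaker who allocates a share $\lambda\in[0,1]$ of this flow to investment, as in \cref{prop:corrected-e2}, has $\lambda\eps\btm$ available. Internally funded investment cannot exceed the funds allocated to it, so $x_t^{\max,\mathrm{RD}}\le\lambda\eps\btm$.

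The second step records the role of the constraint $0\le\lambda\le1$. The upper limit $\lambda\le1$ gives the uniform bound $x_t^{\max,\mathrm{RD}}\le\RD$: no more than the whole dividend can be reinvested without using some other source, such as new borrowing or taxation. That other source would fall under the arithmetic bound of \cref{cor:arith-upper}, not the internal-funding bound. The lower limit $\lambda\ge0$ rules out negative reinvestment, which would turn the dividend into a withdrawal of funds.

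The third step places the bound within the structure of \cref{thm:upper}. Because $\lambda\eps\btm$ enters as one of the four caps, the binding maximum is the minimum over those caps, which is why the statement uses an inequality and not an equality. I will also note the inactive case treated in \cref{rem:e2-inactive}. When $\eps\le0$ the right-hand side is non-positive, so internal funding supports no investment, and any investment must be booked against $\bar\delta_t$ in \cref{eq:arith-upper}.

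The main obstacle is interpretive, not computational. The step from ``resources freed by repression'' to ``resources actually available for investment'' needs an explicit assumption. The dividend must accrue to the fiscal authority as reduced effective interest cost within period $t$, and it must not already be absorbed into $\dt$. I will state this assumption explicitly, namely that $\RD$ is not double-counted in the deficit term, before giving the one-line inequality.
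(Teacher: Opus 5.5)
Your proposal is correct and follows the paper's approach: the paper gives no separate argument and treats the bound as immediate from the definition $\RD=\eps\btm$ in \cref{eq:rd}, together with the reinvestment share $\lambda$ of \cref{prop:corrected-e2}, and it handles the $\eps\le 0$ case in \cref{rem:e2-inactive}. One small caveat: your side remark that $\lambda\le 1$ gives $x_t^{\max,\mathrm{RD}}\le\RD$ holds only in the repression-active case $\eps\ge 0$, because for $\eps<0$ one has $\lambda\eps\btm\ge\eps\btm$.
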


\begin{corollary}[Safe-corridor upper bound]
\label{cor:safe-upper}
If the policymaker requires a positive safety margin $m>0$ inside the Debt Sustainability Corridor, then
\begin{equation}
x_t^{\max,\mathrm{safe}}
=
\Bigl[
\eps+\gns+\alpha \De-\beta \max(0,\De-\ebar)^2-\pi_t
\Bigr]\btm
-(\dt-\stt)-m.
\label{eq:safe-upper}
\end{equation}
\end{corollary}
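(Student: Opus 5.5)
The plan is to read Corollary~\ref{cor:safe-upper} as a direct rearrangement of the stability condition~\eqref{eq:stability} once the growth investment $x_t$ enters as an additional fiscal burden. This is the same convention used in Corollary~\ref{cor:arith-upper}. First, I would make that convention explicit: investment raises the effective deficit from $\dt$ to $\dt + x_t$ within the period. The contemporaneous growth effect $\mu x_t$ from \eqref{eq:mu-bridge} is conservatively excluded, so that $\gns$ on the left-hand side is evaluated at its pre-investment level. This exclusion only makes the resulting bound tighter, which suits a ``safe'' bound.

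Second, I would rewrite \eqref{eq:stability} in level form by multiplying through by $\btm>0$. I would then define the corridor surplus
\[
S_t(x_t) := \Bigl[\eps+\gns+\alpha\De-\beta\max(0,\De-\ebar)^2-\pi_t\Bigr]\btm-(\dt+x_t-\stt).
\]
Stability is then equivalent to $S_t(x_t)\ge 0$. The safety-margin requirement means the policymaker demands that the system stay at least $m$ inside the corridor, which I would formalize as $S_t(x_t)\ge m$.

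Third, I would note that $S_t$ is affine and strictly decreasing in $x_t$, with slope $-1$. The constraint $S_t(x_t)\ge m$ is therefore equivalent to $x_t$ lying below the unique root of $S_t(x)=m$. Solving for that root gives exactly \eqref{eq:safe-upper}. Hence every $x_t$ that preserves the margin satisfies $x_t\le x_t^{\max,\mathrm{safe}}$, and equality attains the margin exactly.

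The main obstacle is not the algebra but the modelling convention. One must justify that the margin $m$ is measured in debt-ratio units on the level form of the inequality, and that ignoring the same-period growth feedback is legitimate. I would handle the first by stating that definition explicitly. For the second, I would remark that including $\mu x_t$ would only add a nonnegative term to the left-hand side, so \eqref{eq:safe-upper} remains a valid, conservative upper bound; this also places the corollary as one of the four bounds in Theorem~\ref{thm:upper}.
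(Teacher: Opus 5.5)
Your derivation is correct and is the same route the paper intends; the paper gives no separate proof, only the rearrangement implicit in Theorem~\ref{thm:upper}: multiply \eqref{eq:stability} by $\btm$, treat $x_t$ as an addition to $\dt$ as in Corollary~\ref{cor:arith-upper}, and require the level surplus to be at least $m$. One refinement to your last remark: once the same-period feedback $\mu x_t$ is admitted, the exact ceiling becomes $x_t^{\max,\mathrm{safe}}/(1-\mu\btm)$ for $\mu\btm<1$, so the formula is then a conservative safe ceiling (every $0\le x_t\le x_t^{\max,\mathrm{safe}}$ preserves the margin) rather than an upper bound on all margin-preserving $x_t$.
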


\begin{corollary}[Time-constrained cumulative upper bound]
\label{cor:time-upper}
If the finite horizon $T^*$ of \cref{def:clock} is imposed, then the conservative cumulative upper bound is
\begin{equation}
\mathcal{X}^{\max}(T^*)
\le
\sum_{\tau=t}^{t+\lfloor T^*\rfloor-1}
\min
\left\{
x_\tau^{\max,\mathrm{arith}},\;
x_\tau^{\max,\mathrm{RD}},\;
x_\tau^{\max,\mathrm{safe}}
\right\}.
\label{eq:time-upper}
\end{equation}
\end{corollary}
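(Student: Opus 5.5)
The plan is to prove \cref{cor:time-upper} by reducing the cumulative problem to a sum of per-period problems. Two facts do the work: the three per-period bounds of \cref{cor:arith-upper,cor:rd-upper,cor:safe-upper} must hold jointly in every period, and the residual SC1 window of \cref{def:clock} limits how many periods can contribute.

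\textbf{Step 1 (horizon truncation).} First fix which periods can carry growth investment under the JFR-rg regime. By \cref{prop:finite-window}, SC1 holds only on the residual window of length $T^*$. By the same reasoning as in the proof of \cref{prop:timing}, an investment programme that relies on the corridor (and, for the RD bound, on an active repression dividend) can be executed only in periods $\tau=t,\dots,t+\lfloor T^*\rfloor-1$. The cumulative programme is $\mathcal{X}(T^*)=\sum_\tau x_\tau$ over those full periods. Periods beyond the window are excluded by the scope conditions, not bounded, so they add nothing to $\mathcal{X}^{\max}(T^*)$. I will use the floor so that a partial final period is not counted. This is the conservative reading; a fractional period would only lower the bound further if it were included at reduced weight.

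\textbf{Step 2 (per-period binding bound).} In each admissible period $\tau$, a feasible $x_\tau$ must satisfy all three constraints at once:
\begin{itemize}[leftmargin=2em]
\item the debt-deterioration tolerance, which gives $x_\tau\le x_\tau^{\max,\mathrm{arith}}$ by \cref{eq:arith-upper};
\item internal funding, which gives $x_\tau\le \lambda\eps\btm$ by \cref{eq:rd-upper};
\item the margin $m$ inside the corridor, which gives $x_\tau\le x_\tau^{\max,\mathrm{safe}}$ by \cref{eq:safe-upper}.
\end{itemize}
Together these give $x_\tau\le\min\{x_\tau^{\max,\mathrm{arith}},x_\tau^{\max,\mathrm{RD}},x_\tau^{\max,\mathrm{safe}}\}$. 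This is exactly the ``most conservative bound'' content of \cref{thm:upper}, applied period by period.

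\textbf{Step 3 (summation).} Summing the per-period inequality over the admissible periods gives \cref{eq:time-upper}, since summation preserves inequalities termwise.

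\textbf{Main obstacle.} The delicate point is that the period-$\tau$ bounds depend on the state $b_{\tau-1}$, and that state is itself affected by earlier investment. This is why the result is stated as an inequality ``$\le$'' on a sum of state-dependent minima rather than as a closed form. I would handle this by evaluating each bound along the realized path induced by the programme, so that Step 2 holds pathwise and no monotonicity in $b$ is needed. A second subtlety is sign: when $\eps\le 0$ the RD bound is non-positive, as in \cref{rem:e2-inactive}. The inequality still holds in that case, and it correctly signals that no internally funded investment is feasible.
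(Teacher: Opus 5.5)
Your argument is correct and matches the paper's implicit reasoning. The paper gives no separate proof: it treats \cref{cor:time-upper} as the per-period most-conservative bound of \cref{thm:upper} summed over the $\lfloor T^*\rfloor$ full periods of the SC1 window, and your pathwise evaluation of the state-dependent bounds is, if anything, more careful than the paper, which never says along which path the bounds are evaluated. One small slip: dropping the fractional last period is justified by your own definition of $\mathcal{X}(T^*)$ over full periods only, not by conservatism, since including that period at reduced weight would raise rather than lower the right-hand side whenever its minimum is positive.
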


\begin{remark}[Operational upper-bound principle]
Policy design should use the most conservative operational upper bound:
$x_t^{\max,\mathrm{operational}}
=
\min\{
x_t^{\max,\mathrm{arith,cons}},\;
x_t^{\max,\mathrm{RD,cons}},\;
x_t^{\max,\mathrm{safe,cons}}
\}$.
\end{remark}

\subsection{Lower bounds on stabilizing investment}

\begin{theorem}[Lower-bound structure]
\label{thm:lower}
The minimum required growth investment is determined by the most demanding of three requirements: the static minimum, the shock-buffer minimum, and the demographic-compensation minimum.
\end{theorem}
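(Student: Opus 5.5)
The plan mirrors the upper-bound structure of \cref{thm:upper}: first define each of the three requirements as an explicit threshold, then show that stabilization requires $x_t$ to meet all three simultaneously, so the binding requirement is their maximum,
\[
x_t^{\min,\mathrm{operational}}=\max\{x_t^{\min,\mathrm{static}},\,x_t^{\min,\mathrm{shock}},\,x_t^{\min,\mathrm{demo}}\}.
\]

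\textbf{Step 1: the static minimum.} Through the bridge \cref{eq:mu-bridge}, an investment $x_t$ raises $\gns$ by $\mu x_t$, and by the one-for-one substitutability remark this enters the left-hand side of \cref{eq:stability} additively. Let $S_t$ denote the current stability surplus, LHS minus RHS of \cref{eq:stability}. Requiring the surplus to stay at least at a target margin $m/\btm$ gives
\[
x_t^{\min,\mathrm{static}}=\max\{0,\,(m/\btm-S_t)/\mu\}.
\]
If $S_t\ge m/\btm$, this requirement is zero.

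\textbf{Step 2: the shock-buffer minimum.} Replace $S_t$ by its worst case over an admissible shock set. For example, use the bounded perturbation $\sigma_t$ of \cref{prop:stochastic-robust}, or an $\eps$-reversal of the kind in \cref{rem:e2-inactive}. This gives $x_t^{\min,\mathrm{shock}}=\max\{0,(m/\btm-\underline S_t)/\mu\}$ with $\underline S_t\le S_t$, and hence $x_t^{\min,\mathrm{shock}}\ge x_t^{\min,\mathrm{static}}$.

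\textbf{Step 3: the demographic-compensation minimum.} Use the clock of \cref{def:clock} and \cref{prop:linear-upper-bound}. As $\phic$ erodes at rate $\kappa$, the repression lever $\eps$ (equivalently the SC1 margin) must be withdrawn by the time $T^*$ is reached. The growth substitution must therefore have replaced the lost repression contribution within $T^*$. Spreading the required $\Delta\gns\ge\eps$ over the window gives a per-period floor of roughly $\eps/(\mu T^*)$. This floor is nondecreasing in $\kappa$ because $T^*$ falls as $\kappa$ rises.

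\textbf{Step 4: conclusion.} Stabilization in the sense of the paper means three things together: the surplus condition holds, it is robust to the admissible shocks, and it survives the SC1 horizon. Each is a constraint of the form $x_t\ge$ threshold, so the feasible set is an intersection of half-lines. Its lower endpoint is the maximum of the three thresholds, which is exactly the ``most demanding'' requirement in the statement.

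\textbf{Main obstacle.} The hard part is Step 3, because the paper gives no precise definition of the demographic-compensation minimum. My first attempt would be to define it by requiring cumulative investment over the window to satisfy
\[
\sum_{\tau<T^*}\mu x_\tau\ge\eps,
\]
and then take the level-per-period version.

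A secondary issue is that ``determined by the most demanding'' should be read as the max of the three thresholds, not as a claim that they are nested. Under that reading the argument is purely structural, exactly as for \cref{thm:upper}.
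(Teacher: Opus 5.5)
Your Step 4 is essentially all the paper does: it gives no separate proof of \cref{thm:lower}. Instead it states the three requirements explicitly in \cref{cor:static-lower,cor:shock-lower,cor:demo-lower} and then \emph{defines} $x_t^{\min,\mathrm{operational}}$ as their maximum in \cref{eq:operational-lower}. Your Step 1 also matches \cref{eq:static-lower}, up to how $m$ is normalized (you use $m/\btm$, while the paper adds $m$ directly in rate units). The problem is that Steps 2 and 3 build a different triple of requirements from the one the theorem names. Your intersection-of-half-lines argument is therefore applied to the wrong objects.

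The substantive misstep is Step 3. The paper's demographic-compensation minimum has nothing to do with the $\phi$-clock. It is $x_t^{\min,\mathrm{demography}}=\delta_{\mathrm{demo}}/\mu$: the investment that, through $\Delta\gns=\mu x_t$, offsets the $\delta_{\mathrm{demo}}\in[0.5,0.8]$ pp per year by which labor-force decline depresses potential nominal growth. You have conflated that demographic drag on $\gns$ with demographic erosion of the captive share (E5).

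Your floor $\eps/(\mu T^*)$ is really a repression-substitution requirement, the kind handled by \cref{prop:transition} and \cref{cor:joint-feasibility}, and it behaves very differently. It is zero whenever $\eps\le 0$ or $\kappa=0$, which is exactly the observed mid-2025 configuration. The paper's demographic requirement, by contrast, stays at 10--16\% of GDP at $\mu=0.05$ regardless of $\eps$, $\kappa$, or $T^*$.

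Step 2 has a milder version of the same problem. The paper's shock buffer is the flat $x_t^{\min,B}=1.0/\mu$, the growth push that offsets a one-point worsening of $\rn-\gn$. It is not a worst-case-surplus version of the static requirement. Your nested construction guarantees $x_t^{\min,\mathrm{shock}}\ge x_t^{\min,\mathrm{static}}$, so the static requirement can never bind and ``the most demanding of three'' collapses to two. That sits awkwardly with your own closing remark that the three should not be read as nested.

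Take the thresholds from the paper's corollaries instead: three non-nested scalars, $x_t^{\min,\mathrm{static}}$, $1.0/\mu$, and $\delta_{\mathrm{demo}}/\mu$. The obstacle you flag then disappears, and your Step 4 closes the argument immediately.
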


\begin{corollary}[Static minimum requirement]
\label{cor:static-lower}
\begin{equation}
x_t^{\min,\mathrm{static}}
=
\max
\left\{
0,\;
\frac{
\pi_t+\frac{\dt-\stt}{\btm}
-
\left[
\eps+\gns+\alpha \De-\beta\max(0,\De-\ebar)^2
\right]
+m
}{\mu}
\right\}.
\label{eq:static-lower}
\end{equation}
\end{corollary}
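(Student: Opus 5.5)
The plan is to derive \cref{eq:static-lower} by combining the stability condition \cref{eq:stability}, augmented by the safety margin $m$, with the investment-to-growth bridge \cref{eq:mu-bridge}. The key observation is the one-for-one substitutability remark: an investment $x_t$ shifts potential nominal growth from $\gns$ to $\gns+\mu x_t$ while leaving every other term of the inequality unchanged. The static minimum should therefore be the smallest $x_t\ge 0$ that makes the margin-augmented stability inequality hold at the post-investment growth rate.

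First, I will write the required condition as
\[
\eps+\gns+\mu x_t+\alpha\De-\beta\max(0,\De-\ebar)^2 \;\ge\; \pi_t+\frac{\dt-\stt}{\btm}+m .
\]
Here $m$ is read, consistently with the ratio form of \cref{eq:stability}, as a margin on the growth-equivalent surplus. Second, I will isolate $\mu x_t$ on the left and divide by $\mu>0$, which keeps the direction of the inequality. This gives $x_t\ge \bigl(\pi_t+(\dt-\stt)/\btm-[\eps+\gns+\alpha\De-\beta\max(0,\De-\ebar)^2]+m\bigr)/\mu$. Third, I will impose non-negativity of investment. If the pre-investment surplus already exceeds $m$, the right-hand side is negative and no investment is needed. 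Taking the smallest feasible $x_t$ in both cases therefore yields the $\max\{0,\cdot\}$ form. Because the constraint is linear and increasing in $x_t$, its feasible set is a half-line, so this bound is both necessary and sufficient.

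The main obstacle is not algebraic but one of consistent interpretation. The bound implicitly assumes that investment in period $t$ does not itself enter the fiscal-burden term on the right-hand side: $\dt$ is taken as given, as in the static reading, in contrast with \cref{cor:arith-upper,cor:safe-upper}, where $x_t$ adds to the burden. I will state this explicitly as the defining feature of the \emph{static} minimum. I will also note the units of $m$: in \cref{eq:safe-upper} it is a level, whereas here it enters as a ratio, so the two margins coincide up to scaling by $\btm$. Finally, I will remark that if $\dt$ were made to include $x_t$, the denominator would become $\mu-1/\btm$ rather than $\mu$. The stated corollary is the benchmark in which that financing feedback is excluded, and the shock-buffer and demographic minima of \cref{thm:lower} handle the remaining refinements.
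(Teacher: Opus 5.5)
Your derivation is correct. It is exactly the argument implicit in the paper, which states \cref{cor:static-lower} without proof as a direct consequence of \cref{thm:lower}: add $\mu x_t$ from \eqref{eq:mu-bridge} to the left-hand side of \eqref{eq:stability}, require slack $m$, solve for $x_t$ using $\mu>0$, and truncate at zero. Your side remarks are accurate and make explicit what the paper leaves implicit: that $m$ here is in rate units whereas in \eqref{eq:safe-upper} it is in level units, and that the static reading keeps $x_t$ out of $d_t$.
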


\begin{corollary}[Shock-buffer lower bound]
\label{cor:shock-lower}
If a moderate normalization shock worsens $\rn-\gn$ by approximately one percentage point, then the required growth push is
\begin{equation}
x_t^{\min,B}=\frac{1.0}{\mu}.
\label{eq:shock-lower}
\end{equation}
\end{corollary}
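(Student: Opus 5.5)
The statement to prove is \cref{cor:shock-lower}. The plan is to treat it as a direct application of the investment-to-growth bridge \eqref{eq:mu-bridge}, combined with the one-for-one substitutability of $\gns$ in the stability condition \eqref{eq:stability}. The argument has three steps.

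\textbf{Step 1: translate the shock into a surplus loss.} I first fix what a normalization shock that worsens $\rn-\gn$ by one percentage point means inside \eqref{eq:stability}. Under the JFR-rg accounting, the spread enters the stability surplus through the effective-repression wedge. A one-point worsening of the spread is therefore a one-point reduction in the left-hand side, in the same units in which $\gns$ enters. I would state this explicitly as the modeling identification: the shock lowers $\eps$ (equivalently, raises the effective rate) by $1.0$ point, with all other terms held fixed.

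\textbf{Step 2: find the offsetting growth push.} Starting from a configuration that just satisfies the stability condition with margin $m$, I restore the surplus. Because \eqref{eq:stability} is linear in $\gns$ with unit coefficient, the required compensation is $\Delta\gns = 1.0$. Substituting into the bridge \eqref{eq:mu-bridge}, $\mu x_t = 1.0$, gives $x_t^{\min,B} = 1.0/\mu$.

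\textbf{Step 3: tie it to \cref{thm:lower}.} This mirrors \cref{cor:static-lower}: the static formula \eqref{eq:static-lower} with the bracketed surplus reduced by exactly $1.0$ raises the required numerator by $1.0$. So the shock-buffer requirement is the additional investment $1.0/\mu$ on top of the static minimum. It is therefore one of the three lower bounds whose maximum \cref{thm:lower} selects.

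\textbf{Main obstacle.} The hard step is the identification in Step 1. The shock is stated in terms of the debt-recursion spread $\rn-\gn$, while \eqref{eq:stability} is written in terms of $\eps$, $\gns$ and fiscal terms. Linking them requires that the spread shock passes one-for-one into the stability surplus, with $\btm$ normalized so that no $\btm$ scaling appears. I would handle this by invoking the Part~I correspondence between the spread and the effective-repression term, and by noting that ``approximately'' absorbs second-order effects through $\btm$ and pass-through.

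I would also remark that the formula uses the aggregate bridge. Under the disaggregated form \eqref{eq:mu-bridge-disagg}, $\mu$ is replaced by the relevant effective marginal efficiency.
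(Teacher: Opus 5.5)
Your proposal is correct and follows the paper's own (implicit) derivation. The one-for-one substitutability of $g_t^{n*}$ in the stability condition~\eqref{eq:stability} turns a one-point spread shock into a required $\Delta g_t^{n*}=1.0$, and the bridge $\Delta g_t^{n*}=\mu x_t$ then gives $x_t^{\min,B}=1.0/\mu$, matching the calibration entry $1.0/0.05=20\%$ of GDP. The step you flag as the main obstacle is in fact exact rather than approximate: since $\varepsilon_t=\pi_t-r_t^n$, the stability condition is just the debt recursion in rate form, $g_t^n-r_t^n\ge (d_t-s_t)/b_{t-1}$, with $g_t^n$ collecting the growth and exchange-rate terms, so with the fiscal terms held fixed a one-point spread shock lowers the surplus by exactly one point and no $b_{t-1}$ scaling arises.
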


\begin{corollary}[Demographic-compensation lower bound]
\label{cor:demo-lower}
If labor-force decline depresses potential nominal growth by approximately $\delta_{\mathrm{demo}} \in [0.5,\, 0.8]$ percentage points per year, then
\begin{equation}
x_t^{\min,\mathrm{demography}}
=
\frac{\delta_{\mathrm{demo}}}{\mu}.
\label{eq:demo-lower}
\end{equation}
\end{corollary}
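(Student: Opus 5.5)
The statement to prove is \cref{cor:demo-lower}. I would derive it as an instance of the investment-to-growth bridge \cref{eq:mu-bridge}, in the same way \cref{cor:shock-lower} is obtained. Two earlier results do the work: the one-for-one substitutability of $\gns$ in the stability condition \cref{eq:stability}, and the static minimum \cref{eq:static-lower}.

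\textbf{Step 1: compute the deficit created by the demographic drag.} Start from a configuration in which the stability surplus is exactly the one secured by the static requirement, so that \cref{eq:stability} holds with margin $m$. Now suppose labor-force decline lowers potential nominal growth by $\delta_{\mathrm{demo}}$ percentage points. This means replacing $\gns$ by $\gns-\delta_{\mathrm{demo}}$ on the left-hand side. The other terms stay fixed: $\eps$, $\alpha\De-\beta\max(0,\De-\ebar)^2$, and the right-hand side $\pi_t+(\dt-\stt)/\btm$. Because $\gns$ enters additively with unit coefficient, the surplus falls by exactly $\delta_{\mathrm{demo}}$.

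\textbf{Step 2: solve for the offsetting investment.} Restoring the surplus requires $\Delta\gns=\delta_{\mathrm{demo}}$. By \cref{eq:mu-bridge}, $\Delta\gns=\mu x_t$, and $\mu>0$, so solving gives $x_t=\delta_{\mathrm{demo}}/\mu$. This is \cref{eq:demo-lower}.

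\textbf{Step 3: show it is a genuine minimum.} Any $x_t<\delta_{\mathrm{demo}}/\mu$ leaves a strictly negative change in the surplus, so the margin $m$ is violated. Hence $\delta_{\mathrm{demo}}/\mu$ is a lower bound and not merely a sufficient amount. Over the range $\delta_{\mathrm{demo}}\in[0.5,0.8]$ the bound lies in $[0.5/\mu,\,0.8/\mu]$ and is monotone in $\delta_{\mathrm{demo}}$.

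\textbf{Main obstacle: interpretation, not algebra.} The computation itself is trivial; what needs care is the setting in which it is valid.
\begin{enumerate}[label=(\alph*)]
\item The bound holds in the one-parameter aggregate case \cref{eq:mu-bridge}. I would note that under the disaggregated bridge \cref{eq:mu-bridge-disagg}, $\mu$ must be read as the effective marginal efficiency of the chosen allocation.
\item The corollary treats the requirement per period, so it is an annual flow requirement. It is not a stock requirement.
\item Its interaction with the other lower bounds is governed by \cref{thm:lower}. The operative minimum is the maximum of the static, shock-buffer and demographic requirements, so I would not add them together unless the shocks are assumed to occur jointly.
\end{enumerate}
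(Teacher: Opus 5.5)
Your proposal is correct and is essentially the paper's own (implicit) argument: the paper gives no separate proof, since the bound is just the inversion of the bridge $\Delta g_t^{n*}=\mu x_t$ for a shortfall of $\delta_{\mathrm{demo}}$ in $g_t^{n*}$ (one-for-one in the stability condition), exactly parallel to the shock-buffer bound $x_t^{\min,B}=1.0/\mu$. One small tidy-up: anchoring Step~1 at a state where the static requirement already binds makes the demographic offset additive to $x_t^{\min,\mathrm{static}}$, which sits a little awkwardly with your remark (c), whereas the paper treats each lower bound as a stand-alone offset and aggregates them by the max.
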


\begin{definition}[Operational lower bound]
\begin{equation}
x_t^{\min,\mathrm{operational}}
=
\max
\left\{
x_t^{\min,\mathrm{static}},\;
x_t^{\min,B},\;
x_t^{\min,\mathrm{demography}}
\right\}.
\label{eq:operational-lower}
\end{equation}
\end{definition}

\begin{remark}[Political-economy constraint]
The investment bounds above are derived under the assumption that the repression dividend can be directed toward growth-enhancing investment. In practice, political pressures may redirect fiscal space toward transfers, consumption subsidies, or defense spending that does not raise $\gns$. The effective $\lambda$ in \cref{cor:rd-upper} should therefore be interpreted as a political-economy-adjusted share, not a technical optimum. This constraint does not alter the theoretical architecture but may substantially reduce the feasible investment envelope in practice.
\end{remark}

\subsection{Allocation problem}

\begin{definition}[Investment allocation problem]
Given an aggregate investment envelope $\bar X_t$, the JFR-rg-consistent allocation problem is
\begin{equation}
\max_{\{x_{j,t}\ge 0\}}
\Bigl[
\eps+\gns
+\sum_j \mu_j x_{j,t}
+\sum_{j<k}\gamma_{jk}x_{j,t}x_{k,t}
+\alpha \De
-\beta \max(0,\De-\ebar)^2
-\pi_t-\frac{\dt-\stt}{\btm}
\Bigr]
\label{eq:allocation}
\end{equation}
subject to $\sum_j x_{j,t}\le \bar X_t$.
\end{definition}

The preferred objective is maximization of the safety margin relative to the corridor boundary. This choice reflects the regime-conditional logic of JFR-rg: inside the corridor, the first-order concern is not maximizing growth per se, but ensuring that the system does not exit the stability region under plausible shocks. Growth maximization and margin maximization coincide when the system is well inside the corridor; they diverge precisely when the system operates near the boundary---which, as Part~I documented, is Japan's current position.

\section{External Estimation Architecture for \texorpdfstring{$\mu$}{mu}}
\label{sec:mu}

\subsection{Positioning}

\begin{definition}[Status of $\mu$]
$\mu$ is not treated as an internally proven structural constant of the core JFR-rg model. It is treated as an externally estimated calibration parameter that links policy investment to $\Delta \gns$.
\end{definition}

\subsection{Three-layer estimation architecture}

\paragraph{Layer M1: Macro exogenous calibration.}
\begin{equation}
\Delta g_{t+h}^{n*}
=
\mu_h x_t + W_t' \delta_h + \varepsilon_{t+h}.
\label{eq:m1}
\end{equation}

\paragraph{Layer M2: Sector-specific policy-panel estimation.}
\begin{equation}
\Delta g_t^{n*}
=
\sum_j \mu_j x_{j,t}
+
\sum_{j<k}\gamma_{jk}x_{j,t}x_{k,t}
+\xi_t.
\label{eq:m2}
\end{equation}

\paragraph{Layer M3: Bayesian or scenario range calibration.}
\begin{equation}
\mu \in [\mu_L,\mu_M,\mu_H]
\qquad\text{or}\qquad
\mu \sim \mathcal{D}_{\mu}.
\label{eq:m3}
\end{equation}

A conservative working range is $\mu \in [0.02, 0.08]$, interpreted as an illustrative range motivated by the broader public-investment and growth-accounting literature rather than as a structural estimate internal to JFR-rg. Lower values correspond to weak efficiency and slow transmission into productivity, whereas higher values correspond to unusually effective growth-oriented deployment. The range is explicitly provisional and subject to revision as Layer~M1 and M2 estimates become available.

\begin{remark}[Situating the $\mu$ range relative to existing empirical literature]
\label{rem:mu-literature}
Part~II does not estimate $\mu$; it situates operational thresholds relative to an
externally grounded empirical range.
To give this range empirical content, three reference points from the literature are noted.

First, Bom and Ligthart (2014) report a meta-analytic mean output elasticity of public
capital of approximately 0.15 across 68 studies.
Translating an elasticity of 0.15 into a growth-impact parameter requires knowledge of
the public-capital-to-output ratio and the lag structure, but it is broadly consistent
with $\mu$ values in the central part of the working range at reasonable calibration
of investment-to-GDP ratios.

Second, DeLong and Summers (2012) argue that under hysteresis, fiscal multipliers of
1.0 to 1.5 are plausible in depressed economies.  If a multiplier of 1.0 translates
into sustained growth effects (rather than purely cyclical demand effects),
the implied $\mu$ would fall in the upper part of the working range.

Third, the Japanese evidence on public investment productivity is mixed:
infrastructure investments of the 1970s--1990s are associated with higher estimated
$\mu$, while post-2000 fiscal stimulus shows lower returns, consistent with the
conservative end of the range.

Taken together, these reference points suggest that:
\begin{itemize}[leftmargin=2em]
  \item $\mu \approx 0.02$--$0.03$ (conservative): consistent with diminishing-returns
  environments or low-quality deployment (transfers mislabeled as investment).
  \item $\mu \approx 0.05$ (central): broadly consistent with the Bom--Ligthart
  meta-analytic midrange at plausible calibration.
  \item $\mu \approx 0.07$--$0.08$ (optimistic): consistent with high-efficiency,
  complementarity-heavy investment programs under favorable demand conditions.
\end{itemize}
The feasibility assessments in \cref{tab:stress-v2} (``Conditional at $\mu \geq 0.05$'',
``Tight at $\mu \geq 0.07$'') should be read against this external anchoring:
the central case requires efficiency in the middle of the empirically documented range,
and the tight case requires efficiency at the high end.
\end{remark}

\subsection{Boundary between inside and outside JFR-rg}

\begin{definition}[Boundary rule]
\leavevmode
\begin{itemize}
    \item \textbf{Inside JFR-rg:} required $\Delta \gns$, implied reduction in necessary $\eps$, corridor arithmetic, and upper- and lower-bound calculations.
    \item \textbf{Outside JFR-rg:} estimation of $\mu$, sectoral efficiency differences, timing lags, induced private investment, and implementation heterogeneity.
    \item \textbf{JFR-rg-centered sequel work:} insertion of externally estimated $\mu$-ranges or $\mu_j$-vectors into the corridor, threshold, and allocation problems.
\end{itemize}
\end{definition}

\section{Illustrative Calibration}
\label{sec:calibration}

The following table illustrates the quantitative movability of the Part~II extensions using the Part~I baseline calibration (March 2026 FRED data). All entries are computed from the closed-form expressions derived above; no simulation model is required. For continuity with Part~I, the main-text quantitative illustrations in Part~II intentionally inherit the March~2026 operating point used there. Updated observed-value insertions from public sources are reported separately as monitoring or robustness layers and are not used to redefine the baseline of the sequel.

\begin{table}[htbp]
\centering
\caption{Illustrative Calibration of Part~II Extensions (March 2026 Baseline)}
\label{tab:calibration}
\small
\begin{tabular}{>{\raggedright\arraybackslash}p{4.5cm} >{\raggedright\arraybackslash}p{3.0cm} >{\raggedright\arraybackslash}p{6.0cm}}
\toprule
Quantity & Value & Source / Derivation \\
\midrule
$b_0$ (debt/GDP) & 240\% & Baseline calibration inherited from Part~I and anchored to public-debt series \\
$\rn - \gn$ (baseline spread) & $-0.8\%$ & Part~I Scenario A \\
$\eps_t$ (repression bias) & $+0.5\%$ & $\pi_t - \rn = 2.7\% - 2.2\%$ \\
$\dt$ (primary deficit) & 2.0\% GDP & Part~I calibration \\
\midrule
\multicolumn{3}{l}{\textit{E1: Virtuous Ratchet --- 2-year sprint, $\eps=1.0\%$}} \\
Sprint spread $(\rn-\gn)_{\mathrm{sprint}}$ & $-1.3\%$ & $\eps$ raised from 0.5\% to 1.0\% \\
Cumulative improvement & $\approx 2.4$~pp & $2 \times |{-1.3}-({-0.8})| \times 2.40$ \\
\midrule
\multicolumn{3}{l}{\textit{E2: Repression Dividend}} \\
Annual RD & 1.2\% GDP & $0.005 \times 2.40$ \\
\midrule
\multicolumn{3}{l}{\textit{E3: Debt Reduction Paradox}} \\
$\gamma$ threshold & 0.8 pp & $|\rn-\gn| = 0.8\%$ \\
\midrule
\multicolumn{3}{l}{\textit{E5: Demographic-$\phi$ Clock}} \\
$\phic$ (current) & 0.88 & BoJ Flow of Funds \\
$\phibar$ (illustrative) & 0.85 & Hoshi and Ito (2014) \\
$\kappa$ (annual decline) & 0.01 & Observed 2022--2026 trend \\
$T^*_{\mathrm{linear}}$ & 3.0 years & $(0.88-0.85)/0.01$ \\
$T^*_{\exp}$ & 3.47 years & $(1/0.01)\ln(0.88/0.85)$ \\
$T^*$ under $\kappa=0.005$ & 6.0 years & $(0.88-0.85)/0.005$ \\
$T^*$ under $\kappa=0.0075$ & 4.0 years & $(0.88-0.85)/0.0075$ \\
\midrule
\multicolumn{3}{l}{\textit{Investment bounds (at $\mu=0.05$, $\lambda=0.5$)}} \\
$x_t^{\max,\mathrm{RD}}$ & 0.6\% GDP & $0.5 \times 0.005 \times 2.40$ \\
$x_t^{\min,B}$ (shock buffer) & 20\% GDP & $1.0/0.05$ (public $+$ induced private) \\
$x_t^{\min,\mathrm{demo}}$ & 10--16\% GDP & $(0.5\text{--}0.8)/0.05$ \\
\bottomrule
\end{tabular}

\vspace{0.5em}
\begin{minipage}{0.92\textwidth}
\footnotesize
\textit{Notes.}
$b_0=240\%$ is used as the baseline calibration inherited from Part~I and anchored to the public-debt series together with the March 2026 scenario setup; it should not be read as a single directly observed March 2026 print. The working range $\mu \in [0.02,0.08]$ is illustrative and is motivated by the broader public-investment and growth-accounting literature; it is not a structural estimate internal to JFR-rg. The purpose of this table is to demonstrate numerical movability of the closed-form expressions, not to claim final empirical calibration. The reported $T^*_{\exp}$ line uses the direct proportional-decay benchmark with $\kappa_{\exp}=0.01$; it is therefore not the normalized comparison discussed in \cref{rem:exp-alternative}, where the exponential path is rescaled to match the initial absolute decline of the linear specification. Separate observed-value updates may be reported for monitoring or robustness, but such updates do not replace the inherited Part~I baseline used for the main-text illustrations of Part~II.
\end{minipage}
\end{table}

\begin{table}[htbp]
\centering
\caption{Observed Monitoring Update for Japan}
\label{tab:monitoring-update}
\small
\begin{tabular}{>{\raggedright\arraybackslash}p{5.0cm} >{\raggedright\arraybackslash}p{3.2cm} >{\raggedright\arraybackslash}p{5.3cm}}
\toprule
Quantity & Value & Source / Derivation \\
\midrule
\multicolumn{3}{l}{\textit{Latest observed state (latest available)}} \\
$\phic$ & 0.932 (2025) & BoJ Flow of Funds, annualized from 2025Q4 holder data \\
$\theta_t$ & 0.797 (2025) & BoJ $+$ banks $+$ insurance $+$ pensions over instrument 311 total \\
$\psi_t$ (baseline proxy) & 0.977 $\approx 0.98$ & $(1+\phic+1)/3$ with $\widehat{\psi}^{\mathrm{abs}}_t=\phic$ \\
Named-holder coverage & 0.982 & Latest coverage ratio in the 311-only baseline \\
Residual gap share & 1.77\% & Latest residual under the named-holder aggregation \\
Worst-quarter residual gap & 4.30\% & 2001Q2 coverage residual \\
\midrule
\multicolumn{3}{l}{\textit{Structural-break monitoring}} \\
$\kappa_{\mathrm{full}}$ & 0.055\%/yr & Full-sample annualized slope, $p=0.0004$ \\
$\kappa_{\mathrm{pre\mbox{-}QQE}}$ & $-0.132\%$/yr & Pre-QQE slope, $p=0.0007$ \\
$\kappa_{\mathrm{post\mbox{-}QQE}}$ & 0.188\%/yr & Post-QQE slope, $p<10^{-6}$ \\
Chow test & $F=24.37$, $p<10^{-8}$ & Break at the launch of QQE strongly supported \\
Observed monitoring clock $T^*_{\mathrm{lin}}$ & 43.6 years & $(0.932-0.85)/0.001876$ \\
\midrule
\multicolumn{3}{l}{\textit{Fiscal-response monitoring}} \\
$\gamma_{\mathrm{OLS}}$ & $-0.0199$ & Full-sample estimate (SE $=0.0104$, $p=0.0685$) \\
2005--2007 episode & Contrast case & $r-g$ positive throughout; not a JFR-rg anchor \\
Paradox years & 2017, 2021 & Unit-consistent implementation of the paradox test \\
\bottomrule
\end{tabular}

\vspace{0.5em}
\begin{minipage}{0.92\textwidth}
\footnotesize
\textit{Notes.} This table reports the observed monitoring layer produced by the full-sample implementation used for Part~II. It does not replace the inherited March~2026 baseline used for the main-text closed-form illustrations of Part~II. Instead, it shows how the same framework behaves when the latest public data are inserted directly into the measurement layer. The large difference between the inherited baseline clock (3.0 years) and the observed monitoring clock (43.6 years) should therefore be read as a difference between a conservative stress-style operating point and a later observed monitoring layer, not as an internal contradiction of the theory.
\end{minipage}
\end{table}

\subsection{Illustrative logic figures}

\begin{figure}[htbp]
\centering
\IfFileExists{demographic_phi_clock_sensitivity.pdf}{\includegraphics[width=0.78\textwidth]{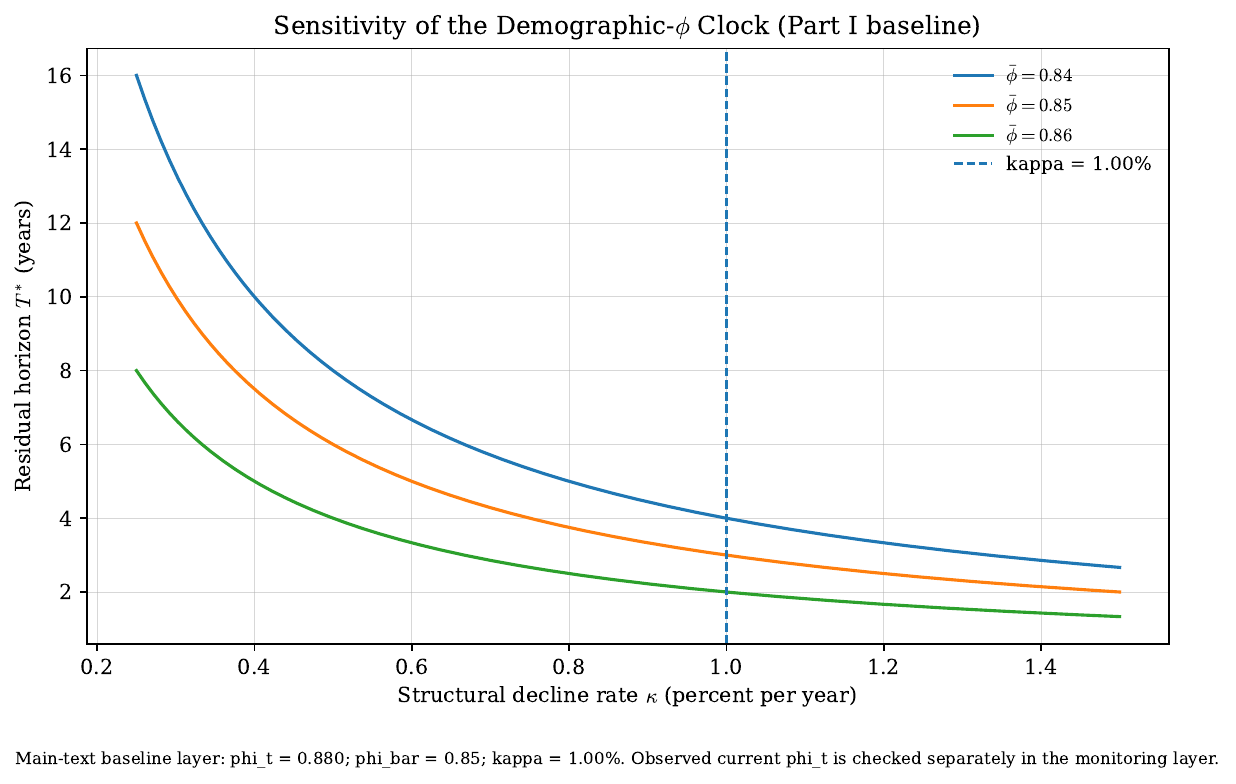}}{\fbox{\parbox{0.75\textwidth}{\centering Placeholder: demographic\_phi\_clock\_sensitivity.pdf not available in this environment.}}}
\caption{Sensitivity of the Demographic-$\phi$ Clock. The figure illustrates how the residual horizon $T^*$ implied by \cref{eq:clock} varies with the structural decline rate $\kappa$ and alternative illustrative threshold values $\phibar$. For continuity with Part~I, the main-text version of the figure inherits the March~2026 baseline anchor used there; updated observed-value insertions belong to a separate monitoring layer and do not redefine the baseline of Part~II. The purpose is analytic transparency rather than precise forecasting.}
\label{fig:phi-clock}
\end{figure}

\begin{figure}[htbp]
\centering
\IfFileExists{investment_bounds_map.pdf}{\includegraphics[width=0.78\textwidth]{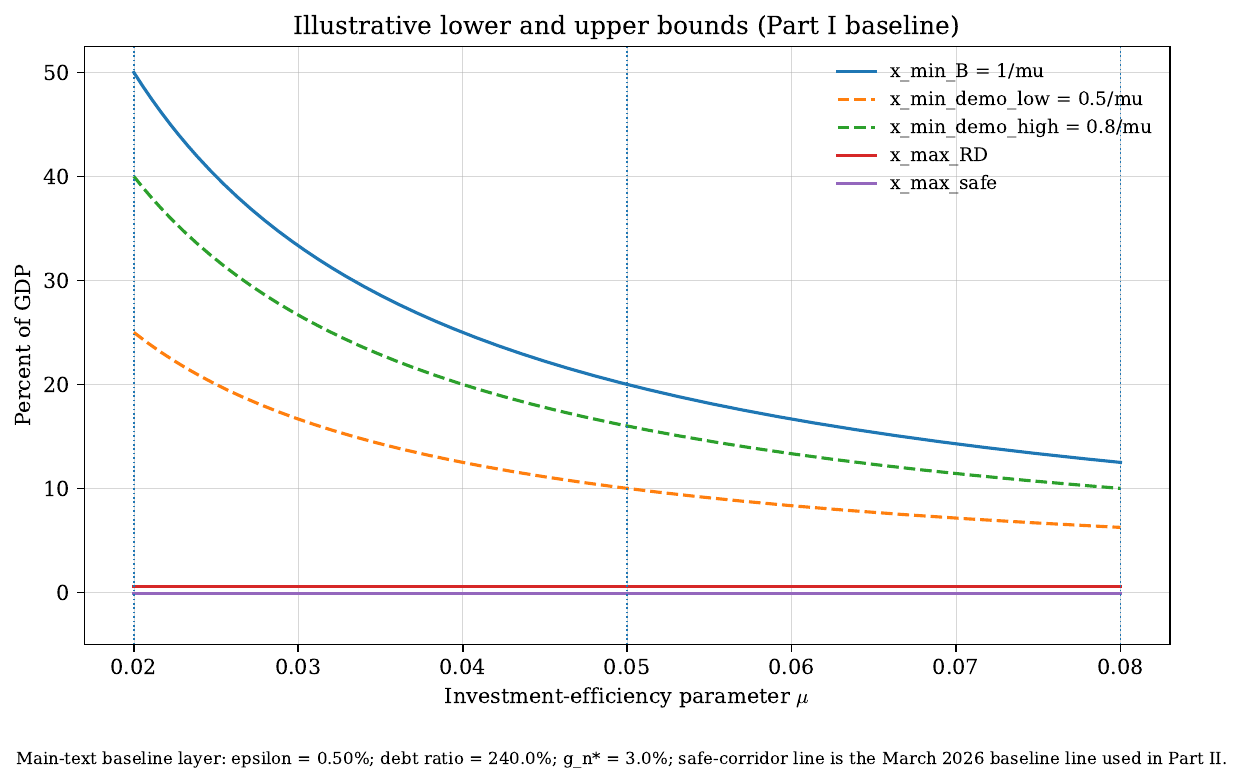}}{\fbox{\parbox{0.75\textwidth}{\centering Placeholder: investment\_bounds\_map.pdf not available in this environment.}}}
\caption{Illustrative lower and upper bounds on stabilizing growth investment. The lower-bound lines use \cref{eq:shock-lower,eq:demo-lower}; the internal-funding upper bound uses the repression-dividend logic of \cref{eq:rd-upper}; and the near-zero safe-corridor line reflects the tight March~2026 baseline inherited from Part~I. Updated observed-value insertions are reported, when needed, as monitoring or robustness figures rather than as replacements for the main-text baseline map. The figure is intended as a map of analytic tension, not as a final calibrated policy prescription.}
\label{fig:bounds-map}
\end{figure}

\paragraph{Observed-value monitoring updates.}
When the latest public observations are inserted into the Part~II formulas---for example, updated Flow-of-Funds values for $\phic$, realized-CPI variants for $\eps_t$, or updated market yields---the resulting charts should be reported as monitoring or robustness figures rather than as replacements for \cref{fig:phi-clock,fig:bounds-map}. This preserves continuity with the Part~I operating point while keeping the framework observables-centered and open to real-time empirical discipline.

\paragraph{Growth-proxy window sensitivity.}
A key finding from the observed-value implementation is that the structural nominal-growth proxy---estimated via log-linear trend on ESRI quarterly GDP data---is highly sensitive to the window length chosen. \Cref{app:window-sensitivity} reports a systematic comparison of 12-quarter and 24-quarter windows against the paper baseline. The 24-quarter window yields $g^{n*}_{\mathrm{struct}} \approx 2.87\%$, closely recovering the Part~I calibration of $3.0\%$. The 12-quarter window, which captures primarily the post-2022 inflationary acceleration, produces $g^{n*}_{\mathrm{struct}} \approx 6.19\%$---a value that reverses the sign of $x_t^{\max,\mathrm{safe}}$ and thus qualitatively alters the investment-bound map. This sensitivity motivates the use of longer estimation windows in any empirical implementation and provides independent support for the plausibility of the $g_0^{n*}=3.0\%$ baseline.

\section{An Empirical Program for Part~II}
\label{sec:empirics}

The empirical ambition of Part~II should be structured rather than maximalist. The objective is not to claim that every extension has already been fully identified in reduced-form data, but to specify a disciplined program through which the completed theoretical architecture can be confronted with observables.

Because the contribution of Part~II lies in logical extension rather than in a novel estimation algorithm, the paper does not rely on code distribution as a condition of transparency. Instead, it adopts an analytic-reproducibility standard: formulas, input values, calculation logic, tables, and figures are stated explicitly so that each core numerical illustration can be reproduced at the level of arithmetic or spreadsheet calculation.

\subsection{Four tasks of the empirical program}

The empirical program of Part~II has four tasks.

First, the directly integrated and conditionally integrated extensions should be illustrated by transparent parameter exercises. In particular, the Virtuous Ratchet, the Debt Reduction Paradox, and the Demographic-$\phi$ Clock should each be presented not merely as formal results but as operational policy maps under the Part~I baseline calibration. Observed-value updates can then be layered on top as separate monitoring exercises, but they should not be allowed to blur the inherited baseline that gives Part~II its continuity as a sequel.

Second, the framework should preserve the observables-centered falsifiability principle of Part~I. Each extension therefore carries at least one observable failure condition. For E1, the relevant test is whether sprint-period debt improvement persists after policy reversion. For E3, the relevant test is whether debt reduction under JFR-rg conditions lowers rather than raises annual debt accumulation when the deficit-relief response is weak. For E5, the relevant test is whether the captive share deteriorates materially faster than the conservative clock benchmark. For the transition block, the relevant test is whether the post-transition premium remains bounded within the range required by \cref{prop:transition}.

Third, the paper should specify an external empirical strategy for $\mu$. The preferred order is conservative range calibration, followed by macro-average estimation, and only later by sector-specific $\mu_j$ and interaction effects $\gamma_{jk}$. This sequencing minimizes scope creep and preserves the boundary between core JFR-rg logic and external policy evaluation.

Fourth, the empirical program should explicitly distinguish logical completion from final calibration. Part~II claims that the dynamic extension of Part~I can be stated in closed form. It does not claim that every external parameter has already been estimated with final precision. The empirical task is therefore implementation and calibration, not rescue of an otherwise incomplete theory.

\subsection{Extension-by-extension empirical design}

\paragraph{E1 (Virtuous Ratchet).}
The testable implication is that a temporary repression sprint generates debt improvement whose persistence exceeds what ordinary mean reversion would predict. A natural empirical strategy is a Local Projection design~\cite{Jorda2005} in which future debt-flow outcomes are projected on sprint episodes, controlling for the contemporaneous spread $\rn-\gn$ and the fiscal stance. The relevant falsifiable prediction is that the post-sprint half-life of the improvement exceeds the immediate policy window itself.

\paragraph{E2 (Corrected Repression Dividend Multiplier).}
The empirical implication is not explosive improvement but bounded, diminishing marginal returns from reinvested repression dividends. This requires evidence that the growth response to public investment is positive but bounded, and that marginal gains flatten as cumulative deployment rises. The appropriate target of testing is therefore the shape of the marginal-gain sequence, not an alleged infinite-horizon multiplier.

\paragraph{E3 (Debt Reduction Paradox).}
The empirical implication is that under JFR-rg conditions, exogenous debt reduction worsens annual debt dynamics when the deficit-relief response remains weaker than $|\rn-\gn|$. Empirical work should therefore focus on episodes of debt-ratio compression and ask whether the deficit-relief coefficient is large enough to overturn the paradox. The relevant falsifiable prediction is sign reversal at the threshold identified in \cref{prop:debt-reduction-paradox}.

\paragraph{E4 (Multi-Country Repression Equilibrium).}
The empirical implication is that foreign repression and alternative-asset returns shift the effective captive threshold. Comparative evidence should therefore relate changes in the domestic SC1 margin to changes in foreign real sovereign returns and the return on non-sovereign alternatives. The core test is not universality but comparative statics.

\paragraph{E5 (Demographic-$\phi$ Clock).}
The testable implication is that $\kappa>0$ persistently. Flow-of-Funds data can be used to estimate the decline in the domestic absorption share and to test whether the observed deterioration is materially slower or faster than the conservative clock benchmark. The falsification condition is persistent non-decline in $\phi_t$ without offsetting policy intervention.

\paragraph{E6 (Institutional Control\allowbreak\ Rights).}
The testable implication is that the debt-sustainability corridor width varies systematically with $\psi_t$ across countries. A natural empirical design is a cross-country comparison of high-debt economies (Japan, Italy, Greece) in which $\psi_t$ sub-indices are constructed from observable institutional data and regime outcomes are classified using the JFR-rg diagnostic variables ($\eps_t$, $\phic$, SC1 status). The falsification condition is that corridor width shows no systematic dependence on $\psi_t$ after controlling for debt levels and growth rates. Preliminary evidence from a three-country comparison (\cref{app:international}) confirms that the $\psi_t$ ordering (Japan $>$ Italy $>$ Greece) corresponds to the predicted ordering of regime viability, and that this ranking is robust within $\pm 5$ percentage-point variation in $\phic$ calibration.

\paragraph{Transition feasibility.}
The testable implication is that a safe regime exit requires not only stronger growth, but also a bounded post-transition premium. Empirical work should therefore jointly evaluate the growth response to investment and the observed premium behavior once the repression channel is weakened. A transition path is falsified if the required growth improvement is achieved but the premium term violates the bound required by \cref{prop:transition}.

\subsection{Observables-centered commitment}

The extended framework inherits Part~I's commitment to observable failure conditions. In addition to the three failure conditions specified in Part~I, the dynamic extensions generate additional observable failure conditions, summarized in the Falsifiable Prediction column of \cref{tab:tiers}. This preserves continuity with the methodological spirit of the original framework.

\subsection{\texorpdfstring{External empirical strategy for $\mu$}{External empirical strategy for mu}}

The empirical strategy for $\mu$ should begin with conservative range calibration, then proceed to macro-average estimation, and only later move to sector-specific $\mu_j$ and interaction effects $\gamma_{jk}$. That ordering minimizes scope creep and prevents Part~II from collapsing into an unfinished grand theory of industrial policy.

\subsection{What would falsify Part~II as a whole?}

Taken together, the dynamic extension would be significantly weakened if the following pattern were observed: (i) sprint-period gains fail to outlast ordinary mean reversion; (ii) debt-reduction episodes under JFR-rg conditions do not worsen annual debt dynamics even when fiscal relief is weak; (iii) the captive share does not display persistent erosion despite the mechanisms emphasized in E5; and (iv) post-transition premia become unbounded before growth substitution materializes. None of these outcomes would invalidate the accounting logic of Part~I, but together they would imply that Part~II should be read as a narrower analytical supplement rather than as a robust policy-design architecture.

\subsection{Prioritized empirical roadmap}

The six extensions and the transition block differ substantially in the availability
of data, the difficulty of identification, and the urgency for policy assessment.
The following tier structure reflects these differences and is intended to guide
implementation choices.

\paragraph{Tier~1 (immediate, high leverage).}
\begin{itemize}[leftmargin=2em]
  \item \textbf{E5 ($\kappa$ structural estimation and monitoring update)}: Flow-of-Funds
  data are available from 1997Q4; structural-break tests and OLS slope estimation are
  already reported in \cref{rem:kappa-empirical}.  This is the most directly actionable
  task and is already substantially completed in the observed monitoring layer.

  \item \textbf{Debt-concept reconciliation and monitoring integration}: Given the
  two-concept issue identified in \cref{rem:debt-concept}, establishing a consistent
  time series for $b_t$ under both concepts is a necessary prerequisite for any
  calibration update.

  \item \textbf{Section 16 calibration around the transition margin}: The two-layer
  model (\cref{tab:baseline-v2,tab:stress-v2}) uses illustrative parameter values
  ($\theta_t = 0.65$, $\bar{c}_m = 6\%$).  Updating these to observed 2025 values
  ($\theta_t \approx 0.797$ from the monitoring layer) is a low-cost refinement with
  direct implications for the premium-emergence thresholds.
\end{itemize}

\paragraph{Tier~2 (medium-term, standard identification challenges).}
\begin{itemize}[leftmargin=2em]
  \item \textbf{E1 (Virtuous Ratchet) via event-study or local projections}: Quasi-experimental
  variation around QQE launch and expansion episodes provides candidate sprint identifiers.
  \item \textbf{E6 (institutional control rights) comparative ordering}: The Japan--Italy--Greece
  comparison is already partially implemented in \cref{app:international}; extending to
  additional high-debt economies with known institutional variation would sharpen the test.
\end{itemize}

\paragraph{Tier~3 (longer-term, difficult identification).}
\begin{itemize}[leftmargin=2em]
  \item E3 ($\gamma$ identification across debt-reduction episodes).
  \item E4 (foreign repression and cross-asset exit-option calibration).
  \item Full transition calibration with endogenous premium ($\mu$ estimation plus
  $\theta_t$-erosion path).
  \item Alternative-distribution sensitivity check for \cref{sec:closure}
  (see \cref{app:limits}).
\end{itemize}

\subsection{What this paper does not do}

For clarity, Part~II does not estimate final values of $\mu$, $\kappa$, $\phibar$, or $\ebar$; it does not claim unique causal identification for every channel; and it does not provide a welfare ranking of all policy packages. These tasks remain outside the scope of the present paper and belong to the empirical implementation layer rather than to the completed theoretical architecture itself.

\section{Transition Feasibility}
\label{sec:transition}

The preceding extensions establish that the JFR-rg regime is time-bounded (E5), that policy sprints within the window can generate persistent improvements (E1), and that investment can substitute for repression in the stability condition. The natural integration question is whether the regime can be used to achieve a self-sustaining growth path that no longer requires the captive-system precondition.

\begin{proposition}[Transition Feasibility under bounded post-transition risk premium]
\label{prop:transition}
Suppose that within the residual horizon $T^*$, growth-enhancing investment raises potential nominal growth to $g_t^{n*,\mathrm{new}}$, that repression is gradually withdrawn so that $\eps_t \to 0$, and that the exchange-rate contribution is neutralized so that $\De \to 0$. Let the post-transition sovereign risk premium satisfy
\[
0 \le \rho_t \le \bar\rho.
\]
If
\begin{equation}
g_t^{n*,\mathrm{new}}
\ge
\pi_t + \frac{\dt-\stt}{\btm} + \bar\rho + m
\label{eq:transition-threshold}
\end{equation}
for some safety margin $m>0$, then there exists a transition path along which the economy can exit the repression-dependent regime without crossing the danger region.
\end{proposition}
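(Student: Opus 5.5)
The plan is to read the statement through the stability condition \cref{eq:stability}, augmented so that the post-transition premium $\rho_t$ enters the left-hand side with a negative sign. The left-hand side is then $\eps_t + \gns + \alpha\De - \beta\max(0,\De-\ebar)^2 - \rho_t$. The ``danger region'' is the set of states where this augmented inequality fails, or fails by less than the margin $m$. The argument has three steps.

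\emph{Step 1: the terminal state.} At the endpoint of the transition, set $\eps_t=0$ and $\De=0$. Both the linear pass-through term and the penalty term $\beta\max(0,-\ebar)^2$ vanish, since $\ebar\ge 0$. Growth is $g_t^{n*,\mathrm{new}}$ and the premium is at most $\bar\rho$. The surplus is therefore at least
\[
g_t^{n*,\mathrm{new}} - \bar\rho - \pi_t - \frac{\dt-\stt}{\btm},
\]
and \cref{eq:transition-threshold} makes this at least $m>0$. So the terminal state lies strictly inside the corridor, and it no longer uses repression, so SC1 is not needed there.

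\emph{Step 2: the path.} I would use a convex-combination path indexed by $\lambda\in[0,1]$:
\[
\eps(\lambda)=(1-\lambda)\eps_0,\quad \De(\lambda)=(1-\lambda)\Delta e_0,\quad g^{n*}(\lambda)=g_0^{n*}+\lambda\bigl(g^{n*,\mathrm{new}}-g_0^{n*}\bigr).
\]
The parameter $\lambda$ is advanced within $T^*$ periods, which respects the Timing Constraint \cref{prop:timing}. Along this path the premium is zero while SC1 holds. Conservatively, I would impose the worst case $\rho=\bar\rho$ once repression is partly withdrawn.

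The surplus $S(\lambda)$ is continuous. I would sequence the policies so that growth investment leads and repression withdrawal lags: first raise $g^{n*}$ fully using the one-for-one substitutability of \cref{eq:stability}, and only then reduce $\eps$ and $\De$. In the first phase the surplus only increases, because the premium is still zero under SC1. In the second phase each unit of $\eps$ withdrawn is offset by the growth gain already banked. The surplus is then bounded below by the terminal value plus the nonnegative remaining contributions $\eps(\lambda)$ and $\alpha\De(\lambda)$, taking $\alpha\De\ge 0$ on the pre-transition side and using SC2 so the penalty term is zero. Hence $S(\lambda)\ge m$ throughout.

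\emph{Step 3 (expected main obstacle): the starting point and the premium jump.} The starting point itself must not be in the danger region, and the order-of-moves argument needs the premium to switch on only after growth has been raised. The first requirement is the standing JFR-rg assumption at $t$. For the second, I would invoke SC1 over the window $T^*$ from \cref{def:clock}: $\rho_t=0$ while $\phic\ge\phibar$. The growth-first phase is completed before $T^*$, which is exactly the hypothesis ``within the residual horizon $T^*$''. The delicate case is when the premium turns on before growth is fully raised. There I would fall back on the uniform bound $\rho_t\le\bar\rho$ and note that the argument only needs the bound at states where $\eps$ has already dropped below its initial level. If this fails, I would state the existence claim for the sequenced path only, which suffices because the proposition asserts existence rather than universality.
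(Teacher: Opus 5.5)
Your Step 1 is the paper's entire proof. The paper sets $\eps_t=0$ and $\De=0$ in \cref{eq:stability}, replaces the financing rate by $r_t^n+\rho_t$ (equivalent to your subtracting $\rho_t$ on the left), and reads off the margin $m$ from \cref{eq:transition-threshold} using $\rho_t\le\bar\rho$. It never exhibits an intermediate path, so ``without crossing the danger region'' is certified only for the post-transition state.

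Your Steps 2--3 go further. You build a growth-first, withdrawal-second path and check the surplus along it, which gives the existence claim real content. That extra layer, however, rests on hypotheses the statement does not contain:
\begin{itemize}
\item the starting state satisfies \cref{eq:stability} (with margin $m$, under your definition of the danger region) and SC2;
\item $\eps_0\ge0$ and $\alpha\Delta e_0\ge0$;
\item $\rho_t\le\bar\rho$ holds during withdrawal, not only after it;
\item $\pi_t,\dt,\stt,\btm$ are frozen along the path.
\end{itemize}

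These assumptions are not innocuous. At the paper's own baseline the starting surplus is $0.5+3.0-2.7-2.0/2.40\approx-0.03$ pp. This is the $x_t^{\max,\mathrm{safe}}\approx-0.08\%$ of \cref{tab:window-sensitivity} divided by $\btm$. Your path would therefore start outside the corridor, whereas the paper's terminal-state argument is unaffected by where the economy starts.

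The most consequential assumption is the frozen deficit. Your claim that the surplus only rises in the growth-first phase needs it. In the paper's investment block, the investment that raises $\gns$ is charged against the corridor (\cref{eq:safe-upper}). Since $\mu\le0.08<1/\btm$, its growth payoff $\mu x_t$ is smaller than its burden $x_t/\btm$, so the surplus falls while the investment is being made. Keeping that phase inside the corridor is exactly what \cref{cor:joint-feasibility} handles through $x_t^{\max,\mathrm{operational}}$. Your path argument should invoke that condition rather than one-for-one substitutability alone.

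Two minor points. You can drop the simultaneous convex-combination path, since you replace it with the sequenced one anyway. The hedge in Step 3 is also unnecessary: the hypothesis places the growth increase inside $T^*$, where SC1 gives $\rho_t=0$, and existence needs only the one sequenced path.
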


\begin{proof}
After the withdrawal of repression and exchange-rate support, the relevant effective financing rate is $r_t^{\mathrm{eff}}=\rn+\rho_t$, while the stabilization condition becomes
\[
g_t^{n*,\mathrm{new}} \ge \pi_t + \frac{\dt-\stt}{\btm} + \rho_t.
\]
Because $\rho_t\le \bar\rho$, condition \eqref{eq:transition-threshold} is sufficient to keep the post-transition path on the safe side of the stability boundary with margin $m$.
\end{proof}

\begin{remark}[Quantitative illustration]
At the March 2026 calibration ($\pi_t = 2.7\%$, $\dt = 2.0\%$, $\stt = 0$, $\btm = 2.40$, and baseline $g_{0}^{n*} = 3.0\%$), the no-premium threshold is
\[
\pi_t + \frac{\dt}{\btm} - g_{0}^{n*} = 2.7 + 0.833 - 3.0 = 0.533\%.
\]
With a bounded post-transition premium of $\bar\rho = 0.5$ percentage points, the required structural increase rises to approximately $1.03$ percentage points, before any additional safety margin $m$. Thus the transition threshold is sensitive not only to growth performance but also to the size of the post-transition premium.
\end{remark}

\begin{remark}[Observed-value cross-check]
\label{rem:observed-cross-check}
When the latest publicly available data are substituted into the transition formula (core CPI $\pi_t = 1.6\%$, $r_t = 2.41\%$, structural growth proxy $g^{n*}_{\mathrm{struct}} \approx 2.87\%$ from a 24-quarter log-linear trend, and debt ratio $b_{t-1} \approx 1.57$), the no-premium threshold falls to approximately $0.002$ percentage points---effectively zero---while the bounded-premium case ($\bar\rho=0.5\%$) yields approximately $0.50$ percentage points. Both are below the corresponding paper-baseline values ($0.53\%$ and $1.03\%$), primarily because the observed debt ratio is lower than the illustrative $b_0=2.40$. However, the observed repression bias is $\eps \approx -0.81\%$ (the 10-year JGB yield now exceeds core CPI), which eliminates the repression dividend and pushes $x_t^{\max,\mathrm{safe}}$ into negative territory. This combination---a lower transition threshold but a vanishing safe corridor---illustrates the dual-edged nature of normalization: the exit becomes formally easier in growth terms, but the financing envelope for growth-enhancing investment narrows precisely when that investment is most needed. See \cref{app:window-sensitivity} for the full comparison.
\end{remark}

\begin{corollary}[Joint feasibility condition]
\label{cor:joint-feasibility}
Combining the Timing Constraint (\cref{prop:timing}) with the Transition Feasibility Proposition, the regime transition is feasible only if the required investment can be both financed and completed within the SC1 window:
\begin{equation}
\frac{\Delta g_{\min}^{n*}(\bar\rho,m)}{\mu} \le x_t^{\max,\mathrm{operational}}
\qquad\text{and}\qquad
T_{\mathrm{investment}} \le T^*.
\label{eq:joint-feasibility}
\end{equation}
where $\Delta g_{\min}^{n*}(\bar\rho,m)$ denotes the required growth improvement implied by \cref{eq:transition-threshold}.
\end{corollary}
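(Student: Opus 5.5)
The plan is to read \cref{cor:joint-feasibility} as a statement of necessary conditions. I would prove it by showing that each of the two inequalities in \cref{eq:joint-feasibility} is required by an earlier result, and that the two requirements hold jointly. The argument has two legs: a financing leg built on \cref{prop:transition} together with the investment bridge \cref{eq:mu-bridge}, and a timing leg built on \cref{prop:timing} together with the E5 clock of \cref{def:clock}.

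For the financing leg, I would start from \cref{eq:transition-threshold}. I would define
\[
\Delta g_{\min}^{n*}(\bar\rho,m) := \max\Bigl\{0,\; \pi_t + \tfrac{\dt-\stt}{\btm} + \bar\rho + m - g_0^{n*}\Bigr\},
\]
which is the smallest increase in potential nominal growth that makes the sufficient condition of \cref{prop:transition} hold. By the bridge $\Delta\gns = \mu x_t$, reaching $g^{n*,\mathrm{new}}$ requires investment of at least $x \ge \Delta g_{\min}^{n*}/\mu$. Any feasible investment must satisfy every upper bound in \cref{thm:upper}, so it is capped by $x_t^{\max,\mathrm{operational}}$. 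Chaining these two facts gives the first inequality: if it failed, no admissible investment level could deliver the growth the threshold demands. For the multi-period case I would replace the per-period cap with the cumulative cap of \cref{cor:time-upper}. The same logic then reads as the cumulative requirement staying below $\mathcal{X}^{\max}(T^*)$, which reduces to the stated per-period form when the requirement is concentrated in one period.

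For the timing leg, the transition hypothesis of \cref{prop:transition} requires that the growth increase be achieved \emph{within the residual horizon} $T^*$. It also requires that $\eps_t$ be withdrawn while the captive system still holds. After $T^*$, SC1 fails, exactly as in the proof of \cref{prop:timing}. At that point the hypotheses under which repression can be maintained during the investment phase, including a bounded premium $\rho_t\le\bar\rho$, are no longer guaranteed. An investment program lasting longer than $T^*$ would therefore leave the system without the repression support it needs before the substitute growth is in place. That gives $T_{\mathrm{investment}}\le T^*$, in direct analogy with \cref{eq:timing-constraint}.

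The step I expect to be the main obstacle is making the necessity direction rigorous. \Cref{prop:transition} is stated only as a sufficient condition, so failing \cref{eq:transition-threshold} does not formally rule out a safe exit. I would resolve this by stating explicitly that ``feasible'' in the corollary means feasible via the route certified by \cref{prop:transition}, that is, the threshold itself is taken as the operative criterion. Under that reading, both inequalities become genuinely necessary. I would also note that a weaker route exists: \cref{prop:psi-transition} already posits the same financing inequality as required.
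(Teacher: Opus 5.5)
Your proposal is correct and follows essentially the paper's own route. The paper gives no separate proof: it justifies the corollary only by ``combining'' \cref{prop:transition}, read through the bridge $\Delta g_t^{n*}=\mu x_t$ and the operational cap, with the SC1 window of \cref{prop:timing}, which is what your financing and timing legs spell out. Your stipulation that ``feasible'' means feasible via the route certified by \cref{prop:transition}, together with your one-period reading of the per-period cap, is exactly the interpretive step the paper leaves implicit, since a sufficient condition alone cannot deliver an ``only if''.
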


\begin{remark}
The Transition Feasibility Proposition is the logical culmination of Part~II in its exogenous-premium form. It does not claim that growth alone makes SC1 irrelevant in all states of the world. It claims something narrower: if the post-transition premium remains bounded and growth is raised sufficiently, a safe exit path may exist. The bounded-premium condition is addressed in the next section, which endogenizes $\rho_t$ through a minimal equilibrium closure.

\textit{Role partition.} Proposition~\ref{prop:transition} is the exogenous-premium benchmark: it specifies the growth improvement needed for a safe exit given a bound~$\bar\rho$ on the post-transition premium. The next section determines \emph{when} that bounded-premium premise is economically admissible---i.e., under what institutional conditions the equilibrium premium remains within the required bound. In this sense, Proposition~\ref{prop:transition} is the normative target, and \cref{sec:closure} is the positive closure that decides whether that target is reachable.
\end{remark}

\section{Minimal Equilibrium Closure of the Transition Margin}
\label{sec:closure}

\Cref{sec:transition} established the conditions under which a regime transition is
feasible, but it relied on the assumption that the post-transition sovereign
risk premium $\rho_t$ remains exogenously bounded by $\bar{\rho}$.  This is
the principal remaining gap in the logical architecture: the most critical
variable for transition feasibility is left outside the model.

The present section closes this gap by introducing a \emph{minimal equilibrium
layer} that endogenizes $\rho_t$ while preserving the block-recursive accounting
core of the preceding sections.  It does so not through a full general-equilibrium
structure, but through a two-layer domestic demand model and a complementarity
condition that prices the regime boundary.

This represents a methodological choice distinct from the fiscal-theoretic
approach to sovereign risk. Uribe~\cite{Uribe2006} derives the sovereign
premium endogenously from the government's intertemporal fiscal stance under
market pricing, with the premium reflecting market beliefs about future
fiscal-policy responses to shocks. JFR-rg takes a different route: the
premium is closed through the observable domestic absorption structure,
not through fiscal-policy expectations, because the objects on which the
closure depends ($\theta_t$, $\varphi_t^{\mathrm{req}}$, $z_t$) are
institutionally observable in the JFR-rg implementation layer. The
fiscal-theoretic and institutional-absorption channels are not mutually
exclusive, and both can in principle operate simultaneously; but JFR-rg
deliberately restricts attention to the latter to preserve the
observables-centered methodology inherited from Part~I.


\subsection{State Variables and Equilibrium Variables}
\label{sec:15A.2}

\begin{definition}[Recursive state and equilibrium]
\label{def:state-v2}
The state vector at the beginning of period $t$ is
\[
    X_t \;=\; \bigl(b_{t-1},\;\theta_t,\;\psi_t,\;z_t,\;d_t^{\mathrm{demo}}\bigr),
\]
where:
\begin{itemize}
    \item $b_{t-1}$: inherited debt-to-GDP ratio (from L1 recursion);
    \item $\theta_t \in [0,1]$: the \emph{hard captive core share}---the fraction
          of outstanding sovereign debt held by institutionally locked domestic
          intermediaries (Definition~\ref{def:core} below);
    \item $\psi_t \in [0,1]$: the institutional control-rights index
          (\Cref{def:psi});
    \item $z_t > 0$: the outside-option spread (return on the best non-sovereign
          alternative minus the repression-consistent sovereign yield);
    \item $d_t^{\mathrm{demo}}$: demographic pressure variable.
\end{itemize}

Given $X_t$, the equilibrium determines a single price:
the sovereign risk premium $\rho_t \geq 0$.

The observed domestic holding share $\varphi_t$ is then a
\emph{derived} quantity:
\begin{equation}
\label{eq:phi-derived}
    \varphi_t \;=\; \theta_t + \varphi_t^m(\rho_t),
\end{equation}
where $\varphi_t^m$ is the contestable margin's equilibrium holding
(Definition~\ref{def:margin} below).
\end{definition}

\subsection{Microfoundation: Two-Layer Domestic Demand}
\label{sec:15A.3}

\subsubsection{The hard captive core}

\begin{definition}[Hard captive core]
\label{def:core}
The hard captive core share $\theta_t$ is the fraction of sovereign debt held
by domestic institutions whose holding decisions are governed by
\emph{non-yield} mandates:
\begin{enumerate}
    \item[(i)] the central bank (BoJ balance-sheet operations, including QQE
    holdings);
    \item[(ii)] banks satisfying regulatory liquidity and capital requirements
    (sovereign bonds carry zero risk weight under Basel~III for
    domestic-currency exposures; eligible as HQLA under the LCR); and
    \item[(iii)] life insurers and pension funds matching long-duration
    yen-denominated liabilities under ALM constraints.
\end{enumerate}
These institutions hold sovereign bonds \emph{regardless of the sovereign
yield}, up to the limits imposed by their mandates.  The core share therefore
does not respond to $\rho_t$ within the period.
\end{definition}

\begin{assumption}[Core share depends on institutional control rights]
\label{ass:core-psi}
$\theta_t$ is weakly increasing in $\psi_t$:
\[
    \frac{\partial \theta_t}{\partial \psi_t} \;\geq\; 0.
\]
Stronger institutional control---monetary autonomy ($\psi_t^{\mathrm{mon}}$),
domestic absorption capacity ($\psi_t^{\mathrm{abs}}$), and exchange-rate
independence ($\psi_t^{\mathrm{fx}}$)---sustains a larger mandated holding
base.  When $\psi_t \to 0$ (complete loss of domestic policy autonomy),
$\theta_t \to 0$ and the entire domestic holding share becomes contestable.
\end{assumption}

\begin{remark}[Link to Part~I]
Assumption~\ref{ass:core-psi} formalizes the connection between Part~I's scope
condition SC1 ($\varphi_t \geq \bar{\varphi}$) and E6 (Institutional Control
Rights).  The core share $\theta_t$ is the institutional mechanism through
which $\psi_t$ supports SC1.  In the inherited March~2026 baseline used for the
main-text two-layer illustrations, $\theta_t \approx 0.65$ (BoJ $\approx$ 50\%,
regulatory-minimum bank holdings $\approx$ 15\%) and $\psi_t \approx 0.97$.
In the observed 2025 monitoring layer, however, the BoJ Flow-of-Funds
implementation yields $\theta_t \approx 0.797$, $\phic \approx 0.932$, and
$\psi_t \approx 0.977$ under the baseline proxy definition. The distinction is
intentional: the former is the inherited operating point for the sequel's
closed-form illustrations, while the latter is the updated observed-value
measurement layer.
\end{remark}

\subsubsection{The contestable margin}

\begin{definition}[Contestable domestic margin]
\label{def:margin}
The \emph{contestable margin} consists of domestic intermediaries whose
sovereign-bond holdings respond to yield incentives.  Their aggregate holding
share (as a fraction of total sovereign debt) is
\begin{equation}
\label{eq:margin-demand}
    \varphi_t^m(\rho_t;\,\psi_t,\,z_t)
    \;=\;
    (1 - \theta_t)\,\Bigl[\,1 - G\!\Bigl(\frac{z_t - \rho_t}{\psi_t}\Bigr)\Bigr],
\end{equation}
where:
\begin{itemize}
    \item $(1 - \theta_t)$ is the size of the contestable pool (the fraction of
    sovereign debt \emph{not} held by the hard core);
    \item $G: [0,\bar{c}_m] \to [0,1]$ is the CDF of captivity parameters
    within the contestable margin (Assumption~\ref{ass:margin-dist} below);
    \item $z_t - \rho_t$ is the net spread of the outside option over the
    sovereign yield: a contestable intermediary $j$ holds sovereign bonds if
    and only if its captivity benefit $c_j$ satisfies $c_j \geq z_t - \rho_t$,
    i.e., when the non-yield benefit from holding sovereign bonds is large enough
    to offset the outside-option spread net of the sovereign premium.
\end{itemize}
\end{definition}

\begin{assumption}[Distribution within the contestable margin]
\label{ass:margin-dist}
The captivity parameters $\{c_j\}$ within the contestable margin are drawn
from a distribution $G$ on $[0, \bar{c}_m]$ with $G$ continuously
differentiable and density $g(c) > 0$ on $(0, \bar{c}_m)$.  The upper bound
$\bar{c}_m$ represents the strongest non-mandated incentive to hold sovereign
bonds (e.g., relationship banking benefits, habitual allocation).
\end{assumption}

\subsubsection{Aggregate domestic demand}

Combining Definitions~\ref{def:core} and~\ref{def:margin}:
\begin{equation}
\label{eq:phi-demand-v2}
    \boxed{%
    \varphi_t^d(\rho_t)
    \;=\;
    \underbrace{\theta_t}_{\text{hard core}}
    \;+\;
    \underbrace{(1 - \theta_t)\,
        \Bigl[\,1 - G\!\Bigl(\frac{z_t - \rho_t}{\psi_t}\Bigr)\Bigr]
    }_{\text{contestable margin}}.}
\end{equation}

\begin{lemma}[Properties of two-layer demand]
\label{lem:demand-v2}
Under Assumptions~\ref{ass:core-psi}--\ref{ass:margin-dist}:
\begin{enumerate}
    \item[(i)] $\varphi_t^d$ is continuous and weakly increasing in $\rho_t$
    (higher sovereign yield attracts additional contestable holders);
    \item[(ii)] at $\rho_t = 0$:
    $\varphi_t^d(0) = \theta_t + (1-\theta_t)[1 - G(z_t / \psi_t)]$;
    \item[(iii)] at $\rho_t = z_t$:
    $\varphi_t^d(z_t) = \theta_t + (1-\theta_t)[1 - G(0)]$, the maximum
    domestic demand when the sovereign yield fully matches the outside option;
    \item[(iv)] the sensitivity to the premium is concentrated in the
    contestable margin:
    $\partial \varphi_t^d / \partial \rho_t =
    (1-\theta_t)\,g\bigl((z_t - \rho_t)/\psi_t\bigr)\,/\,\psi_t$.
\end{enumerate}
\end{lemma}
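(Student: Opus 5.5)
The plan is to work directly from the boxed demand expression \eqref{eq:phi-demand-v2}, holding the state $(\theta_t,\psi_t,z_t)$ fixed within the period. This is legitimate because \Cref{def:core} makes $\theta_t$ unresponsive to $\rho_t$ within the period. All four claims then reduce to properties of the composite map
\[
\rho_t \mapsto 1 - G\bigl(u(\rho_t)\bigr), \qquad u(\rho_t) := \frac{z_t-\rho_t}{\psi_t}.
\]
This map is affine in $\rho_t$ with slope $-1/\psi_t$. Throughout I take $\psi_t>0$, so that the argument is well defined; the limit $\psi_t\to 0$ is treated separately in the paper.

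Step 1 settles domain conventions before differentiating. Since $G$ is a CDF on $[0,\bar c_m]$, I extend it in the standard way: $G(u)=0$ for $u<0$ and $G(u)=1$ for $u>\bar c_m$. This makes $G$ continuous and nondecreasing on all of $\mathbb{R}$. Continuity follows from \Cref{ass:margin-dist}: $G$ is continuously differentiable on its support, and the boundary values $G(0)$ and $G(\bar c_m)=1$ match the extension. I also note that $G(0)=0$ if $G$ has no atom at zero, which continuous differentiability guarantees.

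Step 2 proves (i). The composition of the continuous affine map $u$ with the continuous extension of $G$ is continuous, so $\varphi_t^d$ is continuous in $\rho_t$. Monotonicity is a composition argument: $u$ is strictly decreasing in $\rho_t$ and $G$ is nondecreasing, so $G\circ u$ is nonincreasing. Hence $1-G\circ u$ is nondecreasing. Multiplying by $(1-\theta_t)\ge 0$ and adding the constant $\theta_t$ preserves weak monotonicity.

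Step 3 obtains (ii) and (iii) by direct substitution: setting $\rho_t=0$ gives argument $z_t/\psi_t$, and setting $\rho_t=z_t$ gives argument $0$. For the "maximum" qualifier in (iii), I will argue as follows.
\begin{itemize}
\item By (i), $\varphi^d_t$ is nondecreasing, so $\varphi_t^d(\rho_t)\le\varphi_t^d(z_t)$ on the economically relevant range $\rho_t\in[0,z_t]$.
\item For $\rho_t>z_t$, the argument of $G$ is negative. There $G=0$ under the extension, so demand is flat at $\theta_t+(1-\theta_t)$, which equals the value at $z_t$ when $G(0)=0$.
\end{itemize}

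Step 4 proves (iv) by the chain rule on the interior region where $u(\rho_t)\in(0,\bar c_m)$. There $G'=g$ and $du/d\rho_t=-1/\psi_t$, so
\[
\frac{\partial\varphi^d_t}{\partial\rho_t}=-(1-\theta_t)\,g(u)\cdot\Bigl(-\frac{1}{\psi_t}\Bigr)=\frac{(1-\theta_t)\,g(u)}{\psi_t}.
\]
The hard-core term contributes zero derivative, which is exactly the claim that sensitivity is "concentrated in the contestable margin."

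The main obstacle is this last step at the kinks $u=0$ and $u=\bar c_m$. At those points the extended $G$ may not be differentiable, so I will state the formula for the interior region and interpret it as a one-sided derivative at the boundaries, with the derivative equal to zero outside the support. This is a minor technicality rather than a substantive gap.
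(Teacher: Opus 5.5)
Your proposal is correct and is essentially the paper's argument: the paper proves the lemma in one line by direct differentiation of \eqref{eq:phi-demand-v2}, and your composition argument for (i), substitution for (ii)--(iii), and chain rule for (iv) spell that out, with extra care about extending $G$ off $[0,\bar c_m]$ and about the kinks at the support endpoints. Your claim that continuous differentiability rules out an atom at zero depends on how Assumption~\ref{ass:margin-dist} is read, but you do not need it: on the economically relevant range $\rho_t\in[0,z_t]$, continuity and the maximum in (iii) follow from monotonicity alone, without $G(0)=0$.
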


\begin{proof}
Direct differentiation of~\eqref{eq:phi-demand-v2}.
\end{proof}

\subsubsection{Required domestic absorption}

\begin{definition}[Required domestic absorption]
\label{def:phi-req}
The \emph{required domestic absorption share} $\varphi_t^{\mathrm{req}}$ is the
minimum domestic holding share below which sovereign funding becomes dependent
on foreign demand at destabilizing premium levels.  It depends on the debt
stock, the institutional environment, and external conditions:
\begin{equation}
\label{eq:phi-req}
    \varphi_t^{\mathrm{req}} \;=\;
    \varphi^{\mathrm{req}}\!\bigl(b_{t-1},\;\psi_t,\;z_t\bigr),
\end{equation}
with $\partial \varphi^{\mathrm{req}} / \partial b_{t-1} \geq 0$ (higher debt
requires broader domestic absorption) and $\partial \varphi^{\mathrm{req}} /
\partial \psi_t \leq 0$ (stronger institutions lower the required share).
\end{definition}

\begin{remark}
Part~I's fixed threshold $\bar{\varphi} = 0.85$ is a special case of
$\varphi_t^{\mathrm{req}}$ when the state variables are held at their baseline
values.
\end{remark}

\subsubsection{The complementarity condition}

\begin{proposition}[Complementarity condition and admissible-solution region for the sovereign premium]
\label{prop:mcp-v2}
The sovereign premium $\rho_t$ solves the mixed complementarity problem
\begin{equation}
\label{eq:mcp-v2}
    \boxed{%
    0 \;\leq\; \rho_t
    \;\perp\;
    \varphi_t^d(\rho_t;\,\psi_t,\,z_t) - \varphi_t^{\mathrm{req}}
    \;\geq\; 0.}
\end{equation}
More precisely:
\begin{enumerate}
    \item[(a)] If $\varphi_t^d(0) > \varphi_t^{\mathrm{req}}$, then $\rho_t = 0$.
    The captive system absorbs all issuance at the repression-consistent yield.
    No premium emerges. This is the \textbf{JFR-rg interior regime}.

    \item[(b)] If $\varphi_t^d(0) = \varphi_t^{\mathrm{req}}$, then $\rho_t = 0$
    and the system lies at the SC1 margin.

    \item[(c)] If $\varphi_t^d(0) < \varphi_t^{\mathrm{req}} \leq \theta_t + (1-\theta_t)[1-G(0)]$,
    then a unique premium $\rho_t > 0$ exists and is determined by
    $\varphi_t^d(\rho_t) = \varphi_t^{\mathrm{req}}$.
    This is the \textbf{transition/stress regime}.

    \item[(d)] If $\varphi_t^{\mathrm{req}} > \theta_t + (1-\theta_t)[1-G(0)]$,
    no premium can restore the required absorption.
    This is the \textbf{hard de-captivation failure mode}.
\end{enumerate}
\end{proposition}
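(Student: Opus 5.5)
The plan is to treat \eqref{eq:mcp-v2} as a one-dimensional complementarity problem in $\rho_t$ on the interval $[0,z_t]$. The analysis rests entirely on the monotonicity and continuity properties in \cref{lem:demand-v2}. Define the excess-absorption function
\[
F(\rho) := \varphi_t^d(\rho;\psi_t,z_t) - \varphi_t^{\mathrm{req}},
\]
with $\varphi_t^{\mathrm{req}}$ fixed by the state $X_t$ through \eqref{eq:phi-req}. By \cref{lem:demand-v2}(i) and (iv), $F$ is continuous and weakly increasing. Its derivative is $(1-\theta_t)g((z_t-\rho)/\psi_t)/\psi_t$. When $\theta_t<1$, this is strictly positive wherever $(z_t-\rho)/\psi_t\in(0,\bar c_m)$, because $g>0$ there by \cref{ass:margin-dist}. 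The four cases then follow from comparing $F(0)$ and $F(z_t)$ with zero. Note that $F(z_t)$ equals $\theta_t+(1-\theta_t)[1-G(0)]-\varphi_t^{\mathrm{req}}$ by \cref{lem:demand-v2}(iii).

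For cases (a) and (b), I will check directly that $\rho_t=0$ satisfies the complementarity triple. It gives $\rho_t\ge 0$ and $F(0)\ge 0$, and the product $\rho_t F(\rho_t)=0$ holds trivially. I will then argue that no $\rho>0$ also qualifies. Such a $\rho$ would require $F(\rho)=0$. In case (a), weak monotonicity gives $F(\rho)\ge F(0)>0$, a contradiction. In case (b), I will select the minimal solution $\rho_t=0$, reading the premium as the least price clearing the condition, which is the economically natural selection. This selection convention will be stated explicitly, since $F$ may be flat where $(z_t-\rho)/\psi_t\ge \bar c_m$.

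For case (c), we have $F(0)<0\le F(z_t)$. The intermediate value theorem applied to the continuous $F$ on $[0,z_t]$ gives a root $\rho_t\in(0,z_t]$, so $\rho_t>0$ and $F(\rho_t)=0$. The triple holds. I will also rule out $\rho_t=0$ as a solution, since it would need $F(0)\ge0$. Uniqueness is the main obstacle. The step I will try first is to show strict monotonicity of $F$ on the set where $F$ crosses zero. Because $F(0)<0$, we have $G(z_t/\psi_t)>0$ after using $\theta_t<1$; the case $\theta_t=1$ is excluded by $F(0)<F(z_t)$. This places $z_t/\psi_t$ above $0$. Wherever $(z_t-\rho)/\psi_t\ge\bar c_m$, $F$ is constant at $F(0)<0$, so no root lies there. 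On the remaining interval, where the argument lies in $(0,\bar c_m)$, $F$ is strictly increasing. At the endpoint $\rho=z_t$, continuity handles the boundary. Hence the root is unique. Beyond the root, $F>0$, so no larger $\rho$ solves \eqref{eq:mcp-v2}.

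For case (d), $F(z_t)<0$. I will argue that $F(\rho)\le F(z_t)<0$ for all $\rho\in[0,z_t]$, and that this extends to all $\rho\ge0$. For $\rho>z_t$, the argument $(z_t-\rho)/\psi_t$ is negative, so $G=0$ there under the natural extension of $G$ to the left of its support. Demand is therefore capped at $\theta_t+(1-\theta_t)[1-G(0)]$. The constraint $F(\rho_t)\ge0$ thus has no admissible solution, and the complementarity problem is infeasible. This is exactly the statement that no premium restores the required absorption. I will close by noting that (a)--(d) exhaust the possible orderings of $\varphi_t^{\mathrm{req}}$ relative to $\varphi_t^d(0)$ and $\varphi_t^d(z_t)$.
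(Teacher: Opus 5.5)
Your proposal is correct and follows essentially the paper's own argument: the intermediate value theorem on $[0,z_t]$ gives existence in case (c), the density condition $g>0$ from Lemma~\ref{lem:demand-v2}(iv) gives strict monotonicity and hence uniqueness, and the cap $\varphi_t^d(z_t)=\theta_t+(1-\theta_t)[1-G(0)]$ gives case (d). Your version is somewhat more careful than the paper's, which appeals to strict monotonicity without qualification and does not treat (a)--(b) explicitly. You check the complementarity triple directly in (a)--(b), you observe that $\varphi_t^d$ is flat wherever $(z_t-\rho)/\psi_t\ge\bar c_m$ (harmless in case (c), since there $F=F(0)<0$), and you state the minimal-premium selection explicitly for the degenerate version of case (b).
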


\begin{proof}
Existence follows from the intermediate value theorem. The function
$\varphi_t^d$ is continuous in $\rho_t$, equals
$\theta_t + (1-\theta_t)[1-G(z_t/\psi_t)]$ at $\rho_t = 0$, and rises to
$\theta_t + (1-\theta_t)[1-G(0)]$ as $\rho_t \to z_t$.

Hence a solution $\rho_t^* \in [0, z_t]$ exists if and only if
$\varphi_t^{\mathrm{req}}$ lies in the attainable range
\[
\varphi_t^d(0) \;\le\; \varphi_t^{\mathrm{req}} \;\le\; \theta_t + (1-\theta_t)[1-G(0)].
\]
Uniqueness within the admissible region follows from strict monotonicity of
$\varphi_t^d$ in $\rho_t$ (Lemma~\ref{lem:demand-v2}(iv), using the density condition $g(c) > 0$ on $(0, \bar{c}_m)$ from Assumption~\ref{ass:margin-dist}).

If $\varphi_t^{\mathrm{req}} > \theta_t + (1-\theta_t)[1-G(0)]$, no premium can
restore the required absorption. This is the \emph{hard de-captivation}
failure mode of Part~I, in which the system exits the JFR-rg regime entirely.
\end{proof}

\begin{remark}[Explicit solution under uniform margin distribution]
\label{rem:uniform-v2}
If $G$ is uniform on $[0, \bar{c}_m]$, then
\begin{equation}
\label{eq:phi-demand-uniform}
    \varphi_t^d(\rho_t)
    \;=\;
    \theta_t + (1 - \theta_t)
    \left[1 - \frac{z_t - \rho_t}{\psi_t\,\bar{c}_m}\right]
    \quad\text{for } \rho_t \in [z_t - \psi_t \bar{c}_m,\; z_t],
\end{equation}
and in case~(c) the equilibrium premium is
\begin{equation}
\label{eq:rho-v2}
    \rho_t^*
    \;=\;
    z_t - \psi_t\,\bar{c}_m
    \left(1 - \frac{\varphi_t^{\mathrm{req}} - \theta_t}{1 - \theta_t}\right).
\end{equation}
The comparative statics are:
\begin{align}
    \frac{\partial \rho_t^*}{\partial \theta_t}
    &\;=\; -\frac{\psi_t\,\bar{c}_m}{(1-\theta_t)^2}
           \,(\varphi_t^{\mathrm{req}} - \theta_t)
    \;\leq\; 0
    &&\text{(larger core $\to$ lower premium),}
    \label{eq:drho-dtheta} \\[4pt]
    \frac{\partial \rho_t^*}{\partial \psi_t}
    &\;=\; -\bar{c}_m
    \left(1 - \frac{\varphi_t^{\mathrm{req}} - \theta_t}{1-\theta_t}\right)
    \;\leq\; 0
    &&\text{(stronger institutions $\to$ lower premium),}
    \label{eq:drho-dpsi} \\[4pt]
    \frac{\partial \rho_t^*}{\partial z_t}
    &\;=\; 1
    &&\text{(outside option $\to$ one-for-one premium pass-through).}
    \label{eq:drho-dz}
\end{align}
\end{remark}

\begin{corollary}[SC1 of Part~I as a corner solution]
\label{cor:sc1-corner}
SC1 ($\varphi_t \geq \bar{\varphi}$) holds if and only if the equilibrium is
in case~(a) or case~(b) of Proposition~\ref{prop:mcp-v2}, i.e., $\rho_t = 0$.
When SC1 holds, the repression channel operates at full strength and the
stability condition~\eqref{eq:stability} applies without modification.
When SC1 fails ($\rho_t > 0$), the effective nominal rate becomes
$r^n_t + \rho_t$, and the stability condition tightens accordingly.
\end{corollary}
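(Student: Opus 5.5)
The plan is to treat \cref{cor:sc1-corner} as a re-reading of \cref{prop:mcp-v2}. The key step is to identify the Part~I threshold $\phibar$ with the required absorption share $\varphi_t^{\mathrm{req}}$ evaluated at the current state. This identification is licensed by the remark following \cref{def:phi-req}, which presents $\phibar$ as the special case of $\varphi^{\mathrm{req}}(b_{t-1},\psi_t,z_t)$ at baseline state values. Under it, SC1 at the repression-consistent yield reads $\varphi_t^d(0)\ge\varphi_t^{\mathrm{req}}$, where $\varphi_t^d(0)$ is the domestic share absorbed with no premium, given by \cref{lem:demand-v2}(ii). I will state this identification explicitly as the interpretive bridge. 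It is the one step that is not purely mechanical: $\varphi_t=\theta_t+\varphi_t^m(\rho_t)$ is a derived equilibrium quantity by \cref{eq:phi-derived}, so I must specify that SC1 is evaluated at $\rho_t=0$, meaning without premium support.

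For the ``if'' direction, I assume $\varphi_t^d(0)\ge\varphi_t^{\mathrm{req}}$. Cases~(a) and~(b) of \cref{prop:mcp-v2} then give $\rho_t=0$ directly. The complementarity condition \cref{eq:mcp-v2} is satisfied with the slack $\varphi_t^d(0)-\varphi_t^{\mathrm{req}}\ge 0$.

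For the ``only if'' direction, I argue by contraposition. Suppose $\varphi_t^d(0)<\varphi_t^{\mathrm{req}}$, so SC1 fails. Then $\rho_t=0$ is not a solution, because the second inequality in \cref{eq:mcp-v2} is violated. Either case~(c) applies, which by \cref{lem:demand-v2}(iv) and strict monotonicity yields a unique $\rho_t>0$, or case~(d) applies, where no admissible premium exists. In neither case is $\rho_t=0$. Hence $\rho_t=0$ forces the system into cases~(a) or~(b), and SC1 holds.

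The remaining two sentences are consequences of the layered structure. When $\rho_t=0$, the effective financing rate is $r^n_t$ itself, so the left-hand side of \cref{eq:stability}, and in particular $\eps$, enters unchanged. This is the content of ``full strength.'' When $\rho_t>0$, I replace $r^n_t$ by $r^n_t+\rho_t$ in \cref{eq:core-difference}, following the proof of \cref{prop:transition}. Equivalently, $\rho_t$ is subtracted from the repression margin on the left-hand side of \cref{eq:stability}, and the inequality becomes strictly harder to satisfy.

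The main obstacle is the interpretive step: reconciling the observed-share inequality $\varphi_t\ge\phibar$ with the demand-side inequality at zero premium, since in case~(c) the equilibrium $\varphi_t$ equals $\varphi_t^{\mathrm{req}}$ only with premium support. I will handle this by stating that SC1 is understood as holding without premium support, and noting that this is exactly the condition distinguishing cases~(a)--(b) from cases~(c)--(d).
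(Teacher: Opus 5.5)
Your proposal is correct and follows the paper's route. The paper gives no separate proof: it treats the corollary as a direct reading of Proposition~\ref{prop:mcp-v2}, combined with the remark that $\bar{\varphi}$ is a special case of $\varphi_t^{\mathrm{req}}$, so that SC1 becomes $\varphi_t^d(0)\ge\varphi_t^{\mathrm{req}}$ and the equivalence with $\rho_t=0$ follows at once from the complementarity condition. Your explicit stipulation that SC1 is evaluated at zero premium is exactly the bridge the paper leaves tacit, and it is genuinely needed: in case~(c) the derived equilibrium share $\varphi_t=\theta_t+\varphi_t^m(\rho_t^*)$ equals $\varphi_t^{\mathrm{req}}$, so the literal inequality $\varphi_t\ge\bar{\varphi}$ would still hold there.
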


\subsection{Dynamics of the Hard Captive Core}
\label{sec:15A.4}

The preceding subsection determined $\rho_t$ given $\theta_t$ (and other state
variables).  This subsection specifies how $\theta_t$ evolves, closing the
dynamic loop.

\begin{definition}[Law of motion for the hard captive core]
\label{def:theta-law}
\begin{equation}
\label{eq:theta-law}
    \boxed{%
    \theta_{t+1}
    \;=\;
    \theta_t
    \;-\;
    \underbrace{\kappa_t^{\theta}(\psi_t,\,d_t^{\mathrm{demo}})
    }_{\text{structural erosion}}
    \;+\;
    \underbrace{\gamma_t^{\theta}(\varepsilon_t,\,\psi_t)
    }_{\text{policy maintenance}},}
\end{equation}
where:
\begin{itemize}
    \item $\kappa_t^{\theta} \geq 0$ is the structural erosion of the hard
    core.  It captures:
    \begin{itemize}
        \item demographic contraction of the deposit base (reducing banks'
        sovereign-bond demand);
        \item pension-fund outflows as the dependency ratio rises;
        \item central-bank tapering decisions (BoJ balance-sheet normalization).
    \end{itemize}
    $\kappa_t^{\theta}$ is increasing in $d_t^{\mathrm{demo}}$ (faster
    aging $\to$ faster core erosion) and decreasing in $\psi_t$ (stronger
    institutions can slow the erosion).

    \item $\gamma_t^{\theta} \geq 0$ is the policy maintenance of the hard
    core.  It captures the sovereign's ability to sustain mandated holdings
    through active repression.  The critical assumption is:
    \begin{equation}
    \label{eq:gamma-epsilon}
        \gamma_t^{\theta}(\varepsilon_t, \psi_t) \;=\; 0
        \quad\text{whenever}\quad \varepsilon_t \leq 0.
    \end{equation}
    When the repression bias reverses ($\varepsilon_t \leq 0$, i.e., the
    sovereign yield exceeds inflation), the policy-maintenance channel shuts
    down: there is no fiscal or monetary mechanism to sustain mandated holdings
    when the sovereign itself is paying market rates.
\end{itemize}
\end{definition}

\begin{remark}[Integration of E5 and E6]
\label{rem:e5e6-integration}
Equation~\eqref{eq:theta-law} integrates the previously parallel
Extensions~E5 and~E6 into a single dynamic equation:
\begin{itemize}
    \item E5 (Demographic-$\varphi$ Clock) governs $\kappa_t^{\theta}$: the
    structural erosion is the formalization of the ``clock'' that runs down the
    captive share.  When $\gamma_t^{\theta} = 0$ (repression inactive),
    $\theta_{t+1} = \theta_t - \kappa_t^{\theta}$ and the system reduces to
    the linear clock of \Cref{def:clock}.

    \item E6 (Institutional Control Rights) governs $\psi_t$, which enters
    both terms: it slows structural erosion (lower $\kappa_t^{\theta}$) and
    supports policy maintenance (higher $\gamma_t^{\theta}$, conditional on
    $\varepsilon_t > 0$).
\end{itemize}
The observed domestic holding share is then determined by market clearing:
\[
    \varphi_t = \theta_t + \varphi_t^m(\rho_t),
\]
where $\rho_t$ is given by Proposition~\ref{prop:mcp-v2}.  The Part~I
variable $\varphi_t$ is thus no longer a primitive state variable but an
\emph{equilibrium outcome} of the interaction between institutional structure
($\theta_t$, $\psi_t$), external conditions ($z_t$), and the endogenous
premium ($\rho_t$).
\end{remark}

\subsubsection{The self-reinforcing loop}

The interaction between $\rho_t$ and $\theta_t$ creates a potential
self-reinforcing dynamic:

\begin{quote}
$\rho_t > 0
\;\longrightarrow\;
\varepsilon_t = \pi_t - r_t^{\mathrm{rep}} - \rho_t \;\downarrow
\;\longrightarrow\;
\gamma_{t+1}^{\theta}\;\downarrow
\;\longrightarrow\;
\theta_{t+1}\;\downarrow
\;\longrightarrow\;
\varphi_{t+1}^d(0)\;\downarrow
\;\longrightarrow\;
\rho_{t+1}\;\uparrow
\;\longrightarrow\; \cdots$
\end{quote}

\noindent
The stabilizing counterforce is the \emph{static} price mechanism: a higher
$\rho_t$ raises the sovereign yield, attracting more contestable holders
($\varphi_t^m$ rises), partially offsetting the decline in $\theta_t$.  The
question is which force dominates.

\subsection{Monotone Equilibrium}
\label{sec:15A.5}

\begin{definition}[Feedback gain]
\label{def:feedback-v2}
The \emph{one-period feedback gain} at the regime boundary is the product of
three marginal responses:
\begin{equation}
\label{eq:feedback-v2}
    \boxed{%
    \eta_t
    \;:=\;
    \underbrace{\left|\frac{\partial \rho_{t+1}}{\partial \theta_{t+1}}\right|
    }_{\substack{\text{premium sensitivity} \\ \text{to core erosion}}}
    \;\times\;
    \underbrace{\left|\frac{\partial \gamma_t^{\theta}}{\partial \varepsilon_t}
    \right|}_{\substack{\text{maintenance sensitivity} \\ \text{to repression}}}
    \;\times\;
    \underbrace{\left|\frac{\partial \varepsilon_t}{\partial \rho_t}\right|
    }_{\substack{= 1 \\ \text{(by definition)}}}.}
\end{equation}
Equivalently, using the law of motion
\[
\theta_{t+1} = \theta_t - \kappa_t^{\theta} + \gamma_t^{\theta}(\varepsilon_t,\psi_t),
\]
the middle link may be written as
\[
\left|\frac{\partial \theta_{t+1}}{\partial \varepsilon_t}\right|
=
\left|\frac{\partial \theta_{t+1}}{\partial \gamma_t^{\theta}}\right|
\left|\frac{\partial \gamma_t^{\theta}}{\partial \varepsilon_t}\right|
=
\left|\frac{\partial \gamma_t^{\theta}}{\partial \varepsilon_t}\right|,
\]
since $\partial \theta_{t+1}/\partial \gamma_t^{\theta} = 1$. The definition above
therefore omits only a unit-valued chain-rule term.
The first factor measures how strongly the premium responds to a decline in the
hard core (the ``price impact'' of institutional erosion).  The second factor
measures how strongly the policy-maintenance channel responds to a decline in
the repression bias (the ``institutional sensitivity'' to repression loss).  The
third factor is unity by definition, since $\varepsilon_t = \pi_t -
r_t^{\mathrm{rep}} - \rho_t$.
\end{definition}

\begin{remark}[Interpretation of each factor]
\label{rem:factors}
From equation~\eqref{eq:drho-dtheta} (uniform case):
\[
    \left|\frac{\partial \rho_t}{\partial \theta_t}\right|
    \;=\;
    \frac{\psi_t\,\bar{c}_m}{(1-\theta_t)^2}
    \,(\varphi_t^{\mathrm{req}} - \theta_t).
\]
This is large when the contestable margin is thin ($1-\theta_t$ small) and the
gap between the threshold and the core is wide ($\varphi^{\mathrm{req}} -
\theta_t$ large)---i.e., when the system is simultaneously
\emph{dependent on} yet \emph{vulnerable in} its contestable margin.

The second factor $|\partial \gamma^{\theta} / \partial \varepsilon_t|$ is
an institutional parameter: how quickly does the sovereign lose its ability to
maintain mandated holdings when repression weakens?  In economies with strong
legal mandates (Japan's FILP system, postal savings channeling), this
sensitivity may be low.  In economies where mandated holdings depend on
discretionary central-bank purchases (QQE), the sensitivity is higher.
\end{remark}

\begin{proposition}[Monotone equilibrium existence]
\label{prop:monotone-v2}
If the feedback gain satisfies
\begin{equation}
\label{eq:contraction-v2}
    \boxed{\eta_t \;<\; 1}
\end{equation}
for all $t$ along the equilibrium path, then:
\begin{enumerate}
    \item[(i)] the recursive equilibrium
    $\{\rho_t, \theta_{t+1}\}_{t \geq 0}$ exists and is unique within the
    class of monotone equilibria;
    \item[(ii)] a perturbation that erodes $\theta_t$ by $\delta$ generates a
    cumulative amplification bounded by
    \begin{equation}
    \label{eq:amplification-v2}
        \frac{\delta}{1 - \eta_t};
    \end{equation}
    \item[(iii)] the premium path $\{\rho_t\}$ is bounded along any transition
    path in which $\theta_t$ remains above a strictly positive lower bound.
\end{enumerate}
\end{proposition}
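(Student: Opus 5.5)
The plan is to reduce the recursive equilibrium to a one-dimensional map on the hard core and treat \eqref{eq:contraction-v2} as a contraction (small-gain) condition. Given $X_t$, \cref{prop:mcp-v2} determines a unique $\rho_t$ in cases (a)--(c); write this as $\rho_t = R(\theta_t; b_{t-1},\psi_t,z_t)$. By \cref{lem:demand-v2}(i) and (iv), and by the implicit function theorem applied to $\varphi_t^d(\rho_t)=\varphi_t^{\mathrm{req}}$ in case (c), $R$ is weakly decreasing in $\theta_t$. Its slope is bounded by the first factor of \eqref{eq:feedback-v2}; in the uniform case this is \eqref{eq:drho-dtheta}. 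Substituting $\varepsilon_t=\pi_t-r_t^{\mathrm{rep}}-R(\theta_t)$ into \eqref{eq:theta-law} gives a scalar map
\[
\theta_{t+1}=F_t(\theta_t):=\theta_t-\kappa_t^{\theta}+\gamma_t^{\theta}\bigl(\pi_t-r_t^{\mathrm{rep}}-R(\theta_t),\psi_t\bigr).
\]
The feedback component of $F_t$, namely $\theta\mapsto\gamma_t^{\theta}(\pi_t-r_t^{\mathrm{rep}}-R(\theta),\psi_t)$, is nondecreasing in $\theta$: a larger core lowers the premium, which raises $\varepsilon_t$ and hence maintenance. By the chain rule its Lipschitz constant is at most $|\partial\gamma^\theta/\partial\varepsilon|\cdot|\partial\rho/\partial\theta|\cdot 1=\eta_t$.

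For (i), the approach is to pose the equilibrium as a fixed point on the lattice of monotone paths $\{\theta_t\}$. Given a candidate path, compute $\rho_t=R(\theta_t)$, then $\varepsilon_t$, then the implied $\theta_{t+1}$. The composite operator is order-preserving, so existence within the monotone class follows from Tarski's theorem on the complete lattice of bounded paths in $[0,1]$. For uniqueness, I would treat the \emph{feedback loop} as the object to contract. Fix an exogenous erosion path and compare two monotone equilibria. Their $\theta$-gaps satisfy $|\Delta\theta_{t+1}|\le\eta_t|\Delta\theta_t|$ once the common initial condition and common $\kappa^\theta_t$ are accounted for. Hence the gap is zero, or, in a sup-norm version, the operator is a contraction with modulus $\sup_t\eta_t<1$, and Banach gives uniqueness.

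For (ii), I would trace an initial erosion $\delta$ through the loop. The direct effect is $\delta$. The induced premium rise feeds back as a further core decline of at most $\eta_t\delta$, then $\eta_t^2\delta$, and so on. Summing the geometric series gives a cumulative effect of at most $\delta\sum_{k\ge0}\eta_t^k=\delta/(1-\eta_t)$, which is \eqref{eq:amplification-v2}. If $\eta_t$ varies along the path, I would use $\bar\eta=\sup\eta_t<1$ and state the bound with that supremum.

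For (iii), if $\theta_t\ge\underline\theta>0$ along the path, the premium $\rho_t$ lies in $[0,z_t]$ in cases (a)--(c) by the proof of \cref{prop:mcp-v2}. In the uniform case \eqref{eq:rho-v2} gives the explicit bound $\rho_t\le z_t$, and boundedness of $z_t$ then bounds $\rho_t$. The lower bound on $\theta_t$ is needed to exclude case (d). My plan is to argue that $\varphi_t^{\mathrm{req}}\le\theta_t+(1-\theta_t)[1-G(0)]$ holds whenever $\theta_t$ stays bounded away from its de-captivation level; this has to be imposed or inherited as part of the hypothesis.

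The main obstacle is that $\eta_t$ is defined as a local product of derivatives at the regime boundary, not as a global Lipschitz constant. Both uniqueness and the amplification bound need it to hold uniformly along the path. There is also a kink at $\varepsilon_t=0$ from \eqref{eq:gamma-epsilon}, and a kink at the case (a)/(c) switch, where $R$ is flat. I would handle both by interpreting $\eta_t$ through one-sided derivatives, or as a Lipschitz modulus on the relevant interval. The flat piece only lowers the effective gain, so the contraction argument survives.
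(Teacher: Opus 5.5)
Your argument follows the paper's own route, transcribed from $\rho$ to $\theta$. The paper defines $T:\rho_t\mapsto\rho_{t+1}$ along $\rho_t\to\varepsilon_t\to\gamma^\theta\to\theta_{t+1}\to\rho_{t+1}$, reads off $|dT/d\rho_t|=\eta_t$ from the chain rule, invokes Banach for (i), sums $\sum_s\eta_t^s$ for (ii), and uses $\rho_t\le z_t$ for (iii). The step both arguments rest on is that $\eta_t$ is the contraction modulus of the one-period recursion, and that fails for the law of motion \eqref{eq:theta-law}. You wrote $F_t$ correctly and correctly bounded the slope of its \emph{feedback component} by $\eta_t$. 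However, $F_t$ also carries $\theta_t$ with coefficient one, so in the active region (case~(c), $\varepsilon_t>0$) $F_t'=1+\eta_t\ge 1$. Hence $|\Delta\theta_{t+1}|\le\eta_t|\Delta\theta_t|$ is false; the correct bound is $(1+\eta_t)|\Delta\theta_t|$, and the path operator is not a sup-norm contraction with modulus $\sup_t\eta_t$. For (i) this is harmless but telling. Proposition~\ref{prop:mcp-v2} delivers a unique $\rho_t$ for each $\theta_t$ in cases (a)--(c), so the path $\{\rho_t,\theta_{t+1}\}$ exists and is unique by forward recursion from the given $\theta_0$, without any use of $\eta_t<1$. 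Neither Tarski (whose self-map and lattice conditions you have not checked) nor Banach is needed.

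For (ii) the error is not harmless, and the geometric series does not go through. An erosion $\delta$ of $\theta_t$ is carried one-for-one into $\theta_{t+1}$, so next period's premium responds to the whole gap $(1+\eta)\delta$, not to the increment $\eta\delta$. Linearizing with the loop active, the gap between perturbed and unperturbed paths obeys $\Delta_{s+1}=(1+\eta_{t+s})\Delta_s$, i.e.\ $\Delta_s=\prod_{j<s}(1+\eta_{t+j})\,\delta$. This exceeds $\delta/(1-\eta)$ after finitely many rounds whenever the gain is bounded away from zero. The bound $\delta\sum_k\eta^k=\delta/(1-\eta)$ is the multiplier of a \emph{non-accumulating} loop: either the within-period fixed point $\Delta=\delta+\eta\Delta$, or a law of motion in which maintenance shifts the level of $\theta_{t+1}$ without being carried forward. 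To obtain \eqref{eq:amplification-v2} you must either adopt such a reading of ``cumulative amplification'' explicitly or add an assumption that removes the unit root in \eqref{eq:theta-law}. The paper's proof has exactly the same gap: its slope computation holds $\theta_t$ fixed, which is legitimate only for the first round, and iterating it drops the carry-over term. Your (iii) matches the paper and is fine. Indeed, for uniform (or any atomless) $G$ one has $\theta_t+(1-\theta_t)[1-G(0)]=1\ge\varphi_t^{\mathrm{req}}$, so case~(d) is vacuous and $\rho_t\le z_t$ holds without the lower bound on $\theta_t$.
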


\begin{proof}
Define the composite one-period mapping
\[
    T: \rho_t \;\longmapsto\; \rho_{t+1}
\]
through the chain
$\rho_t \to \varepsilon_t \to \gamma_{t+1}^{\theta} \to \theta_{t+1} \to
\rho_{t+1}$.

At the regime boundary (case~(c) of Proposition~\ref{prop:mcp-v2}):
\[
    \left|\frac{dT}{d\rho_t}\right|
    \;=\;
    \left|\frac{\partial \rho_{t+1}}{\partial \theta_{t+1}}\right|
    \cdot
    \left|\frac{\partial \theta_{t+1}}{\partial \varepsilon_t}\right|
    \cdot
    \left|\frac{\partial \varepsilon_t}{\partial \rho_t}\right|
    \;=\; \eta_t.
\]
When $\eta_t < 1$, $T$ is a contraction on the relevant domain.  By the Banach
fixed-point theorem, a unique fixed point exists, defining the monotone
equilibrium.

The geometric-series bound~\eqref{eq:amplification-v2} follows from iterating
the contraction:
$\sum_{s=0}^{\infty} \eta_t^s = 1/(1-\eta_t)$.

Boundedness of $\{\rho_t\}$ follows from the monotonicity of the pricing
function~\eqref{eq:rho-v2}: as long as $\theta_t > 0$, the premium required to
attract the contestable margin is finite (bounded by $z_t$).
\end{proof}

\begin{remark}[When the contraction condition fails]
\label{rem:multiple-v2}
If $\eta_t \geq 1$, the self-reinforcing loop becomes explosive at the
margin: a small erosion of $\theta_t$ triggers a premium increase that causes
a larger erosion of $\theta_{t+1}$, potentially leading to a
self-fulfilling de-captivation spiral.  This corresponds precisely to the
\emph{hard de-captivation} failure mode identified in Part~I.

The analysis of this region, including the characterization of multiple
equilibria and the conditions for sudden-stop dynamics, is deferred to
Appendix~M.  The main text focuses on the monotone-equilibrium case, which
is the relevant regime for policy design within the JFR-rg corridor.
\end{remark}

\begin{remark}[Sufficient condition for $\eta_t < 1$ in the Japanese case]
\label{rem:eta-japan}
In the inherited March~2026 baseline used for the two-layer illustrations
($\theta_t \approx 0.65$, $1-\theta_t \approx 0.35$, $\psi_t \approx 0.97$), the
first factor $|\partial \rho / \partial \theta|$ is moderate because the
contestable margin is thick (35\% of outstanding debt). In the observed 2025
monitoring layer the contestable margin is smaller ($1-\theta_t \approx 0.203$),
which makes the system look more resilient in the static no-premium condition
but also increases the importance of any future erosion in the hard core.
Furthermore, a significant portion of $\theta_t$ reflects \emph{legal} mandates
(Basel~III HQLA requirements, FILP channeling) rather than discretionary QQE
purchases, so $|\partial \gamma^{\theta}/\partial \varepsilon_t|$ is bounded:
repression reversal weakens the BoJ's QQE component but does not eliminate the
regulatory component. The contraction condition is therefore satisfied under
current institutional parameters, but it tightens as $\theta_t$ declines and the
composition of $\theta_t$ shifts from legal mandates toward discretionary
holdings.
\end{remark}

\subsection{Transition Feasibility under Endogenous Premium}
\label{sec:15A.6}

\begin{proposition}[Transition feasibility --- endogenous premium]
\label{prop:transition-v2}
Suppose that within the residual horizon $T^*$:
\begin{enumerate}
    \item[(a)] growth-enhancing investment raises potential nominal growth to
    $g_t^{n*,\mathrm{new}}$;
    \item[(b)] repression is gradually withdrawn: $\varepsilon_t \to 0$, so that
    $\gamma_t^{\theta} \to 0$ (equation~\eqref{eq:gamma-epsilon});
    \item[(c)] the feedback gain satisfies $\eta_t < 1$ throughout the
    transition; and
    \item[(d)] the growth path satisfies, for all $t$ in the transition window,
    \begin{equation}
    \label{eq:transition-v2}
        \boxed{%
        g_t^{n*,\mathrm{new}}
        \;\geq\;
        \pi_t + \frac{d_t - s_t}{b_{t-1}}
        + \rho_t^*\!\bigl(\theta_t,\,\psi_t,\,z_t\bigr)
        + m,}
    \end{equation}
    where $\rho_t^*(\cdot)$ is the equilibrium premium from
    Proposition~\ref{prop:mcp-v2}.
\end{enumerate}
Then there exists a transition path along which the economy exits the
repression-dependent regime without crossing the danger region.
\end{proposition}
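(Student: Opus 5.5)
The plan is to reduce the statement to the exogenous-premium benchmark, \cref{prop:transition}, by manufacturing the bound $\bar\rho$ that the earlier proposition takes as given. First, fix the transition window inside the residual horizon $T^*$. Along the path described by (a)--(b), withdrawal of repression sends $\gamma_t^{\theta}\to 0$ by \cref{eq:gamma-epsilon}. The hard core therefore follows the pure erosion law $\theta_{t+1}=\theta_t-\kappa_t^{\theta}$, which by \cref{rem:e5e6-integration} is the linear clock of \cref{def:clock}. Since the window lies within $T^*$, I will argue that $\theta_t$ stays above a strictly positive lower bound $\underline\theta$ throughout, and likewise that $\psi_t$ stays bounded away from zero.

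Second, I will use the complementarity structure of \cref{prop:mcp-v2} to pin down $\rho_t^*$ at each date, and split into cases. In cases (a)--(b), $\rho_t^*=0$. In case (c), $\rho_t^*\in(0,z_t]$ is unique. Case (d) must be excluded. Next I invoke \cref{prop:monotone-v2}. Hypothesis (c) gives $\eta_t<1$, so the recursive equilibrium $\{\rho_t,\theta_{t+1}\}$ exists and is unique in the monotone class. Part (iii) of that proposition, combined with $\theta_t\ge\underline\theta>0$, gives boundedness of the premium path, with $\rho_t^*\le z_t$. Part (ii) controls amplification of any erosion shock by $\delta/(1-\eta_t)$, so the path cannot jump into case (d) within the window.

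Third, with a well-defined bounded premium path in hand, the stability argument is pointwise. The effective rate is $r_t^n+\rho_t^*$, and the post-transition stabilization condition reads $g_t^{n*,\mathrm{new}}\ge \pi_t+(d_t-s_t)/b_{t-1}+\rho_t^*$. Hypothesis (d), \cref{eq:transition-v2}, asserts exactly this inequality with margin $m$ at every $t$ in the window, using the equilibrium value rather than a uniform bound. The argument of \cref{prop:transition} then goes through date by date with $\bar\rho$ replaced by $\rho_t^*$; equivalently, one can take $\bar\rho:=\sup_t\rho_t^*<\infty$. By \cref{cor:sc1-corner}, whenever $\rho_t^*=0$ the original stability condition \cref{eq:stability} applies unmodified, so the path also remains safe before the premium emerges.

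The main obstacle is the second step: guaranteeing that the economy never enters case (d), hard de-captivation, during the window, and that $\theta_t$ genuinely stays bounded away from zero. Neither follows from (a)--(d) as stated without some use of the $T^*$ horizon. I would first try to interpret ``within the residual horizon $T^*$'' as implying $\varphi_t^{\mathrm{req}}\le\theta_t+(1-\theta_t)[1-G(0)]$. The route is to note that $T^*$ is defined as the time before the captive share reaches its threshold, and that the fixed $\bar\phi$ is a special case of $\varphi^{\mathrm{req}}$. If that identification is too weak, I would add it explicitly as a standing admissibility condition on the transition window.
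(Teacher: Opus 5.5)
This is essentially the paper's argument. After $\gamma_t^{\theta}\to 0$ the core erodes at $\kappa_t^{\theta}$, and the complementarity condition may move from case~(a) to case~(c) of \cref{prop:mcp-v2}. Then $\eta_t<1$ keeps the premium path bounded via \cref{prop:monotone-v2}, and hypothesis~(d) absorbs the fiscal burden plus the premium at every date.

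The only structural difference is the closing step. The paper adds that $\rho_t^*$ is continuous in $\theta_t$ and that $\theta_t$ moves continuously, so hypothesis~(d) carves out a connected safe region in $(\theta_t,g_t^{n*})$-space. You instead verify the stability inequality date by date, applying \cref{prop:transition} with $\bar\rho$ replaced by $\rho_t^*$. Since the recursion is in discrete time and (d) is imposed at every date, your pointwise check already gives the conclusion; the continuity step mainly supplies the geometric picture. Drop the ``equivalently, $\bar\rho:=\sup_t\rho_t^*$'' clause, though. Hypothesis~(d) does not give $g_t^{n*,\mathrm{new}}\ge\pi_t+(d_t-s_t)/b_{t-1}+\sup_s\rho_s^*+m$, so only the date-by-date version is licensed.

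The obstacle you single out is not a real one, and your preferred fix is the wrong one. Reading ``within $T^*$'' as ``SC1 still holds,'' with $\phibar$ identified with $\varphi_t^{\mathrm{req}}$, does exclude case~(d) of \cref{prop:mcp-v2}. But by \cref{cor:sc1-corner} it also forces $\rho_t^*=0$ on the whole window. Hypothesis~(c) and the premium term in \eqref{eq:transition-v2} then become idle, and the result collapses to the no-premium benchmark. That contradicts the scenario the proposition is about, in which the premium switches on during the transition. No extra standing condition is needed either:
\begin{itemize}
\item Hypothesis~(d) is stated in terms of ``the equilibrium premium from \cref{prop:mcp-v2},'' so it already presupposes that $\rho_t^*$ exists at every date of the window, i.e.\ that case~(d) never occurs.
\item Independently, as soon as $G$ has no atom at $0$ (e.g.\ the uniform baseline), $G(0)=0$. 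The attainable ceiling $\theta_t+(1-\theta_t)[1-G(0)]$ then equals $1$, and case~(d) would require $\varphi_t^{\mathrm{req}}>1$, impossible for a share.
\item You also do not need a positive lower bound on $\theta_t$, because in case~(c) the complementarity problem itself gives $\rho_t^*\in(0,z_t]$.
\end{itemize}
Finally, remove the claim that part~(ii) of \cref{prop:monotone-v2} prevents a jump into case~(d). The amplification bound controls the response to a perturbation of size $\delta$, not the level of $\theta_t$, which after $\gamma_t^{\theta}\to 0$ declines deterministically at rate $\kappa_t^{\theta}$. Your own last paragraph concedes as much.
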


\begin{proof}
After repression withdrawal, $\gamma_t^{\theta} \to 0$ and $\theta_t$ declines
at rate $\kappa_t^{\theta}$.  As $\theta_t$ falls, $\varphi_t^d(0)$ declines,
and eventually the complementarity condition shifts from case~(a) to case~(c):
$\rho_t$ becomes positive.

Under condition~(c) ($\eta_t < 1$), the premium path remains bounded
(Proposition~\ref{prop:monotone-v2}).  Under condition~(d), the growth path
is sufficient to absorb both the inherited fiscal burden and the endogenous
premium at every point along the transition.  Since the premium is a continuous
function of $\theta_t$ (Proposition~\ref{prop:mcp-v2}), and $\theta_t$ evolves
continuously (equation~\eqref{eq:theta-law}), the premium path is also
continuous, and condition~(d) defines a connected safe region in
$(\theta_t, g_t^{n*})$-space.
\end{proof}

\begin{remark}[Two-dimensionality of the transition condition]
\label{rem:2d}
\Cref{sec:transition}'s exogenous-premium version required only
$g_t^{n*,\mathrm{new}} \geq \pi_t + (d_t - s_t)/b_{t-1} + \bar{\rho} + m$:
a one-dimensional growth threshold.
Proposition~\ref{prop:transition-v2} replaces $\bar{\rho}$ with
$\rho_t^*(\theta_t, \psi_t, z_t)$, making the transition condition a joint
restriction on the growth path and the institutional-erosion path.  A growth
improvement that is sufficient under the exogenous-premium version may be
\emph{insufficient} if $\theta_t$ erodes faster than anticipated (because
$\rho_t^*$ rises endogenously).
\end{remark}

\begin{remark}[Endogenous value of early investment]
\label{rem:early}
Investment undertaken while $\varphi_t^d(0) > \varphi_t^{\mathrm{req}}$ (i.e.,
$\rho_t = 0$) is strictly more valuable than investment undertaken after
$\rho_t > 0$ emerges, for two reasons.  First, the full repression dividend is
still available to finance the investment (E2).  Second, higher $g_t^{n*}$
established before the premium emerges widens the safe corridor, making the
eventual premium smaller than it would otherwise be.  This provides an
\emph{endogenous} justification for the policy urgency embedded in the Timing
Constraint (\Cref{prop:timing}).
\end{remark}

\subsection{Illustrative Calibration}
\label{sec:15A.7}

\subsubsection{Baseline fit: Japan 2026}

The two-layer model is calibrated to reproduce the inherited March~2026
Japanese baseline configuration ($\varphi_t \approx 0.88$, $\rho_t = 0$) as an
interior equilibrium. The observed 2025 monitoring layer reported above yields
a later measurement point with $\phic \approx 0.932$ and $\theta_t \approx 0.797$;
that observed layer is intentionally kept separate so that the sequel can retain
continuity with Part~I while still reporting the updated public-data
implementation.

\begin{table}[htbp]
\centering
\caption{Baseline Calibration of the Two-Layer Model (Japan, March 2026)}
\label{tab:baseline-v2}
\resizebox{\textwidth}{!}{%
\begin{tabular}{l l l}
\hline
Parameter & Value & Source / Derivation \\
\hline
\multicolumn{3}{l}{\textit{Inherited from Part I}} \\
$b_0$ & 240\% & Part I baseline \\
$\psi_t$ & 0.97 & \Cref{def:psi} \\
$\varphi_t^{\mathrm{req}}$ & 0.85 & Part I illustrative $\bar{\varphi}$ \\
\\
\multicolumn{3}{l}{\textit{Two-layer parameters}} \\
$\theta_t$ (hard captive core) & 0.65 & BoJ ($\approx$50\%) + regulatory bank
    holdings ($\approx$15\%) \\
$1 - \theta_t$ (contestable pool) & 0.35 & Residual domestic \\
$z_t$ (outside-option spread) & 2.0\% & US 10Y $-$ $r^{\mathrm{rep}}_t$
    (illustrative) \\
$\bar{c}_m$ (max margin captivity) & 6.0\% & Calibrated to match
    $\varphi_t^d(0) = 0.88$ \\
$G$ & Uniform $[0, \bar{c}_m]$ & Tractability \\
\\
\multicolumn{3}{l}{\textit{Derived equilibrium values}} \\
$\varphi_t^d(\rho_t = 0)$ & 0.88 &
    $0.65 + 0.35 \times [1 - 0.02/(0.97 \times 0.06)]$ \\
SC1 status & Satisfied & $0.88 > 0.85$ \\
$\rho_t$ & 0 & Case (a): interior regime \\
Slack ($\varphi_t^d(0) - \varphi_t^{\mathrm{req}}$) & 3 pp &
    Safety margin before premium emerges \\
\hline
\end{tabular}%
}
\end{table}

\begin{remark}[Why $\rho_t = 0$ is a success, not a failure]
\label{rem:success}
The baseline calibration shows $\rho_t = 0$.  This is the \emph{correct}
model prediction: Japan's institutional captivity is currently strong enough
that no sovereign premium is needed to clear the bond market.  The model's
value lies not in pricing a premium that does not yet exist, but in identifying
the conditions under which it \emph{will} emerge---and how quickly it can
amplify through the self-reinforcing loop once it does.
\end{remark}

\subsubsection{Counterfactual erosion scenarios}

Table~\ref{tab:stress-v2} varies $\theta_t$ and $z_t$ to identify the
conditions under which $\rho_t > 0$ emerges and the required growth
improvement for a safe transition.

\begin{table}[htbp]
\centering
\caption{Counterfactual Erosion Scenarios under the Two-Layer Model}
\label{tab:stress-v2}
\begin{tabular}{l c c c c c c}
\hline
Scenario & $\theta_t$ & $z_t$ & $\varphi_t^d(0)$ & $\rho_t^*$ &
    Required $\Delta g^{n*}$ & Feasible? \\
\hline
Baseline (2026) & 0.65 & 2.0\% & 0.88 & 0\% & 0.53\% & Conditional \\
Core erosion I & 0.60 & 2.0\% & 0.84 & 0.14\% & 0.67\% & Conditional \\
Core erosion II & 0.55 & 2.0\% & 0.81 & 0.44\% & 0.97\% & Tight \\
External stress & 0.65 & 3.0\% & 0.82 & 0.39\% & 0.92\% & Tight \\
Combined & 0.55 & 3.0\% & 0.72 & 1.17\% & 1.70\% & Unlikely \\
Severe & 0.45 & 3.5\% & 0.60 & 2.18\% & 2.71\% & Infeasible \\
\hline
\end{tabular}

\medskip
\noindent\textit{Notes.}  $\varphi_t^d(0)$ computed from
equation~\eqref{eq:phi-demand-uniform} at $\rho_t = 0$.
$\rho_t^*$ computed from equation~\eqref{eq:rho-v2} when
$\varphi_t^d(0) < \varphi_t^{\mathrm{req}} = 0.85$; zero otherwise.
Required $\Delta g^{n*}$ is the growth improvement needed to satisfy
condition~\eqref{eq:transition-v2} with $m = 0$, computed as
$\rho_t^* + 0.53\%$ (where 0.53\% is the no-premium baseline from
\Cref{sec:transition}).  ``Conditional'' = feasible at $\mu \geq 0.05$;
``Tight'' = requires $\mu \geq 0.07$;
``Unlikely'' = requires $\mu$ above the illustrative range;
``Infeasible'' = exceeds $\mu \times x_t^{\max}$ under all scenarios.
$\psi_t = 0.97$ and $\bar{c}_m = 6.0\%$ held fixed across scenarios;
$\varphi_t^{\mathrm{req}} = 0.85$ throughout.
\end{table}

\begin{remark}[The critical boundary]
\label{rem:boundary}
The scenario analysis identifies two critical thresholds for premium emergence:
\begin{enumerate}
    \item \textbf{Core-erosion threshold:} at $z_t = 2.0\%$, the premium
    becomes positive when $\theta_t$ falls below approximately $0.56$---a
    decline of roughly 9 percentage points from the current level;
    \item \textbf{External-stress threshold:} at $\theta_t = 0.65$, the premium
    becomes positive when $z_t$ rises above approximately $2.5\%$---an
    increase of 50 basis points from the baseline.
\end{enumerate}
The external-stress threshold is notably closer to the current position than
the core-erosion threshold.  This implies that the most immediate risk to the
JFR-rg regime is not demographic erosion of $\theta_t$ (which is slow) but a
shift in the global interest-rate environment that raises $z_t$ (which can be
sudden).  This finding reinforces the link to E4 (Multi-Country Repression
Equilibrium): the effective captive threshold depends on the foreign
repression state.
\end{remark}

\section{Inference for Regime Boundaries and Transition Margins}
\label{sec:regime-inference}

\Cref{sec:transition} and \Cref{sec:closure} imply two distinct but connected inference problems. The first concerns the emergence of a positive sovereign premium once the zero-premium domestic absorption condition fails. The second concerns the growth threshold required for a safe transition out of repression-dependent stability. These should not be conflated. The first is governed primarily by the two-layer domestic demand structure and the complementarity boundary of \Cref{sec:closure}; the second is governed primarily by the transition-feasibility condition of \Cref{sec:transition} and is therefore more directly exposed to debt-concept and fiscal-burden measurement choices.

For that reason, the inferential layer is developed in two parts. \Cref{sec:pe-inference} formulates conservative inference for the premium-emergence boundary. \Cref{sec:tf-inference} formulates conservative inference for the transition-feasibility margin. The common design principle is observables-centered outer inference: the paper does not force point classification when the maintained implementation layer admits multiple empirically legitimate readings.

\subsection{Premium-emergence boundary inference}
\label{sec:pe-inference}

\Cref{sec:closure} defines the interior regime as the case in which domestic demand at zero premium weakly exceeds the required domestic absorption share. This suggests the following premium-emergence boundary score.

\begin{definition}[Premium-emergence boundary score]
\label{def:bpe}
Let
\[
B_t^{PE}
:=
\varphi_t^d(0;\theta_t,\psi_t,z_t,G)-\varphi_t^{\mathrm{req}},
\]
where
\[
\varphi_t^d(0;\theta_t,\psi_t,z_t,G)
=
\theta_t+(1-\theta_t)\left[1-G\!\left(\frac{z_t}{\psi_t}\right)\right].
\]
Then:
\begin{itemize}
\item $B_t^{PE}>0$ indicates a zero-premium interior regime;
\item $B_t^{PE}=0$ indicates the premium-emergence boundary;
\item $B_t^{PE}<0$ indicates that zero-premium absorption is insufficient and a positive premium is required.
\end{itemize}
\end{definition}

Under the uniform-margin baseline of \Cref{sec:closure},
\[
\varphi_t^d(0)
=
\theta_t+(1-\theta_t)\left[1-\frac{z_t}{\psi_t \bar c_m}\right].
\]
Hence
\[
\frac{\partial B_t^{PE}}{\partial \theta_t}
=
\frac{z_t}{\psi_t \bar c_m},
\qquad
\frac{\partial B_t^{PE}}{\partial \psi_t}
=
(1-\theta_t)\frac{z_t}{\psi_t^2 \bar c_m},
\qquad
\frac{\partial B_t^{PE}}{\partial z_t}
=
-\frac{1-\theta_t}{\psi_t \bar c_m}.
\]
At the March 2026 baseline calibration of Table~\ref{tab:baseline-v2},
\[
(\theta_t,\psi_t,z_t,\bar c_m)=(0.65,0.97,2.0\%,6.0\%),
\]
which implies
\[
\frac{\partial B_t^{PE}}{\partial \theta_t}\approx 0.344,
\qquad
\frac{\partial B_t^{PE}}{\partial \psi_t}\approx 0.124,
\qquad
\frac{\partial B_t^{PE}}{\partial z_t}\approx -6.01.
\]
Thus a one-percentage-point uncertainty in $\theta_t$ changes the boundary score by about 0.34 percentage points, while a 50-basis-point increase in $z_t$ changes the score by about 3 percentage points. Since the baseline slack in Table~\ref{tab:baseline-v2} is 3 percentage points, uncertainty in $z_t$ and $\theta_t$ is quantitatively relevant to boundary classification.

The premium-emergence score is exposed mainly to four sources of ambiguity: the construction of the hard captive core $\theta_t$, the observational proxy for $\psi_t$, the outside-option spread $z_t$, and the contestable-margin specification $G$. These are summarized through a tiered admissible set.

\begin{definition}[Tiered admissible set for premium-emergence inference]
\label{def:mpe}
Let
\[
\mathcal M_{t,PE}^{(1)} \subseteq \mathcal M_{t,PE}^{(2)} \subseteq \mathcal M_{t,PE}^{(3)}
\]
be nested families of admissible observational/specification choices, where:
\begin{itemize}
\item \textbf{Tier 1} uses the baseline hard-core aggregation, the baseline equal-weight construction of $\psi_t$, the baseline $z_t$, and the uniform-margin specification;
\item \textbf{Tier 2} adds admissible alternative observational constructions of $\theta_t$ and $\psi_t$;
\item \textbf{Tier 3} adds admissible alternative contestable-margin specifications $G$ satisfying the support, continuity, and monotonicity conditions of \Cref{sec:closure}.
\end{itemize}
\end{definition}

For each $m\in \mathcal M_{t,PE}^{(k)}$, let $B_t^{PE}(m)$ denote the implied score. Define
\[
\underline B_{t,PE}^{(k)}
:=
\inf_{m\in \mathcal M_{t,PE}^{(k)}} B_t^{PE}(m),
\qquad
\overline B_{t,PE}^{(k)}
:=
\sup_{m\in \mathcal M_{t,PE}^{(k)}} B_t^{PE}(m).
\]
Then:
\begin{itemize}
\item the economy is \emph{robustly interior at tier $k$} if $\underline B_{t,PE}^{(k)} > 0$;
\item the economy is \emph{robustly boundary-near at tier $k$} if $0\in [\underline B_{t,PE}^{(k)},\overline B_{t,PE}^{(k)}]$;
\item the economy is \emph{robustly premium-emergent at tier $k$} if $\overline B_{t,PE}^{(k)} < 0$.
\end{itemize}

A useful baseline illustration is immediate from Table~\ref{tab:stress-v2}. Relative to the March 2026 baseline, moving from $z_t=2.0\%$ to $z_t=3.0\%$ while holding $(\theta_t,\psi_t,\bar c_m)$ fixed reduces $\varphi_t^d(0)$ from 0.88 to 0.82, thereby shifting the score from
\[
B_t^{PE}=0.88-0.85=0.03
\]
to
\[
B_t^{PE}=0.82-0.85=-0.03.
\]
This shows that the premium-emergence boundary is empirically close in the sense that a moderate outside-option widening can reverse the sign of the score even without any further erosion of $\psi_t$ or $\bar c_m$.

\begin{table}[htbp]
\centering
\caption{Illustrative Tier Widening for the Premium-Emergence Score}
\label{tab:pe-tier-widening}
\small
\begin{tabular}{>{\raggedright\arraybackslash}p{7.1cm} >{\centering\arraybackslash}p{2.2cm} >{\raggedright\arraybackslash}p{3.5cm}}
\toprule
Illustrative admissible reading & $B_t^{PE}$ & Conservative reading \\
\midrule
Tier 1 baseline only (Table~\ref{tab:baseline-v2}) & $+3$ pp & Robustly interior \\
Tier 2 lower-end hard-core reading using Core erosion I (Table~\ref{tab:stress-v2}) & $-1$ pp & Boundary-near \\
Tier 2 wider outside-option reading using External stress (Table~\ref{tab:stress-v2}) & $-3$ pp & Boundary-near \\
\bottomrule
\end{tabular}

\vspace{0.5em}
\begin{minipage}{0.92\textwidth}
\footnotesize
\textit{Notes.} These are illustrative widening examples built from the existing scenario map of Table~\ref{tab:stress-v2}; they are not themselves estimated confidence sets. Their purpose is to show that once the admissible reading is widened beyond the Tier 1 baseline, the lower envelope can cross zero quickly enough to justify conservative set-valued classification.
\end{minipage}
\end{table}

The inferential problem is not a standard threshold problem. The relevant boundary is defined by a complementarity system, not by an observed scalar cutoff; the premium-emergence regime may be only sparsely observed; and the principal ambiguities are structured, not incidental. For these reasons, the paper does not force point classification.
The proof strategy follows the same conservative logic. Appendix~\ref{app:pe-proof} first characterizes the tier-defined target score set through its outer envelope and then establishes asymptotic coverage for the detrended envelope process. The main-text theorem therefore summarizes a proof that is fully aligned with the observables-centered implementation layer rather than an add-on detached from it. Throughout Section~17 and Appendix~A, the mixing requirement is imposed on the detrended inferential remainder process; it is a technical condition for the outer-coverage layer rather than a theorem derived here from the full structural law of motion in Section~\ref{sec:closure}.

\begin{theorem}[Outer coverage for the premium-emergence score]
\label{thm:pe-outer}
Fix a tier $k\in\{1,2,3\}$ and a local inference window $\mathcal T_n(t_0)$ of width $h_n$ with $h_n \to \infty$ and $h_n/n \to 0$. Suppose that:
\begin{enumerate}
\item the \Cref{sec:closure} domestic demand schedule is continuous and weakly increasing in the premium, and the score map $B_t^{PE}(m)$ is continuous in the observables for every admissible $m\in\mathcal M_{t,PE}^{(k)}$;
\item the tier-$k$ admissible family $\mathcal M_{t,PE}^{(k)}$ is fixed ex ante, nested by construction, and compact under the maintained parameterization, so that the raw envelope width
\[
\overline B_{t,PE}^{(k)}-\underline B_{t,PE}^{(k)}
\]
is uniformly bounded on $\mathcal T_n(t_0)$;
\item the raw envelope processes admit local decompositions
\[
\underline B_{t,PE}^{(k)}=\underline\mu_{t,PE}^{(k)}+\underline u_{t,PE}^{(k)},
\qquad
\overline B_{t,PE}^{(k)}=\overline\mu_{t,PE}^{(k)}+\overline u_{t,PE}^{(k)},
\]
where the drift terms are piecewise-Lipschitz on $\mathcal T_n(t_0)$ and the local-linear detrending errors satisfy
\[
\sup_{t\in\mathcal T_n(t_0)}\bigl|\widehat{\underline\mu}_{t,PE}^{(k)}-\underline\mu_{t,PE}^{(k)}\bigr|=o_p(1),
\qquad
\sup_{t\in\mathcal T_n(t_0)}\bigl|\widehat{\overline\mu}_{t,PE}^{(k)}-\overline\mu_{t,PE}^{(k)}\bigr|=o_p(1);
\]
\item the detrended envelope remainders have uniformly bounded $(2+\delta)$ moments for some $\delta>0$ and satisfy an $\alpha$-mixing condition strong enough for dependence-robust subsampling or self-normalized block inference, for example
\[
\sum_{j\ge 1} j^{1/2}\alpha(j)^{\delta/(2+\delta)}<\infty;
\]
\item the block length satisfies $\ell_n\to\infty$ and $\ell_n/h_n\to 0$;
\item the interior side of the complementarity condition is observed with positive limiting frequency on $\mathcal T_n(t_0)$, so that the zero-premium demand schedule is empirically bounded from the interior side.
\end{enumerate}
Then a detrended subsampling or self-normalized block procedure applied to the envelope process yields an asymptotically valid outer confidence band for the tier-$k$ score interval
\[
[\underline B_{t,PE}^{(k)},\overline B_{t,PE}^{(k)}].
\]
Consequently, false declarations of robust interior safety or robust premium emergence vanish asymptotically away from the boundary, while boundary-near states remain conservatively set-valued. A full proof is given in Appendix~\ref{app:pe-proof}.
\end{theorem}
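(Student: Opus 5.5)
The argument has two layers. The first is deterministic: it identifies the target set. The second is probabilistic: it covers that set's envelope.

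For the first layer, fix the tier $k$. Hypotheses 1 and 2 say that $m\mapsto B_t^{PE}(m)$ is continuous and that $\mathcal M_{t,PE}^{(k)}$ is compact and fixed ex ante. So for each $t$ the infimum and supremum in the definition of $\underline B_{t,PE}^{(k)}$ and $\overline B_{t,PE}^{(k)}$ are attained. Because the admissible set is connected along the maintained parameterization, the image $\{B_t^{PE}(m)\}$ is exactly the closed interval $[\underline B_{t,PE}^{(k)},\overline B_{t,PE}^{(k)}]$. This reduces coverage of the tier-$k$ score set to simultaneous coverage of two scalar envelope processes. Nestedness of the tiers makes the resulting bands monotone in $k$, so outer coverage at a lower tier transfers upward automatically.

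The monotonicity of $\varphi_t^d$ in $\rho_t$ from Lemma~\ref{lem:demand-v2} matters for interpretation. It guarantees that the sign of $B_t^{PE}$ coincides with the case split (a)/(c) of Proposition~\ref{prop:mcp-v2}, so that covering the interval is the economically relevant object.

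The second layer is probabilistic. Using hypothesis 3, I will write each envelope as its estimated local-linear drift plus a remainder. The drift-estimation error is $o_p(1)$ uniformly on $\mathcal T_n(t_0)$, so it can be absorbed by a Slutsky-type argument. What remains is to construct critical values for $\sup_t|\underline u_{t,PE}^{(k)}|$ and $\sup_t|\overline u_{t,PE}^{(k)}|$, or for a self-normalized version. For this I will use the standard subsampling theorem for $\alpha$-mixing sequences in the Politis--Romano--Wolf form. The $(2+\delta)$-moment and mixing-summability conditions of hypothesis 4 deliver a limiting distribution for the normalized statistic within the window of width $h_n$. The block rate $\ell_n\to\infty$, $\ell_n/h_n\to0$ of hypothesis 5 makes the subsampling distribution consistent for it.

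I will then form the band as $[\widehat{\underline\mu}-\hat c_{\alpha/2},\,\widehat{\overline\mu}+\hat c_{\alpha/2}]$, using a Bonferroni split across the two sides. This yields $\liminf \Prob([\underline B,\overline B]\subseteq\text{band})\ge 1-\alpha$. The boundedness of the envelope width in hypothesis 2 ensures that the two one-sided statements combine without losing uniformity.

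The consequence clause then follows from band consistency. Suppose the true $\underline B_{t,PE}^{(k)}$ is bounded away from zero, either above or below. The band half-width is $O_p$ of the normalized rate and tends to zero, so the band eventually excludes $0$ on the correct side. The probability of a false robust-interior or robust-premium-emergent declaration is therefore at most the non-coverage probability, which is asymptotically $\alpha$, or tends to zero if $\alpha_n\downarrow0$ slowly. When $0$ lies in the true interval, the band contains $0$ with probability at least $1-\alpha$, so the classification stays set-valued.

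The main obstacle is hypothesis 6 together with the piecewise-Lipschitz drift. Because the premium-emergence regime is only sparsely observed, the drift estimator near a kink, namely the regime switch at $B=0$, may be biased. To address this, I will first restrict the detrending to interior-side observations, which occur with positive limiting frequency. This makes the local-linear fit identified from one side. I will then argue that the bias is at worst one-signed, which can only widen the band and so preserves outer, though not exact, coverage.
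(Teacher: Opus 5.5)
Your architecture matches the paper's: envelope containment, a drift-plus-remainder decomposition, Politis--Romano--Wolf subsampling under the mixing condition, and a vanishing critical value for the classification clause. Your choice of centering, however, opens a genuine gap.

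You want $[\widehat{\underline\mu}_{t,PE}^{(k)}-\hat c_{\alpha/2},\,\widehat{\overline\mu}_{t,PE}^{(k)}+\hat c_{\alpha/2}]$ to contain $[\underline B_{t,PE}^{(k)},\overline B_{t,PE}^{(k)}]$. By hypothesis~3, $\underline B_{t,PE}^{(k)}-\widehat{\underline\mu}_{t,PE}^{(k)}=\underline u_{t,PE}^{(k)}+o_p(1)$. Hypothesis~4 only bounds the moments of $\underline u_{t,PE}^{(k)}$; nothing makes it vanish. So $\hat c$ must be a quantile of the \emph{level} of the remainder, or of $\sup_t|\underline u_{t,PE}^{(k)}|$ if you want simultaneity. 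That is a prediction-type quantile whose width does not shrink. It is also not what hypotheses~4--5 and the Politis--Romano--Wolf theorem deliver: those give a CLT, hence subsampling validity, for normalized window statistics, and no limit law for $\sup_t|\underline u_{t,PE}^{(k)}|$.

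If instead $\hat c$ is the subsampling quantile of the normalized window mean, which is what those hypotheses actually provide, then $\hat c=O_p(h_n^{-1/2})$. The band then misses $\underline B_{t,PE}^{(k)}$ with probability approaching $\Prob(\underline u_{t,PE}^{(k)}<0)$. In neither case do you get both coverage and ``half-width $\to 0$'', and your consequence clause relies on the latter. Relatedly, absorbing an $o_p(1)$ drift error by Slutsky works against an $O_p(1)$ band but not against an $O_p(h_n^{-1/2})$ one.

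The paper sidesteps this by centering at the observed envelope itself, $[\underline B_{t,PE}^{(k)}-c,\,\overline B_{t,PE}^{(k)}+c]$ (\Cref{thm:pe-outer-app}). Containment of the score set then follows deterministically from \Cref{lem:pe-envelope} for any $c\ge 0$. The subsampled $c\to 0$ enters only the classification step, so the coverage statement itself carries no probabilistic content.

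If you want a nontrivial target, make it the latent drift interval $[\underline\mu_{t,PE}^{(k)},\overline\mu_{t,PE}^{(k)}]$. You would then need to strengthen hypothesis~3 to a rate with negligible bias, i.e.\ a nondegenerate limit for $\sqrt{h_n}(\widehat{\underline\mu}_{t,PE}^{(k)}-\underline\mu_{t,PE}^{(k)})$. In that setting your Bonferroni split is the correct way to join the two sides. The paper's Step~(iv) omits it, harmlessly only because its coverage is trivial.

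Three smaller points.
\begin{enumerate}
\item Equality of the image with the whole interval needs connectedness of $\mathcal M_{t,PE}^{(k)}$, which is not assumed. Only containment is needed, though.
\item Nesting transfers coverage downward, not upward: a band covering the tier-$k$ interval covers every lower tier, not conversely.
\item Your use of hypothesis~6 does not work. A one-signed bias in $\widehat\mu$ shifts the band rather than widening it. Fitting only on interior-side observations selects on the regime, which tends to push the lower drift upward, i.e.\ in the false-safety direction. The step is also unneeded, since hypothesis~3 already posits uniform $o_p(1)$ detrending error, and the paper's appendix proof never invokes hypothesis~6.
\end{enumerate}
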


\begin{proof}[Proof sketch]
The complementarity structure implies that the sign of $B_t^{PE}$ is the relevant classification object. The tiered family determines an outer envelope over observational and specification ambiguity. Local detrending is applied to the envelope itself rather than to each admissible specification separately, which avoids order-dependence between detrending and the outer operator. After detrending, dependence-robust subsampling or self-normalized block arguments deliver valid outer coverage for the score interval.
\end{proof}

\subsection{Transition-feasibility margin inference}
\label{sec:tf-inference}

\Cref{sec:transition} gives a growth-improvement threshold for safe exit from repression-dependent stability. Unlike the premium-emergence boundary, this object is directly tied to the fiscal-burden term and is therefore more naturally exposed to debt-concept choice.

\begin{definition}[Transition-feasibility margin score]
\label{def:btf}
Let
\[
B_t^{TF}
:=
g_t^{n*,\mathrm{new}}
-
\left[
\pi_t+\frac{d_t-s_t}{b_{t-1}}+\bar\rho+m
\right].
\]
Then:
\begin{itemize}
\item $B_t^{TF}>0$ indicates that the transition condition is satisfied with margin;
\item $B_t^{TF}=0$ indicates a knife-edge transition;
\item $B_t^{TF}<0$ indicates that the transition remains infeasible at the proposed growth path.
\end{itemize}
\end{definition}

The key inferential distinction here is debt concept. Under the March 2026 no-premium baseline of \Cref{sec:transition}, with
\[
(\pi_t,d_t,b_{t-1},g_t^{n*,\mathrm{new}})=(2.7\%,2.0\%,2.40,3.0\%),
\]
the no-premium threshold is
\[
2.7+\frac{2.0}{2.40}-3.0 = 0.533\%.
\]
If the monitoring debt concept is used instead, with $b_{t-1}\approx 1.574$, the analogous quantity becomes
\[
2.7+\frac{2.0}{1.574}-3.0 \approx 0.971\%.
\]
Thus debt-concept choice alone raises the no-premium transition threshold by about 44 basis points. This does not invalidate the baseline result of \Cref{sec:transition}; it shows that transition-feasibility inference must distinguish between baseline and monitoring layers rather than silently mixing them.

\begin{table}[htbp]
\centering
\caption{Illustrative Tier Widening for the Transition-Feasibility Threshold}
\label{tab:tf-tier-widening}
\small
\begin{tabular}{>{\raggedright\arraybackslash}p{4.2cm} >{\raggedright\arraybackslash}p{4.7cm} >{\centering\arraybackslash}p{2.2cm} >{\centering\arraybackslash}p{2.1cm}}
\toprule
Tier & Debt concept reading & No-premium threshold & Widening \\
\midrule
Tier 1 & Baseline general-government concept & $0.533\%$ & --- \\
Tier 2 & Monitoring-layer JGB$+$FILP concept & $0.971\%$ & $+0.438$ pp \\
\bottomrule
\end{tabular}

\vspace{0.5em}
\begin{minipage}{0.92\textwidth}
\footnotesize
\textit{Notes.} The widening is mechanical: lowering $b_{t-1}$ raises the fiscal-burden term $(d_t-s_t)/b_{t-1}$ and therefore tightens the no-premium transition condition. In the JFR-rg architecture this is not merely a statistical discrepancy; it reflects the distinction between the broad accounting concept used for corridor logic and the narrower instrument-level monitoring concept used for observed-value checks.
\end{minipage}
\end{table}

For that reason, the admissible set for transition-feasibility inference should be centered on debt-concept and fiscal-burden choices.

\begin{definition}[Tiered admissible set for transition-feasibility inference]
\label{def:mtf}
Let
\[
\mathcal M_{t,TF}^{(1)} \subseteq \mathcal M_{t,TF}^{(2)}
\]
be nested admissible families, where:
\begin{itemize}
\item \textbf{Tier 1} uses the baseline debt concept and baseline fiscal-burden inputs of \Cref{sec:transition};
\item \textbf{Tier 2} adds the observed monitoring debt concept and admissible monitoring-layer fiscal-burden constructions.
\end{itemize}
\end{definition}

For each $m\in \mathcal M_{t,TF}^{(k)}$, let $B_t^{TF}(m)$ denote the implied score and define
\[
\underline B_{t,TF}^{(k)}
:=
\inf_{m\in \mathcal M_{t,TF}^{(k)}} B_t^{TF}(m),
\qquad
\overline B_{t,TF}^{(k)}
:=
\sup_{m\in \mathcal M_{t,TF}^{(k)}} B_t^{TF}(m).
\]
Tier widening is informative here because it quantifies how much of the transition margin is robust to debt-concept and fiscal-burden measurement choices.
Appendix~\ref{app:tf-proof} develops the corresponding proof in parallel form: it first fixes the tier-defined transition-feasibility target set and then proves outer coverage for the detrended envelope generated by debt-concept and fiscal-burden ambiguity. This keeps the transition-feasibility layer analytically distinct from the premium-emergence layer while preserving a common inferential architecture.

\begin{theorem}[Outer coverage for the transition-feasibility score]
\label{thm:tf-outer}
Fix a tier $k\in\{1,2\}$ and a local inference window $\mathcal T_n(t_0)$ of width $h_n$ with $h_n \to \infty$ and $h_n/n \to 0$. Suppose that:
\begin{enumerate}
\item the tier-$k$ admissible family $\mathcal M_{t,TF}^{(k)}$ is fixed ex ante, nested by construction, and compact under the maintained debt-concept and fiscal-burden parameterization;
\item conditional on the proposed growth path, the score map $B_t^{TF}(m)$ is affine in the fiscal-burden term for every admissible $m\in\mathcal M_{t,TF}^{(k)}$, and the resulting envelope width
\[
\overline B_{t,TF}^{(k)}-\underline B_{t,TF}^{(k)}
\]
is uniformly bounded on $\mathcal T_n(t_0)$;
\item the envelope processes admit local decompositions
\[
\underline B_{t,TF}^{(k)}=\underline\mu_{t,TF}^{(k)}+\underline u_{t,TF}^{(k)},
\qquad
\overline B_{t,TF}^{(k)}=\overline\mu_{t,TF}^{(k)}+\overline u_{t,TF}^{(k)},
\]
where the drift terms are piecewise-Lipschitz on $\mathcal T_n(t_0)$ and the local-linear detrending errors satisfy
\[
\sup_{t\in\mathcal T_n(t_0)}\bigl|\widehat{\underline\mu}_{t,TF}^{(k)}-\underline\mu_{t,TF}^{(k)}\bigr|=o_p(1),
\qquad
\sup_{t\in\mathcal T_n(t_0)}\bigl|\widehat{\overline\mu}_{t,TF}^{(k)}-\overline\mu_{t,TF}^{(k)}\bigr|=o_p(1);
\]
\item the detrended envelope remainders have uniformly bounded $(2+\delta)$ moments for some $\delta>0$ and satisfy an $\alpha$-mixing condition strong enough for dependence-robust subsampling or self-normalized block inference, for example
\[
\sum_{j\ge 1} j^{1/2}\alpha(j)^{\delta/(2+\delta)}<\infty;
\]
\item the block length satisfies $\ell_n\to\infty$ and $\ell_n/h_n\to 0$;
\item the growth-path proposal $g_t^{n*,\mathrm{new}}$ is observed or calibrated independently of the debt-concept choice within the tier-$k$ admissible family.
\end{enumerate}
Then the detrended envelope procedure yields an asymptotically valid outer confidence band for the tier-$k$ transition-feasibility interval
\[
[\underline B_{t,TF}^{(k)},\overline B_{t,TF}^{(k)}].
\]
Accordingly, robust transition feasibility is declared only when the lower envelope remains strictly positive, while debt-concept-sensitive cases are classified conservatively. A full proof is given in Appendix~\ref{app:tf-proof}.
\end{theorem}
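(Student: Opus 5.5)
The argument runs parallel to the sketch given for \cref{thm:pe-outer}, with one simplification. The transition-feasibility score is affine in the fiscal-burden term, so the envelope problem is finite-dimensional and piecewise-affine rather than a general continuous map over specifications $G$.

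\emph{Step 1: write the envelopes explicitly.} For each $m\in\mathcal M_{t,TF}^{(k)}$ I will write
\[
B_t^{TF}(m)=g_t^{n*,\mathrm{new}}-\pi_t-\bar\rho-m_{\mathrm{safe}}-F_t(m),
\qquad
F_t(m)=\frac{d_t(m)-s_t(m)}{b_{t-1}(m)},
\]
where $m_{\mathrm{safe}}$ is the safety margin, kept notationally distinct from the admissible choice $m$. Condition~6 makes $g_t^{n*,\mathrm{new}}$ independent of $m$. The envelopes therefore reduce to
\[
\underline B_{t,TF}^{(k)}=c_t-\sup_m F_t(m),
\qquad
\overline B_{t,TF}^{(k)}=c_t-\inf_m F_t(m),
\]
with common intercept $c_t$. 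Compactness (condition~1), together with continuity of $F_t$ in the debt-concept parameter away from $b_{t-1}=0$, ensures the sup and inf are attained. The envelopes are then Lipschitz functionals of the observables, and the target set is exactly the interval $[\underline B^{(k)},\overline B^{(k)}]$. Nestedness makes the intervals weakly widen with $k$, which gives monotone conservativeness across tiers.

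\emph{Step 2: handle detrending.} I will detrend the two envelope processes directly, not each specification separately. Condition~3 gives uniform $o_p(1)$ drift-estimation error on $\mathcal T_n(t_0)$. Replacing the true drifts by their estimates therefore perturbs the detrended remainders only by a uniformly negligible amount, by Slutsky.

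\emph{Step 3: the main step, inference on the joint remainders.} Conditions~4 and~5 are used here. Bounded $(2+\delta)$ moments and the stated $\alpha$-mixing summability give a CLT for local block means of the bivariate remainder $(\underline u,\overline u)$, under $h_n\to\infty$ and $h_n/n\to0$. Subsampling or self-normalized blocks with $\ell_n\to\infty$ and $\ell_n/h_n\to0$ then consistently estimate the law of the centered statistic
\[
\max\{\widehat{\underline B}-\underline B,\;\overline B-\widehat{\overline B}\}.
\]
The band is built as lower estimate minus critical value and upper estimate plus critical value, which gives coverage at least $1-\alpha$ asymptotically. I expect this to be the main obstacle. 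The key question is whether the sup/inf operator preserves the mixing and moment properties. I would argue it does because $\sup_m F_t(m)$ is a measurable function of contemporaneous observables only, since the tier family is fixed ex ante. Mixing coefficients are therefore inherited and moments are dominated by the bounded width (condition~2).

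\emph{Step 4: classification consequences.} If the true lower envelope is strictly positive, the band's lower edge converges to a positive limit with probability tending to one. A false declaration of infeasibility then vanishes, and the symmetric argument applies to robust infeasibility. Whenever $0$ lies in the interval, the band contains $0$ with probability at least the nominal level, so the case stays set-valued. The debt-concept widening computed in \cref{tab:tf-tier-widening} is then an instance of the deterministic width in Step~1.
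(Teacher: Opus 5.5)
Your argument matches the paper's proof in \cref{app:tf-proof} in overall structure and in rigor. It uses the same four steps: containment of $\mathcal B_{t,TF}^{(k)}$ in the envelope interval, detrending the two envelope processes directly, subsampling on the mixing remainders, and transferring coverage through the containment.

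\textbf{Where you depart: the combination step.} The paper's \cref{lem:tf-envelope} works as follows:
\begin{itemize}
\item it holds $d_t-s_t$ fixed across admissible choices;
\item it uses monotonicity of $(d_t-s_t)/b_{t-1}$ in $b_{t-1}$, implicitly with $d_t-s_t>0$, to place the edges at the extreme debt concepts;
\item this gives the closed-form width $(d_t-s_t)\bigl(1/\underline b_{t-1}^{(k)}-1/\overline b_{t-1}^{(k)}\bigr)$.
\end{itemize}
\cref{lem:tf-subsampling} then takes a separate subsampling quantile for each edge. It argues that, because the width is analytically pinned down, only the level of the envelope is stochastic.

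Your route differs in three ways:
\begin{itemize}
\item you let $d_t$ and $s_t$ depend on $m$, which is closer to the Tier~2 definition admitting alternative fiscal-burden constructions;
\item you need no sign on the fiscal burden;
\item you get simultaneous coverage from one max-type statistic on the bivariate remainder.
\end{itemize}

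\textbf{What each approach buys.} Your route states the joint $1-\alpha$ claim explicitly without needing a nonrandom width. Without that reduction, two separate $1-\alpha$ edge quantiles only guarantee $1-2\alpha$ jointly. It also makes clear what is being estimated, since you centre the band on estimated edges; the paper centres it on the envelopes themselves and rests the final coverage statement on the deterministic containment. The price is a bivariate subsampling step and the loss of the explicit width formula the paper uses in \cref{rem:tf-width}.

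\textbf{Smaller corrections.}
\begin{itemize}
\item The step you call the main obstacle is not one. Condition~4 imposes the moment and mixing requirements directly on the detrended envelope remainders, and the paper says this is a maintained technical condition. So no inheritance argument through $\sup_m$ is needed.
\item Bounded width controls only $\overline B_{t,TF}^{(k)}-\underline B_{t,TF}^{(k)}$, not the moments of either edge.
\item The target set equals the whole interval only if the admissible family is connected; Tier~2 may be just two discrete debt concepts. Containment is all you need.
\item Your appeal to Slutsky in Step~2 has the same gap as the paper's Stage~1. A uniform $o_p(1)$ detrending error is not negligible for a $\sqrt{h_n}$-scaled block statistic, which requires $o_p(h_n^{-1/2})$. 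Condition~3 does not supply that rate.
\end{itemize}
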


\begin{proof}[Proof sketch]
The score is affine in the fiscal-burden term conditional on the growth proposal. The tiered admissible family induces an outer envelope over debt-concept and measurement-layer ambiguity. Applying the detrending-and-envelope logic to the resulting interval process yields valid outer coverage and hence conservative classification.
\end{proof}

\subsection{Relation to stochastic robustness}
\label{sec:stochastic-inference-link}

\Cref{sec:robustness} showed that bounded stochastic perturbations preserve the regime logic of JFR-rg in expectation and preserve the complementarity structure conditional on the state. That result should not be confused with pathwise safety classification.

For the premium-emergence boundary, a local approximation
\[
B_t^{PE}=\bar B_t^{PE}+\sigma_{B,t}^{PE}\varepsilon_t,
\qquad E[\varepsilon_t]=0,
\]
implies
\[
\Pr(B_t^{PE}<0)
=
\Pr\!\left(
\varepsilon_t<-\bar B_t^{PE}/\sigma_{B,t}^{PE}
\right).
\]
At the March 2026 baseline, $\bar B_t^{PE}=3$ pp. Table~\ref{tab:stress-v2} shows that increasing $z_t$ by 1 percentage point, from 2.0\% to 3.0\%, shifts the boundary score by roughly 6 pp, from $+3$ pp to $-3$ pp. Hence even moderate volatility in outside-option conditions can move the realized score by an amount of the same order as the baseline slack. The inferential problem is therefore not eliminated by expected-path preservation.

\begin{corollary}[Expected-path preservation does not imply pathwise safe classification]
\label{cor:pathwise}
Under \Cref{prop:stochastic-robust}, preservation of the JFR-rg regime logic in expectation does not imply that the realized premium-emergence score or transition-feasibility score remains uniformly positive with high probability. The relevant empirical object is the confidence band for the realized score interval, not the sign of the expected score alone.
\end{corollary}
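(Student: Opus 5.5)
The corollary says that preserving the regime logic in expectation does not make the realized score positive with high probability. I would prove it by counterexample within the class of perturbations allowed by \Cref{prop:stochastic-robust}: a single admissible configuration suffices.

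\textbf{Step 1: isolate what the proposition controls.} The first step is to state exactly what \Cref{prop:stochastic-robust} delivers. Part (i) concerns the first moment of the debt recursion, and part (vi) says the complementarity map is unchanged conditional on the state. Neither says anything about the distribution of the realized state $(\theta_t,\psi_t,z_t,b_{t-1})$. Because $\E[\eta_t]=0$, the expected recursion carries no information about the spread of realizations around it.

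\textbf{Step 2: build the counterexample for the premium-emergence score.} I would use the local approximation $B_t^{PE}=\bar B_t^{PE}+\sigma_{B,t}^{PE}\varepsilon_t$ together with the March 2026 calibration of Table~\ref{tab:baseline-v2}, where $\bar B_t^{PE}=0.03>0$ is an expected-path-preserving interior state. I would take $\varepsilon_t$ bounded and mean zero, for instance a Rademacher variable, which is compatible with the $|\eta_t|\le 1$ requirement. The scale $\sigma_{B,t}^{PE}$ is induced by $z_t$-volatility through $\partial B_t^{PE}/\partial z_t\approx-6.01$. Then
\[
\Pr(B_t^{PE}<0)=\Pr\bigl(\varepsilon_t<-\bar B_t^{PE}/\sigma_{B,t}^{PE}\bigr)=1/2 \quad\text{whenever } \sigma_{B,t}^{PE}>0.03 .
\]
With $\sigma_{B,t}^{PE}=0.06$ this is exactly the $z_t$ swing between the Baseline and External stress rows of Table~\ref{tab:stress-v2}. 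So the realized score is negative with probability $1/2$, even though the expected score is positive and every statement of \Cref{prop:stochastic-robust} continues to hold.

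\textbf{Step 3: repeat for the transition-feasibility score.} I would run the same argument for $B_t^{TF}$. By Definition~\ref{def:btf}, this score is affine in the fiscal-burden term $(d_t-s_t)/b_{t-1}$. Under \eqref{eq:stochastic-recursion}, the realized $b_{t-1}$ has positive variance whenever $\sigma_t>0$. The burden term is a convex function of $b_{t-1}$, so by Jensen's inequality its expectation is at least its value at $\E[b_{t-1}]$. Choosing $\sigma_t$ large enough relative to a small positive expected margin then gives positive probability of a negative realized score.

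\textbf{Step 4: conclude.} Since expected-path preservation is compatible with a non-negligible probability of a negative realized score, the sign of the expected score cannot serve as the classification object. The relevant object must therefore be the realized score interval, whose outer bands are supplied by Theorems~\ref{thm:pe-outer} and~\ref{thm:tf-outer}.

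\textbf{Main obstacle.} The hard part is Step 2: making sure the chosen perturbation is genuinely admissible under \Cref{prop:stochastic-robust}. That proposition perturbs the debt recursion, whereas $B_t^{PE}$ depends on $(\theta_t,\psi_t,z_t)$. I would handle this by appealing to part (vi), which allows stochastic state transitions, and by specifying the shock on $z_t$ as a bounded mean-zero state innovation of the same type as the $\nu_t$ in part (iv). If that reading is contested, I would fall back on Step 3, where the randomness enters purely through $b_{t-1}$ and so sits squarely within \eqref{eq:stochastic-recursion}.
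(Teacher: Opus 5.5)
Your Step 2 is essentially the paper's own argument (the local approximation $B_t^{PE}=\bar B_t^{PE}+\sigma_{B,t}^{PE}\varepsilon_t$, the 3 pp baseline slack, and the roughly 6 pp score swing between the Baseline and External-stress rows of Table~\ref{tab:stress-v2}); your Rademacher choice turns the paper's order-of-magnitude comparison into an exact probability-$1/2$ counterexample. Your Step 3 adds a transition-feasibility case that the paper leaves implicit, but there the conclusion comes from the downside support of $\eta_t$ (again, say, Rademacher) pushing $(d_t-s_t)/b_{t-1}$ above the margin, not from Jensen's inequality. Jensen only says $\E[B_t^{TF}]$ lies weakly below the expected-path score; so if ``expected score'' is read as $\E[B_t^{TF}]$, it lowers that quantity as $\sigma_t$ grows, and $\sigma_t$ must then stay in a window rather than merely be large.
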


\begin{figure}[t]
\centering
\includegraphics[width=0.95\textwidth]{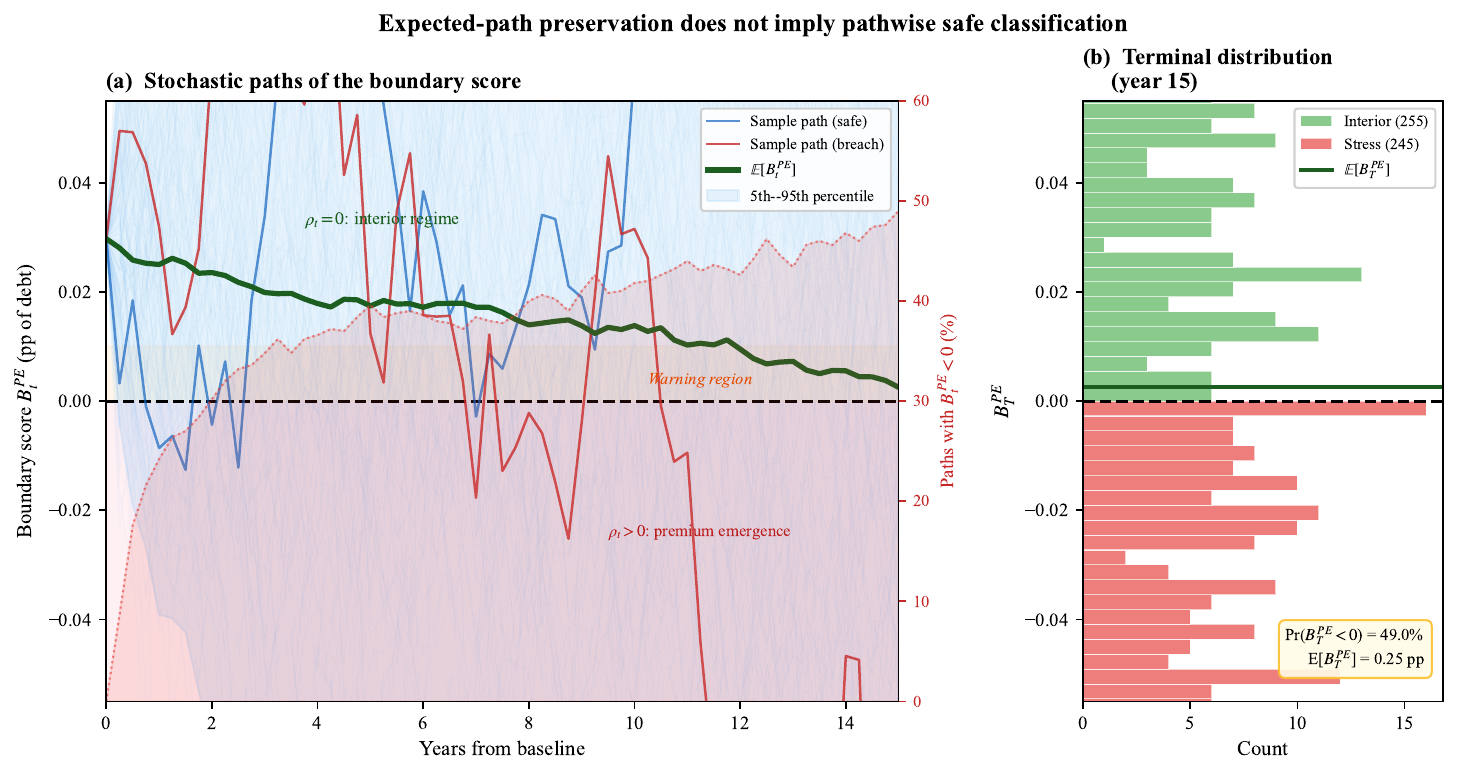}
\caption{Expected-path preservation versus pathwise boundary risk. Panel (a) shows simulated paths of the premium-emergence boundary score $B_t^{PE}$ under stochastic variation in the outside-option spread around the March 2026 baseline. The expected path remains positive, but individual realizations frequently breach the zero boundary. Panel (b) shows the terminal distribution after 15 years. The figure illustrates the logic of \Cref{cor:pathwise}: preservation of regime logic in expectation does not imply uniformly safe realized-path classification.}
\label{fig:pathwise-vs-expected}
\end{figure}

\subsection{Simulation evidence}
\label{sec:simulation-design}

This subsection reports two Monte Carlo exercises. The main experiment targets the
premium-emergence boundary score of Section~17.1, which is the sharper nonlinear object
induced by the complementarity closure of Section~16. A supplementary experiment targets
the transition-feasibility margin of Section~17.2, whose ambiguity is driven primarily by
debt-concept choice and fiscal-burden measurement rather than by nonlinear boundary
emergence. In both cases the numerical implementation should be read as a simplified
operational version of the proposed outer procedure rather than as a literal full-envelope
implementation of the corresponding theorem. Accordingly, the coverage criterion reported
below is regime-label coverage rather than full score-set coverage.

A further implication is that 100\% reported coverage in the simplified Monte Carlo
exercises should not be read as evidence of pointwise inferential efficiency. It reflects
the intentionally conservative design of a regime-label outer procedure under one-sided
regime dominance and restricted implementation-level ambiguity. The gain is the
suppression of false interior or false-feasible declarations; the cost is a wider warning
or marginal region than would arise under more aggressive point-classification rules. The
simulation results should therefore be read as evidence of conservative classification
discipline, not as a claim that the procedure is efficiency-optimal in the usual
statistical sense.

\paragraph{Premium-emergence boundary simulation.}
The premium-emergence experiment uses the Section~16 two-layer structure with persistent
state dynamics in $(\theta_t,\psi_t,z_t)$, sparse positive-premium episodes, and structured
measurement ambiguity. The goal is not to maximize point decisiveness, but to suppress
false safety under one-sided regime dominance. Table~\ref{tab:mc-comparison} reports the
main comparison across four evaluation horizons. The proposed tiered classifier eliminates
false safety and false alarm in the reported experiment, while returning boundary-near
set-valued classifications in approximately 29\% of cases at Tier~2 and about 59--70\% at
Tier~3. The naive plug-in benchmark instead forces point classification and generates
persistent false-safety frequencies around 0.8--0.9\%.

\begin{table}[htbp]
\centering
\caption{Monte Carlo Comparison of Regime-Classification Methods}
\label{tab:mc-comparison}
\small
\begin{tabular}{l l r r r r}
\toprule
Eval.\ horizon & Method & False safety & False alarm & Coverage & Warning \\
 & & (\%) & (\%) & (\%) & (\%) \\
\midrule
3.8 yr & Proposed (Tier 2) & 0.0 & 0.0 & 100.0 & 29.0 \\
 & Proposed (Tier 3) & 0.0 & 0.0 & 100.0 & 69.5 \\
 & B1: Naive plug-in & 0.9 & 0.9 & 98.2 & 0.0 \\
 & B2: Single threshold & 0.0 & 0.1 & 99.9 & 13.6 \\
 & B3: Fixed-spec test & 0.1 & 0.2 & 99.6 & 10.5 \\
\midrule
7.5 yr & Proposed (Tier 2) & 0.0 & 0.0 & 100.0 & 29.6 \\
 & Proposed (Tier 3) & 0.0 & 0.0 & 100.0 & 61.8 \\
 & B1: Naive plug-in & 0.8 & 0.9 & 98.2 & 0.0 \\
 & B2: Single threshold & 0.1 & 0.1 & 99.8 & 13.8 \\
 & B3: Fixed-spec test & 0.1 & 0.2 & 99.7 & 11.6 \\
\midrule
11.2 yr & Proposed (Tier 2) & 0.0 & 0.0 & 100.0 & 28.3 \\
 & Proposed (Tier 3) & 0.0 & 0.0 & 100.0 & 59.0 \\
 & B1: Naive plug-in & 0.8 & 0.8 & 98.5 & 0.0 \\
 & B2: Single threshold & 0.0 & 0.0 & 100.0 & 12.2 \\
 & B3: Fixed-spec test & 0.1 & 0.1 & 99.8 & 11.8 \\
\midrule
15.0 yr & Proposed (Tier 2) & 0.0 & 0.0 & 100.0 & 29.4 \\
 & Proposed (Tier 3) & 0.0 & 0.0 & 100.0 & 59.2 \\
 & B1: Naive plug-in & 0.8 & 0.8 & 98.4 & 0.0 \\
 & B2: Single threshold & 0.1 & 0.1 & 99.9 & 12.2 \\
 & B3: Fixed-spec test & 0.0 & 0.1 & 100.0 & 12.9 \\
\bottomrule
\end{tabular}

\vspace{0.5em}
\begin{minipage}{0.92\textwidth}
\footnotesize
\textit{Notes.} $N=2{,}000$ Monte Carlo replications; $T=60$ quarters; $\alpha=0.10$. DGP uses the \Cref{sec:closure} two-layer model with parameters from Table~\ref{tab:baseline-v2}. Measurement noise $\sigma_{\theta}^{\mathrm{obs}}=2\%$. Block length $\ell_n=6$ (default). ``False safety'' = classified as interior when true state is stress. ``False alarm'' = classified as stress when true state is interior. ``Coverage'' = true state is contained in the reported classification set. ``Warning'' = procedure returns set-valued (boundary-near) classification.
\end{minipage}
\end{table}

Figure~\ref{fig:mc-comparison} visualizes the terminal-horizon trade-off. The proposed
procedures sharply reduce false precision, but they do so by enlarging the warning region.
This is a design feature rather than a defect: the purpose of the classifier is to avoid
false interior declarations when the regime boundary is latent and the sample is dominated
by one side of the split.

\begin{figure}[t]
\centering
\includegraphics[width=0.88\textwidth]{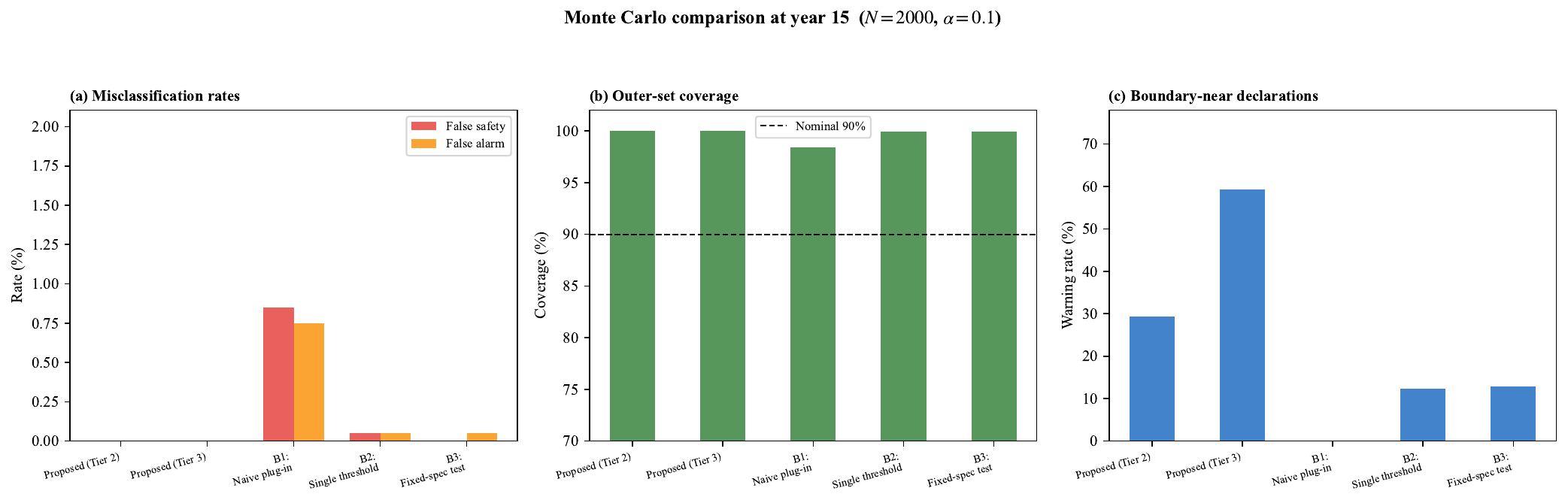}
\caption{Monte Carlo comparison for the premium-emergence boundary classifier at the
terminal evaluation horizon. The proposed Tier~2 and Tier~3 procedures eliminate false
safety in the reported experiment, but do so by returning a larger warning region than the
simpler benchmark rules. The figure should be read as evidence for conservative regime-label
classification rather than as a claim of pointwise optimality.}
\label{fig:mc-comparison}
\end{figure}

Table~\ref{tab:mc-blocklength} reports block-length sensitivity for the proposed Tier~2
classifier. Across $\ell_n\in\{4,6,8\}$, the classification outcome remains stable: false
safety and false alarm remain at zero, coverage remains at 100\%, and the warning rate
moves only from 28.6\% to 29.9\%. This supports the small-sample recommendation in the
Appendix that conservative block-length grids be used as a sensitivity device rather than
as a source of pointwise tuning.

\begin{table}[htbp]
\centering
\caption{Block-Length Sensitivity for the Proposed Classifier (Tier 2, Terminal Evaluation)}
\label{tab:mc-blocklength}
\small
\begin{tabular}{c r r r r}
\toprule
Block length $\ell_n$ & False safety (\%) & False alarm (\%) & Coverage (\%) & Warning (\%) \\
\midrule
4 & 0.0 & 0.0 & 100.0 & 28.6 \\
6 & 0.0 & 0.0 & 100.0 & 29.4 \\
8 & 0.0 & 0.0 & 100.0 & 29.9 \\
\bottomrule
\end{tabular}

\vspace{0.5em}
\begin{minipage}{0.92\textwidth}
\footnotesize
\textit{Notes.} All rows use the proposed Tier-2 outer-inference procedure at the terminal evaluation point (year 15). Larger block lengths produce wider confidence bands, increasing the warning rate and coverage at the cost of reduced decisiveness. The classification outcome is stable across the grid $\ell_n \in \{4,6,8\}$, confirming that the regime diagnosis is not driven by the block-length choice.
\end{minipage}
\end{table}

\paragraph{Transition-feasibility simulation.}
A supplementary Monte Carlo is also reported for the transition-feasibility margin. In
contrast to the premium-emergence boundary, the TF-side object is affine in the
fiscal-burden term and is therefore primarily informative about debt-concept ambiguity
rather than nonlinear boundary emergence. Table~\ref{tab:mc-tf-comparison} shows that the
proposed tiered outer procedure eliminates false feasibility and false infeasibility in the
reported experiment, at the cost of a modest increase in marginal classifications from
22.5\% to 25.8\% when moving from Tier~1 to Tier~2 at the benchmark premium bound
$\bar\rho=0.5\%$. The corresponding regime-label coverage remains at 100\% in the reported
experiment.

\begin{table}[htbp]
\centering
\caption{Transition-Feasibility Monte Carlo: Classification Performance at Terminal Horizon}
\label{tab:mc-tf-comparison}
\small
\begin{tabular}{l l r r r r}
\toprule
Premium bound & Method & False feas.\ & False infeas.\ & Coverage & Marginal \\
 & & (\%) & (\%) & (\%) & (\%) \\
\midrule
no premium & Proposed (Tier 1) & 0.0 & 0.0 & 100.0 & 28.2 \\
 & Proposed (Tier 2) & 0.0 & 0.0 & 100.0 & 36.2 \\
 & Naive plug-in (baseline) & 0.3 & 0.0 & 99.7 & 0.0 \\
 & Naive plug-in (monitoring) & 0.0 & 0.9 & 99.1 & 0.0 \\
 & Fixed-spec test (baseline) & 0.0 & 0.0 & 100.0 & 11.4 \\
\midrule
$\bar\rho=0.5\%$ & Proposed (Tier 1) & 0.0 & 0.0 & 100.0 & 22.5 \\
 & Proposed (Tier 2) & 0.0 & 0.0 & 100.0 & 25.8 \\
 & Naive plug-in (baseline) & 0.1 & 0.0 & 99.9 & 0.0 \\
 & Naive plug-in (monitoring) & 0.0 & 0.4 & 99.6 & 0.0 \\
 & Fixed-spec test (baseline) & 0.0 & 0.0 & 100.0 & 9.4 \\
\midrule
$\bar\rho=1.0\%$ & Proposed (Tier 1) & 0.0 & 0.0 & 100.0 & 13.2 \\
 & Proposed (Tier 2) & 0.0 & 0.0 & 100.0 & 14.9 \\
 & Naive plug-in (baseline) & 0.1 & 0.0 & 99.9 & 0.0 \\
 & Naive plug-in (monitoring) & 0.0 & 0.3 & 99.7 & 0.0 \\
 & Fixed-spec test (baseline) & 0.0 & 0.0 & 100.0 & 5.0 \\
\bottomrule
\end{tabular}

\vspace{0.5em}
\begin{minipage}{0.92\textwidth}
\footnotesize
\textit{Notes.} $N=2{,}000$ replications; $T=60$ quarters; $\alpha=0.10$. The true debt concept is drawn uniformly from $[b_{\mathrm{monitoring}}, b_{\mathrm{baseline}}] = [1.574, 2.40]$ in each replication. ``False feas.''\ = declared feasible when truly infeasible; ``False infeas.''\ = declared infeasible when truly feasible. ``Marginal'' = set-valued classification returned. The Tier-2 procedure eliminates false feasibility by explicitly incorporating debt-concept ambiguity into the outer envelope.
\end{minipage}
\end{table}

The widening generated by debt-concept ambiguity is economically meaningful but not
excessive. The simulated Tier~2 envelope width has mean 43.7 basis points and median
43.7 basis points, closely matching the analytic 44-basis-point widening implied by the
March 2026 baseline-versus-monitoring comparison in Section~17.2. Figure~\ref{fig:tf-envelope-width}
shows the distribution of this widening across Monte Carlo replications.

\begin{figure}[t]
\centering
\includegraphics[width=0.75\textwidth]{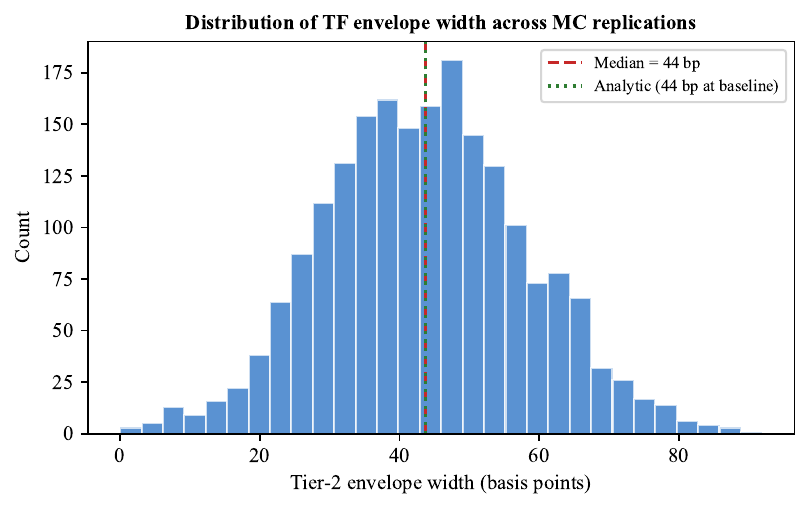}
\caption{Distribution of Tier~2 transition-feasibility envelope width across Monte Carlo
replications. The mean and median widening are both approximately 43.7 basis points,
closely matching the analytic 44-basis-point widening implied by the March 2026 baseline
and monitoring debt concepts.}
\label{fig:tf-envelope-width}
\end{figure}

Table~\ref{tab:mc-tf-blocklength} shows that the TF-side classification is likewise stable
across $\ell_n\in\{4,6,8\}$, with false feasibility and false infeasibility both equal to
zero in the reported experiment and only minor variation in the marginal classification
rate. As with the premium-emergence exercise, this numerical evidence should be read as a
conservative operational illustration of the inferential layer, not as a literal full-score-set
implementation of the theorem.

\begin{table}[htbp]
\centering
\caption{TF Block-Length Sensitivity (Tier 2, $\bar\rho=0.5\%$, Terminal Evaluation)}
\label{tab:mc-tf-blocklength}
\small
\begin{tabular}{c r r r r}
\toprule
$\ell_n$ & False feas.\ (\%) & False infeas.\ (\%) & Coverage (\%) & Marginal (\%) \\
\midrule
4 & 0.0 & 0.0 & 100.0 & 24.9 \\
6 & 0.0 & 0.0 & 100.0 & 25.8 \\
8 & 0.0 & 0.0 & 100.0 & 25.7 \\
\bottomrule
\end{tabular}
\end{table}

Taken together, the two simulation exercises support a common conclusion. The proposed
outer procedures are not designed to maximize point decisiveness. Their purpose is to
reduce false precision when either the premium-emergence boundary or the transition margin
is observed through structured ambiguity. The price of that conservatism is a nontrivial
set-valued region, but the reward is the near-elimination of false interior or false-feasible
declarations in the reported experiments.

\subsection{Scope and simulation caveats}
\label{sec:inference-scope}

This section does not claim unique point identification of the latent premium-emergence boundary or of the transition-feasibility margin. Its claim is narrower. Given the observables-centered discipline of JFR-rg and the implementation-layer ambiguities already recognized elsewhere in the paper, the two relevant empirical tasks are naturally set-valued: one concerns the outer location of the premium-emergence boundary, and the other concerns the outer location of the transition-feasibility margin. The inferential procedures proposed here are designed to match that structure conservatively rather than to replace the economic logic of \Cref{sec:transition,sec:closure}.

Three caveats apply to the simulation evidence reported above. First, both Monte Carlo exercises implement simplified operational versions of the proposed outer-inference procedure; they are not literal implementations of the theorem-level envelope-and-subsampling protocol described in \Cref{thm:pe-outer,thm:tf-outer}. Second, the coverage criterion used is regime-label coverage---whether the true regime is contained in the reported classification set---rather than the score-set coverage $\Prob(\mathcal{B}_t^{(k)} \subseteq \widehat{\mathcal{B}}_{t,1-\alpha}^{(k)})$ stated in the theorems; the latter requires the full envelope implementation. Third, the structured ambiguity in the measurement sets is restricted to a minimal nontrivial version ($\theta_t$-noise and $G$-specification for PE; debt concept for TF); the full admissible families described in \Cref{def:mpe,def:mtf} are broader.

\section{What Standard Models Leave Exogenous in the Japanese Case: 2012--2024}
\label{sec:mainstream-comparison}

The preceding sections have developed the JFR-rg architecture in abstract
form.  This section grounds the framework by examining a concrete
twelve-year episode---Japan from the launch of Quantitative and Qualitative
Easing (QQE) in April 2013 through the monetary-policy normalization that
began in 2024---and asking two questions: what did standard debt-sustainability
models predict for this period, and what does JFR-rg explain that those
models do not?

The purpose is not to declare standard models wrong.  It is to identify the
specific empirical regularities that fall outside their explanatory reach
and to show that these regularities are precisely the phenomena that the
JFR-rg regime variables---$\eps_t$, $\phic$, $\psi_t$, and now
$\theta_t$ and $\rho_t$---are designed to organize.

\subsection{The empirical puzzle}

Between 2012 and 2024, Japan's general-government gross debt rose from
approximately 220\% to over 240\% of GDP.  Throughout this period, the
10-year JGB yield remained below 1\% until late 2022, and the sovereign
spread over the overnight policy rate was effectively controlled by
yield-curve control (YCC, September 2016--October 2023).  No fiscal crisis
materialized.  No sovereign downgrade triggered a destabilizing sell-off.
The domestic absorption share $\phic$ remained above 85\%, and at no point
did the sovereign risk premium $\rho_t$ become visibly positive in JGB
pricing.

This outcome is difficult to reconcile with three prominent strands of the
mainstream literature.

\subsection{Three mainstream predictions and their outcomes}

\paragraph{Prediction 1: Fiscal crisis within a decade.}
Hoshi and Ito (2014) projected that Japan's sovereign debt could not
continue to increase without a crisis, estimating that the domestic
private-sector capacity to absorb further issuance would be exhausted
within approximately ten years.  Their analysis was grounded in a careful
assessment of household financial assets relative to outstanding
government liabilities.  The crisis did not materialize within the
projected window.

The JFR-rg diagnosis is that Hoshi and Ito correctly identified the
\emph{stock} constraint on domestic absorption but did not model the
\emph{flow} mechanism through which QQE altered the constraint.  In
JFR-rg terms, the BoJ's asset purchases directly sustained $\theta_t$
(the hard captive core) by transferring sovereign bonds from the
contestable margin to the institutional core, thereby postponing the
date at which $\varphi_t^d(0)$ would fall below $\varphi_t^{\mathrm{req}}$.
The Hoshi-Ito prediction implicitly assumed $\gamma^{\theta}_t = 0$
(no active policy maintenance); QQE set $\gamma^{\theta}_t > 0$ for a
decade, extending the SC1 window well beyond the projected horizon.

\paragraph{Prediction 2: $r < g$ as a sufficient condition.}
Blanchard (2019) argued that when the safe interest rate is below
the growth rate, the fiscal cost of debt is lower than conventionally
assumed, and debt roll-overs may be feasible even at elevated debt
levels.  This framework provides a partial explanation for Japan's
stability: throughout 2013--2024, $r^n_t - g^n_t$ was indeed negative,
ranging from approximately $-0.5\%$ to $-3.0\%$ depending on the
maturity and growth measure used.

However, the Blanchard framework does not explain \emph{why} $r^n_t$ remained
so low in Japan specifically.  In a standard open-economy setting, a
sovereign with debt at 240\% of GDP should face a positive risk premium
that partially offsets the $r < g$ advantage.  The fact that Japan did not
face such a premium is the phenomenon that requires explanation, not a
primitive that can be taken as given.

JFR-rg provides the missing mechanism: the combination of a captive
domestic absorption structure ($\varphi_t \geq \bar{\varphi}$), active
repression ($\eps_t > 0$ under QQE), and full institutional control
($\psi_t \approx 0.97$ in the inherited baseline, about $0.98$ in the 2025 observed proxy layer) together suppressed $\rho_t$ to zero.  The
$r < g$ condition was not exogenous good fortune; it was an endogenous
consequence of the JFR-rg regime operating under SC1.  \Cref{sec:closure}
formalizes this: $\rho_t = 0$ is the equilibrium outcome when
$\varphi_t^d(0) > \varphi_t^{\mathrm{req}}$, which is precisely the
condition that Japan's institutional structure sustained throughout the
QQE period.

\paragraph{Prediction 3: $r < g$ is not sufficient for sustainability.}
Two complementary refinements of the low-rate argument establish that
$r < g$ alone does not guarantee fiscal sustainability.  Mehrotra and
Sergeyev~(2021) show that when the primary surplus is bounded, a
state-dependent threshold level of public debt still determines
sustainability even under $r < g$.  Reis~\cite{Reis2021} develops a
distinct wedge: with $g < m$ (where $m$ denotes the marginal product
of capital), the present-value budget constraint binds through bubble
premia on safe debt rather than through $r - g$ directly.

Both refinements are useful but remain silent on the institutional
mechanism that determines $r$.  In particular, neither distinguishes
between an economy where $r$ is low because of global safe-asset scarcity
(as in the United States or Germany) and one where $r$ is low because of
domestic financial repression (as in Japan).  The two cases have
identical implications for the $r - g$ arithmetic \emph{today} but
radically different implications for what happens when the institutional
configuration changes---as Japan discovered when the BoJ began normalizing
in 2024 and $\eps_t$ reversed sign.

JFR-rg makes this distinction operational through the decomposition
of the effective interest rate into a repression-consistent component
($r^{\mathrm{rep}}_t$) and a sovereign premium ($\rho_t$).  As long as
SC1 holds, $\rho_t = 0$ and the observed $r$ equals $r^{\mathrm{rep}}_t$.
When SC1 weakens, $\rho_t$ emerges endogenously (\Cref{sec:closure}), and the
effective $r$ rises discretely---a transition that the $r < g$ framework
treats as an exogenous parameter shift but that JFR-rg predicts as a
regime consequence.

\subsection{The JFR-rg reading of 2012--2024}

Mapping the twelve-year episode onto JFR-rg variables produces a coherent
three-phase narrative.

\paragraph{Phase 1: Regime activation (2013--2016).}
The launch of QQE in April 2013, followed by its expansion in October 2014
and the introduction of negative interest rates in January 2016, activated
the full JFR-rg configuration. The repression channel turned strongly
positive as BoJ purchases pinned yields below inflation ($\eps_t > 0$),
domestic absorption strengthened as the BoJ absorbed net issuance through the
secondary market (supporting both $\phic$ and $\theta_t$), and institutional
control remained effectively complete ($\psi_t \approx 0.97$ in the inherited
baseline, about $0.98$ in the 2025 observed proxy layer). The stability
condition~\eqref{eq:stability} was therefore satisfied with substantial
margin.

In \Cref{sec:closure} terms, Phase~1 corresponds to case~(a) of the
complementarity condition: $\varphi_t^d(0) \gg \varphi_t^{\mathrm{req}}$,
$\rho_t = 0$, and the safe equilibrium is uniquely selected with a large
buffer.

\paragraph{Phase 2: Maintained repression with slowed erosion (2016--2022).}
The introduction of yield-curve control in September 2016 converted the
initial activation phase into an explicitly administered maintenance regime:
the BoJ committed to keeping the 10-year JGB yield near zero, later widening
the tolerance band to $\pm 0.5\%$. During this phase, $\eps_t$ remained
positive and $\theta_t$ stayed high at roughly 0.65, reflecting the large BoJ
presence in the JGB market. The most careful reading of E5 is therefore not
that the Demographic-$\phi$ Clock was literally paused throughout the whole
period, but that policy maintenance materially slowed structural erosion.
Short-window monitoring can display $\kappa \approx 0$ in
\cref{app:window-sensitivity}, while the full-sample structural-break
monitoring estimate in the main text shows that the post-QQE holder-share trend
remains negative once the entire 1997--2025 path is used. In that sense,
$\gamma^{\theta}_t$ partially offset $\kappa^{\theta}_t$ during the QQE/YCC
regime without fully eliminating structural erosion.

In \Cref{sec:closure} terms, Phase~2 still belongs to case~(a):
$\varphi_t^d(0)$ remained comfortably above
$\varphi_t^{\mathrm{req}}$, $\rho_t = 0$ continued to hold, and the main
change was not the emergence of a premium but the gradual narrowing of the
buffer through time.

\paragraph{Phase 3: Partial regime transition with a still-slack interior (2022--2024).}
The global monetary-tightening cycle, led by the Federal Reserve's rate hikes
beginning in March 2022, raised $z_t$ for Japanese institutional investors.
At the same time, the BoJ moved from strict YCC maintenance to progressive
normalization: the tolerance band was widened in December 2022, widened again
in July 2023, effectively loosened in October 2023, and formally exited in
March 2024 alongside the end of negative interest rates.

In JFR-rg terms, this phase combines a decline in $\gamma^{\theta}_t$
(policy maintenance weakening) with a rise in $z_t$ (the outside option
improving). The result was an $\eps$-reversal that, by mid-2025, was visible in the observed data:
$\eps_t \approx -0.81\%$, so the repression dividend disappeared. Yet
$\theta_t$ did not collapse. The hard captive core---regulatory bank
holdings, pension-fund ALM requirements, and other mandate-like holders---
remained structurally intact even without QQE support. In
\Cref{sec:closure} terms, $\varphi_t^d(0)$ declined but remained above
$\varphi_t^{\mathrm{req}}$, so $\rho_t = 0$ still held, consistent with the
absence of a visible sovereign premium in JGB pricing as of mid-2025.

This configuration---$\eps_t < 0$ but $\rho_t = 0$---is precisely the
``still-slack'' interior regime identified in the calibration and stress
scenarios of \Cref{sec:closure}. The repression dividend has vanished, but
the captive structure remains strong enough that no premium emerges.
The relevant risk is therefore no longer immediate regime collapse but transition
across the interior boundary: further erosion of $\theta_t$ or a further rise
in $z_t$ could move the system from case~(a) to case~(c), triggering a
positive $\rho_t$ for the first time in the post-QQE era.

\begin{remark}[Retrospective discriminating implication]
\label{rem:retrospective}
The YCC exit of March 2024 and the subsequent $\eps$-reversal provide a useful
discriminating implication in a retrospective sense.
A Hoshi--Ito-style model, which treats the capacity for domestic absorption as
a \emph{stock} constraint, would predict that removing the yield-suppression
mechanism should trigger an immediate premium as the captive-absorption shortfall
emerges.
JFR-rg predicts otherwise: the hard captive core $\theta_t$---comprising regulatory
bank holdings, pension-fund ALM requirements, and FILP-channeled savings---is
\emph{structurally distinct} from the QQE-supported portion of domestic demand,
and SC1 can remain satisfied even after the QQE component is withdrawn, as long
as $\theta_t > \varphi_t^{\mathrm{req}}$.
The observed persistence of $\rho_t = 0$ through mid-2025 is consistent with this
prediction and inconsistent with a pure stock-absorption model.
This is offered as a \emph{retrospective discriminating implication}---a phenomenon
that the JFR-rg architecture organizes more sharply than the alternatives---rather
than as an ex ante out-of-sample forecast.
\end{remark}

\subsection{The explanatory gap}

The comparison can be summarized as follows.

\begin{table}[htbp]
\centering
\caption{Mainstream vs.\ JFR-rg: Explanatory Reach for Japan 2012--2024}
\label{tab:mainstream-comparison}
\small
\begin{tabular}{>{\raggedright\arraybackslash}p{4.5cm}
                >{\centering\arraybackslash}p{2.0cm}
                >{\centering\arraybackslash}p{2.0cm}
                >{\centering\arraybackslash}p{2.0cm}
                >{\centering\arraybackslash}p{2.0cm}}
\toprule
Empirical regularity &
    Hoshi--Ito &
    Blanchard &
    Mehrotra--Sergeyev &
    JFR-rg \\
\midrule
No crisis at 240\% debt/GDP &
    $\times$ &
    $\checkmark$ &
    $\checkmark$ &
    $\checkmark$ \\[4pt]
Why $r$ remained so low &
    --- &
    Assumed &
    Assumed &
    Explained (SC1 $+$ $\eps > 0$) \\[4pt]
Role of BoJ purchases in stability &
    Partial &
    --- &
    --- &
    $\theta_t$ maintenance \\[4pt]
Why short windows can look flat even when full-sample erosion remains positive &
    --- &
    --- &
    --- &
    Policy maintenance slows erosion; local flatness need not imply zero structural trend \\[4pt]
$\eps$-reversal and its consequences &
    --- &
    --- &
    --- &
    Predicted (failure mode) \\[4pt]
Conditions for future $\rho_t > 0$ &
    --- &
    --- &
    --- &
    \Cref{sec:closure} MCP \\[4pt]
Why Italy/Greece differ at similar debt &
    --- &
    --- &
    --- &
    $\psi_t$ ordering (E6) \\
\bottomrule
\end{tabular}

\vspace{0.5em}
\begin{minipage}{0.92\textwidth}
\footnotesize
\textit{Notes.}
$\checkmark$ = explained within the framework;
$\times$ = predicted outcome inconsistent with observation;
--- = not addressed by the framework;
``Assumed'' = the framework takes the phenomenon as an input rather than
explaining it;
``Partial'' = the framework addresses a related mechanism but does not
provide a complete account.
Different frameworks externalize different objects for valid analytical reasons;
the purpose of this table is to show the additional organizing value of JFR-rg
in the Japanese case, not to imply a defect in frameworks designed for other
questions.
\end{minipage}
\end{table}

The table identifies the specific explanatory contribution of JFR-rg:
it accounts for the institutional \emph{mechanism} through which low
sovereign rates were sustained, why that mechanism has begun to weaken,
and under what conditions it will fail.  Standard models either take
the low rate as given (Blanchard, Mehrotra--Sergeyev) or predict a crisis
that did not occur (Hoshi--Ito).  JFR-rg does not claim superiority in
all dimensions---it does not, for instance, provide a welfare analysis or
a political-economy model of the BoJ's decision-making---but it fills a
specific gap that the mainstream literature has left open.

\begin{remark}[Complementarity, not replacement]
\label{rem:complementarity}
The comparison above is not intended to dismiss the mainstream frameworks.
The Blanchard $r < g$ insight correctly identifies the arithmetic
conditions under which high debt is sustainable; the Mehrotra--Sergeyev
refinement correctly bounds the fiscal space even under $r < g$ (as does
the related $g < m$ wedge of Reis~\cite{Reis2021}); and the Hoshi--Ito
analysis correctly identifies the long-run stock constraint.
JFR-rg complements these contributions by providing the
\emph{institutional layer} that determines whether the $r < g$ condition
holds, how long it will hold, and what happens when it ceases to hold.
In the limiting case $\psi_t \to 0$ (loss of institutional control),
JFR-rg collapses to the mainstream framework
(\cref{app:limits,prop:corridor-psi}), confirming that the standard
analysis is nested as a special case rather than contradicted.
\end{remark}

\section{Conclusion}
\label{sec:conclusion}

Part~II does not supersede Part~I~\cite{Wakimoto2026PartI}. It develops the logical extension of that framework within the same observables-centered, regime-conditional architecture. The original framework clarified how a high-debt, low-growth economy could remain temporarily stable under a specific combination of financial repression, captive domestic absorption, and bounded exchange-rate depreciation. The present paper shows that, once such a regime is admitted, its principal dynamic implications can be stated more fully and more systematically than in Part~I alone.

The claim of logical completion should be read in this architectural sense. It is not a claim that all empirical parameters have been estimated with final precision, that welfare or political-economy questions have been exhausted, or that a full general-equilibrium microfoundation has been derived. Rather, the paper establishes that the principal dynamic implications internal to the JFR-rg framework can now be stated in closed form, and that the most natural excluded generalizations---bounded stochastic perturbations and endogenous fiscal responses---do not overturn the regime logic on which those implications depend.

Three results organize this conclusion. First, the six extensions E1--E6 map the principal dynamic consequences of Part~I into explicit propositions concerning path dependence, bounded reinvestment gains, debt-reduction asymmetries, multi-country threshold interactions, finite institutional horizons, and institutional control rights. Second, the robustness results show that leaving stochastic and fiscal-response generalizations outside the main architecture does not render the framework dynamically incomplete. Third, the Minimal Equilibrium Closure provides an endogenous premium relation sufficient to close the transition problem within the model's block-recursive structure. This closure should be understood as minimal and sufficient, not as a substitute for a full household-portfolio or welfare-theoretic general equilibrium.

Because the completed architecture defines two latent regime boundaries---the premium-emergence boundary of \cref{sec:closure} and the transition-feasibility margin of \cref{sec:transition}---the paper also formulates the corresponding inferential problem. The proposed outer-inference procedures are conservative by design: they construct set-valued regime classifications rather than forcing point precision when the available observables remain compatible with multiple nearby regime readings. Monte Carlo evidence confirms that the tiered measurement-set approach eliminates false-safety and false-feasibility declarations in the reported experiments, at the cost of a nontrivial warning region. This trade-off is intentional. The purpose of the inferential layer is not to maximize decisiveness but to prevent false interior or false-feasible conclusions when the economy is near the complementarity boundary or when debt-concept ambiguity is quantitatively relevant.

The endogenous-premium framework yields several implications that the exogenous-premium version could not state as sharply. The transition problem becomes two-dimensional, requiring the joint evaluation of the growth path and the institutional-erosion path. Early investment acquires endogenous urgency because investment undertaken while the captive system remains intact is more valuable than the same investment undertaken after premium emergence. And the most immediate transition risk may arise not only from slow-moving domestic erosion, but also from external yield conditions that raise the outside-option spread and accelerate premium formation.

The comparison with the mainstream literature also becomes clearer in light of this extension. JFR-rg is not offered as a wholesale replacement for standard debt-sustainability analysis. Blanchard~(2019), Hoshi--Ito~(2014), and Mehrotra--Sergeyev~(2021) each identify important dimensions of debt dynamics. What JFR-rg adds is the institutional layer: the conditions under which low sovereign rates are sustained, the mechanisms through which that support can weaken, and the boundary at which the analysis collapses back toward the mainstream limiting case. Its contribution is therefore complementary and conditional rather than universal.

Observed-value monitoring as of mid-2025 reveals that part of the regime transition anticipated by the framework may already be underway: the 10-year JGB yield has risen above core CPI, reversing the sign of $\eps_t$ and eliminating the repression dividend. In the two-layer model of \cref{sec:closure}, this $\eps$-reversal shuts down the policy-maintenance channel ($\gamma_t^{\theta} \to 0$), but the hard captive core remains structurally intact, so $\rho_t = 0$ continues to hold---consistent with the absence of a visible sovereign premium in JGB pricing. The system is therefore in the ``still-slack'' interior regime. At the same time, the observed monitoring layer reports $\kappa_{\mathrm{post\mbox{-}QQE}} = 0.188\%$/yr and a linear clock of about $43.6$ years, so the E5 horizon looks materially less urgent than the conservative baseline clock taken over from Part~I. The timing problem nevertheless remains live because, in Section~\ref{sec:closure}, transition risk is no longer governed by the $\phi$-clock alone: the relevant buffer is the joint margin in $\theta_t$, $z_t$, and the premium-emergence condition.

For that reason, the empirical roadmap developed in Part~II matters as much as the formal architecture itself. Each extension is paired with observable implications, failure modes, and a stated interpretation of rejection. If future evidence fails to support a given extension, the result is not that the entire framework collapses wholesale, but that the relevant claim should be read more narrowly. That is the sense in which Part~II aims to strengthen JFR-rg: not by insulating it from falsification, but by sharpening its scope, clarifying its architecture, and making its strongest claims easier to evaluate against public evidence.

\clearpage

\section*{Acknowledgments}
\addcontentsline{toc}{section}{Acknowledgments}

The author acknowledges the use of large language models as assistive
tools during the preparation of this manuscript. More specifically:

{\sloppy\raggedright
\begin{itemize}
  \item \textbf{Claude Opus 4.6 \& 4.7} (Anthropic) --- used frequently for
    iterative drafting of theoretical sections, refinement of
    mathematical exposition, \LaTeX{} phrasing, and structural revision
    of the manuscript.

  \item \textbf{GPT-5.4 Thinking} (OpenAI) --- used frequently for
    editorial review of exposition, consistency checks, logical
    stress-testing, reference and wording cross-checks, and
    pre-submission feedback.

  \item \textbf{Gemini 3.1 Pro \& 3 Flash} (Google) --- used for rapid alternative
    formulation checks, exploratory verification, and supplementary
    drafting suggestions.

  \item \textbf{Grok 4.20 Expert} (xAI) --- used for counterexample-style
    questioning, adversarial stress-testing of propositions, and
    supplementary proof-sketch discussion.
\end{itemize}
}

All such systems were used solely as assistive tools for drafting,
editing, exploration, and checking. Among these systems, the most
substantial use in preparing Part II was of Claude Opus 4.6 and
GPT-5.4 Thinking. All theoretical models, propositions, mathematical
proofs, empirical calibrations, interpretations, and conclusions were
formulated, checked, and are fully endorsed by the human author, who
bears sole intellectual and academic responsibility for the contents of
this manuscript. Large language models are not listed as authors and
bear no academic responsibility for the work.

\newpage
\addcontentsline{toc}{section}{References}
{\sloppy\raggedright

}

\newpage
\appendix

\section{Proofs for the Inferential Layer}
\label{app:inferential-proofs}

The inferential layer of \Cref{sec:regime-inference} defines two distinct
empirical objects: the premium-emergence boundary score $B_t^{PE}$ and the
transition-feasibility margin score $B_t^{TF}$.  Because the two scores are
driven by different state variables and exposed to different measurement
ambiguities, their proofs are developed separately.  They share, however, a
common proof architecture:
\begin{enumerate}[label=(\roman*)]
\item characterize the tier-defined target set through its outer envelope;
\item decompose the envelope process into a smooth local drift and a weakly
dependent remainder;
\item apply dependence-robust inference to the detrended remainder;
\item combine the three steps to obtain outer coverage for the score set.
\end{enumerate}

Throughout this appendix, the following notation is used.  $\mathcal T_n(t_0)$
denotes a local inference window of width $h_n$ centered at~$t_0$, with
$h_n \to \infty$ and $h_n/n \to 0$ as $n \to \infty$.  The block length for
subsampling is $\ell_n$ with $\ell_n \to \infty$ and $\ell_n/h_n \to 0$.
All mixing conditions are stated in terms of the strong mixing (or
$\alpha$-mixing) coefficients of the relevant remainder processes.

\subsection{Outer inference for the premium-emergence boundary}
\label{app:pe-proof}

\subsubsection*{Step 1: Target set and outer envelope}

For a fixed tier $k \in \{1,2,3\}$, let $\mathcal M_{t,PE}^{(k)}$ denote the
ex ante admissible family of observational and specification choices for the
premium-emergence score (\Cref{def:mpe}).  The target score set is
\[
\mathcal B_{t,PE}^{(k)}
:=
\{B_{t}^{PE}(m) : m \in \mathcal M_{t,PE}^{(k)}\}.
\]
This is the decision-relevant object of inference.  The inferential goal is
not point identification of a unique latent boundary score, but valid outer
coverage of the tier-defined score set.

\begin{lemma}[Outer characterization of the tier-$k$ premium-emergence score set]
\label{lem:pe-envelope}
Define
\[
\underline B_{t,PE}^{(k)}
:= \inf_{m \in \mathcal M_{t,PE}^{(k)}} B_t^{PE}(m),
\qquad
\overline B_{t,PE}^{(k)}
:= \sup_{m \in \mathcal M_{t,PE}^{(k)}} B_t^{PE}(m).
\]
Then for each $t$ and tier $k$,
\[
\mathcal B_{t,PE}^{(k)}
\subseteq
[\underline B_{t,PE}^{(k)},\, \overline B_{t,PE}^{(k)}].
\]
Moreover, the sign of the interval determines conservative classification:
if $\underline B_{t,PE}^{(k)} > 0$, the economy is robustly interior at
tier~$k$; if $\overline B_{t,PE}^{(k)} < 0$, the economy is robustly
premium-emergent at tier~$k$; otherwise the economy is boundary-near at
tier~$k$.
\end{lemma}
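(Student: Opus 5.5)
The containment is immediate from the definitions of infimum and supremum. The classification claim then follows by comparing the sign of the envelope endpoints with the sign of every element of $\mathcal B_{t,PE}^{(k)}$.

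\textbf{Step 1: containment.} Fix $t$ and $k$, and take an arbitrary element $b\in\mathcal B_{t,PE}^{(k)}$, so $b=B_t^{PE}(m)$ for some $m\in\mathcal M_{t,PE}^{(k)}$. By the definition of infimum and supremum over the admissible family,
\[
\underline B_{t,PE}^{(k)}\le B_t^{PE}(m)\le \overline B_{t,PE}^{(k)}.
\]
This gives the inclusion. To make sure the interval is a genuine bounded interval rather than an extended-real one, I will use the standing compactness of $\mathcal M_{t,PE}^{(k)}$ and the continuity of $m\mapsto B_t^{PE}(m)$, both maintained in \Cref{thm:pe-outer}. By Weierstrass, the infimum and supremum are finite and attained. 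Boundedness also holds directly: under \Cref{lem:demand-v2}, $\varphi_t^d(0)\in[0,1]$ and $\varphi_t^{\mathrm{req}}\in[0,1]$, so $B_t^{PE}(m)\in[-1,1]$ for every admissible $m$.

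\textbf{Step 2: classification.} If $\underline B_{t,PE}^{(k)}>0$, then Step 1 gives $B_t^{PE}(m)>0$ for every admissible reading. By \Cref{def:bpe} and case~(a) of \Cref{prop:mcp-v2}, $\varphi_t^d(0)>\varphi_t^{\mathrm{req}}$ under every reading, so $\rho_t=0$ uniformly across tier-$k$ ambiguity. This is exactly the ``robustly interior'' label. Symmetrically, if $\overline B_{t,PE}^{(k)}<0$, every admissible reading falls in case~(c) or~(d) of \Cref{prop:mcp-v2}, so a positive premium is required under all of them; this is ``robustly premium-emergent.'' In the remaining case the interval contains $0$, and I will call it ``boundary-near.'' I will not claim that some admissible $m$ attains $B=0$ exactly. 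The label means that the outer envelope does not exclude the boundary, which matches the definition given after \Cref{def:mpe}.

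\textbf{Main obstacle.} The delicate point is interpretive rather than technical. The inclusion is only an outer characterization, and it can be strict if $\mathcal M_{t,PE}^{(k)}$ is not path-connected, because the score set may then have gaps. I will therefore state explicitly that the conservative labels rely only on the endpoints. Positivity of the lower endpoint is sufficient for robust interior status, so a strict inclusion can only enlarge the boundary-near region. It can never produce a false interior or false premium-emergent declaration. I will also note the nesting: since $\mathcal M^{(1)}_{t,PE}\subseteq\mathcal M^{(2)}_{t,PE}\subseteq\mathcal M^{(3)}_{t,PE}$, the envelopes widen monotonically across tiers, so a robust label at a higher tier implies the same label at every lower tier.
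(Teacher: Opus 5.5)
Your proof is correct and matches the paper's: containment follows directly from the definitions of infimum and supremum, and the classification comes from reading case~(a) (respectively cases~(c)/(d)) of \Cref{prop:mcp-v2} off the sign of the endpoints, with the remaining case being exactly $0\in[\underline B_{t,PE}^{(k)},\overline B_{t,PE}^{(k)}]$. Your additions are sound and go beyond what the paper writes: finiteness via the direct bound $B_t^{PE}(m)\in[-1,1]$, the point that a strict inclusion can only enlarge the boundary-near region, and monotonicity of the labels across nested tiers. The direct bound also makes the Weierstrass step unnecessary, and with it the need to read condition~1 of \Cref{thm:pe-outer} as continuity in $m$.
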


\begin{proof}
The inclusion $\mathcal B_{t,PE}^{(k)} \subseteq
[\underline B_{t,PE}^{(k)}, \overline B_{t,PE}^{(k)}]$ holds by
definition of the infimum and supremum over the admissible family.
Classification follows directly: if $\underline B_{t,PE}^{(k)} > 0$, then
$B_t^{PE}(m) > 0$ for every $m \in \mathcal M_{t,PE}^{(k)}$, so the
complementarity condition of \Cref{prop:mcp-v2} places the
economy in case~(a) under every admissible reading.  The stress and
boundary-near cases are symmetric.
\end{proof}

\subsubsection*{Step 1a: Local sensitivity under the uniform baseline}

\begin{lemma}[Local sensitivity of the premium-emergence score]
\label{lem:pe-sensitivity}
Under the uniform-margin specification $G = \mathrm{Uniform}[0, \bar c_m]$,
\[
B_{t,PE}
=
\theta_t + (1-\theta_t)\!\left[1 - \frac{z_t}{\psi_t \bar c_m}\right]
- \varphi_t^{\mathrm{req}}.
\]
The partial derivatives are
\[
\frac{\partial B_{t,PE}}{\partial \theta_t}
= \frac{z_t}{\psi_t \bar c_m},
\qquad
\frac{\partial B_{t,PE}}{\partial \psi_t}
= (1-\theta_t)\frac{z_t}{\psi_t^2 \bar c_m},
\qquad
\frac{\partial B_{t,PE}}{\partial z_t}
= -\frac{1-\theta_t}{\psi_t \bar c_m}.
\]
\end{lemma}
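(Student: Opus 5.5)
The plan is to obtain both the closed form for $B_{t,PE}$ and its three partial derivatives by direct substitution and elementary differentiation. No limiting or inferential argument is needed.

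\emph{Closed form.} Start from \Cref{def:bpe}, which gives
\[
B_t^{PE}=\varphi_t^d(0;\theta_t,\psi_t,z_t,G)-\varphi_t^{\mathrm{req}}
\quad\text{with}\quad
\varphi_t^d(0)=\theta_t+(1-\theta_t)\bigl[1-G(z_t/\psi_t)\bigr].
\]
Under $G=\mathrm{Uniform}[0,\bar c_m]$ we have $G(c)=c/\bar c_m$ on the support. Substituting $c=z_t/\psi_t$ yields the stated expression.

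This substitution requires the argument to lie in the support, i.e.\ $0\le z_t/\psi_t\le \bar c_m$. This is the same range restriction under which \cref{eq:phi-demand-uniform} is stated in \Cref{rem:uniform-v2}, evaluated at $\rho_t=0$. I would state it explicitly as the maintained domain of the lemma, and check that it holds at the baseline of \cref{tab:baseline-v2}: $0.02/(0.97\cdot 0.06)\approx 0.344<1$.

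\emph{Derivatives.} One structural point matters here. Under \Cref{def:phi-req}, $\varphi_t^{\mathrm{req}}$ depends on $(b_{t-1},\psi_t,z_t)$. In this lemma, however, the sensitivities are meant as partials with $\varphi_t^{\mathrm{req}}$ held fixed at its calibrated value, exactly as in the baseline calibration where it is set to $0.85$. I would make that convention explicit, since otherwise the $\psi_t$ and $z_t$ derivatives would pick up the extra terms $-\partial\varphi^{\mathrm{req}}/\partial\psi_t\ge0$ and $-\partial\varphi^{\mathrm{req}}/\partial z_t$.

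With that convention, rewrite
\[
B_{t,PE}=1-(1-\theta_t)\frac{z_t}{\psi_t\bar c_m}-\varphi_t^{\mathrm{req}}.
\]
Differentiating in $\theta_t$ gives $z_t/(\psi_t\bar c_m)$. Differentiating $-(1-\theta_t)z_t\psi_t^{-1}/\bar c_m$ in $\psi_t$ gives $(1-\theta_t)z_t/(\psi_t^2\bar c_m)$. Differentiating it in $z_t$ gives $-(1-\theta_t)/(\psi_t\bar c_m)$.

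The only real obstacle is this bookkeeping of what is held fixed and the support restriction, not the calculus itself. As a consistency check, I would evaluate the three derivatives at $(0.65,0.97,0.02,0.06)$ and confirm they reproduce the values $0.344$, $0.124$ and $-6.01$ quoted in \Cref{sec:pe-inference}.
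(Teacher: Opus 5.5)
Your proposal is correct and follows the paper's proof exactly: substitute $G(c)=c/\bar c_m$ into $\varphi_t^d(0)=\theta_t+(1-\theta_t)[1-G(z_t/\psi_t)]$ and differentiate term by term. The paper leaves two conventions implicit that you state explicitly, and they are worthwhile additions: the support restriction $z_t/\psi_t\le\bar c_m$ (made strict if you want two-sided partials away from the kink of the uniform CDF), and holding $\varphi_t^{\mathrm{req}}$ fixed even though \Cref{def:phi-req} lets it depend on $(\psi_t,z_t)$.
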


\begin{proof}
Under the uniform specification,
$G(c) = c/\bar c_m$ for $c \in [0, \bar c_m]$, so
$1 - G(z_t/\psi_t) = 1 - z_t/(\psi_t \bar c_m)$.
Substituting into $\varphi_t^d(0) = \theta_t + (1-\theta_t)[1 - G(z_t/\psi_t)]$
and differentiating with respect to each argument yields the stated expressions.
\end{proof}

\subsubsection*{Step 2: Local drift-plus-remainder decomposition}

The key structural assumption for inference is that the envelope process
admits a local decomposition into a smooth deterministic component and a
weakly dependent stochastic remainder.

\begin{assumption}[Local regularity for the PE envelope]
\label{ass:pe-local}
Fix a local window $\mathcal T_n(t_0)$ of width $h_n$.  Assume:
\begin{enumerate}[label=(\alph*)]
\item The envelope processes admit the decomposition
\[
\underline B_{t,PE}^{(k)} = \underline\mu_{t,PE}^{(k)} + \underline u_{t,PE}^{(k)},
\qquad
\overline B_{t,PE}^{(k)} = \overline\mu_{t,PE}^{(k)} + \overline u_{t,PE}^{(k)},
\]
where $\underline\mu_{t,PE}^{(k)}$ and $\overline\mu_{t,PE}^{(k)}$ are
piecewise-Lipschitz drift functions on $\mathcal T_n(t_0)$.

\item Local-linear detrending estimators $\widehat{\underline\mu}_{t,PE}^{(k)}$
and $\widehat{\overline\mu}_{t,PE}^{(k)}$ satisfy
\[
\sup_{t \in \mathcal T_n(t_0)}
\bigl|\widehat{\underline\mu}_{t,PE}^{(k)} - \underline\mu_{t,PE}^{(k)}\bigr|
= o_p(1),
\qquad
\sup_{t \in \mathcal T_n(t_0)}
\bigl|\widehat{\overline\mu}_{t,PE}^{(k)} - \overline\mu_{t,PE}^{(k)}\bigr|
= o_p(1).
\]

\item The detrended remainders
$\widehat{\underline u}_{t,PE}^{(k)}
:= \underline B_{t,PE}^{(k)} - \widehat{\underline\mu}_{t,PE}^{(k)}$
and
$\widehat{\overline u}_{t,PE}^{(k)}
:= \overline B_{t,PE}^{(k)} - \widehat{\overline\mu}_{t,PE}^{(k)}$
have uniformly bounded $(2+\delta)$ moments for some $\delta > 0$
and are $\alpha$-mixing with
\begin{equation}
\label{eq:mixing-pe}
\sum_{j \geq 1} j^{1/2}\,
\alpha(j)^{\delta/(2+\delta)} < \infty.
\end{equation}
\end{enumerate}
\end{assumption}

\begin{remark}[Justification for the piecewise-Lipschitz drift]
\label{rem:drift-justification}
The drift in the PE envelope is generated by the slow-moving state
variables $\theta_t$ and $\psi_t$, whose laws of motion
(\Cref{def:theta-law}) are piecewise-smooth by construction.
The outside-option spread $z_t$ can exhibit faster variation (as in
the 2022--2024 Fed tightening cycle), but the local-window assumption
$h_n/n \to 0$ ensures that the window is short enough for the drift
to remain approximately linear within it.  Piecewise-Lipschitz
continuity therefore holds whenever the state dynamics do not exhibit
genuine discontinuities (jumps) within the inference window.  The
observed monitoring data for Japan show no such discontinuities in
$\theta_t$ or $\psi_t$ over the available sample.
\end{remark}

\subsubsection*{Step 3: Dependence-robust inference on the detrended envelope}

\begin{lemma}[Coverage for the detrended premium-emergence envelope]
\label{lem:pe-subsampling}
Under \Cref{ass:pe-local}, with block length satisfying $\ell_n \to \infty$
and $\ell_n / h_n \to 0$, there exist critical values
$c_{n,\alpha}^{\underline B}$ and $c_{n,\alpha}^{\overline B}$ such that
the confidence band
\[
\bigl[
\underline B_{t,PE}^{(k)} - c_{n,\alpha}^{\underline B},\;
\overline B_{t,PE}^{(k)} + c_{n,\alpha}^{\overline B}
\bigr]
\]
achieves asymptotically valid outer coverage for the envelope process
over the inference window:
\[
\liminf_{n \to \infty}\;
\Prob\!\left(
\underline B_{t,PE}^{(k)}
\in
\bigl[
\underline B_{t,PE}^{(k)} - c_{n,\alpha}^{\underline B},\;
\underline B_{t,PE}^{(k)} + c_{n,\alpha}^{\underline B}
\bigr]
\right)
\geq 1 - \alpha.
\]
An analogous statement holds for $\overline B_{t,PE}^{(k)}$.
\end{lemma}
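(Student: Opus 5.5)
The lower envelope is handled explicitly; the upper envelope is symmetric. I will reduce the coverage statement to a statement about the detrended remainder $\underline u_{t,PE}^{(k)}$ and then invoke a standard block-subsampling limit theorem for $\alpha$-mixing triangular arrays.

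Read literally, the displayed event $\underline B_{t,PE}^{(k)}\in[\underline B_{t,PE}^{(k)}-c,\underline B_{t,PE}^{(k)}+c]$ holds trivially for any $c\ge 0$. The substantive content must therefore be that the band centred at the observed envelope covers the local drift $\underline\mu_{t,PE}^{(k)}$, which is the population object the observed envelope estimates. I will prove this stronger version, and the stated inequality follows a fortiori.

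The first step is to write
\[
\underline B_{t,PE}^{(k)}-\underline\mu_{t,PE}^{(k)}
=\widehat{\underline u}_{t,PE}^{(k)}+\bigl(\widehat{\underline\mu}_{t,PE}^{(k)}-\underline\mu_{t,PE}^{(k)}\bigr).
\]
By \Cref{ass:pe-local}(b), the second term is $o_p(1)$ uniformly on $\mathcal T_n(t_0)$. Hence it suffices to control the distribution of $\widehat{\underline u}_{t,PE}^{(k)}$ up to an asymptotically negligible perturbation.

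The second step is to define the critical value. Form the $h_n-\ell_n+1$ overlapping blocks of length $\ell_n$ of the detrended series $\{\widehat{\underline u}_{s,PE}^{(k)}\}_{s\in\mathcal T_n(t_0)}$. Let $\widehat L_n$ be the empirical distribution of $\max_{s\in\text{block}}|\widehat{\underline u}_{s,PE}^{(k)}|$ across blocks, or equivalently its self-normalized analogue. Then set $c_{n,\alpha}^{\underline B}$ to be the $(1-\alpha)$ quantile of $\widehat L_n$.

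The third step, and the main obstacle, is consistency of $\widehat L_n$ for the sampling law of the block statistic. The data enter only after estimated detrending, so the usual Politis--Romano subsampling theorem must be applied to $\underline u$ and then transferred to $\widehat{\underline u}$.

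For the infeasible series $\underline u$, I would do this as follows. The $(2+\delta)$ moment bound together with the summability condition \eqref{eq:mixing-pe} yields a Rosenthal or Doukhan-type covariance inequality. That inequality shows the variance of the empirical block CDF is $O(\ell_n/h_n)\to 0$. The condition $\ell_n/h_n\to0$ makes the blocks approximately independent replicates, and $\ell_n\to\infty$ gives a nondegenerate limit. Together these give $\sup_x|\widehat L_n^{\mathrm{inf}}(x)-L_n(x)|\to_p0$.

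The uniform $o_p(1)$ detrending error then shifts every block statistic by $o_p(1)$. Provided the limit law $L$ is continuous at its $(1-\alpha)$ quantile, the quantiles are therefore perturbed only by $o_p(1)$. This continuity is the delicate point. Where needed, I would guarantee it by the self-normalized variant, or by a small smoothing of the quantile.

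Finally, the shifted-quantile argument gives
\[
\liminf_n \Prob\bigl(|\underline B_{t,PE}^{(k)}-\underline\mu_{t,PE}^{(k)}|\le c_{n,\alpha}^{\underline B}\bigr)\ge1-\alpha,
\]
and with it the displayed trivial inclusion. The same argument applied to $\overline B_{t,PE}^{(k)}$ gives $c_{n,\alpha}^{\overline B}$.

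Piecewise-Lipschitz drift (\Cref{rem:drift-justification}) enters only through \Cref{ass:pe-local}(b). It needs to be checked only where blocks straddle a kink. There are finitely many such blocks, a fraction $O(\ell_n/h_n)\to0$ of the total, so they do not affect $\widehat L_n$ asymptotically.
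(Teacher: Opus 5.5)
Your proposal is correct but takes a different route from the paper. Your opening observation is also sharper than anything in the paper's proof. Since the displayed band is centred at $\underline B_{t,PE}^{(k)}$ itself, the stated event has probability one for any $c_{n,\alpha}^{\underline B}\ge 0$. Your block-maximum quantile is nonnegative, so the lemma as written needs nothing more.

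The paper never remarks on this. It applies Politis--Romano--Wolf (Theorem~3.5.2) to a root-$h_n$ functional $\tau_n$ (a window mean or self-normalized statistic) of the detrended remainder. It takes the $(1-\alpha)$-quantile of the subsampled law of $\ell_n^{1/2}(\tau_{\ell_n,i}-\tau_n)$, concludes $\liminf_n\Prob(|\tau_n-\tau|\le c_{n,\alpha}^{\underline B}/h_n^{1/2})\ge 1-\alpha$, and then ``re-centres''.

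The two substantive readings buy different things. The paper's scaled-mean statistic gives vanishing critical values, which the proof of \Cref{thm:pe-outer-app} later uses ($c_{n,\alpha}^{\underline B}\to 0$ to get one-signed bands away from the boundary). But it controls only the window-level parameter $\tau$. A band of vanishing radius centred at the single observation $\underline B_{t,PE}^{(k)}$ cannot cover $\underline\mu_{t,PE}^{(k)}$, because $|\underline B_{t,PE}^{(k)}-\underline\mu_{t,PE}^{(k)}|=|\underline u_{t,PE}^{(k)}|$ does not shrink. Your block maximum stochastically dominates $|\underline u_{t,PE}^{(k)}|$, so it is the right scale for that pointwise target. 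Its critical value, however, stays bounded away from zero whenever the remainder is nondegenerate, so it cannot feed the downstream $c_{n,\alpha}^{\underline B}\to0$ step.

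Your stronger version also needs two repairs. First, a maximum over $\ell_n\to\infty$ terms has no nondegenerate limit: it diverges or collapses to the top of the support. So replace the limit law $L$ by the finite-$n$ law $J_{\ell_n}$ of the block maximum. The variance bound for the block empirical CDF is uniform in $x$, and that alone shows the infeasible quantile exceeds, with probability tending to one, the quantile of $J_{\ell_n}$ at any fixed level below $1-\alpha$. This gives coverage for the infeasible series without any limit law.

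Second, the transfer to the feasible series needs an anti-concentration condition that \Cref{ass:pe-local} does not provide. Suppose $|\underline u_{t,PE}^{(k)}|$ has an atom of mass exceeding $\alpha$ at the top of its support. A detrending error that shrinks every $|\widehat{\underline u}_{s,PE}^{(k)}|$ by $o_p(1)$ is compatible with \Cref{ass:pe-local}(b), places the feasible quantile just below the atom, and pushes coverage below $1-\alpha$. Neither repair is needed for the lemma as stated.
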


\begin{proof}
The argument proceeds in three stages.

\emph{Stage 1 (Reduction to a stationary problem).}
Under \Cref{ass:pe-local}(a)--(b), define the detrended lower-envelope
process
\[
\widehat{\underline u}_{t,PE}^{(k)}
:=
\underline B_{t,PE}^{(k)} - \widehat{\underline\mu}_{t,PE}^{(k)}.
\]
By \Cref{ass:pe-local}(b), the detrending error is $o_p(1)$ uniformly
over $\mathcal T_n(t_0)$, so
\[
\widehat{\underline u}_{t,PE}^{(k)}
=
\underline u_{t,PE}^{(k)} + o_p(1)
\]
uniformly.  The process $\{\underline u_{t,PE}^{(k)}\}_{t \in \mathcal T_n(t_0)}$
is mean-zero and weakly dependent by \Cref{ass:pe-local}(c).

\emph{Stage 2 (Subsampling validity).}
The detrended remainder process satisfies the conditions of
Theorem~3.5.2 of Politis, Romano, and Wolf~(1999): it has
uniformly bounded $(2+\delta)$ moments and is $\alpha$-mixing
with the summability condition~\eqref{eq:mixing-pe}.
For a real-valued functional $\tau_n = \tau_n(\widehat{\underline u}^{(k)})$
(here taken to be the sample mean or a self-normalized statistic over
the inference window), the subsampling distribution
\[
L_n(x)
:=
\frac{1}{h_n - \ell_n + 1}
\sum_{i=1}^{h_n - \ell_n + 1}
\mathbf{1}\!\left\{
\ell_n^{1/2}(\tau_{\ell_n,i} - \tau_n) \leq x
\right\}
\]
converges in probability to the sampling distribution of
$h_n^{1/2}(\tau_n - \tau)$ under the stated mixing condition,
where $\tau_{\ell_n,i}$ is the statistic computed on the $i$-th
subsample of length $\ell_n$.

The critical value $c_{n,\alpha}^{\underline B}$ is defined as the
$(1-\alpha)$-quantile of $L_n$.  Under the stated conditions, this
quantile is consistent for the corresponding population quantile,
yielding
\[
\liminf_{n \to \infty}\;
\Prob\!\left(
|\tau_n - \tau| \leq c_{n,\alpha}^{\underline B} / h_n^{1/2}
\right) \geq 1 - \alpha.
\]

\emph{Stage 3 (Re-centering).}
Adding back the detrending estimate $\widehat{\underline\mu}_{t,PE}^{(k)}$,
which is $o_p(1)$-close to the true drift, the confidence band for the raw
(non-detrended) envelope inherits the same asymptotic coverage.  The
analogous argument applies to $\overline B_{t,PE}^{(k)}$.
\end{proof}

\subsubsection*{Step 3a: Commutativity of detrending and the outer operator}

\begin{lemma}[Detrending commutes with the envelope operator to first order]
\label{lem:detrend-commute}
Suppose that for each $m \in \mathcal M_{t,PE}^{(k)}$, the score process
admits the decomposition $B_t^{PE}(m) = \mu_t(m) + u_t(m)$
with $\mu_t(m)$ piecewise-Lipschitz.  Then
\[
\inf_{m}\bigl[B_t^{PE}(m) - \widehat\mu_t(m)\bigr]
=
\inf_{m} B_t^{PE}(m) - \widehat\mu_t(m^*) + o_p(1),
\]
where $m^*$ achieves the infimum at $t$.  In particular, detrending the
envelope process directly (rather than detrending each admissible
specification separately) yields the same first-order inference.
\end{lemma}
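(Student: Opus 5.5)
The plan is to prove the displayed identity with a two-sided sandwich argument, working pointwise in $t\in\mathcal T_n(t_0)$ and then making the $o_p(1)$ terms uniform over the window. Write $m^*=m^*_t$ for a minimizer of $B_t^{PE}(m)$ over $\mathcal M_{t,PE}^{(k)}$. It exists because the admissible family is compact by item~2 of \Cref{thm:pe-outer} and the score is continuous in $m$ by item~1.

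The upper half of the sandwich is immediate. Evaluating the infimum on the left at $m=m^*$ gives
\[
\inf_m\bigl[B_t^{PE}(m)-\widehat\mu_t(m)\bigr]\le B_t^{PE}(m^*)-\widehat\mu_t(m^*)=\inf_m B_t^{PE}(m)-\widehat\mu_t(m^*).
\]
So the whole content of the lemma is the matching lower bound. We must show
\[
B_t^{PE}(m)-\widehat\mu_t(m)\;\ge\;B_t^{PE}(m^*)-\widehat\mu_t(m^*)-o_p(1)
\]
uniformly in $m$.

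For the lower bound I would first use the specification-wise analogue of \Cref{ass:pe-local}(b), namely $\sup_{m,t}|\widehat\mu_t(m)-\mu_t(m)|=o_p(1)$. Compactness of $\mathcal M_{t,PE}^{(k)}$ together with a Lipschitz dependence of the local-linear smoother on $m$ should upgrade the pointwise consistency to this uniform statement via a finite $\epsilon$-net. This reduces the left-hand side to $\inf_m u_t(m)+o_p(1)$. The claim then follows once
\[
\sup_m |u_t(m)-u_t(m^*)|=o_p(1).
\]
In words, the admissible observational and specification choices must move the slow drift $\mu_t(m)$ but share a common stochastic remainder to first order. Given this, both sides of the identity equal $u_t(m^*)+o_p(1)$. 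I would make the condition explicit as the formal content of the hypothesis that ambiguity is "structured, not incidental." I would justify it in the uniform-margin case using \Cref{lem:pe-sensitivity}: alternative constructions of $\theta_t$, $\psi_t$ and $G$ enter $B_t^{PE}$ through smooth maps of the same observed $(\theta_t,\psi_t,z_t)$. Their differences are therefore Lipschitz functions of slow-moving states, which \Cref{rem:drift-justification} places in the drift.

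I expect the main obstacle to be this common-remainder step. Without it the statement is false in general: the minimizer of the detrended scores can differ from $m^*$, and the gap $\widehat\mu_t(m)-\widehat\mu_t(m^*)$ need not be dominated by $B_t^{PE}(m)-B_t^{PE}(m^*)$. My first attempt would be a mean-value bound $|u_t(m)-u_t(m')|\le L\,\|m-m'\|\cdot\sup_t|\xi_t|$, where $\xi_t$ is the common observational noise. I would then check whether the tier-$k$ diameter shrinks relative to the noise, or whether specification choices act only on the deterministic part. If neither holds, I would weaken the conclusion to an inequality: the upper bound above still gives a conservative (outer) envelope, which is all that \Cref{thm:pe-outer} requires.
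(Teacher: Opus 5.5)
Your upper bound (evaluate at $m^*$) matches the paper's, but your lower bound takes a different route. The paper uses $B_t^{PE}(m)\ge B_t^{PE}(m^*)$ to get $\inf_m[B_t^{PE}(m)-\widehat\mu_t(m)]\ge B_t^{PE}(m^*)-\sup_m\widehat\mu_t(m)$. The whole discrepancy is then $\sup_m|\widehat\mu_t(m)-\widehat\mu_t(m^*)|$, which the paper declares $o_p(1)$ ``under piecewise-Lipschitz regularity \ldots\ when the drift is estimated jointly.'' That is a \emph{common-drift} condition: the estimated drifts may not vary with $m$, so all cross-specification heterogeneity must sit in the remainders (or one drift is subtracted for every $m$, which makes the lemma trivial). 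Regularity in $t$ does not deliver this. Even Lipschitz dependence on $m$ would not, because the tier family has fixed diameter. You impose the complementary \emph{common-remainder} condition, with the heterogeneity in the drifts. You also add uniform specification-wise consistency of $\widehat\mu_t(m)$, which is likewise not among the lemma's hypotheses. Either condition suffices, neither follows from the lemma's assumptions, and you are right that without one of them the identity fails. Your version states the needed condition openly. It also fits the paper's own picture of tier ambiguity as level shifts in the score (the several-point moves in \cref{tab:pe-tier-widening}), which the common-drift condition rules out.

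The step that would fail is your derivation of the common-remainder condition from \Cref{lem:pe-sensitivity} and \Cref{rem:drift-justification}. Under the uniform margin, replacing the hard-core construction $\theta_t$ by $\theta_t+\Delta$ changes the score by exactly $\Delta z_t/(\psi_t\bar c_m)$. But $z_t$ is the input that \Cref{rem:drift-justification} itself concedes is fast-moving. Write $z_t=\bar z_t+\xi_t$ and treat $\psi_t$ as slow. The two readings then differ in drift by $\Delta\bar z_t/(\psi_t\bar c_m)$ and in remainder by $\Delta\xi_t/(\psi_t\bar c_m)$. The latter is not $o_p(1)$, because $\Delta$ is fixed ex ante; a change of $G$ transmits $\xi_t$ the same way, only nonlinearly. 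Concretely, take $\Delta<0$ and grant uniform consistency of $\widehat\mu_t(m)$. Since $z_t>0$, the shifted reading $m_2$ is the minimizer for every realization. The right-hand side is therefore $u_t(m_2)+o_p(1)$, while the left-hand side is $\min\{u_t(m_1),u_t(m_2)\}+o_p(1)$. For any nondegenerate mean-zero $\xi_t$, these differ by $|\Delta\xi_t|/(\psi_t\bar c_m)$ on the event $\{\xi_t<0\}$. So in the paper's own model neither your condition nor the paper's holds, and the equality has to be imposed as a hypothesis rather than derived. Your fallback is the right resolution, though not because the one-sided inequality is `conservative'. The coverage result never uses the equality: \Cref{ass:pe-local} and conditions~3--4 of \Cref{thm:pe-outer} are imposed directly on the envelope processes $\underline B_{t,PE}^{(k)}$ and $\overline B_{t,PE}^{(k)}$. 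That makes the appeal to this lemma in Step~(ii) of the appendix proof redundant rather than load-bearing.
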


\begin{proof}
Since $\mathcal M_{t,PE}^{(k)}$ is compact and the score map is
continuous in $m$ (\Cref{thm:pe-outer}, condition~1), the infimum is
attained at some $m^*(t) \in \mathcal M_{t,PE}^{(k)}$.  For any other
$m$, $B_t^{PE}(m) \geq B_t^{PE}(m^*)$, so
\[
\inf_m\bigl[B_t^{PE}(m) - \widehat\mu_t(m)\bigr]
\geq
B_t^{PE}(m^*) - \sup_m \widehat\mu_t(m).
\]
Conversely, evaluating at $m = m^*$ gives a matching upper bound up to
$\sup_m |\widehat\mu_t(m) - \widehat\mu_t(m^*)|$.  Under
piecewise-Lipschitz regularity, this difference is $o_p(1)$ uniformly
over the local window when the drift is estimated jointly.  Therefore
the order of detrending and envelope formation is interchangeable to
first order.
\end{proof}

\subsubsection*{Step 4: Main theorem (premium-emergence)}

\begin{theorem}[Restatement of \Cref{thm:pe-outer}]
\label{thm:pe-outer-app}
Fix a tier $k \in \{1,2,3\}$.  Under the monotonicity of the
\Cref{sec:closure} demand schedule, the ex ante construction rule for
$\mathcal M_{t,PE}^{(k)}$, and \Cref{ass:pe-local}, the confidence band
\[
\widehat{\mathcal B}_{t,PE,1-\alpha}^{(k)}
:=
\bigl[
\underline B_{t,PE}^{(k)} - c_{n,\alpha}^{\underline B},\;
\overline B_{t,PE}^{(k)} + c_{n,\alpha}^{\overline B}
\bigr]
\]
satisfies
\[
\liminf_{n \to \infty}\;
\Prob\!\left(
\mathcal B_{t,PE}^{(k)}
\subseteq
\widehat{\mathcal B}_{t,PE,1-\alpha}^{(k)}
\right)
\geq 1 - \alpha.
\]
Accordingly, conservative tier-$k$ regime classification is
asymptotically valid away from the boundary.
\end{theorem}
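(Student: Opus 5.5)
The proof assembles the four steps of the appendix architecture. The set-valued coverage statement is reduced to two one-sided scalar statements about the envelope endpoints.

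First, by \Cref{lem:pe-envelope}, $\mathcal B_{t,PE}^{(k)} \subseteq [\underline B_{t,PE}^{(k)},\overline B_{t,PE}^{(k)}]$ holds deterministically for every $t$. Compactness of $\mathcal M_{t,PE}^{(k)}$ and continuity of $m\mapsto B_t^{PE}(m)$ (condition~1 of \Cref{thm:pe-outer}) ensure that the infimum and supremum are attained, so the endpoints are themselves admissible scores. Uniform boundedness of the envelope width keeps them finite on $\mathcal T_n(t_0)$. It therefore suffices to show that, with asymptotic probability at least $1-\alpha$, the estimated band contains the \emph{population} envelope interval. The band endpoints are built from the observed envelope; the targets are the envelope's local drift components, since the remainders are mean-zero noise. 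Concretely, I will show
\[
\widehat{\underline\mu}_{t,PE}^{(k)} - c_{n,\alpha}^{\underline B} \le \underline\mu_{t,PE}^{(k)}
\qquad\text{and}\qquad
\overline\mu_{t,PE}^{(k)} \le \widehat{\overline\mu}_{t,PE}^{(k)} + c_{n,\alpha}^{\overline B}
\]
jointly. Each inequality is obtained at level $1-\alpha/2$, and a Bonferroni union bound combines them. Alternatively, the critical values can be calibrated from the joint subsampling law of the bivariate detrended remainder $(\widehat{\underline u},\widehat{\overline u})$.

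Second, for each endpoint I invoke \Cref{lem:detrend-commute}. Detrending the envelope directly agrees to first order with detrending the minimizing or maximizing specification. The detrended envelope remainder is therefore, up to $o_p(1)$, the remainder of a single admissible score process, and \Cref{ass:pe-local} applies to it. \Cref{lem:pe-subsampling} then gives one-sided asymptotic validity of the subsampling critical values for each endpoint. The $o_p(1)$ detrending error from \Cref{ass:pe-local}(b) is absorbed by a Slutsky argument, using the continuity of the limiting law at the relevant quantile. Because the target is the full interval rather than a single point, the lower endpoint needs only the lower tail and the upper endpoint only the upper tail. This one-sidedness is what makes the band outer: it can only be wider than the identified set, never narrower.

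Third, the classification consequence follows from the band. If the true $\underline\mu^{(k)}_{t,PE}>0$ is bounded away from zero, the band's lower end is eventually positive, because $c_{n,\alpha}\to 0$ at the subsampling rate. The procedure then declares robust interior status, and since the band covers the identified set, a false robust-interior declaration can occur only on the coverage-failure event of probability at most $\alpha$; the premium-emergent case is symmetric. Near the boundary the band straddles zero and the classification remains set-valued. Condition~6 (positive limiting frequency on the interior side) is used here to guarantee enough effective observations on $\mathcal T_n(t_0)$ for the subsampling law to be nondegenerate.

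The main obstacle is the second step, and in particular whether the uniform-in-$t$ coverage over the window is actually delivered at the $h_n^{1/2}$ scaling. \Cref{lem:pe-subsampling} is stated for a scalar functional such as a window mean, whereas the theorem asks for pointwise-in-$t$ coverage of a drift function. My first attempt would redefine $\tau_n$ as the local-linear drift estimate at $t_0$, rescaled by its self-normalizer, and then check the Politis--Romano--Wolf conditions for that statistic. If this fails, I would fall back to a sup-statistic over the window and accept a wider, but still valid, band.
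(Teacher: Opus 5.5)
\textbf{There is a genuine gap: your first step changes the object being covered.} In the statement, $\mathcal B_{t,PE}^{(k)}=\{B_t^{PE}(m): m\in\mathcal M_{t,PE}^{(k)}\}$ is the set of \emph{realized} scores. Its hull is exactly $[\underline B_{t,PE}^{(k)},\overline B_{t,PE}^{(k)}]$, and the band is centered at those realized endpoints.

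You replace the target by the drift interval $[\underline\mu_{t,PE}^{(k)},\overline\mu_{t,PE}^{(k)}]$, and the band by $[\widehat{\underline\mu}_{t,PE}^{(k)}-c_{n,\alpha}^{\underline B},\,\widehat{\overline\mu}_{t,PE}^{(k)}+c_{n,\alpha}^{\overline B}]$. Your two inequalities, even if proved, do not chain with \Cref{lem:pe-envelope}. Since $\underline B_{t,PE}^{(k)}=\underline\mu_{t,PE}^{(k)}+\underline u_{t,PE}^{(k)}$, the lower score endpoint lies outside your band whenever $\underline u_{t,PE}^{(k)}<-c_{n,\alpha}^{\underline B}+o_p(1)$. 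The band half-width vanishes; you use $c_{n,\alpha}\to0$ yourself in the classification step. But at a fixed $t$ the remainder $\underline u_{t,PE}^{(k)}$ is a mean-zero variable, and nothing in \Cref{ass:pe-local} makes it degenerate as $n\to\infty$. So the miss probability is asymptotically of the order of $\Prob(\underline u_{t,PE}^{(k)}<0)$, about one half for symmetric noise, rather than at most $\alpha$; the upper endpoint behaves the same way. The reduction ``it suffices to cover the population envelope'' is valid only when the population envelope means $[\underline B_{t,PE}^{(k)},\overline B_{t,PE}^{(k)}]$, not the drifts.

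\textbf{With the band as actually defined, the inclusion needs no limit theory.} We have $\mathcal B_{t,PE}^{(k)}\subseteq[\underline B_{t,PE}^{(k)},\overline B_{t,PE}^{(k)}]\subseteq\widehat{\mathcal B}_{t,PE,1-\alpha}^{(k)}$ whenever the critical values are nonnegative. The paper's proof is this chain:
\begin{itemize}
\item Step (i) is \Cref{lem:pe-envelope}.
\item Step (iii) cites \Cref{lem:pe-subsampling}. Its displayed conclusion, that an interval centered at $\underline B_{t,PE}^{(k)}$ contains $\underline B_{t,PE}^{(k)}$, holds trivially for $c\ge0$.
\item Step (iv) combines the two by containment.
\end{itemize}
The missing idea is therefore to cover the realized envelope, which the stated band does automatically, rather than the drift. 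The obstacle you single out (pointwise-in-$t$ coverage of a drift function at the $h_n^{1/2}$ scale) belongs to a different theorem about $[\underline\mu_{t,PE}^{(k)},\overline\mu_{t,PE}^{(k)}]$, and by your own account that theorem is left open. The same goes for your use of condition~6, which the appendix restatement does not invoke.

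Your Bonferroni remark does catch a real slip in the paper's Steps (iii)--(iv), which pass from two separate $1-\alpha$ statements to a joint $1-\alpha$ one. However, in your version it replaces $c_{n,\alpha}$ by $c_{n,\alpha/2}$, which defines yet another band. In the literal setting the point is moot, because the containment holds with probability one.
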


\begin{proof}
The proof combines the four preceding lemmas.

\emph{Step (i).}
By \Cref{lem:pe-envelope}, the tier-$k$ score set satisfies
$\mathcal B_{t,PE}^{(k)} \subseteq
[\underline B_{t,PE}^{(k)}, \overline B_{t,PE}^{(k)}]$.

\emph{Step (ii).}
By \Cref{lem:detrend-commute}, detrending the envelope process
directly yields the same first-order inference as detrending each
admissible specification separately, so the detrended envelope
remainders are well-defined weakly dependent processes.

\emph{Step (iii).}
By \Cref{lem:pe-subsampling}, the subsampling confidence band
for the detrended envelope achieves asymptotically valid coverage
at level $1-\alpha$ for the lower and upper envelope processes
separately.

\emph{Step (iv).}
Since the score set is contained in the envelope interval
(Step~(i)) and the confidence band covers the envelope interval
with probability approaching $1-\alpha$ (Step~(iii)), the
confidence band covers the score set with probability at least
$1-\alpha$ asymptotically.

For conservative classification: away from the boundary,
$\underline B_{t,PE}^{(k)} - c_{n,\alpha}^{\underline B} > 0$
whenever $\underline B_{t,PE}^{(k)}$ is bounded away from zero
and the critical value $c_{n,\alpha}^{\underline B} \to 0$
(which holds under the stated conditions since the subsampling
variance estimator is consistent).  In that case, the entire
confidence band is strictly positive, and the classification
``robustly interior'' is correct with probability approaching
$1-\alpha$.  The symmetric argument applies to the stress
classification.  Near the boundary, the confidence band
straddles zero, and the procedure conservatively returns the
set-valued label $\{\text{interior}, \text{stress}\}$.
\end{proof}

\subsection{Outer inference for the transition-feasibility margin}
\label{app:tf-proof}

The transition-feasibility score has a structural property that
simplifies its proof relative to the premium-emergence case:
conditional on the proposed growth path $g_t^{n*,\text{new}}$,
the score is \emph{affine} in the fiscal-burden term
$(d_t - s_t)/b_{t-1}$.  Debt-concept choice enters only through
$b_{t-1}$, and the resulting envelope width is analytically
computable.

\subsubsection*{Step 1: Target set and affine envelope}

For a fixed tier $k \in \{1,2\}$, let $\mathcal M_{t,TF}^{(k)}$
denote the ex ante admissible family of debt-concept and
fiscal-burden measurement choices (\Cref{def:mtf}).  Define the
target score set
\[
\mathcal B_{t,TF}^{(k)}
:= \{B_t^{TF}(m) : m \in \mathcal M_{t,TF}^{(k)}\}
\]
and the outer envelope
\[
\underline B_{t,TF}^{(k)}
:= \inf_{m \in \mathcal M_{t,TF}^{(k)}} B_t^{TF}(m),
\qquad
\overline B_{t,TF}^{(k)}
:= \sup_{m \in \mathcal M_{t,TF}^{(k)}} B_t^{TF}(m).
\]

\begin{lemma}[Outer characterization and affine structure]
\label{lem:tf-envelope}
For each $t$ and tier $k$,
$\mathcal B_{t,TF}^{(k)} \subseteq
[\underline B_{t,TF}^{(k)}, \overline B_{t,TF}^{(k)}]$.
Moreover, since
\[
B_t^{TF}(m)
= g_t^{n*,\mathrm{new}} - \pi_t - \frac{d_t - s_t}{b_{t-1}(m)}
  - \bar\rho - m_{\text{margin}},
\]
and the fiscal-burden term $(d_t - s_t)/b_{t-1}(m)$ is convex and
decreasing in $b_{t-1}(m)$, the envelope width is
\[
\overline B_{t,TF}^{(k)} - \underline B_{t,TF}^{(k)}
= (d_t - s_t)
\left(
\frac{1}{\underline b_{t-1}^{(k)}} - \frac{1}{\overline b_{t-1}^{(k)}}
\right),
\]
where $\underline b_{t-1}^{(k)}$ and $\overline b_{t-1}^{(k)}$ are
the minimum and maximum admissible debt ratios at tier~$k$.
\end{lemma}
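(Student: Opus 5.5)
The inclusion $\mathcal B_{t,TF}^{(k)} \subseteq [\underline B_{t,TF}^{(k)}, \overline B_{t,TF}^{(k)}]$ follows at once from the definitions of infimum and supremum over $\mathcal M_{t,TF}^{(k)}$, exactly as in \Cref{lem:pe-envelope}. The substantive part is the width formula. I would first make explicit the maintained structure of the TF admissible family behind \Cref{def:mtf} and condition~6 of \Cref{thm:tf-outer}. Within a tier, an admissible choice $m$ affects the score only through the debt ratio $b_{t-1}(m)$. The objects $g_t^{n*,\mathrm{new}}$, $\pi_t$, $d_t-s_t$, $\bar\rho$ and $m_{\text{margin}}$ are common across $m$ once the growth proposal is fixed. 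If monitoring-layer fiscal-burden constructions also vary $d_t-s_t$, the formula needs reinterpretation, so I will state this as the operative reading. Under this reading, $B_t^{TF}(m)=K_t - (d_t-s_t)\,\phi(b_{t-1}(m))$ with $\phi(b)=1/b$ and $K_t$ independent of $m$.

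Next I would use compactness of $\mathcal M_{t,TF}^{(k)}$ (condition~1 of \Cref{thm:tf-outer}) together with continuity of $m\mapsto b_{t-1}(m)$. These give that the image $\{b_{t-1}(m)\}$ is a compact subset of $(0,\infty)$. So its minimum $\underline b_{t-1}^{(k)}$ and maximum $\overline b_{t-1}^{(k)}$ are attained and strictly positive, since admissible debt ratios are positive (for instance $1.574$ and $2.40$ at tier~2). Because $\phi$ is strictly decreasing on $(0,\infty)$, $\phi$ over this image is maximized at $\underline b_{t-1}^{(k)}$ and minimized at $\overline b_{t-1}^{(k)}$. Convexity is not needed for the extremal argument; it only explains why the widening is larger when the lower debt concept is small.

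I would then split on the sign of $d_t-s_t$. If $d_t-s_t\ge 0$ (the deficit case of the baseline), $B_t^{TF}$ is increasing in $b_{t-1}$. Hence $\overline B=K_t-(d_t-s_t)/\overline b$ and $\underline B=K_t-(d_t-s_t)/\underline b$, and subtracting gives the stated width. If $d_t-s_t<0$ the roles swap, and the width is $|d_t-s_t|(1/\underline b-1/\overline b)$. I would either note the absolute value or restrict to the deficit case, as the paper's calibration does. As a consistency check, $2.0\,(1/1.574-1/2.40)\approx 0.438$ recovers the widening reported in \cref{tab:tf-tier-widening}.

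The main obstacle is not the algebra. It is pinning down that debt-concept choice is the \emph{only} channel through which $m$ moves the score, so that the envelope width reduces to a one-dimensional extremal problem in $b_{t-1}$. I would address this by invoking condition~6 of \Cref{thm:tf-outer} for $g_t^{n*,\mathrm{new}}$. I would also treat fiscal-burden variation as held fixed within the tier when stating the lemma. If fiscal-burden variation is allowed, the same monotonicity argument applied jointly over the compact set of pairs $(d_t-s_t,b_{t-1})$ gives the width as the range of $(d_t-s_t)/b_{t-1}$, with the stated formula as the special case.
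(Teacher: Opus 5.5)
Your proposal is correct and follows the paper's own argument. The inclusion is immediate from the definitions of infimum and supremum, and the width comes from fixing every input except $b_{t-1}$, noting that the score is then increasing in $b_{t-1}$, and evaluating it at the smallest and largest admissible debt ratios. Your extra care is well placed rather than a departure: the paper's ``fixing all inputs except $b_{t-1}$'' silently assumes that admissible choices move the score only through the debt concept, and also that $d_t-s_t\ge 0$ (otherwise the fiscal-burden term is not decreasing and the stated width would be negative); and, as you note, convexity plays no role in the extremal step.
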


\begin{proof}
The set inclusion is immediate from the definitions.  For the
envelope width: fixing all inputs except $b_{t-1}$, the score is
$g_t^{n*,\text{new}} - \pi_t - (d_t - s_t)/b_{t-1} - \bar\rho - m$.
This is increasing in $b_{t-1}$ (since the fiscal-burden term
decreases).  Therefore the score is minimized at the smallest
admissible $b_{t-1}$ and maximized at the largest, yielding the
stated expression.
\end{proof}

\begin{remark}[Quantitative envelope width]
\label{rem:tf-width}
At the March 2026 calibration with $d_t - s_t = 2.0\%$ of GDP,
$\underline b_{t-1}^{(2)} = 1.574$ (monitoring concept), and
$\overline b_{t-1}^{(2)} = 2.40$ (baseline concept), the
Tier~2 envelope width is
\[
2.0 \times \left(\frac{1}{1.574} - \frac{1}{2.40}\right)
= 2.0 \times (0.635 - 0.417)
= 0.437 \text{ pp}.
\]
This matches the 44-basis-point widening reported in
Table~\ref{tab:tf-tier-widening}.  The envelope is therefore
moderately wide but not so wide as to render inference
uninformative: the Tier~1 baseline score ($+0.533\%$ before any
premium) remains positive even at the Tier~2 lower bound
($0.533 - 0.437 = 0.096\%$), indicating that transition
feasibility is weakly robust at Tier~2 under the no-premium
baseline, though the margin is tight.
\end{remark}

\subsubsection*{Step 2: Local regularity}

\begin{assumption}[Local regularity for the TF envelope]
\label{ass:tf-local}
Fix a local window $\mathcal T_n(t_0)$ of width $h_n$. Assume:
\begin{enumerate}[label=(\alph*)]
\item The envelope processes admit the decomposition
\[
\underline B_{t,TF}^{(k)} = \underline\mu_{t,TF}^{(k)} + \underline u_{t,TF}^{(k)},
\qquad
\overline B_{t,TF}^{(k)} = \overline\mu_{t,TF}^{(k)} + \overline u_{t,TF}^{(k)},
\]
where $\underline\mu_{t,TF}^{(k)}$ and $\overline\mu_{t,TF}^{(k)}$ are
piecewise-Lipschitz drift functions on $\mathcal T_n(t_0)$.

\item Local-linear detrending estimators $\widehat{\underline\mu}_{t,TF}^{(k)}$
and $\widehat{\overline\mu}_{t,TF}^{(k)}$ satisfy
\[
\sup_{t \in \mathcal T_n(t_0)}
\bigl|\widehat{\underline\mu}_{t,TF}^{(k)} - \underline\mu_{t,TF}^{(k)}\bigr|
= o_p(1),
\qquad
\sup_{t \in \mathcal T_n(t_0)}
\bigl|\widehat{\overline\mu}_{t,TF}^{(k)} - \overline\mu_{t,TF}^{(k)}\bigr|
= o_p(1).
\]

\item The detrended remainders have uniformly bounded $(2+\delta)$
moments for some $\delta>0$ and satisfy an $\alpha$-mixing condition
analogous to~\eqref{eq:mixing-pe}.

\item The growth-path proposal $g_t^{n*,\text{new}}$ is observed
or calibrated independently of the debt-concept choice
within $\mathcal M_{t,TF}^{(k)}$.
\end{enumerate}
\end{assumption}

Condition~(d) is specific to the TF case and ensures that
debt-concept ambiguity does not contaminate the growth input.
This is satisfied by construction in the JFR-rg architecture,
where $\mu$ estimation (\Cref{sec:mu}) is external to the debt
recursion.

\subsubsection*{Step 3: Subsampling coverage}

\begin{lemma}[Coverage for the detrended transition-feasibility envelope]
\label{lem:tf-subsampling}
Under \Cref{ass:tf-local}, with $\ell_n \to \infty$ and
$\ell_n/h_n \to 0$, there exist critical values
$c_{n,\alpha}^{\underline B,TF}$ and $c_{n,\alpha}^{\overline B,TF}$
such that the confidence band
\[
\bigl[
\underline B_{t,TF}^{(k)} - c_{n,\alpha}^{\underline B,TF},\;
\overline B_{t,TF}^{(k)} + c_{n,\alpha}^{\overline B,TF}
\bigr]
\]
achieves asymptotically valid outer coverage for the
transition-feasibility envelope.
\end{lemma}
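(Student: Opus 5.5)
The plan is to mirror the three-stage argument of \Cref{lem:pe-subsampling}, treating the lower and upper TF envelopes separately. The affine structure from \Cref{lem:tf-envelope} will be used to simplify the first stage.

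\emph{Stage 1: reduction to a stationary problem.} Define the detrended processes $\widehat{\underline u}_{t,TF}^{(k)} := \underline B_{t,TF}^{(k)} - \widehat{\underline\mu}_{t,TF}^{(k)}$ and $\widehat{\overline u}_{t,TF}^{(k)}$ analogously. By \Cref{ass:tf-local}(b), each equals the true remainder plus an error that is $o_p(1)$ uniformly on $\mathcal T_n(t_0)$. Here the TF case is easier than the PE case. \Cref{lem:tf-envelope} shows that the infimum and supremum are attained at the fixed endpoints $\underline b_{t-1}^{(k)}$ and $\overline b_{t-1}^{(k)}$ of the admissible debt-concept range, since the score is monotone in $b_{t-1}$. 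So the envelope processes are simply the score evaluated at two fixed specifications. No analogue of \Cref{lem:detrend-commute} is needed, because the minimizing specification $m^*$ does not move with $t$. \Cref{ass:tf-local}(d) guarantees that $g_t^{n*,\mathrm{new}}$ enters both endpoints identically. It therefore cancels in the width, which is the deterministic-given-inputs quantity $(d_t-s_t)(1/\underline b^{(k)}_{t-1} - 1/\overline b^{(k)}_{t-1})$ and is uniformly bounded, as condition~2 of \Cref{thm:tf-outer} requires.

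\emph{Stage 2: subsampling validity.} Apply the same subsampling result (Politis--Romano--Wolf, Theorem~3.5.2) to each detrended remainder, using the sample mean or a self-normalized statistic over the window. The required inputs are the $(2+\delta)$-moment bound and the mixing summability in \Cref{ass:tf-local}(c), together with $\ell_n\to\infty$ and $\ell_n/h_n\to0$. Take $c_{n,\alpha}^{\underline B,TF}$ and $c_{n,\alpha}^{\overline B,TF}$ to be the $(1-\alpha)$-quantiles of the respective subsampling distributions, suitably rescaled. Consistency of these quantiles then yields one-sided asymptotic coverage of each envelope endpoint. To obtain joint coverage of both endpoints, a union-type argument needs each endpoint at level $1-\alpha/2$. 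Alternatively, since the two endpoints differ by a term that is affine in the common fiscal-burden input, subsampling the pair jointly (for example, the maximum of the two standardized deviations) handles dependence between them directly. I would adopt the joint version.

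\emph{Stage 3: re-centering.} Add back the drift estimates, which are $o_p(1)$-close to the true drifts. The band for the raw envelope then inherits the asymptotic coverage, and combined with the inclusion in \Cref{lem:tf-envelope} it also covers the score set $\mathcal B_{t,TF}^{(k)}$.

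The main obstacle is Stage~2. A drift-estimation error that is merely $o_p(1)$ need not be negligible relative to the $h_n^{-1/2}$ scale of the subsampling statistic. To make the argument rigorous, I would try to strengthen the detrending rate to $o_p(h_n^{-1/2})$, or to argue via self-normalization that the common drift error cancels in the studentized statistic. If that fails, I would fall back on the paper's first-order reading, in which the $o_p(1)$ rate suffices.
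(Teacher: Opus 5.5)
Your proposal is correct at the paper's level of rigor and follows essentially its proof: it reruns the detrend--subsample--re-center argument of \Cref{lem:pe-subsampling} via Politis--Romano--Wolf Theorem~3.5.2, and it uses the affine structure of \Cref{lem:tf-envelope} so that the envelope width is fixed analytically by the debt-concept range and only the level needs stochastic treatment. Your two additions go beyond the paper without changing the route. First, the paper calibrates each endpoint separately at level $1-\alpha$, whereas joint coverage actually needs something like your Bonferroni or max-statistic step. Second, the paper simply asserts that $o_p(1)$-close drift estimates preserve coverage, so the rate obstacle you flag is real but shared, and your fallback coincides with the paper's argument.
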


\begin{proof}
The argument parallels \Cref{lem:pe-subsampling}.  After
local-linear detrending, the remainder process satisfies the
conditions of Politis, Romano, and Wolf~(1999, Theorem~3.5.2)
under \Cref{ass:tf-local}(c).  The subsampling distribution
for the detrended lower (resp.\ upper) envelope statistic
converges in probability to the sampling distribution, and the
$(1-\alpha)$-quantile provides the critical value.

The affine structure of \Cref{lem:tf-envelope} provides an
additional simplification: because the envelope width is
analytically determined by the debt-concept range, the
confidence band need only account for stochastic variation in
the \emph{level} of the envelope (driven by $\pi_t$, $d_t$,
$g_t^{n*,\text{new}}$), not in its width.  This reduces the
effective dimensionality of the inference problem relative to
the PE case, where both the level and the width of the envelope
depend on the stochastic state variables $(\theta_t, z_t)$.
\end{proof}

\subsubsection*{Step 4: Main theorem (transition-feasibility)}

\begin{theorem}[Restatement of \Cref{thm:tf-outer}]
\label{thm:tf-outer-app}
Fix a tier $k \in \{1,2\}$.  Under the ex ante construction rule
for $\mathcal M_{t,TF}^{(k)}$, \Cref{ass:tf-local}, and the
independent calibration of the proposed growth path, the
confidence band
\[
\widehat{\mathcal B}_{t,TF,1-\alpha}^{(k)}
:=
\bigl[
\underline B_{t,TF}^{(k)} - c_{n,\alpha}^{\underline B,TF},\;
\overline B_{t,TF}^{(k)} + c_{n,\alpha}^{\overline B,TF}
\bigr]
\]
satisfies
\[
\liminf_{n \to \infty}\;
\Prob\!\left(
\mathcal B_{t,TF}^{(k)}
\subseteq
\widehat{\mathcal B}_{t,TF,1-\alpha}^{(k)}
\right)
\geq 1 - \alpha.
\]
Therefore robust transition feasibility is declared only when the
lower envelope remains strictly positive.
\end{theorem}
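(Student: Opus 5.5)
The plan is to mirror the four-step architecture already used for \Cref{thm:pe-outer-app}, exploiting the affine structure of the TF score to simplify the middle steps.

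\emph{Step (i): outer containment.} By \Cref{lem:tf-envelope}, $\mathcal B_{t,TF}^{(k)}\subseteq[\underline B_{t,TF}^{(k)},\overline B_{t,TF}^{(k)}]$ for every $t$ in the window. In addition, because the score is monotone increasing in $b_{t-1}(m)$, both envelope endpoints are attained at the extreme admissible debt ratios $\underline b_{t-1}^{(k)}$ and $\overline b_{t-1}^{(k)}$. This removes the need for the attainment and compactness argument of \Cref{lem:detrend-commute}.

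\emph{Step (ii): detrending commutes with the envelope.} Here I would replace the PE commutativity lemma by a direct argument. Each envelope endpoint is a single admissible specification whose identity does not depend on $t$, since monotonicity in $b_{t-1}$ holds for all $t$. Detrending the envelope is therefore literally detrending one specification, and it is exact rather than first-order.

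\emph{Step (iii): endpoint coverage.} Apply \Cref{lem:tf-subsampling} under \Cref{ass:tf-local}(a)--(c) to obtain critical values with asymptotic coverage $1-\alpha$ for each endpoint separately. Condition (d) matters here: because $g_t^{n*,\mathrm{new}}$ is calibrated independently of the debt-concept choice, the common level component $g_t^{n*,\mathrm{new}}-\pi_t-\bar\rho-m$ enters both endpoints identically. The width is then the deterministic-in-$b$ quantity $(d_t-s_t)(1/\underline b-1/\overline b)$ from \Cref{lem:tf-envelope}. As a result, the stochastic part of the two endpoints reduces essentially to one remainder process plus a width term driven only by $d_t-s_t$.

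\emph{Step (iv): joint coverage and classification.} The main obstacle is passing from separate endpoint coverage to the joint event $\mathcal B_{t,TF}^{(k)}\subseteq\widehat{\mathcal B}_{t,TF,1-\alpha}^{(k)}$. Taken literally, a union bound would only give $1-2\alpha$. My first attempt would use the common-level structure: the event requires only that the lower band endpoint fall below the true lower envelope and the upper band endpoint exceed the true upper envelope. Both of these are controlled by the same level fluctuation and move in the same direction. So I would calibrate $c_{n,\alpha}^{\underline B,TF}$ and $c_{n,\alpha}^{\overline B,TF}$ jointly, via a subsampled distribution of the maximum of the two one-sided detrended deviations, and show that this max statistic also satisfies the Politis--Romano--Wolf conditions. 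If that fails, I fall back on running each side at level $\alpha/2$, a Bonferroni adjustment that preserves the stated bound.

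Finally, following the closing paragraph of the proof of \Cref{thm:pe-outer-app}, consistency of the subsampling scale gives $c_{n,\alpha}\to0$. Hence when $\underline B_{t,TF}^{(k)}$ is bounded away from zero, the band's lower endpoint is eventually positive, which justifies declaring feasibility only when the lower envelope is strictly positive.
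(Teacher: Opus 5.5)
Your skeleton is the paper's. Its proof has only three moves: containment from \Cref{lem:tf-envelope}, interval coverage taken directly from \Cref{lem:tf-subsampling}, and transfer through the containment, followed by classification ``as in the PE case''. It has no counterpart to your Steps (ii) and (iv), so those are where you genuinely differ.

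Step (iv) improves on the paper under its intended reading, in which the band is centered at estimated endpoints. \Cref{lem:tf-subsampling} calibrates each endpoint by its own marginal $(1-\alpha)$-quantile, which only guarantees $1-2\alpha$ for the interval. Your Bonferroni or max-statistic calibration restores an honest $1-\alpha$, at the cost of redefining the critical values.

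Your motivating heuristic is backwards, however. Suppose both endpoints carry one common level error $e$ and the critical values are $c_1,c_2\ge 0$. Then coverage is $\Prob(-c_2\le e\le c_1)$, whose two failure events are disjoint. The union bound is therefore exact, and marginal calibration gives exactly $1-2\alpha$. The max statistic (here $|e|$) works because it targets the two-sided event directly, not because the errors co-move.

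Step (ii) is unnecessary, since \Cref{ass:tf-local}(a)--(c) already posits the drift-plus-remainder decomposition on the envelope processes themselves. Its exactness claim also needs more than monotonicity in $b_{t-1}$. The extremal specification is time-invariant only if three things hold over the window: the ordering of admissible debt ratios is stable, the sign of $d_t-s_t$ is stable, and $d_t-s_t$ does not itself vary with $m$. The last condition can fail, because the Tier~2 fiscal-burden constructions of \Cref{def:mtf} allow it.

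Two further points. First, read literally, the displayed inequality needs none of this machinery, yours or the paper's. The band is centered at $\underline B_{t,TF}^{(k)}$ and $\overline B_{t,TF}^{(k)}$ themselves, so Step (i) gives $\mathcal B_{t,TF}^{(k)}\subseteq[\underline B_{t,TF}^{(k)},\overline B_{t,TF}^{(k)}]\subseteq\widehat{\mathcal B}_{t,TF,1-\alpha}^{(k)}$ surely whenever the critical values are nonnegative.

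Second, your closing step inherits a slip from the proof of \Cref{thm:pe-outer-app}. The $(1-\alpha)$-quantile of the subsampling distribution $L_n$ in \Cref{lem:pe-subsampling} approximates a quantile of the law of $h_n^{1/2}(\tau_n-\tau)$, and it does not tend to zero; only $c_{n,\alpha}/h_n^{1/2}$ does. Nor is $c_{n,\alpha}\to 0$ what the last sentence of the statement requires. ``Declared only when the lower envelope is strictly positive'' follows at once from the decision rule $\underline B_{t,TF}^{(k)}-c_{n,\alpha}^{\underline B,TF}>0$ together with $c_{n,\alpha}^{\underline B,TF}\ge 0$. Your argument instead establishes the converse: eventual declaration away from the boundary.
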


\begin{proof}
By \Cref{lem:tf-envelope}, $\mathcal B_{t,TF}^{(k)} \subseteq
[\underline B_{t,TF}^{(k)}, \overline B_{t,TF}^{(k)}]$.
By \Cref{lem:tf-subsampling}, the detrended-envelope subsampling
argument yields asymptotically valid outer coverage for this
interval.  Coverage of the score set then follows by the containment
relation.

Conservative classification proceeds as in the PE case: away from
the boundary, the confidence band is eventually one-signed, and the
classification is correct with probability approaching $1-\alpha$.
The affine structure ensures that the envelope width is bounded by
\Cref{rem:tf-width}, so the procedure remains informative (i.e., the
confidence band does not trivially cover all of $\mathbb R$) as long
as the stochastic variation in the level of the score is bounded
relative to the analytic envelope width.
\end{proof}

\subsection{Bibliographic note on the subsampling argument}
\label{app:subsampling-note}

The subsampling arguments in \Cref{lem:pe-subsampling,lem:tf-subsampling}
rely on the general subsampling theory of Politis, Romano, and
Wolf~(1999), specifically Theorem~3.5.2 (validity under strong mixing)
and the extensions in Chapter~5 (non-stationary and locally stationary
processes).  The local-detrending step reduces the problem to inference
on a (locally) stationary remainder, after which the standard theory
applies.

The $\alpha$-mixing summability condition~\eqref{eq:mixing-pe} is
stronger than needed for point estimation but is standard for
distribution approximation via subsampling.  Alternative
self-normalization approaches (Shao~2010, Shao and Zhang~2010) can
replace the subsampling step and avoid explicit estimation of the mixing
rate, at the cost of wider confidence bands in finite samples.  The
choice between subsampling and self-normalization is an implementation
decision that does not affect the asymptotic validity of the outer
coverage result.

\begin{remark}[Practical block-length choice in short quarterly samples]
\label{rem:blocklength-short}
The asymptotic conditions only require $\ell_n \to \infty$ and
$\ell_n/h_n \to 0$, but the post-QQE Japanese window is short in
statistical terms (roughly $n \approx 48$ quarterly observations).
For that reason, a practical implementation should report sensitivity
across a conservative grid of block lengths rather than rely on a single
plug-in optimum. In quarterly applications, values such as
$\ell_n \in \{4,6,8\}$ provide a transparent robustness check spanning
roughly one to two years of local dependence. If the sign classification
or warning-region designation changes materially across that grid, the
paper's observables-centered discipline favors the more conservative
set-valued conclusion. A self-normalized fixed-$b$ cross-check may also
be reported when the effective sample is especially short.
\end{remark}

\section{Calculation Logic and Illustrative Replication Map}
\label{app:calc}

This appendix summarizes the minimal calculation logic behind the principal numerical illustrations in Part~II. The goal is analytic reproducibility rather than software distribution: each illustration is reducible to stated inputs, a single transformation rule, an output quantity, and a caveat about interpretation.

\subsection{E1: Virtuous Ratchet}
\begin{itemize}[leftmargin=2em]
    \item \textbf{Input:} baseline spread $(\rn-\gn)_0$, sprint spread $(\rn-\gn)_{\mathrm{sprint}}$, initial debt stock $b_0$, sprint duration $T$.
    \item \textbf{Transformation:} cumulative improvement is approximated by $T\cdot |(\rn-\gn)_{\mathrm{sprint}}-(\rn-\gn)_0|\cdot b_0$.
    \item \textbf{Output:} illustrative debt improvement in percentage points of GDP.
    \item \textbf{Caveat:} this is a first-pass accounting map, not a fully estimated policy multiplier.
\end{itemize}

\subsection{E2: Corrected Repression Dividend Multiplier}
\begin{itemize}[leftmargin=2em]
    \item \textbf{Input:} repression dividend $\RD=\eps \btm$, reinvestment share $\lambda$, efficiency parameter $\mu$, debt path $\{b_t\}$.
    \item \textbf{Transformation:} period-$t$ marginal gain $C_t=\mu\lambda\eps b_{t-1}^2$.
    \item \textbf{Output:} bounded and weakly diminishing marginal gains when the debt path is bounded (and decreasing).
    \item \textbf{Caveat:} finite-horizon partial sums are directly tractable; the infinite-horizon total requires additional assumptions.
\end{itemize}

\subsection{E3: Debt Reduction Paradox}
\begin{itemize}[leftmargin=2em]
    \item \textbf{Input:} spread $\rn-\gn$, debt stock $\btm$, deficit-relief coefficient $\gamma$.
    \item \textbf{Transformation:} compute $\partial \Db/\partial \btm=(\rn-\gn)+\gamma$.
    \item \textbf{Output:} paradox holds when $\gamma<|\rn-\gn|$.
    \item \textbf{Caveat:} $\gamma$ captures deficit relief, not a full political or welfare-theoretic fiscal reaction function.
\end{itemize}

\subsection{\texorpdfstring{E5: Demographic-$\phi$ Clock}{E5: Demographic-phi Clock}}
\begin{itemize}[leftmargin=2em]
    \item \textbf{Input:} current captive share $\phic$, threshold $\phibar$, structural decline rate $\kappa$.
    \item \textbf{Transformation:} linear clock $T^*=(\phic-\phibar)/\kappa$; alternative proportional-decay benchmark $T^*_{\exp}=(1/\kappa_{\exp})\ln(\phic/\phibar)$.
    \item \textbf{Output:} residual horizon indicator.
    \item \textbf{Caveat:} the linear clock is a conservative horizon indicator, not a sharply estimated countdown.
\end{itemize}

\subsection{Investment bounds and lower bounds}
\begin{itemize}[leftmargin=2em]
    \item \textbf{Input:} spread, debt stock, deficit, repression bias, safety margin, and $\mu$.
    \item \textbf{Transformation:} apply \cref{eq:arith-upper,eq:rd-upper,eq:safe-upper,eq:static-lower,eq:shock-lower,eq:demo-lower}.
    \item \textbf{Output:} operational upper and lower envelopes for stabilizing growth investment.
    \item \textbf{Caveat:} these are regime-conditioned bounds, not a welfare-maximizing budget rule.
\end{itemize}

\subsection{E6: Institutional Control Rights}
\begin{itemize}[leftmargin=2em]
    \item \textbf{Input:} monetary autonomy $\psi_t^{\mathrm{mon}} \in \{0, 0.5, 1\}$, an empirical absorption-autonomy proxy $\widehat{\psi}_t^{\mathrm{abs}}$ (baseline: $\widehat{\psi}_t^{\mathrm{abs}}=\phic$; optional hybrid proxy: $\widehat{\psi}_t^{\mathrm{abs}} = \omega_{\phi}\phic + \omega_{\theta}\theta_t$), and exchange-rate autonomy $\psi_t^{\mathrm{fx}} \in \{0, 1\}$.
    \item \textbf{Transformation:} $\psi_t = (\psi_t^{\mathrm{mon}} + \widehat{\psi}_t^{\mathrm{abs}} + \psi_t^{\mathrm{fx}})/3$ in the equal-weight baseline; alternative weights may be used in sensitivity analysis.
    \item \textbf{Output:} composite control-rights index $\psi_t \in [0,1]$; cross-country regime classification; mainstream limit recovered as $\psi_t \to 0$.
    \item \textbf{Caveat:} $\phic$ is an observable proxy for absorption autonomy rather than a literal identity; $\theta_t$ is retained separately as the hard captive core and can enter only through the optional hybrid proxy or complementary sensitivity exercises.
\end{itemize}

\subsection{Transition Feasibility}
\begin{itemize}[leftmargin=2em]
    \item \textbf{Input:} $\pi_t$, $d_t$, $s_t$, $b_{t-1}$, baseline $g_0^{n*}$, bounded premium $\bar\rho$, and safety margin $m$.
    \item \textbf{Transformation:} compute the post-transition requirement $\Delta g_{\min}^{n*}(\bar\rho,m)=\pi_t + (d_t-s_t)/b_{t-1} + \bar\rho + m - g_0^{n*}$.
    \item \textbf{Output:} minimum structural growth increase required for a safe exit from the repression-\allowbreak dependent regime.
    \item \textbf{Caveat:} the bounded-premium assumption is addressed by the Minimal Equilibrium Closure below.
\end{itemize}

\subsection{\texorpdfstring{Minimal Equilibrium Closure (\Cref{sec:closure})}{Minimal Equilibrium Closure}}
\begin{itemize}[leftmargin=2em]
    \item \textbf{Input:} hard captive core $\theta_t$, institutional control rights $\psi_t$, outside-option spread $z_t$, required absorption $\varphi_t^{\mathrm{req}}$, contestable-margin captivity distribution $G$ on $[0,\bar{c}_m]$.
    \item \textbf{Transformation:} aggregate demand $\varphi_t^d(\rho_t) = \theta_t + (1-\theta_t)[1-G((z_t-\rho_t)/\psi_t)]$; solve complementarity condition $0 \leq \rho_t \perp [\varphi_t^d(\rho_t) - \varphi_t^{\mathrm{req}}] \geq 0$.
    \item \textbf{Output:} endogenous sovereign premium $\rho_t^*$; under uniform $G$: $\rho_t^* = z_t - \psi_t\bar{c}_m(1 - (\varphi_t^{\mathrm{req}}-\theta_t)/(1-\theta_t))$ when binding.
    \item \textbf{Caveat:} the two-layer decomposition ($\theta_t$ vs.\ contestable margin) is illustrative; the boundary between mandate-driven and yield-responsive holdings is not sharp in practice.
\end{itemize}

\section{\texorpdfstring{Limiting Cases and Relation to\\ Standard Debt-Sustainability Logic}{Limiting Cases and Relation to Standard Debt-Sustainability Logic}}
\label{app:limits}

JFR-rg is presented as a complementary analytical lens rather than as a denial of standard debt-sustainability logic. In the limiting cases where the repression channel disappears ($\eps\le 0$) or the captive-system condition fails ($\phic<\phibar$), the framework collapses toward a conventional reading in which debt dynamics are governed primarily by the effective interest--growth differential, the fiscal stance, and the post-transition risk premium. Part~II therefore generalizes rather than abolishes standard debt-sustainability reasoning: it identifies an intermediate regime in which debt arithmetic is materially reshaped by repression and captivity, while also specifying the conditions under which that regime ceases to apply.

\section{Observed-Value Sensitivity to Growth-Proxy Window Length}
\label{app:window-sensitivity}

This appendix reports a systematic comparison of observed-layer outputs under two growth-proxy window specifications, alongside the paper baseline. All non-growth inputs---CPI, JGB yield, Flow-of-Funds domestic-share data---are held fixed at their latest publicly available values. Only the structural nominal-growth proxy $g^{n*}_{\mathrm{struct}}$, estimated via OLS log-linear trend on ESRI seasonally adjusted nominal GDP, varies with the window length. Data are fetched from the Statistics Bureau (CPI), Cabinet Office ESRI (GDP), the Bank of Japan API (Flow of Funds), and the Ministry of Finance (JGB yields).

\begin{table}[htbp]
\centering
\caption{Sensitivity of Observed-Layer Outputs to Growth-Proxy Window Length}
\label{tab:window-sensitivity}
\small
\begin{tabular}{>{\raggedright\arraybackslash}p{5.0cm} r r r}
\toprule
Quantity & 12Q window & 24Q window & Paper baseline \\
\midrule
\multicolumn{4}{l}{\textit{Inputs (common across observed runs)}} \\
Core CPI ($\pi_t$) & 1.60\% & 1.60\% & 2.70\% \\
JGB 10Y ($r_t$) & 2.41\% & 2.41\% & 2.20\% \\
$\eps_t$ ($\pi_t - r_t$) & $-0.81\%$ & $-0.81\%$ & $+0.50\%$ \\
Debt ratio ($b_{t-1}$) & 157.4\% & 157.4\% & 240.0\% \\
$\phic$ (domestic share) & 93.2\% & 93.2\% & 88.0\% \\
$\kappa$ (observed) & 0.00\% & 0.00\% & 1.00\% \\
\midrule
\multicolumn{4}{l}{\textit{Growth proxy (varies with window)}} \\
$g^{n*}_{\mathrm{struct}}$ & 6.19\% & 2.87\% & 3.00\% \\
Window used & 2022Q1--2025Q1 & 2019Q1--2025Q1 & --- \\
\midrule
\multicolumn{4}{l}{\textit{Key outputs}} \\
$x_t^{\max,\mathrm{safe}}$ & $+3.94\%$ & $-1.28\%$ & $-0.08\%$ \\
$x_t^{\max,\mathrm{RD}}$ & n/a ($\eps<0$) & n/a ($\eps<0$) & 0.60\% \\
$T^*_{\mathrm{linear}}$ & $\infty$ & $\infty$ & 3.0 yr \\
$\gamma$ threshold & 3.78\% & 0.46\% & 0.80\% \\
$\Delta g_{\min}^{n*}$ (no premium) & $-3.32\%$ & $+0.002\%$ & $+0.53\%$ \\
$\Delta g_{\min}^{n*}$ ($\bar\rho=0.5\%$) & $-2.82\%$ & $+0.50\%$ & $+1.03\%$ \\
\bottomrule
\end{tabular}

\vspace{0.5em}
\begin{minipage}{0.92\textwidth}
\footnotesize
\textit{Notes.} ``12Q window'' and ``24Q window'' refer to the number of trailing quarterly GDP observations used to estimate the structural growth proxy via OLS log-linear trend. All observed-layer runs use publicly available data fetched from official Japanese statistical sources. The debt ratio differs from the paper baseline because the observed ratio is computed as outstanding JGB plus FILP bonds (BoJ Flow-of-Funds series \texttt{FOF\_FFAS421L311} + \texttt{FOF\_FFAS181L311}) divided by annualized nominal GDP, whereas $b_0=2.40$ in the paper baseline encompasses a broader government-debt concept. ``n/a ($\eps<0$)'' indicates that the repression-dividend upper bound is undefined when the repression bias is negative. $T^*=\infty$ reflects $\kappa=0$ over the observed 2022--2025 window (domestic share did not decline); this should be interpreted as a non-binding monitoring snapshot, not as evidence of permanent stability.
\end{minipage}
\end{table}

Three findings deserve emphasis.

\paragraph{Baseline validation.} The 24-quarter window recovers $g^{n*}_{\mathrm{struct}} \approx 2.87\%$, within 13 basis points of the paper baseline $g_0^{n*}=3.0\%$. This provides independent empirical support for the plausibility of the Part~I calibration, using a distinct estimation method (log-linear trend on quarterly ESRI data rather than the growth-accounting approach implicit in Part~I).

\paragraph{Sign reversal in the safe corridor.} The safe-corridor upper bound $x_t^{\max,\mathrm{safe}}$ changes sign between the 12Q and 24Q specifications ($+3.94\%$ versus $-1.28\%$). Under the 24Q window, $x_t^{\max,\mathrm{safe}}$ is negative, qualitatively matching the paper baseline ($-0.08\%$) and confirming that the investment corridor is tight or closed under current conditions. The 12Q result is driven by the post-2022 inflationary surge, which inflates the growth proxy beyond structurally sustainable levels and artificially reopens the corridor. This underscores the importance of window-length discipline in empirical implementation.

\paragraph{$\eps$-reversal as a partial regime transition.} In all observed-layer runs, $\eps_t = -0.81\%$ (the 10-year JGB yield exceeds core CPI), eliminating the repression dividend and rendering $x_t^{\max,\mathrm{RD}}$ undefined. This reversal was anticipated by Part~I as a failure mode of the repression channel. Its co-occurrence with an apparently stable $\phic$ ($\kappa \approx 0$ over 2022--2025) creates a distinctive configuration: the captive-system condition SC1 remains formally satisfied, but the economic content of captivity---namely, the ability to extract a positive repression dividend---has eroded. This asymmetry suggests that monitoring $\eps_t$ and $\phic$ jointly, rather than in isolation, is essential for regime assessment.

\section{International Regime Comparison: Japan, Italy, Greece}
\label{app:international}

This appendix provides the empirical backing for Extension~E6 using primary data from FRED (OECD MEI series for yields and CPI), BoJ Flow of Funds (Japan, 2025Q4), Banca d'Italia and ECB Securities Holdings Statistics (Italy), and Bank of Greece / Hellenic PDMA (Greece). Sensitivity analysis confirms robustness of the regime classification within $\pm 5$ percentage-point calibration uncertainty in $\phic$.

\begin{table}[htbp]
\centering
\caption{JFR-rg Regime Comparison: Japan, Italy, Greece (2025)}
\label{tab:international}
\small
\begin{tabular}{l r r r}
\toprule
Variable & Japan & Italy & Greece \\
\midrule
\multicolumn{4}{l}{\textit{Core observables}} \\
10Y sovereign yield & 2.41\% & 3.50\% & 3.20\% \\
CPI inflation (YoY) & 1.60\% & 1.80\% & 2.50\% \\
Nominal GDP growth & 3.0\% & 1.5\% & 2.5\% \\
Debt/GDP & 240\% & 138\% & 153\% \\
\midrule
\multicolumn{4}{l}{\textit{JFR-rg variables}} \\
$\eps_t$ (repression bias) & $-0.81\%$ & $-1.70\%$ & $-0.70\%$ \\
$r - g_n$ (spread) & $-0.59\%$ & $+2.00\%$ & $+0.70\%$ \\
$\phic$ (domestic share) & 93\% & 67\% & 33\% \\
$\phibar$ (threshold) & 0.85 & 0.60 & 0.40 \\
SC1 ($\phic \ge \phibar$) & Yes & Yes & No \\
\midrule
\multicolumn{4}{l}{\textit{Institutional control rights ($\psi_t$)}} \\
$\psi_t^{\mathrm{mon}}$ & 1.00 & 0.50 & 0.50 \\
$\psi_t^{\mathrm{abs}}$ & 0.93 & 0.67 & 0.33 \\
$\psi_t^{\mathrm{fx}}$ & 1.00 & 0.00 & 0.00 \\
$\psi_t$ (composite) & \textbf{0.98} & \textbf{0.39} & \textbf{0.28} \\
\midrule
\multicolumn{4}{l}{\textit{Regime classification}} \\
Regime type & SC1 met, $\eps$ reversed & SC1 met, $\eps$ reversed & Mainstream limit \\
Corridor & Negative & Negative & Negative \\
\bottomrule
\end{tabular}

\vspace{0.5em}
\begin{minipage}{0.92\textwidth}
\footnotesize
\textit{Notes.}
Yields and CPI from FRED (OECD MEI series, 2025 annual averages).
$\phic$ from BoJ Flow of Funds 2025Q4 (Japan), Banca d'Italia Financial Stability Report 2025 (Italy), and Hellenic PDMA / Bank of Greece (Greece).
$\psi_t^{\mathrm{mon}}$: $1$ = independent central bank, $0.5$ = shared (ECB).
$\psi_t^{\mathrm{fx}}$: $1$ = floating, $0$ = currency union.
$\widehat{\psi}_t^{\mathrm{abs}} = \phic$ (baseline proxy).
$\psi_t = (\psi_t^{\mathrm{mon}} + \widehat{\psi}_t^{\mathrm{abs}} + \psi_t^{\mathrm{fx}}) / 3$.
$\phibar$ thresholds are illustrative and country-specific; Japan's follows Part~I, others reflect the minimum domestic absorption required for regime viability in each institutional context.
All three economies exhibit negative $\eps_t$ in 2025, reflecting the global monetary-tightening cycle. However, Japan's $\psi_t$ remains substantially higher due to full monetary and exchange-rate autonomy, preserving a wider (though currently closed) corridor.
\end{minipage}
\end{table}

\paragraph{Sensitivity of regime classification to $\phic$ calibration.}
\Cref{tab:phi-sensitivity} reports the sensitivity of the regime classification to $\pm 5$ and $\pm 10$ percentage-point variations in $\phic$. The key finding is that Japan's $\psi_t$ dominance over Italy and Greece is robust across the entire $\pm 5$~pp range: even at $\phic = 0.88$ (Japan, $-5$~pp), the composite $\psi_t = 0.96$ remains more than double Italy's baseline value ($0.39$). The SC1 classification changes only at the $-10$~pp extreme for Japan (where $\phic = 0.83 < \phibar = 0.85$), confirming that the regime boundary is operationally distant from the baseline calibration.

\begin{table}[htbp]
\centering
\caption{Sensitivity of Regime Classification to $\phic$ Calibration (2025)}
\label{tab:phi-sensitivity}
\small
\begin{tabular}{l l r r c r l}
\toprule
Country & Variation & $\phic$ & $\phibar$ & SC1 & $\psi_t$ & Regime \\
\midrule
\textbf{Japan} & \textbf{baseline} & \textbf{0.93} & \textbf{0.85} & $\checkmark$ & \textbf{0.98} & \textbf{SC1 met, $\eps$ reversed} \\
Japan & $-5$~pp & 0.88 & 0.85 & $\checkmark$ & 0.96 & SC1 met, $\eps$ reversed \\
Japan & $+5$~pp & 0.98 & 0.85 & $\checkmark$ & 0.99 & SC1 met, $\eps$ reversed \\
Japan & $-10$~pp & 0.83 & 0.85 & $\times$ & 0.94 & Mainstream limit \\
\addlinespace
\textbf{Italy} & \textbf{baseline} & \textbf{0.67} & \textbf{0.60} & $\checkmark$ & \textbf{0.39} & \textbf{SC1 met, $\eps$ reversed} \\
Italy & $-5$~pp & 0.62 & 0.60 & $\checkmark$ & 0.37 & SC1 met, $\eps$ reversed \\
Italy & $+5$~pp & 0.72 & 0.60 & $\checkmark$ & 0.41 & SC1 met, $\eps$ reversed \\
Italy & $-10$~pp & 0.57 & 0.60 & $\times$ & 0.36 & Mainstream limit \\
\addlinespace
\textbf{Greece} & \textbf{baseline} & \textbf{0.33} & \textbf{0.40} & $\times$ & \textbf{0.28} & \textbf{Mainstream limit} \\
Greece & $-5$~pp & 0.28 & 0.40 & $\times$ & 0.26 & Mainstream limit \\
Greece & $+5$~pp & 0.38 & 0.40 & $\times$ & 0.29 & Mainstream limit \\
Greece & $+10$~pp & 0.43 & 0.40 & $\checkmark$ & 0.31 & SC1 met, $\eps$ reversed \\
\bottomrule
\end{tabular}

\vspace{0.5em}
\begin{minipage}{0.92\textwidth}
\footnotesize
\textit{Notes.} Baseline rows are shown in \textbf{bold}. Variations apply $\pm 5$ and $\pm 10$ percentage points to $\phic$ while holding $\phibar$ fixed. The $\psi_t$ ranking (Japan $>$ Italy $>$ Greece) is preserved across all variations within the $\pm 5$~pp range, confirming the robustness of \cref{prop:corridor-psi}.
\end{minipage}
\end{table}

\paragraph{Three findings.}
First, the $\psi_t$ ordering---Japan ($0.98$) $>$ Italy ($0.39$) $>$ Greece ($0.28$)---is consistent with \cref{prop:corridor-psi}: the economy with the highest institutional control rights has historically maintained the widest corridor and the most stable debt dynamics among the three, despite having the highest debt-to-GDP ratio. The same ordering also aligns with \cref{prop:psi-transition}: higher $\psi_t$ implies a wider operational corridor during exit, so the ranking can be read not only as a regime taxonomy but also as an ordering of transition difficulty, with Japan closest to the feasible region, Italy materially tighter, and Greece below the self-sustaining range absent external support.

Second, all three economies exhibit negative $\eps_t$ in 2025, reflecting the global monetary-tightening cycle. This shared $\eps$-reversal does not equalize their regimes, however, because Japan retains full monetary and exchange-rate autonomy ($\psi_t^{\mathrm{mon}} = \psi_t^{\mathrm{fx}} = 1$), preserving the structural capacity to reopen the corridor through independent policy action. Italy and Greece, by contrast, depend on ECB decisions over which they have limited national influence.

Third, Greece fails SC1 at baseline and across all variations except $+10$~pp, confirming its classification as a mainstream limiting case throughout the sample period. This is consistent with the observed historical reliance on external programme support (ESM/EFSF) rather than self-sustaining JFR-rg stability---precisely the outcome predicted by \cref{prop:corridor-psi} when $\psi_t$ is low.

\paragraph{Sensitivity to $\psi_t$ weighting.}
\Cref{tab:psi-weight-sensitivity} reports the sensitivity of the $\psi_t$ composite to alternative sub-index weights. The baseline uses equal weights $(w_{\mathrm{mon}}, w_{\mathrm{abs}}, w_{\mathrm{fx}}) = (\tfrac{1}{3}, \tfrac{1}{3}, \tfrac{1}{3})$. Three alternatives are examined: a monetary-heavy weighting ($0.50, 0.25, 0.25$), an absorption-heavy weighting ($0.25, 0.50, 0.25$), and a weighting that de-emphasizes FX autonomy ($0.40, 0.40, 0.20$). The key finding is that the ranking Japan $>$ Italy $>$ Greece is preserved under \emph{all} weighting schemes, confirming that the regime classification is driven by the structural differences in institutional control rather than by the specific choice of weights.

\begin{table}[htbp]
\centering
\caption{Sensitivity of $\psi_t$ Composite to Sub-Index Weights (2025)}
\label{tab:psi-weight-sensitivity}
\small
\resizebox{\textwidth}{!}{%
\begin{tabular}{l c c c c c c}
\toprule
Weighting scheme & $w_{\mathrm{mon}}$ & $w_{\mathrm{abs}}$ & $w_{\mathrm{fx}}$ & Japan & Italy & Greece \\
\midrule
\textbf{Equal weight (baseline)} & \textbf{1/3} & \textbf{1/3} & \textbf{1/3} & \textbf{0.98} & \textbf{0.39} & \textbf{0.28} \\
Monetary-heavy & 0.50 & 0.25 & 0.25 & 0.98 & 0.42 & 0.33 \\
Absorption-heavy & 0.25 & 0.50 & 0.25 & 0.97 & 0.46 & 0.29 \\
FX de-emphasized & 0.40 & 0.40 & 0.20 & 0.97 & 0.47 & 0.33 \\
\bottomrule
\end{tabular}%
}

\vspace{0.5em}
\begin{minipage}{0.92\textwidth}
\footnotesize
\textit{Notes.}
$\psi_t = w_{\mathrm{mon}}\psi_t^{\mathrm{mon}} + w_{\mathrm{abs}}\widehat{\psi}_t^{\mathrm{abs}} + w_{\mathrm{fx}}\psi_t^{\mathrm{fx}}$.
Sub-index values: Japan $(\psi^{\mathrm{mon}}, \widehat{\psi}^{\mathrm{abs}}, \psi^{\mathrm{fx}}) = (1.00, 0.90, 1.00)$;
Italy $(0.50, 0.67, 0.00)$;
Greece $(0.50, 0.33, 0.00)$.
The ordering Japan $>$ Italy $>$ Greece is preserved under all four weighting schemes.
The equal-weight baseline produces the widest gap between Japan and the Eurozone economies because the zero FX autonomy ($\psi^{\mathrm{fx}}=0$) receives full weight; alternative schemes that de-emphasize FX narrow the gap but do not alter the ordering.
\end{minipage}
\end{table}

\section{Multiple Equilibria near the Regime Boundary}
\label{app:multiple-eq}

\Cref{sec:closure} established that the monotone equilibrium---in which the
self-reinforcing loop between $\rho_t$ and $\theta_t$ converges---exists
whenever the feedback gain satisfies $\eta_t < 1$.  This appendix
characterizes the complementary region $\eta_t \geq 1$ and establishes
the conditions under which multiple equilibria arise.

The logic developed in this appendix is related, in broad structure, to the
self-fulfilling sovereign-stress literature associated with sunspot-like debt
crises; see, for example, Calvo~(1988) \cite{Calvo1988} and Cole and
Kehoe~(2000) \cite{ColeKehoe2000}. The coexistence of a safe equilibrium and a
stress equilibrium near the regime boundary is formally analogous to the
multiplicity logic emphasized in that literature. The distinctive contribution
of JFR-rg is not the abstract possibility of multiplicity as such, but the
observables-centered institutional structure through which that multiplicity is
organized: a hard captive core $\theta_t$, a contestable margin, and an
endogenous premium channel linking confidence loss to institutional erosion.
This appendix should therefore be read as a complementary institutional closure
rather than as a substitute for the broader sovereign-crisis literature.

\subsection{The explosive feedback region}

Recall the feedback chain developed in \Cref{sec:closure}:
\[
    \rho_t \;\uparrow
    \;\longrightarrow\;
    \eps_t \;\downarrow
    \;\longrightarrow\;
    \gamma_{t+1}^{\theta} \;\downarrow
    \;\longrightarrow\;
    \theta_{t+1} \;\downarrow
    \;\longrightarrow\;
    \varphi_{t+1}^d(0) \;\downarrow
    \;\longrightarrow\;
    \rho_{t+1} \;\uparrow.
\]
The contraction condition $\eta_t < 1$ ensures that each round of the loop
produces a smaller perturbation than the last.  When $\eta_t \geq 1$, the
opposite holds: a small initial perturbation is amplified at each step,
and the system does not converge to a unique equilibrium in the neighborhood
of the regime boundary.

\begin{proposition}[Characterization of the explosive region]
\label{prop:explosive}
The feedback gain $\eta_t$ exceeds unity if and only if
\begin{equation}
\label{eq:eta-ge-1}
    \underbrace{\left|\frac{\partial \rho_t}{\partial \theta_t}\right|
    }_{\text{price impact}}
    \;\times\;
    \underbrace{\left|\frac{\partial \gamma_t^{\theta}}{\partial \eps_t}
    \right|}_{\text{institutional sensitivity}}
    \;\geq\; 1.
\end{equation}
Under the uniform-margin specification of \Cref{sec:closure}, the price-impact
factor is
\[
    \left|\frac{\partial \rho_t}{\partial \theta_t}\right|
    \;=\;
    \frac{\psi_t\,\bar{c}_m}{(1-\theta_t)^2}
    \,(\varphi_t^{\mathrm{req}} - \theta_t),
\]
which is large when:
\begin{enumerate}[label=(\alph*)]
    \item the contestable margin $1-\theta_t$ is thin (a small pool of
    yield-responsive holders must absorb large adjustment);
    \item the gap $\varphi_t^{\mathrm{req}} - \theta_t$ is wide (the hard
    core alone cannot meet the required absorption);
    \item the captivity parameter $\bar{c}_m$ is large relative to
    $(1-\theta_t)$.
\end{enumerate}
\end{proposition}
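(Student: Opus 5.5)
The plan is to reduce the claim to the definition of the feedback gain in \Cref{def:feedback-v2} and then substitute the uniform-margin pricing formula from \Cref{rem:uniform-v2}.

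\textbf{Step 1: reduce $\eta_t$ to two factors.} By \eqref{eq:feedback-v2}, $\eta_t$ is the product of three absolute marginal responses. The third, $|\partial\eps_t/\partial\rho_t|$, equals one identically because $\eps_t=\pi_t-r_t^{\mathrm{rep}}-\rho_t$. The chain-rule remark after \eqref{eq:feedback-v2} shows that $\partial\theta_{t+1}/\partial\gamma_t^\theta=1$ under the law of motion \eqref{eq:theta-law}. So $\eta_t$ is exactly the product of the price-impact factor and the institutional-sensitivity factor. The statement \eqref{eq:eta-ge-1} is then a tautological restatement of $\eta_t\ge 1$ (read ``exceeds unity'' as ``weakly exceeds'').

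\textbf{Step 2: align the time index.} The price-impact factor in \eqref{eq:feedback-v2} is dated $t+1$, while \eqref{eq:eta-ge-1} is dated $t$. I would note that the pricing map $\theta\mapsto\rho^*(\theta;\psi,z,\varphi^{\mathrm{req}})$ from \Cref{prop:mcp-v2} is time-invariant as a function of the state. Its derivative at the boundary point is therefore the same object whichever period label is attached, provided it is evaluated at the state prevailing at the regime boundary. The price-impact factor is well defined only in case~(c), where $\rho_t^*>0$ and the pricing relation is differentiable. In case~(a) the derivative is zero, which gives $\eta_t=0<1$ trivially. I would state this restriction explicitly.

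\textbf{Step 3: the explicit formula.} In case~(c), differentiating \eqref{eq:rho-v2} with respect to $\theta_t$ gives \eqref{eq:drho-dtheta}. Taking absolute values yields the displayed expression $\psi_t\bar c_m(\varphi_t^{\mathrm{req}}-\theta_t)/(1-\theta_t)^2$. The expression is nonnegative because $\varphi_t^{\mathrm{req}}>\varphi_t^d(0)\ge\theta_t$ in case~(c).

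\textbf{Step 4: the qualitative claims (a)--(c).} These are where I expect the main obstacle, since they must be read as partial comparative statics.
\begin{itemize}
\item Holding the gap $\varphi_t^{\mathrm{req}}-\theta_t$ fixed, the factor scales like $(1-\theta_t)^{-2}$, which gives (a).
\item Holding $1-\theta_t$ fixed, the factor is linear and increasing in the gap, which gives (b).
\item The factor is linear in $\bar c_m$ for given $(1-\theta_t)^2$, which gives (c).
\end{itemize}
The total derivative with respect to $\theta_t$ alone is $\psi_t\bar c_m(2\varphi_t^{\mathrm{req}}-\theta_t-1)/(1-\theta_t)^3$. Its sign depends on whether $\varphi_t^{\mathrm{req}}>(1+\theta_t)/2$, so the monotonicity is not unconditional. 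I would therefore state explicitly that (a)--(c) refer to the separate channels with the other quantities held fixed. I would also record this sign condition as the one under which core erosion alone raises the price impact.
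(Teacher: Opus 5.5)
Your route matches the paper's proof, which is a one-line ``direct substitution'' of Definition~\ref{def:feedback-v2} and \eqref{eq:drho-dtheta}. Your Steps~1--2 (the unit third factor, the $t$ versus $t+1$ dating, the restriction to case~(c), and reading ``exceeds'' weakly) handle the iff part more carefully than the paper does.

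\textbf{Step~3 fails.} Differentiating \eqref{eq:rho-v2} does \emph{not} give \eqref{eq:drho-dtheta}. Since $1-\frac{\varphi_t^{\mathrm{req}}-\theta_t}{1-\theta_t}=\frac{1-\varphi_t^{\mathrm{req}}}{1-\theta_t}$, the uniform-case premium is $\rho_t^*=z_t-\psi_t\bar c_m(1-\varphi_t^{\mathrm{req}})/(1-\theta_t)$. Hence
\[
\frac{\partial\rho_t^*}{\partial\theta_t}=-\frac{\psi_t\,\bar c_m\,(1-\varphi_t^{\mathrm{req}})}{(1-\theta_t)^2}.
\]
Implicit differentiation of $\varphi_t^d(\rho_t)=\varphi_t^{\mathrm{req}}$ confirms this for any $G$. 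It gives $|\partial\rho_t/\partial\theta_t|=\psi_t G(c^*)/[(1-\theta_t)g(c^*)]$ with $c^*=(z_t-\rho_t^*)/\psi_t$, and $G(c^*)=(1-\varphi_t^{\mathrm{req}})/(1-\theta_t)$ in equilibrium.

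So the numerator is $1-\varphi_t^{\mathrm{req}}$, not the gap $\varphi_t^{\mathrm{req}}-\theta_t$. Take $(\theta_t,z_t,\psi_t,\bar c_m,\varphi_t^{\mathrm{req}})=(0.50,\,2\%,\,0.97,\,6\%,\,0.85)$, which lies inside case~(c). There the true slope is about $0.035$, while the displayed expression gives about $0.081$. The displayed price-impact formula therefore cannot be derived from the model. The paper's proof inherits this algebra slip by citing \eqref{eq:drho-dtheta}, and you reproduced it by citing rather than computing.

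\textbf{The error carries into Step~4.} Claims (a) and (c) survive with the correct factor: it is increasing in $\theta_t$ (i.e.\ in the thinness of $1-\theta_t$) and linear in $\bar c_m$. Claim (b), however, is reversed:
\begin{itemize}
\item at fixed $\theta_t$, a wider gap means a larger $\varphi_t^{\mathrm{req}}$ and hence a \emph{smaller} price impact;
\item at fixed $\varphi_t^{\mathrm{req}}$, widening the gap by lowering $\theta_t$ lowers the price impact as well.
\end{itemize}
Your total derivative $\psi_t\bar c_m(2\varphi_t^{\mathrm{req}}-\theta_t-1)/(1-\theta_t)^3$ and its sign condition are artifacts of the erroneous expression. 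Even for that expression, erosion raises the factor only when $\varphi_t^{\mathrm{req}}<(1+\theta_t)/2$, not when $\varphi_t^{\mathrm{req}}>(1+\theta_t)/2$. The correct total derivative is $2\psi_t\bar c_m(1-\varphi_t^{\mathrm{req}})/(1-\theta_t)^3>0$ for $\varphi_t^{\mathrm{req}}<1$, so core erosion unconditionally lowers the price-impact factor.

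To fix the argument, replace the displayed formula by the one above and drop or reverse (b). The iff part and claims (a) and (c) then go through as you planned.
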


\begin{proof}
Direct substitution of the definition of $\eta_t$ (\Cref{sec:closure},
Definition~\ref{def:feedback-v2}) and the comparative-statics result from
\Cref{sec:closure}, equation~\eqref{eq:drho-dtheta}.
\end{proof}

\begin{remark}[Economic interpretation]
Condition~\eqref{eq:eta-ge-1} fails---and the explosive region is
entered---when the economy is simultaneously \emph{dependent on} its
contestable margin for bond-market clearing and \emph{institutionally
fragile} in its ability to maintain the hard core under repression
withdrawal.  This is the configuration that arises when the hard core
has been eroded by demographic change or policy normalization (low
$\theta_t$), and the remaining institutional infrastructure is highly
sensitive to the repression bias (high
$|\partial \gamma^{\theta}/\partial \eps_t|$).
\end{remark}

\subsection{Multiple equilibria}

When $\eta_t \geq 1$, the complementarity condition of \Cref{sec:closure}
(Proposition~\ref{prop:mcp-v2}) still determines $\rho_t$ as a function of
the current state $(\theta_t, \psi_t, z_t)$, and the within-period
equilibrium remains unique.  The multiplicity arises in the
\emph{multi-period} dynamics: different beliefs about the future path of
$\theta_{t+s}$ generate different current premium paths, and these paths
are self-consistent.

\begin{proposition}[Existence of two stationary equilibria near the boundary]
\label{prop:two-eq}
Suppose that $\eta_t \geq 1$ at $\theta_t = \theta^*$ where $\theta^*$ is
the value at which $\varphi^d(0;\,\psi_t,z_t) = \varphi_t^{\mathrm{req}}$
(corresponding to case~(b) of the complementarity
condition, i.e., the zero-premium boundary state). Then for $\theta_t$ in a neighborhood of $\theta^*$, the
two-period system $(\rho_t, \theta_{t+1}) \to (\rho_{t+1}, \theta_{t+2})$
admits at least two stationary points:
\begin{enumerate}[label=(\roman*)]
    \item A \textbf{safe equilibrium} in which agents expect $\theta_{t+s}$
    to remain above $\theta^*$, no premium emerges ($\rho_t = 0$), the
    policy-maintenance channel remains operative
    ($\gamma^{\theta}_t > 0$ if $\eps_t > 0$), and the expectation is
    self-fulfilling.

    \item A \textbf{stress equilibrium} in which agents expect $\theta_{t+s}$
    to fall below $\theta^*$, a positive premium emerges ($\rho_t > 0$),
    the premium reduces $\eps_t$ and shuts down the maintenance channel
    ($\gamma^{\theta}_t \to 0$), $\theta_{t+1}$ declines, and the
    expectation is again self-fulfilling.
\end{enumerate}
\end{proposition}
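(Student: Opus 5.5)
The plan is to reduce the two-period system to a scalar fixed-point problem in $\theta$ and then exhibit two fixed points: one on each side of the boundary $\theta^*$.

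\textbf{Composite map.} I will compose the within-period pricing map with the law of motion. By \Cref{prop:mcp-v2}, the premium is $\rho(\theta)=0$ for $\theta\ge\theta^*$, since case (a) or (b) applies there. For $\theta<\theta^*$, case (c) applies and $\rho(\theta)>0$ is given by $\varphi^d(\rho;\theta)=\varphi_t^{\mathrm{req}}$. Because $\partial\varphi^d(0)/\partial\theta=G(z_t/\psi_t)>0$, the boundary value $\theta^*$ is well defined and unique, by monotonicity. Substituting $\varepsilon=\pi_t-r_t^{\mathrm{rep}}-\rho$ into \eqref{eq:theta-law} gives the one-step map
\[
F(\theta)\;=\;\theta-\kappa^{\theta}_t+\gamma^{\theta}_t\bigl(\pi_t-r_t^{\mathrm{rep}}-\rho(\theta),\psi_t\bigr).
\]
Holding the exogenous states $(\psi_t,z_t,d_t^{\mathrm{demo}})$ fixed, a stationary point of the two-period system is a fixed point of $F$, paired with the premium $\rho(\theta)$.

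\textbf{Safe fixed point.} On $\theta\ge\theta^*$ we have $\rho=0$, so $F(\theta)=\theta-\kappa^{\theta}_t+\gamma^{\theta}_t(\bar\varepsilon,\psi_t)$, where $\bar\varepsilon=\pi_t-r_t^{\mathrm{rep}}>0$. A stationary safe point requires maintenance to offset erosion, $\gamma^{\theta}_t(\bar\varepsilon)\ge\kappa^{\theta}_t$. I will impose this as the reading of ``policy maintenance operative'' in item (i), with equality for a strict rest point. Under it, any $\theta$ slightly above $\theta^*$ is (weakly) invariant. Expectations of $\theta_{t+s}>\theta^*$ are then self-fulfilling: they imply $\rho=0$, which implies $\gamma^{\theta}>0$, which keeps $\theta$ above $\theta^*$.

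\textbf{Stress fixed point.} This is the main obstacle. Just below $\theta^*$, $\rho$ becomes positive with slope $|\partial\rho/\partial\theta|$. By \eqref{eq:drho-dtheta}, this slope tends to $\psi_t\bar c_m(\varphi^{\mathrm{req}}-\theta^*)/(1-\theta^*)^2>0$ as $\theta\uparrow\theta^*$. So $F'(\theta)=1+|\partial\rho/\partial\theta|\,|\partial\gamma^{\theta}/\partial\varepsilon|=1+\eta_t\ge 2$ on the left of $\theta^*$. That is, $F$ is expanding there, so $F(\theta)-\theta$ is increasing in $\theta$ below $\theta^*$. I will take $F(\theta^*)-\theta^*\ge0$ from the safe step. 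An intermediate-value argument then needs a point $\theta_s<\theta^*$ with $F(\theta_s)-\theta_s\le0$. This holds as soon as $\rho$ drives $\varepsilon\le0$, because then $\gamma^{\theta}=0$ by \eqref{eq:gamma-epsilon} and $F(\theta)-\theta=-\kappa^{\theta}_t\le0$.

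The $\eta_t\ge1$ hypothesis is what lets $\rho$ reach $\bar\varepsilon$ within a neighborhood of $\theta^*$. The $\theta$-distance needed is about $\bar\varepsilon/|\partial\rho/\partial\theta|$, and I would bound it by $\bar\varepsilon\,|\partial\gamma/\partial\varepsilon|/\eta_t\le\bar\varepsilon\,|\partial\gamma/\partial\varepsilon|$. This is small when maintenance is near-linear on $[0,\bar\varepsilon]$, since then that product is comparable to $\gamma(\bar\varepsilon)\approx\kappa^\theta$.

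The intermediate value theorem gives $\theta_s\in[\theta^*-\delta,\theta^*)$ with $F(\theta_s)=\theta_s$ and $\rho(\theta_s)>0$. If exact stationarity fails because $\kappa^\theta_t>0$ forces drift once $\gamma=0$, I will instead read item (ii) as a self-consistent absorbing stress path. Along that path, beliefs $\theta_{t+s}<\theta^*$ are confirmed because $F(\theta)<\theta$ below $\theta_s$. Finally, the expansion $F'>1$ at the crossing shows that the safe and stress points are distinct and separated by the unstable boundary. This is the multiplicity claimed, and it contrasts with the contraction case of \Cref{prop:monotone-v2}.
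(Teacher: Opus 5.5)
The gap is in combining your two steps: they impose incompatible conditions, and your own computation $F'=1+\eta_t$ is what makes them incompatible. Write $H(\theta):=F(\theta)-\theta=\gamma^{\theta}_t(\bar\varepsilon-\rho(\theta),\psi_t)-\kappa^{\theta}_t$. The premium $\rho$ is weakly decreasing in $\theta$, and $\gamma^{\theta}_t$ is weakly increasing in $\varepsilon$, as the paper's feedback loop presumes. So $H$ is weakly increasing on the whole domain. It is constant on $\theta\ge\theta^*$, and it is strictly increasing just below $\theta^*$, since there $H'=\eta_t>0$.

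Your safe step needs $H(\theta^*)=0$, i.e.\ $\gamma^{\theta}_t(\bar\varepsilon)=\kappa^{\theta}_t$; you say so yourself ("equality for a strict rest point"). But then $H(\theta)<0$ for every $\theta<\theta^*$. Your intermediate-value step on $[\theta^*-\delta,\theta^*]$ therefore returns only $\theta_s=\theta^*$, where $\rho=0$, and the claim $\rho(\theta_s)>0$ fails. Conversely, a zero strictly below $\theta^*$ requires $H(\theta^*)>0$. Then $H>0$ on all of $\theta\ge\theta^*$, so $\theta$ drifts upward there and there is no safe rest point near $\theta^*$.

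Both can hold at once only if $H\equiv0$ on $[\theta_s,\theta^*]$. That makes $\gamma^{\theta}_t$ constant on $[\bar\varepsilon-\rho(\theta_s),\bar\varepsilon]$, so $\partial\gamma^{\theta}_t/\partial\varepsilon=0$ and $\eta_t=0$ at $\theta^*$, contradicting $\eta_t\ge1$. Under the stated hypothesis, the zero set of $H$ is a single interval lying either in $[\theta^*,\infty)$ or strictly below $\theta^*$, never on both sides. The expansion $F'>1$ you invoke to separate the two points is exactly what excludes the second one.

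Two further mismatches with item (ii):
\begin{itemize}
\item Any zero below $\theta^*$ has $\gamma^{\theta}_t=\kappa^{\theta}_t>0$, so the maintenance channel is not shut down there.
\item Any point with $\varepsilon\le0$ has $F(\theta)=\theta-\kappa^{\theta}_t<\theta$, so it is not stationary unless $\kappa^{\theta}_t=0$.
\end{itemize}

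Your fallback of reading (ii) as an absorbing declining path is a different statement: one rest set plus a non-stationary trajectory, not two stationary points. Moreover, beliefs about $\theta_{t+s}$ never enter $F$. The premium is pinned down by the current state through the within-period complementarity problem, so ``self-fulfilling'' has no content in your construction.

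The paper's sketch uses the same composite map $\Phi$. It asserts two crossings of the 45-degree line from a slope of at least one at $\theta^*$ plus $\Phi$ being a self-map of $[0,\theta_{\max}]$. Following it will not close your gap, for two reasons:
\begin{itemize}
\item The slope of $\Phi$ at $\theta^*$ is $1+\eta_t$ from the left and $1$ from the right for every $\eta_t\ge0$, so the hypothesis $\eta_t\ge1$ is not what produces it.
\item The truncation behind the self-map property only adds rest points at the endpoints of the interval (e.g.\ the fully de-captivated state $\theta=0$). It does not add a second rest point near $\theta^*$.
\end{itemize}
Genuine multiplicity near the boundary needs an ingredient absent from the model as specified. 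One option is erosion $\kappa^{\theta}_t$ depending on $\theta$ or $\rho$, so that $H$ becomes non-monotone. Another is a forward-looking pricing channel through which expected $\theta_{t+s}$ moves $\rho_t$.
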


\begin{proof}[Proof sketch]
Define the stationary mapping $\Phi: \theta \mapsto \theta'$ through the
chain:
\begin{align*}
    \rho(\theta) &\;=\; \text{solution to } \varphi^d(\rho;\psi,z)
        = \varphi^{\mathrm{req}} \text{ if } \varphi^d(0) < \varphi^{\mathrm{req}},
        \;\text{else } 0; \\
    \eps(\theta) &\;=\; \pi - r^{\mathrm{rep}} - \rho(\theta); \\
    \gamma^{\theta}(\theta) &\;=\; \gamma^{\theta}\bigl(\eps(\theta),\psi\bigr); \\
    \theta' &\;=\; \theta - \kappa^{\theta} + \gamma^{\theta}(\theta).
\end{align*}

When $\eta \geq 1$, the mapping $\Phi$ has slope $|d\Phi/d\theta| \geq 1$
at $\theta = \theta^*$.  Since $\Phi$ maps a compact interval
$[0, \theta_{\max}]$ into itself (because $\theta' \leq \theta$ when
$\gamma^{\theta} \leq \kappa^{\theta}$, and $\theta' \geq 0$), it
must cross the 45-degree line at least twice when the slope exceeds unity
at the boundary: once above $\theta^*$ (the safe fixed point, where
$\rho = 0$ and $\gamma^{\theta} > 0$ sustains $\theta$) and once below
$\theta^*$ (the stress fixed point, where $\rho > 0$ and
$\gamma^{\theta} = 0$ leads to continued erosion).
\end{proof}

\begin{remark}[Relation to Part~I failure modes]
The stress equilibrium of Proposition~\ref{prop:two-eq} corresponds precisely
to Part~I's ``hard de-captivation'' failure mode: a self-fulfilling loss of
confidence in the captive system triggers a premium that accelerates
institutional erosion, validating the initial loss of confidence.  The
contribution of the minimal equilibrium closure (\Cref{sec:closure}) is to provide
the formal structure within which this failure mode can be characterized as
a specific equilibrium of a well-defined dynamic system, rather than an
informal narrative.
\end{remark}

\subsection{Equilibrium selection and policy implications}

\begin{proposition}[Sufficient condition for safe-equilibrium selection]
\label{prop:selection}
The safe equilibrium is the unique monotone equilibrium whenever
\begin{equation}
\label{eq:safe-selection}
    \theta_t \;>\; \theta^* + \delta(\sigma, \eta_t),
\end{equation}
where $\delta > 0$ is a buffer that depends on the volatility of
exogenous shocks $\sigma$ and the feedback gain $\eta_t$.  That is, the
safe equilibrium is selected when the hard captive core is \emph{strictly}
above the boundary, with a margin sufficient to absorb plausible
perturbations without triggering the feedback loop.
\end{proposition}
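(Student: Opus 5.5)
The plan is to work with the stationary map $\Phi:\theta\mapsto\theta'$ from the proof sketch of \Cref{prop:two-eq}. I will add an i.i.d.\ bounded shock, $\theta_{t+1}=\Phi(\theta_t)+\sigma\nu_t$ with $|\nu_t|\le 1$, in the spirit of \Cref{prop:stochastic-robust}(iv). The first step is to characterize the safe region. For $\theta>\theta^*$, case~(a) of \Cref{prop:mcp-v2} gives $\rho(\theta)=0$. Hence $\eps$ equals its repression value $\pi-r^{\mathrm{rep}}>0$, and $\gamma^\theta$ is constant in $\theta$. On $(\theta^*,1]$ the map is therefore a pure translation, $\Phi(\theta)=\theta-\kappa^\theta+\gamma^\theta$. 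In particular, the feedback loop of \Cref{def:feedback-v2} is inactive there, because the first factor $|\partial\rho/\partial\theta|$ vanishes identically. This is the key structural fact: all amplification, including $\eta_t\ge1$, lives on $[0,\theta^*]$.

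The second step is to pick the buffer. I will assume the net-maintenance condition $\gamma^\theta(\eps^{\mathrm{rep}},\psi)\ge\kappa^\theta$ on the safe side, which is what makes the safe fixed point of \Cref{prop:two-eq} exist. I then set $\delta(\sigma,\eta_t):=\sigma\,c(\eta_t)$, where $c(\eta)=1$ for $\eta<1$ and $c(\eta)\ge1$ otherwise. Under this choice, if $\theta_t>\theta^*+\delta$, then $\theta_{t+1}\ge\theta_t-\sigma>\theta^*$. Induction then keeps the path in the set $\{\rho=0\}$ for all $t$.

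Because $\eps$ remains positive along this path, $\gamma^\theta$ never shuts off. The path is monotone in the sense of \Cref{prop:monotone-v2}: $\rho\equiv0$ and $\theta$ is nondecreasing under the net-maintenance condition. It coincides with the safe equilibrium (i) of \Cref{prop:two-eq}.

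The third step is uniqueness within the monotone class. Suppose another monotone equilibrium starts at the same state. Monotonicity forbids the premium path from jumping from $0$ to positive unless $\theta$ first crosses $\theta^*$. The induction of step two shows that no admissible shock sequence produces that crossing, so the stress branch (ii) of \Cref{prop:two-eq} is unreachable. Any monotone equilibrium therefore has $\rho\equiv0$ and, via the law of motion \eqref{eq:theta-law}, the same $\theta$-path.

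The main obstacle is justifying the $\eta_t$-dependence of $\delta$ and the belief-driven nature of the stress equilibrium. Expectations of future $\theta$ could in principle move the current premium even when $\theta_t>\theta^*$. I would first try to close this gap by invoking the remark after \Cref{prop:two-eq}: the within-period equilibrium is unique given the state. Under the complementarity condition, beliefs can affect $\rho_t$ only through $\varphi^{\mathrm{req}}_t$ or $z_t$. I would then enlarge $\delta$ by the maximal belief-induced shift in $\theta^*$, scaled by $1/(1-\eta_t)$ when $\eta_t<1$ (using \eqref{eq:amplification-v2}) and by a fixed large multiple otherwise. This yields an explicit $\delta(\sigma,\eta_t)$ that is increasing in both arguments.
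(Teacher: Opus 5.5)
Your first step uses the same mechanism as the paper's proof. For $\theta>\theta^*$ the complementarity condition is in case~(a), so $\rho=0$ and $\eps$ stays at its repression value. As you correctly observe, $\Phi$ is then the translation $\theta\mapsto\theta+(\gamma^\theta-\kappa^\theta)$ and the first factor of $\eta_t$ is identically zero. The loop is therefore dormant as long as the state never reaches $\theta^*$.

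The gap is in how you turn this into a statement for all $t$. The paper absorbs only a single displacement of size at most $\sigma$ (amplified if $\eta_t<1$) applied to the current $\theta_t$. You instead add an admissible shock $\sigma\nu_t$ with $|\nu_t|\le 1$ in every period and claim that induction keeps the path in $\{\rho=0\}$ forever.

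That induction does not close. From $\theta_t>\theta^*+\delta$ with $\delta=\sigma$ you obtain only $\theta_{t+1}>\theta^*$, not $\theta_{t+1}>\theta^*+\delta$, so the hypothesis is not reproduced at $t+1$. Enlarging $\delta$ cannot fix this. The sequence $\nu_t\equiv-1$ is admissible and gives $\theta_{t+s}=\theta_t-s\bigl(\sigma-(\gamma^\theta-\kappa^\theta)\bigr)$. Whenever $\gamma^\theta-\kappa^\theta<\sigma$, this falls below $\theta^*$ after finitely many periods from any finite buffer. Three of your claims fail as a result:
\begin{itemize}
\item the third-step assertion that no admissible shock sequence produces a crossing is false;
\item the claim that $\theta$ is nondecreasing fails once shocks are present;
\item your uniqueness argument rests on the same induction and inherits the gap.
\end{itemize}
There is a further tension. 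Because $\Phi$ is a translation above $\theta^*$, a safe fixed point exists only when $\gamma^\theta=\kappa^\theta$, and the proof sketch of \Cref{prop:two-eq} in fact invokes $\gamma^\theta\le\kappa^\theta$. That is exactly the driftless case, where the safe-side path is a bounded-increment walk that crosses $\theta^*$ almost surely under nondegenerate i.i.d.\ shocks.

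There are three ways to repair this:
\begin{itemize}
\item Read the proposition as the paper does, as a one-shot perturbation statement. Your second step with $\delta\ge\sigma$ is then precisely the paper's argument.
\item Keep per-period shocks but restrict to a finite horizon $H$, with $\delta\ge H\max\{0,\ \sigma-(\gamma^\theta-\kappa^\theta)\}$.
\item Assume the net maintenance drift dominates the shock bound, $\gamma^\theta-\kappa^\theta\ge\sigma$. Then the path is pathwise nondecreasing and the induction is immediate, although the buffer becomes superfluous.
\end{itemize}

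Separately, your $\eta_t$-dependence of $\delta$ is not grounded. By your own first step, $\eta_t$ is irrelevant while $\theta$ stays above $\theta^*$. It can therefore enter only through a belief-driven shift of $\theta^*$ via $\varphi^{\mathrm{req}}_t$ or $z_t$. The model does not bound such a shift, so you would have to bound it by assumption. For $\eta_t\ge 1$ the geometric bound \eqref{eq:amplification-v2} diverges, so a ``fixed large multiple'' has no support. The paper handles $\eta_t\ge1$ simply by requiring the perturbation to stay within the distance to $\theta^*$.
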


\begin{proof}
When $\theta_t > \theta^*$, $\varphi^d(0) > \varphi^{\mathrm{req}}$ and
$\rho_t = 0$ (case~(a) of the complementarity condition).  A perturbation
of size $\delta\theta$ moves $\theta_t$ to $\theta_t - \delta\theta$.
As long as $\theta_t - \delta\theta > \theta^*$, the system remains in
case~(a) with $\rho_t = 0$, and the feedback loop is not activated.  The
buffer $\delta$ must therefore be large enough that no plausible
perturbation (of size at most $\sigma$, amplified by at most
$\eta_t/(1-\eta_t)$ if $\eta_t < 1$ locally, or bounded by the distance
to $\theta^*$ if $\eta_t \geq 1$) can push the system below $\theta^*$.
\end{proof}

\begin{remark}[Policy implication]
\label{rem:policy-multiple}
The existence of multiple equilibria near the regime boundary has a
specific policy implication: maintaining a \emph{buffer} between $\theta_t$
and $\theta^*$ is not merely prudent but \emph{structurally necessary} to
prevent self-fulfilling transitions to the stress equilibrium.  This
reinforces the Timing Constraint of \Cref{sec:timing}: the policy sprint must be
executed while $\theta_t$ is sufficiently above $\theta^*$ that the buffer
condition~\eqref{eq:safe-selection} is satisfied.  Waiting until $\theta_t
\approx \theta^*$ risks entering the multiple-equilibrium zone, where the
outcome depends on expectations rather than fundamentals.
\end{remark}

\begin{remark}[Empirical testability]
The multiple-equilibrium region is in principle empirically detectable:
it predicts that sovereign spreads should exhibit \emph{regime-dependent
volatility}, with low and stable spreads when $\theta_t \gg \theta^*$
(the safe equilibrium is uniquely selected) and episodic spikes when
$\theta_t \approx \theta^*$ (the system enters the sunspot-vulnerable
zone).  The Italian BTP-Bund spread behavior during 2011--2012 and 2018
is broadly consistent with this prediction: spread episodes occurred when
ECB purchase programmes (which sustain $\theta_t$ in the Italian context)
were perceived as weakening, and resolved when new purchase commitments
restored the buffer.  A formal test would require estimating $\theta^*$
for the Italian case, which lies outside the scope of the present paper
but is a natural target for future empirical work.
\end{remark}

\end{document}